\def\op#1{\mathop{{\it\fam0} #1}\limits}
\newcommand{\id}{{\rm Id\,}}
\newcommand{\pr}{{\rm pr}}
\newcommand{\di}{{\rm dim\,}}
\newcommand{\Ker}{{\rm Ker\,}}
\newcommand{\im}{{\rm Im\, }}
\newcommand{\hm}{{\rm Hom\,}}
\newcommand{\dif}{{\rm Diff\,}}
\newcommand{\nm}[1]{|{#1}|}
\newcommand{\lng}{\langle}
\newcommand{\rng}{\rangle}
\newcommand{\bll}{\bullet}
\newcommand{\bite}{\begin{itemize}}
\newcommand{\eite}{\end{itemize}}
\newcommand{\benu}{\begin{enumerate}}
\newcommand{\eenu}{\end{enumerate}}
\newcommand{\bde}{\begin{description}}
\newcommand{\ede}{\end{description}}
\newcommand{\bquo}{\begin{quote}}
\newcommand{\equo}{\end{quote}}
\newcommand{\bquot}{\begin{quotation}}
\newcommand{\equot}{\end{quotation}}
\newcommand{\beq}{\begin{equation}}
\newcommand{\eeq}{\end{equation}}
\newcommand{\ben}{\begin{eqnarray}}
\newcommand{\een}{\end{eqnarray}}
\newcommand{\be}{\begin{eqnarray*}}
\newcommand{\ee}{\end{eqnarray*}}
\newcommand{\bea}{\begin{eqalph}}
\newcommand{\eea}{\end{eqalph}}
\newcommand{\gU}{{\mathfrak U}}
\newcommand{\cG}{{\mathfrak g}}
\newcommand{\gd}{{\mathfrak d}}
\newcommand{\gE}{{\mathfrak E}}
\newcommand{\gF}{{\mathfrak F}}
\newcommand{\cA}{{\cal A}}
\newcommand{\cO}{{\cal O}}
\newcommand{\cT}{{\cal T}}
\newcommand{\cL}{{\cal L}}
\newcommand{\cV}{{\cal V}}
\newcommand{\cE}{{\cal E}}
\newcommand{\cH}{{\cal H}}
\newcommand{\cF}{{\cal F}}
\newcommand{\cC}{{\cal C}}
\newcommand{\cK}{{\cal K}}
\newcommand{\cM}{{\cal M}}
\newcommand{\cS}{{\cal S}}
\newcommand{\cD}{{\cal D}}
\newcommand{\bL}{{\bf L}}
\newcommand{\bb}{{\bf 1}}
\newcommand{\bH}{{\bf H}}
\newcommand{\rA}{{\rm Ann\,}}
\newcommand{\rO}{{\rm Orth}}
\newcommand{\al}{\alpha}
\newcommand{\bt}{\beta}
\newcommand{\dl}{\delta}
\newcommand{\la}{\lambda}
\newcommand{\La}{\Lambda}
\newcommand{\f}{\phi}
\newcommand{\vf}{\varphi}
\newcommand{\vs}{\varsigma}
\newcommand{\F}{\Phi}
\newcommand{\p}{\pi}
\newcommand{\s}{\psi}
\newcommand{\Om}{\Omega}
\newcommand{\m}{\mu}
\newcommand{\n}{\nu}
\newcommand{\g}{\gamma}
\newcommand{\G}{\Gamma}
\newcommand{\e}{\epsilon}
\newcommand{\ve}{\varepsilon}
\newcommand{\thh}{\theta}
\newcommand{\vr}{\varrho}
\newcommand{\up}{\upsilon}
\newcommand{\vt}{\vartheta}
\newcommand{\si}{\sigma}
\newcommand{\Si}{\Sigma}
\newcommand{\fl}{\flat}
\newcommand{\sh}{\sharp}
\newcommand{\bom}{{\bf\Omega}}
\newcommand{\bth}{{\bf\Theta}}
\newcommand{\bT}{{\bf T}}
\newcommand{\bs}{{\bf s}}
\newcommand{\Y}{Y\to X}
\newcommand{\w}{\wedge}
\newcommand{\wt}{\widetilde}
\newcommand{\wh}{\widehat}
\newcommand{\ol}{\overline}
\newcommand{\dr}{\partial}
\newcommand{\ar}{\op\longrightarrow}
\newcommand{\xx}{\times}
\newcommand{\ot}{\otimes}
\newcommand{\ap}{\approx}
\let\ssection=\section
\renewcommand{\section}{\setcounter{equation}{0}\ssection}
\newcounter{eqalph}[section]
\newcounter{equationa}[section]
\newcounter{example}[section]
\newcounter{remark}[section]
\newcounter{theorem}[section]
\newcounter{proposition}[section]
\newcounter{lemma}[section]
\newcounter{corollary}[section]
\newcounter{definition}[section]
\def\theremark{\arabic{section}.\arabic{remark}}
\def\thedefinition{\arabic{section}.\arabic{definition}}
\newenvironment{proof}{\noindent {\it Proof:}\small
}{$\Box$\medskip}
\newenvironment{example}{\refstepcounter{remark}\medskip\noindent{\bf
Example \theremark:}\ }{$\Box$ \medskip}
\newenvironment{remark}{\refstepcounter{remark}\medskip\noindent{\bf
Remark \theremark:} }{$\Box$\medskip}
\newenvironment{theorem}{\refstepcounter{definition}
\medskip\noindent{\sc Theorem \thedefinition}:}{$\Box$\medskip}
\newenvironment{proposition}{\refstepcounter{definition}\medskip\noindent{\sc
Proposition \thedefinition}:}{$\Box$\medskip}
\newenvironment{lemma}{\refstepcounter{definition}\medskip\noindent{\sc
Lemma \thedefinition}:}{ $\Box$\medskip }
\newenvironment{definition}{\refstepcounter{definition}\medskip\noindent{\sc
Definition \thedefinition}:}{$\Box$\medskip}
\newenvironment{eqalph}{\stepcounter{equation}
\setcounter{equationa}{\value{equation}} \setcounter{equation}{0}

\begin{eqnarray}}{\end{eqnarray}
\setcounter{equation}{\value{equationa}}}
\newcommand{\mar}[1]{}
\begin{document}

\hbox{}

\begin{center}

{\Large\bf

Lectures on integrable Hamiltonian systems}

\bigskip
\bigskip

G. SARDANASHVILY

\bigskip

Department of Theoretical Physics, Moscow State University, 117234
Moscow, Russia

\bigskip

\end{center}

\begin{abstract}

\noindent We consider integrable Hamiltonian systems in a general
setting of invariant submanifolds which need not be compact. For
instance, this is the case a global Kepler system, non-autonomous
integrable Hamiltonian systems and integrable systems with
time-dependent parameters.
\end{abstract}

\bigskip

\tableofcontents



\addcontentsline{toc}{section}{Introduction}

\section*{Introduction}

The Liouville -- Arnold theorem for completely integrable systems
\cite{arn,laz}, the Poincar\'e -- Lyapounov -- Nekhoroshev theorem
for partially integrable systems \cite{gaeta,nekh94} and the
Mishchenko -- Fomenko theorem for the superintegrable ones
\cite{bols03,fasso05,mishc} state the existence of action-angle
coordinates around a compact invariant submanifold of a
Hamiltonian integrable system which is a torus $T^m$. However, it
is well known that global extension of these action-angle
coordinates meets a certain topological obstruction
\cite{bate,daz,dust}.

Note that superintegrable systems sometimes are called
non-commutative (or non-Abelian) completely integrable systems.

In these Lectures, we consider integrable Hamiltonian systems in a
general setting of invariant submanifolds which need not be
compact \cite{fior,fior2,jmp07a,f2,jmp03,book10,ijgmmp09a,vin}.
These invariant submanifolds are proved to be diffeomorphic to
toroidal cylinders $\mathbb R^{m-r}\times T^r$ (Theorem
\ref{bi0}). A key point is that, in accordance with Theorem
\ref{11t2}, a fibred manifold whose fibres are diffeomorphic
either to a compact manifold or $\mathbb R^r$ is a fibre bundle,
but this is not the case of toroidal cylinders.

In particular, this is the case of non-autonomous integrable
Hamiltonian systems (Section 4.3) and Hamiltonian mechanics with
time-dependent parameters (Section 6).

It may happen that a Hamiltonian system on a phase space $Z$ falls
into different integrable Hamiltonian systems on different open
subsets of $Z$. For instance, this the case of the Kepler system
(Section 3). It contains two different globally superintegrable
systems on different open subsets of a phase space $Z=\mathbb
R^4$. Their integrals of motion form the Lie algebras $so(3)$ and
$so(2,1)$ with compact and non-compact invariant submanifolds,
respectively \cite{book10,ijgmmp09a}.

Geometric quantization of completely integrable and
superintegrable Hamiltonian systems with respect to action-angle
variables is considered (Section 5.3). The reason is that, since a
Hamiltonian of an integrable system depends only on action
variables, it seems natural to provide the Schr\"odinger
representation of action variables by first order differential
operators on functions of angle coordinates.

Throughout the Lectures, all functions and maps are smooth, and
manifolds are real smooth and paracompact. We are not concerned
with the real-analytic case because a paracompact real-analytic
manifold admits the partition of unity by smooth functions. As a
consequence, sheaves of modules over real-analytic functions need
not be acyclic that is essential for our consideration.

\section{Partially integrable systems}

This Section addresses partially integrable systems on Poisson and
symplectic manifolds.  Completely integrable systems can be
regarded as the particular partially integrable ones (Remark
\ref{000}). A key point is that a partially integrable system
admits different compatible Poisson structures (Theorem
\ref{bi92}).

Our goal are Theorem \ref{bi100} on partial integrable systems on
a Poisson manifold, Theorem \ref{nc6} on partial integrable system
on a symplectic manifold, and Theorem \ref{cmp35} as the global
generalization of Theorem \ref{nc6}.

\subsection{Geometry of symplectic and Poisson manifolds}

This Section summarize some relevant material on symplectic
manifolds, Poisson manifolds and symplectic foliations
\cite{abr,book05,book10,libe,vais}.

 Let $Z$ be a smooth
manifold. Any exterior two-form $\Om$ on $Z$ yields a linear
bundle morphism
\mar{m52}\beq
\Om^\fl: TZ\op\to_Z T^*Z, \qquad \Om^\fl:v \to - v\rfloor\Om(z),
\quad v\in T_zZ, \quad z\in Z. \label{m52}
\eeq
One says that a two-form $\Om$ is of rank  $r$ if the morphism
(\ref{m52}) has a rank $r$. A kernel
 $\Ker \Om$ of $\Om$
is defined as the kernel of the morphism (\ref{m52}). In
particular, $\Ker \Om$ contains the canonical zero section $\wh 0$
of $TZ\to Z$. If $\Ker \Om=\wh 0$, a two-form $\Om$ is said to be
non-degenerate. A closed non-degenerate two-form $\Om$ is called
symplectic. Accordingly, a manifold equipped with a symplectic
form is a symplectic manifold. A symplectic manifold $(Z,\Om)$
always is even dimensional and orientable.

A manifold morphism $\zeta$ of a symplectic manifold $(Z,\Om)$ to
a symplectic manifold $(Z',\Om')$ is called symplectic if
$\Om=\zeta^*\Om'$. Any symplectic morphism is an immersion. A
symplectic isomorphism is called the symplectomorphism.

A vector field $u$ on a symplectic manifold $(Z,\Om)$ is an
infinitesimal generator of a local one-parameter group of local
symplectomorphism iff the Lie derivative $\bL_u\Om$ vanishes. It
is called  the canonical vector field. A canonical vector field
$u$ on a symplectic manifold $(Z,\Om)$ is said to be Hamiltonian
if a closed one-form $u\rfloor\Om$ is exact. Any smooth function
$f\in C^\infty(Z)$ on $Z$ defines a unique Hamiltonian vector
field $\vt_f$ such that
\mar{z100}\beq
\vt_f\rfloor\Om=-df, \qquad \vt_f=\Om^\sh(df), \label{z100}
\eeq
where $\Om^\sh$ is the inverse isomorphism to $\Om^\fl$
(\ref{m52}).

\begin{example}\label{e11.0} \mar{e11.0} Given an $m$-dimensional manifold $M$ coordinated by
$(q^i)$, let
\be
\pi_{*M}:T^*M\to M
\ee
be its cotangent bundle equipped with the holonomic coordinates
$(q^i, p_i=\dot q_i)$. It is endowed with the canonical Liouville
form
\be
\Xi=p_idq^i
\ee
and the canonical symplectic form
\mar{m83}\beq
\Om_T= d\Xi=dp_i\w dq^i. \label{m83}
\eeq
Their coordinate expressions are maintained under holonomic
coordinate transformations. The Hamiltonian vector field $\vt_f$
(\ref{z100}) with respect to the canonical symplectic form
(\ref{m83}) reads
\be
\vt_f=\dr^if\dr_i -\dr_i f\dr^i.
\ee
\end{example}

The canonical symplectic form (\ref{m83}) plays a prominent role
in symplectic geometry in view of the classical Darboux theorem.

\begin{theorem} \label{spr825} \mar{spr825}
Each point of a symplectic manifold $(Z,\Om)$ has an open
neighborhood equipped with coordinates $(q^i,p_i)$, called
canonical or Darboux coordinates, such that $\Om$ takes the
coordinate form (\ref{m83}).
\end{theorem}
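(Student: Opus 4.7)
The plan is to use Moser's homotopy trick, deforming the given symplectic form to a constant standard form through a flow of local diffeomorphisms. The first step is a pointwise normalization at the chosen point $z\in Z$: since $\Om(z)$ is a non-degenerate alternating bilinear form on the $2n$-dimensional tangent space $T_zZ$, the algebraic normal form for skew-symmetric bilinear forms supplies a basis of $T_zZ$ in which $\Om(z)$ has the standard matrix. Composing an arbitrary chart at $z$ with the corresponding linear change of variables produces coordinates $(q^i,p_i)$ on a neighborhood $U\ni z$ for which the constant two-form $\Om_0:=dp_i\w dq^i$ agrees with $\Om_1:=\Om$ at the point $z$.

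Next, I interpolate: for $t\in[0,1]$ put $\Om_t=(1-t)\Om_0+t\Om_1$. Each $\Om_t$ is closed and satisfies $\Om_t(z)=\Om_0(z)$, which is non-degenerate; after shrinking $U$ if necessary, $\Om_t$ is symplectic on all of $U$ for every $t\in[0,1]$. Taking $U$ contractible, the Poincar\'e lemma yields a one-form $\al$ with $d\al=\Om_1-\Om_0$; using an integral homotopy operator based at $z$, I may arrange $\al(z)=0$, which is legitimate because $(\Om_1-\Om_0)(z)=0$. Since $\Om_t$ is non-degenerate, the equation $v_t\rfloor\Om_t=-\al$ defines a unique time-dependent vector field $v_t$ on $U$, and the condition $\al(z)=0$ forces $v_t(z)=0$ for every $t$. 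Hence the flow $\f_t$ of $v_t$ fixes $z$ and is defined on some smaller neighborhood $V\subset U$ for all $t\in[0,1]$.

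The Moser identity then closes the argument. Using $d\Om_t=0$ and Cartan's magic formula,
\be
\frac{d}{dt}\bigl(\f_t^*\Om_t\bigr)=\f_t^*\bigl(\bL_{v_t}\Om_t+\dr_t\Om_t\bigr)=\f_t^*\bigl(d(v_t\rfloor\Om_t)+(\Om_1-\Om_0)\bigr)=\f_t^*(-d\al+d\al)=0,
\ee
so $\f_1^*\Om=\f_0^*\Om_0=\Om_0$ on $V$, and $\Om=(\f_1^{-1})^*(dp_i\w dq^i)$ on $\f_1(V)$; the desired canonical coordinates are $(q^i\circ\f_1^{-1},p_i\circ\f_1^{-1})$. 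The main technical obstacle is the compatible shrinking of neighborhoods: $U$ must be small enough that $\Om_t$ remains non-degenerate and the Poincar\'e lemma applies, and $V$ must in turn be small enough that the integral curves of $v_t$ starting in $V$ stay inside $U$ throughout $t\in[0,1]$. Both controls are standard consequences of the fixed-point property $v_t(z)=0$ together with the compactness of $[0,1]$.
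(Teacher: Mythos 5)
The paper offers no proof of this statement: Theorem \ref{spr825} is quoted as ``the classical Darboux theorem'' and used as a black box, so there is nothing in the text to compare your argument against line by line. Your proof is the standard Moser--Weinstein homotopy argument, and it is correct and complete. All three ingredients are in place: the linear normal form for the skew-symmetric bilinear form $\Om(z)$ on $T_zZ$, which produces coordinates with $\Om_0(z)=\Om(z)$; the convexity-plus-compactness argument that keeps the whole path $\Om_t=(1-t)\Om_0+t\Om_1$ non-degenerate on a shrunken neighborhood; and a primitive $\al$ of $\Om_1-\Om_0$ with $\al(z)=0$, which forces $v_t(z)=0$ and hence the existence of the flow $\f_t$ up to time $1$ near $z$. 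The Moser identity computation, with the cancellation of $-d\al$ against $\dr_t\Om_t=\Om_1-\Om_0=d\al$, is right, and the final change of coordinates by $\f_1^{-1}$ is handled correctly (note $\f_1(z)=z$, so $z$ does lie in the new chart). One cosmetic remark: your appeal to $(\Om_1-\Om_0)(z)=0$ to justify $\al(z)=0$ is unnecessary, since the radial integral homotopy operator based at $z$ always produces a primitive vanishing at $z$; alternatively one could subtract the differential of a linear function. Since the paper only cites the result, your self-contained proof supplies more than the text itself does.
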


Let $i_N:N\to Z$ be a submanifold of a $2m$-dimensional symplectic
manifold $(Z,\Om)$. A subset
\be
\rO_\Om TN=\op\bigcup_{z\in N} \{v\in T_zZ: \, v\rfloor u\rfloor
\Om=0, \,\,  u\in T_zN\}
\ee
of $TZ|_N$ is called orthogonal to $TN$ relative to a symplectic
form $\Om$. One considers the following special types of
submanifolds of a symplectic manifold such that the pull-back
$\Om_N=i_N^*\Om$ of a symplectic form $\Om$ onto a submanifold $N$
is of constant rank. A submanifold $N$ of $Z$ is said to be:

$\bullet$  coisotropic if $\rO_\Om TN\subseteq TN$, $\di N\geq m$;

$\bullet$ symplectic if $\Om_N$ is a symplectic form on $N$;

$\bullet$ isotropic  if $TN\subseteq \rO_\Om TN$, $\di N\leq m$.

A Poisson bracket on a ring $C^\infty(Z)$ of smooth real functions
on a manifold $Z$ (or a Poisson structure on $Z$) is defined as an
$\mathbb R$-bilinear map
\be
C^\infty(Z)\times C^\infty(Z)\ni (f,g)\to \{f,g\}\in C^\infty(Z)
\ee
which satisfies the following conditions:

$\bullet$ $\{g,f\}=-\{f,g\}$;

$\bullet$ $\{f,\{g,h\}\} + \{g,\{h,f\}\} +\{h,\{f,g\}\}=0$;

$\bullet$ $\{h,fg\}=\{h,f\}g +f\{h,g\}$.

A  Poisson bracket makes $C^\infty(Z)$ into a real Lie algebra,
called the Poisson algebra.  A Poisson structure is characterized
by a particular bivector field as follows.

\begin{theorem}\label{p11.1} \mar{p11.1} Every Poisson bracket on a
manifold $Z$ is uniquely defined as
\mar{b450}\beq
\{f,f'\}=w(df,df')=w^{\m\nu}\dr_\m f\dr_\nu f' \label{b450}
\eeq
by a bivector field $w$ whose Schouten -- Nijenhuis bracket
$[w,w]_{\rm SN}$ vanishes. It is called a Poisson bivector field.
\end{theorem}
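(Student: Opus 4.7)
The plan is to separate the theorem into two assertions: (i) every Poisson bracket comes from a (necessarily unique) antisymmetric bivector field $w$, and (ii) the Jacobi identity for the bracket is equivalent to $[w,w]_{\rm SN}=0$.

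For (i), I would first observe that the Leibniz rule $\{h,fg\}=\{h,f\}g+f\{h,g\}$ together with antisymmetry forces the bracket to be a derivation in each argument. Hence for each fixed $f\in C^\infty(Z)$, the map $g\mapsto \{f,g\}$ is a derivation of $C^\infty(Z)$, and therefore corresponds to a unique vector field $\vartheta_f$. The assignment $f\mapsto \vartheta_f$ is itself $\mathbb R$-linear and a derivation (by antisymmetry applied to the Leibniz rule in the other slot), so $(f,g)\mapsto \{f,g\}$ is a biderivation. A standard fact (provable by a partition-of-unity argument and the local expression of derivations) identifies biderivations of $C^\infty(Z)$ that are antisymmetric with smooth sections of $\Lambda^2 TZ$. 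Thus there is a unique bivector $w$ with $\{f,f'\}=w(df,df')$; in local coordinates $(z^\mu)$ the components are simply $w^{\mu\nu}=\{z^\mu,z^\nu\}$, yielding the formula (\ref{b450}).

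For (ii), I would recall that the Schouten--Nijenhuis bracket $[w,w]_{\rm SN}$ is a trivector field, and that in local coordinates
\begin{equation*}
[w,w]_{\rm SN}^{\lambda\mu\nu}=2\bigl(w^{\lambda\alpha}\partial_\alpha w^{\mu\nu}+w^{\mu\alpha}\partial_\alpha w^{\nu\lambda}+w^{\nu\alpha}\partial_\alpha w^{\lambda\mu}\bigr).
\end{equation*}
Expanding the Jacobiator $J(f,g,h)=\{f,\{g,h\}\}+\{g,\{h,f\}\}+\{h,\{f,g\}\}$ by means of (\ref{b450}), the terms involving second derivatives of $f,g,h$ cancel using the antisymmetry $w^{\mu\nu}=-w^{\nu\mu}$, and the remaining first-derivative terms combine precisely as $J(f,g,h)=\tfrac12[w,w]_{\rm SN}(df,dg,dh)$. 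Since this holds for all $f,g,h$, vanishing of $J$ on the ring $C^\infty(Z)$ is equivalent to vanishing of the trivector $[w,w]_{\rm SN}$.

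The most delicate step is the bookkeeping in (ii): I would carry out the coordinate computation for three coordinate functions $z^\lambda,z^\mu,z^\nu$ first, obtaining $J(z^\lambda,z^\mu,z^\nu)=\tfrac12[w,w]_{\rm SN}^{\lambda\mu\nu}$, and then invoke the biderivation property already established in (i) to extend the equivalence to arbitrary smooth functions. The identification of antisymmetric biderivations with bivector fields in step (i) is the only nontrivial structural input; everything else reduces to verifying the displayed coordinate identity.
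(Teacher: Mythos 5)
The paper states this theorem without proof, citing it as a standard fact of Poisson geometry (cf.\ the references to Vaisman and Libermann--Marle), so there is no in-paper argument to compare against. Your proposal is a correct and complete version of the standard proof. Part (i) is sound: antisymmetry plus the Leibniz rule in one slot gives the Leibniz rule in the other, so the bracket is an antisymmetric biderivation of $C^\infty(Z)$, and the localization argument (bump functions, so that derivations of $C^\infty(Z)$ are exactly vector fields) identifies such biderivations with sections of $\op\w^2 TZ$, with components $w^{\m\nu}=\{z^\m,z^\nu\}$ in a chart. Part (ii) is also correct: with the paper's convention $[\vt,\up]_{\rm SN}=\vt\bll\up+(-1)^{rs}\up\bll\vt$ one gets $[w,w]_{\rm SN}=w^{\m\la}\dr_\m w^{\al\bt}\,\dr_\la\w\dr_\al\w\dr_\bt$, whose totally antisymmetric components are the cyclic sum you wrote (your overall factor of $2$ versus $\tfrac12$ in the Jacobiator identity depends on this normalization, but the equivalence ``Jacobi identity $\Leftrightarrow [w,w]_{\rm SN}=0$'' is insensitive to it). The cancellation of the second-derivative terms of $f,g,h$ in the Jacobiator by antisymmetry of $w$ is exactly the right observation, and checking the identity on coordinate functions suffices once the biderivation property is in hand. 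No gaps.
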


A manifold $Z$ endowed with a Poisson structure is called a
Poisson manifold.

\begin{example} \label{w206} \mar{w206}
Any manifold admits a zero Poisson structure characterized by a
zero Poisson bivector field $w=0$.
\end{example}

A function $f\in C^\infty(Z)$ is called the Casimir function of a
Poisson structure on $Z$ if its Poisson bracket with any function
on $Z$ vanishes. Casimir functions form a real ring $\cC(Z)$.

Any bivector field $w$ on a manifold $Z$ yields a linear bundle
morphism
\mar{m51}\beq
w^\sh: T^*Z\op\to_Z TZ, \qquad w^\sh:\al\to -w(z)\lfloor \al,
\qquad \al\in T^*_zZ. \label{m51}
\eeq
One says that $w$ is of rank  $r$ if the morphism (\ref{m51}) is
of this rank. If a Poisson bivector field is of constant rank, the
Poisson structure is called regular. Throughout the Lectures, only
regular Poisson structures are considered. A Poisson structure
determined by a Poisson bivector field $w$ is said to be
non-degenerate if $w$ is of maximal rank.

There is one-to-one correspondence $\Om_w \leftrightarrow w_\Om$
between the symplectic forms and the non-degenerate Poisson
bivector fields  which is given by the equalities
\be
&&w_\Om(\f,\si)=\Om_w(w_\Om^\sh(\f),w_\Om^\sh(\si)), \qquad
\f,\si\in\cO^1(Z),\\
&&\Om_w(\vt,\nu)=w_\Om(\Om_w^\fl(\vt),\Om_w^\fl(\nu)), \qquad
\vt,\nu\in \cT(Z),
\ee
where the morphisms $w_\Om^\sh$ (\ref{m51}) and $\Om_w^\fl$
(\ref{m52}) are mutually inverse, i.e.,
\be
 w_\Om^\sh=\Om^\sh_w, \qquad w^{\al\nu}_\Om\Om_{w\al\bt}= \dl_\bt^\nu.
\ee

However, this correspondence is not preserved under manifold
morphisms in general. Namely, let $(Z_1,w_1)$ and $(Z_2,w_2)$ be
Poisson manifolds. A manifold  morphism $\varrho:Z_1\to Z_2$ is
said to be a Poisson morphism if
\be
\{f\circ \varrho,f'\circ\varrho\}_1=\{f,f'\}_2\circ\varrho, \qquad
f,f'\in C^\infty(Z_2),
\ee
or, equivalently, if $w_2=T\varrho \circ w_1$, where $T\varrho$ is
the tangent map to $\varrho$. Herewith, the rank of $w_1$ is
superior or equal to that of $w_2$. Therefore, there are no
pull-back and push-forward operations of Poisson structures in
general. Nevertheless, let us mention the following construction.

\begin{theorem}\label{p11.3} \mar{p11.3} Let $(Z,w)$ be a Poisson manifold and
$\pi:Z\to Y$ a fibration such that, for every pair of functions
$(f,g)$ on $Y$ and for each point $y\in Y$, the restriction of a
function $\{\pi^*f,\pi^*g\}$ to a fibre $\pi^{-1}(y)$ is constant,
i.e., $\{\pi^*f,\pi^*g\}$ is the pull-back onto $Z$ of some
function on $Y$. Then there exists a coinduced Poisson structure
$w'$ on $Y$ for which $\pi$ is a Poisson morphism.
\end{theorem}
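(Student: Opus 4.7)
The plan is to define the coinduced bracket on $Y$ by the only formula compatible with $\pi$ being a Poisson morphism, and then to check that it indeed gives a Poisson structure. Since $\pi:Z\to Y$ is a fibration, the pull-back map $\pi^*:C^\infty(Y)\to C^\infty(Z)$ is injective (fibres are nonempty and $\pi$ is surjective). The hypothesis says that for any $f,g\in C^\infty(Y)$ the function $\{\pi^*f,\pi^*g\}$ is constant along fibres, hence equals $\pi^*h$ for a unique $h\in C^\infty(Y)$. I would therefore set
\[
\{f,g\}_Y := h, \qquad \text{so that}\qquad \pi^*\{f,g\}_Y = \{\pi^*f,\pi^*g\}.
\]
By injectivity of $\pi^*$, this prescription is unambiguous.

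Next I would verify the Poisson axioms on $Y$. Antisymmetry, the Jacobi identity, and the Leibniz rule for $\{\cdot,\cdot\}_Y$ all follow by applying $\pi^*$ to both sides and using the corresponding axioms for $\{\cdot,\cdot\}_Z$ together with the fact that $\pi^*$ is an injective algebra homomorphism commuting with pointwise products. For instance, for the Leibniz rule:
\[
\pi^*\{h,fg\}_Y = \{\pi^*h,\pi^*f\cdot\pi^*g\} = \{\pi^*h,\pi^*f\}\pi^*g + \pi^*f\{\pi^*h,\pi^*g\} = \pi^*\bigl(\{h,f\}_Y g + f\{h,g\}_Y\bigr),
\]
and injectivity of $\pi^*$ yields the identity on $Y$.

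To show that $\{\cdot,\cdot\}_Y$ comes from a bivector field $w'$ (via Theorem \ref{p11.1}), it remains to note that the bracket is a bi-derivation, which was just established; hence it is defined pointwise and so determines a unique $w'\in\wedge^2 TY$ with $\{f,g\}_Y = w'(df,dg)$. The Jacobi identity then translates into $[w',w']_{\rm SN}=0$, so $w'$ is a Poisson bivector field. By the very definition of $\{\cdot,\cdot\}_Y$, the map $\pi$ satisfies $\{\pi^*f,\pi^*g\}_1 = \{f,g\}_2\circ\pi$, which is precisely the Poisson morphism condition; equivalently, $w' = T\pi\circ w$.

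The only genuinely subtle point is the bi-derivation (locality) step: one must be sure that the fibrewise-constancy hypothesis, which is stated globally in terms of $\pi^*f$ for $f\in C^\infty(Y)$, suffices to define a pointwise bracket on germs at each $y\in Y$. This is handled by observing that any germ at $y$ extends (via a bump function and the surjectivity of $\pi$) to a global smooth function on $Y$, so the globally defined bracket restricts consistently, and the derivation property then forces it to depend only on the differentials $df(y), dg(y)$. Once this is in hand, the bivector field $w'$ and the Poisson-morphism property are immediate.
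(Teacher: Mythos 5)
Your proposal is correct. The paper states this theorem without giving a proof of its own (it is quoted from the standard literature on Poisson manifolds), and your argument is precisely the standard descent construction: define $\{f,g\}_Y$ by $\pi^*\{f,g\}_Y=\{\pi^*f,\pi^*g\}$, use injectivity of $\pi^*$ (surjectivity of $\pi$) for well-definedness, pull back the three Poisson axioms, and invoke the bivector-field characterization (Theorem \ref{p11.1} of the paper) to obtain $w'$ with $w'=T\pi\circ w$. The only point worth stating explicitly is that the descended function $h$ is smooth because $\pi$, being a fibration, admits local sections $s$ with $h=\{\pi^*f,\pi^*g\}\circ s$; with that noted, the locality discussion at the end of your proposal is not really needed, since an antisymmetric bi-derivation of $C^\infty(Y)$ is automatically local and Theorem \ref{p11.1} applies directly.
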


\begin{example} \label{spr862} \mar{spr862}
The direct product $Z\times Z'$ of Poisson manifolds $(Z,w)$ and
$(Z',w')$ can be endowed with the product of Poisson structures,
given by a bivector field $w + w'$ such that the surjections
$\pr_1$ and $\pr_2$ are Poisson morphisms.
\end{example}

A vector field $u$ on a Poisson manifold $(Z,w)$ is an
infinitesimal generator of a local one-parameter group of Poisson
automorphisms iff the Lie derivative
\mar{gm508}\beq
\bL_uw=[u,w]_{\rm SN} \label{gm508}
\eeq
vanishes. It is called the canonical vector field for a Poisson
structure $w$. In particular, for any real smooth function $f$ on
a Poisson manifold $(Z,w)$, let us put
\mar{gm509}\beq
\vt_f=w^\sh (df)= -w\lfloor df=w^{\m\nu}\dr_\m f\dr_\nu.
\label{gm509}
\eeq
It is a canonical vector field, called the Hamiltonian vector
field of a function $f$ with respect to a Poisson structure $w$.
Hamiltonian vector fields fulfil the relations
\mar{m81',1}\ben
&& \{f,g\}=\vt_f\rfloor dg, \label{m81'}\\
&& [\vt_f,\vt_g]=\vt_{\{f,g\}}, \qquad f,g\in C^\infty(Z). \label{m81}
\een

For instance, the Hamiltonian vector field $\vt_f$ (\ref{z100}) of
a function $f$ on a symplectic manifold $(Z,\Om)$ coincides with
that (\ref{gm509}) with respect to the corresponding Poisson
structure $w_\Om$. The Poisson bracket defined by a symplectic
form $\Om$ reads
\be
\{f,g\}=\vt_g\rfloor\vt_f\rfloor\Om.
\ee

Since a Poisson manifold $(Z,w)$ is assumed to be regular, the
range $\bT=w^\sh(T^*Z)$ of the morphism (\ref{m51}) is a subbundle
of $TZ$ called the characteristic distribution on $(Z,w)$. It is
spanned by Hamiltonian vector fields, and it is involutive by
virtue of the relation (\ref{m81}). It follows that a Poisson
manifold $Z$ admits local adapted coordinates in Theorem
\ref{c11.0}. Moreover, one can choose particular adapted
coordinates which bring a Poisson structure into the following
canonical form.

\begin{theorem}\label{canpo} \mar{canpo} For any point $z$ of a $k$-dimensional Poisson
manifold $(Z,w)$, there exist coordinates
\mar{m111}\beq
(z^1,\dots,z^{k-2m},q^1,\dots,q^m,p_1,\dots,p_m) \label{m111}
\eeq
on a neighborhood of $z$ such that
\be
w=\frac{\dr}{\dr p_i}\w\frac{\dr}{\dr q^i},\qquad
\{f,g\}=\frac{\dr f}{\dr p_i} \frac{\dr g}{\dr q^i}
   - \frac{\dr f}{\dr q^i} \frac{\dr g}{\dr p_i}.
\ee
\end{theorem}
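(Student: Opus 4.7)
The plan is a Lie--Weinstein splitting argument by induction on $m$, where $2m$ is the constant rank of $w$. The base case $m=0$ is trivial: then $w=0$, every smooth function is a Casimir, and any chart near $z$ works with an empty $(q^i,p_i)$-block. For the inductive step I will peel off one canonical pair $(q^1,p_1)$, realize the commuting Hamiltonian vector fields $\vt_{p_1},\vt_{q^1}$ as two coordinate vector fields, and show that $w$ descends to a regular Poisson bivector of rank $2m-2$ on a codimension-two transversal, to which the inductive hypothesis applies.

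For the inductive step, pick any smooth $p_1$ near $z$ with $\vt_{p_1}(z)\neq 0$; such a function exists because $\bT=w^\sh(T^*Z)$ has rank $2m\geq 2$ at $z$. Since $\vt_{p_1}$ is nonvanishing, the flow-box theorem supplies a smooth $q^1$ near $z$ with $\vt_{p_1}(q^1)=\{p_1,q^1\}=1$. Relation (\ref{m81}) then gives
\[
[\vt_{p_1},\vt_{q^1}]=\vt_{\{p_1,q^1\}}=\vt_1=0,
\]
and the two fields are pointwise independent at $z$ (otherwise $\{p_1,q^1\}=\vt_{q^1}(p_1)$ would vanish). Simultaneous straightening then produces local coordinates $(q^1,p_1,y^1,\dots,y^{k-2})$ on a neighborhood $U$ of $z$ with $\vt_{p_1}=\dr/\dr q^1$ and $\vt_{q^1}=-\dr/\dr p_1$, so that $S=\{q^1=p_1=0\}\subset U$ is a codimension-two submanifold transverse to both flows.

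The crux is reduction to $S$. Extend any $f\in C^\infty(S)$ to $\wt f\in C^\infty(U)$ independent of $q^1,p_1$; then $\{p_1,\wt f\}=\dr\wt f/\dr q^1=0$ and $\{q^1,\wt f\}=-\dr\wt f/\dr p_1=0$. Jacobi's identity now yields $\{p_1,\{\wt f,\wt g\}\}=\{q^1,\{\wt f,\wt g\}\}=0$ for all $f,g\in C^\infty(S)$, so $\{\wt f,\wt g\}$ is independent of $q^1,p_1$ and descends to a well-defined bracket $\{f,g\}_S$. This inherits the Poisson axioms, and in the adapted chart $w=\dr_{p_1}\w\dr_{q^1}+w'(y)$ where $w'$ has rank exactly $2m-2$. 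By the inductive hypothesis, $S$ admits canonical coordinates $(z^1,\dots,z^{k-2m},q^2,\dots,q^m,p_2,\dots,p_m)$ near $z$; pulling them back to $U$ and adjoining $q^1,p_1$ gives the desired chart, and the coordinate form $w=\dr/\dr p_i\w\dr/\dr q^i$ is then read off from the pairwise brackets of coordinate functions.

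The principal technical hurdle is the descent step: one must verify both that the restricted bracket on $S$ is well-defined (the Jacobi calculation above) and that it is again regular of rank exactly $2m-2$, so the induction closes. Once that is in place, the Casimir character of the transverse coordinates $z^a$ and the vanishing of the mixed brackets $\{z^a,q^i\}=\{z^a,p_i\}=0$ are automatic consequences of the construction, and no further work is needed to conclude.
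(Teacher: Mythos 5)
The paper does not actually prove Theorem \ref{canpo}: it is quoted as the classical Lie--Weinstein splitting result for regular Poisson manifolds and referred to the literature (cf.\ the references around Theorem \ref{wenst1}), so there is no in-text argument to compare against. Your inductive proof is the standard one and is correct: the base case, the choice of $p_1$ with $\vt_{p_1}(z)\neq 0$, the flow-box construction of $q^1$ with $\{p_1,q^1\}=1$, the commutation $[\vt_{p_1},\vt_{q^1}]=\vt_1=0$ via (\ref{m81}), the Jacobi-identity descent of the bracket to the transversal $S$, and the rank bookkeeping all go through, and your sign normalizations $\vt_{p_1}=\dr/\dr q^1$, $\vt_{q^1}=-\dr/\dr p_1$ match the paper's conventions (\ref{b450}), (\ref{gm509}). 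Two points deserve slightly more care than you give them. First, ``simultaneous straightening'' by itself produces flow parameters $(u^1,u^2)$, not the functions $q^1,p_1$ themselves; you should note that $\vt_{p_1}(q^1)=1$, $\vt_{q^1}(q^1)=0$, $\vt_{p_1}(p_1)=0$, $\vt_{q^1}(p_1)=-1$ force $q^1=u^1+f(y)$, $p_1=-u^2+g(y)$, so that $(q^1,p_1,y)$ is again a chart in which the two Hamiltonian fields are the stated coordinate fields. Second, the constancy of the rank of the reduced bivector $w'$ (needed to close the induction) follows from the standing regularity hypothesis on $w$ together with the block decomposition $w=\dr_{p_1}\w\dr_{q^1}+w'$, since the rank at each point is then $2+{\rm rank}\,w'$; this is worth saying explicitly rather than deferring. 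Also note a harmless sign slip: $\{p_1,q^1\}=-\vt_{q^1}(p_1)$, which does not affect your independence argument.
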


The coordinates (\ref{m111}) are called the canonical or Darboux
coordinates for the Poisson structure $w$. The Hamiltonian vector
field of a function $f$ written in this coordinates is
\be
\vt_f= \dr^if\dr_i - \dr_if\dr^i.
\ee
Of course, the canonical coordinates for a symplectic form $\Om$
in Theorem \ref{spr825} also are canonical coordinates in Theorem
\ref{canpo} for the corresponding non-degenerate Poisson bivector
field $w$, i.e.,
\be
\Om=dp_i\w dq^i, \qquad w=\dr^i\w \dr_i.
\ee
With respect to these coordinates, the mutually inverse bundle
isomorphisms $\Om^\fl$ (\ref{m52}) and $w^\sh$ (\ref{m51}) read
\be
&& \Om^\fl: v^i\dr_i + v_i\dr^i\to -v_idq^i+ v^idp_i, \\
&& w^\sh: v_idq^i+ v^idp_i \to v^i\dr_i- v_i\dr^i.
\ee

Integral manifolds of the characteristic distribution $\bT$ of a
$k$-dimensional Poisson manifold $(Z,w)$ constitute a (regular)
foliation $\cF$ of $Z$ whose tangent bundle $T\cF$ is $\bT$. It is
called the characteristic foliation of a Poisson manifold. By the
very definition of the characteristic distribution $\bT=T\cF$, a
Poisson bivector field $w$ is subordinate to $\op\w^2 T\cF$.
Therefore, its restriction $w|_F$ to any leaf $F$ of $\cF$ is a
non-degenerate Poisson bivector field on $F$. It provides $F$ with
a non-degenerate Poisson structure $\{,\}_F$ and, consequently, a
symplectic structure. Clearly, the local Darboux coordinates for
the Poisson structure $w$ in Theorem \ref{canpo} also are the
local adapted coordinates
\be
(z^1,\ldots,z^{k-2m}, z^i=q^i,z^{m+i}=p_i), \qquad i=1,\ldots,m,
\ee
(\ref{spr850}) for the characteristic foliation $\cF$, and the
symplectic structures along its leaves read
\be
\Om_F=dp_i\w dq^i.
\ee

Since any foliation is locally simple, a local structure of an
arbitrary Poisson manifold reduces to the following
\cite{vais,wein}.

\begin{theorem}\label{wenst1} \mar{wenst1} Each point of a Poisson manifold has an
open neighborhood which is Poisson equivalent to the product of a
manifold with the zero Poisson structure and a symplectic
manifold.
\end{theorem}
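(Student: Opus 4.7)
The plan is to derive the statement as an immediate consequence of Theorem \ref{canpo} combined with Example \ref{spr862}. Given a point $z$ of the $k$-dimensional (regular) Poisson manifold $(Z,w)$ with characteristic distribution of constant rank $2m$, Theorem \ref{canpo} furnishes an open neighborhood $U$ of $z$ equipped with canonical coordinates $(z^1,\dots,z^{k-2m},q^1,\dots,q^m,p_1,\dots,p_m)$ in which
\[
w|_U \;=\; \frac{\partial}{\partial p_i}\wedge\frac{\partial}{\partial q^i}.
\]

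Next I would shrink $U$ if necessary so that the chart identifies $U$ diffeomorphically with a product box $U_1\times U_2$, where $U_1\subset \mathbb R^{k-2m}$ carries the coordinates $(z^a)$ and $U_2\subset \mathbb R^{2m}$ carries $(q^i,p_i)$. Endow $U_1$ with the zero Poisson structure $w_1=0$ of Example \ref{w206}, and endow $U_2$ with the non-degenerate Poisson bivector field $w_2=\partial^i\wedge\partial_i$ corresponding via the pairing of the theorem to the canonical symplectic form $\Omega=dp_i\wedge dq^i$. By Example \ref{spr862}, the product $U_1\times U_2$ inherits the product Poisson structure $w_1+w_2$, with the two projections becoming Poisson morphisms.

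The verification is then trivial: under the coordinate identification, the product bivector field $w_1+w_2$ is pointwise equal to the coordinate expression of $w|_U$ given by Theorem \ref{canpo}, so the chart is a Poisson isomorphism of $(U,w|_U)$ onto $(U_1,0)\times (U_2,w_2)$. This exhibits $U$ as Poisson equivalent to the product of a manifold with zero Poisson structure and a symplectic manifold, which is the claim.

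The only step that could be considered an obstacle is the reduction to a product neighborhood: one must check that the Darboux chart can be taken with image a rectangular box $U_1\times U_2$, so that the split Poisson structure is well-defined on the product. Since canonical coordinates give a diffeomorphism of a neighborhood of $z$ onto an open set of $\mathbb R^k$, one can always further restrict to a product of open balls, so this is a triviality. The heavy lifting — the existence of the split canonical coordinates in the first place — has already been carried out in Theorem \ref{canpo}, which in turn rests on the regularity assumption on $w$ used throughout the paper.
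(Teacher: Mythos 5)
Your proposal is correct and follows essentially the route the paper intends: the paper gives no explicit proof, merely noting that the statement follows from the local simplicity of the characteristic foliation, i.e.\ from the local canonical coordinates of Theorem \ref{canpo}, which is exactly the reduction you carry out. Shrinking the Darboux chart to a coordinate box and invoking the product Poisson structure of Example \ref{spr862} is the standard way to make that reduction explicit, and it is valid here because the paper assumes all Poisson structures are regular.
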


Provided with this symplectic structure, the leaves of the
characteristic foliation of a Poisson manifold $Z$ are assembled
into a symplectic foliation of $Z$ as follows (see Section 7.4).

Let $\cF$ be an even dimensional foliation of a manifold $Z$. A
$\wt d$-closed non-degenerate leafwise two-form $\Om_\cF$ on a
foliated manifold $(Z,\cF)$ is called symplectic. Its pull-back
$i_F^*\Om_\cF$ onto each leaf $F$ of $\cF$ is a symplectic form on
$F$. A foliation $\cF$ provided with a symplectic leafwise form
$\Om_\cF$ is called the symplectic foliation.

If a symplectic leafwise form $\Om_\cF$ exists, it yields a bundle
isomorphism
\be
\Om_\cF^\fl: T\cF\op\to_Z T\cF^*, \qquad \Om_\cF^\fl:v\to -
v\rfloor\Om_\cF(z), \qquad v\in T_z\cF.
\ee
The inverse isomorphism $\Om_\cF^\sh$ determines a bivector field
\mar{spr904}\beq
w_\Om(\al,\bt)=\Om_\cF(\Om_\cF^\sh(i^*_\cF\al),\Om_\cF^\sh(i^*_\cF\bt)),
\qquad  \al,\bt\in T_z^*Z, \quad z\in Z, \label{spr904}
\eeq
on $Z$ subordinate to $\op\w^2T\cF$. It is a Poisson bivector
field. The corresponding Poisson bracket reads
\mar{spr902}\beq
\{f,f'\}_\cF=\vt_f\rfloor \wt df', \qquad \vt_f\rfloor\Om_\cF=-\wt
df, \qquad \vt_f=\Om_\cF^\sh(\wt df).\label{spr902}
\eeq
Its kernel is $S_\cF(Z)$.

Conversely, let $(Z,w)$ be a Poisson manifold and $\cF$ its
characteristic foliation. Since Ann$\,T\cF\subset T^*Z$ is
precisely the kernel of a Poisson bivector field $w$, a bundle
homomorphism
\be
w^\sh: T^*Z\op\to_Z TZ
\ee
factorizes in a unique fashion
\mar{lmp03}\beq
w^\sh: T^*Z\ar_Z^{i^*_\cF} T\cF^*\ar_Z^{w^\sh_\cF}
T\cF\ar_Z^{i_\cF} TZ \label{lmp03}
\eeq
through a bundle isomorphism
\mar{lmp02}\beq
w_\cF^\sh: T\cF^*\op\to_Z T\cF,  \qquad w^\sh_\cF:\al\to
-w(z)\lfloor \al, \qquad \al\in T_z\cF^*. \label{lmp02}
\eeq
The inverse isomorphism $w_\cF^\fl$ yields a symplectic leafwise
form
\mar{spr903}\beq
\Om_\cF(v,v')=w(w_\cF^\fl(v),w_\cF^\fl(v')), \qquad  v,v'\in
T_z\cF, \qquad z\in Z. \label{spr903}
\eeq
The formulas (\ref{spr904}) and (\ref{spr903}) establish the
equivalence between the Poisson structures on a manifold $Z$ and
its symplectic foliations.

Turn now to a group action on Poisson manifolds. By $G$ throughout
is meant a real connected Lie group, $\cG$ is its right Lie
algebra, and $\cG^*$ is the Lie coalgebra (see Section 7.5).

We start with the symplectic case. Let a Lie group $G$ act on a
symplectic manifold $(Z,\Om)$ on the left by symplectomorphisms.
Such an action of $G$ is called symplectic. Since $G$ is
connected, its action on a manifold $Z$ is symplectic iff the
homomorphism $\ve\to\xi_\ve$, $\ve\in\cG$, (\ref{spr941}) of a Lie
algebra $\cG$ to a Lie algebra $\cT_1(Z)$ of vector fields on $Z$
is carried out by canonical vector fields for a symplectic form
$\Om$ on $Z$. If all these vector fields are Hamiltonian, an
action of $G$ on $Z$ is called a Hamiltonian action. One can show
that, in this case, $\xi_\ve$, $\ve\in\cG$, are Hamiltonian vector
fields of functions on $Z$ of the following particular type.

\begin{proposition}\label{gen4}  \mar{gena4} An action of a Lie group $G$ on
a symplectic manifold $Z$ is Hamiltonian iff there exists a
mapping
\mar{spr943}\beq
\wh J: Z\to \cG^*, \label{spr943}
\eeq
called the momentum mapping,  such that
\mar{z210}\beq
\xi_\ve\rfloor\Om=- dJ_\ve, \qquad J_\ve(z)=\lng\wh J(z),\ve\rng,
\qquad  \ve\in \cG. \label{z210}
\eeq
\end{proposition}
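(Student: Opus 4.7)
The proposition is an ``iff'' statement, so I would handle the two directions separately; one is essentially a rewriting of definitions, the other uses linearity of the infinitesimal action.

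\medskip

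For the easy direction, suppose a momentum mapping $\wh J: Z\to\cG^*$ satisfying (\ref{z210}) exists. Then for every $\ve\in\cG$, the function $J_\ve = \lng\wh J,\ve\rng$ is smooth on $Z$ and the defining relation $\xi_\ve\rfloor\Om = -dJ_\ve$ says exactly that the canonical vector field $\xi_\ve$ (which is canonical because $G$ acts by symplectomorphisms, via the remark preceding the proposition) is Hamiltonian. Since this holds for every $\ve\in\cG$, the action is Hamiltonian by definition.

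\medskip

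For the nontrivial direction, assume the $G$-action is Hamiltonian, so that for each $\ve\in\cG$ there exists (possibly non-unique) $J_\ve\in C^\infty(Z)$ with $\xi_\ve\rfloor\Om = -dJ_\ve$. The plan is to exhibit a choice that is linear in $\ve$, and then to package it as $\wh J$. Pick a basis $\{e_1,\dots,e_r\}$ of $\cG$. By hypothesis, each $\xi_{e_a}$ is Hamiltonian, so choose $J_a\in C^\infty(Z)$ with $\xi_{e_a}\rfloor\Om = -dJ_a$. For an arbitrary $\ve=\ve^ae_a\in\cG$ define
\beq
J_\ve \;=\; \ve^a J_a, \qquad \wh J(z) \;=\; J_a(z)\,e^a \in \cG^*, \nonumber
\eeq
where $\{e^a\}$ is the dual basis. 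Then $\lng\wh J(z),\ve\rng = \ve^a J_a(z) = J_\ve(z)$ by construction, so the second equation of (\ref{z210}) holds tautologically. For the first, use that $\ve\mto\xi_\ve$ is $\mathbb R$-linear (being the Lie algebra homomorphism of the infinitesimal action) and that $\Om$ is fixed:
\beq
\xi_\ve\rfloor\Om \;=\; \ve^a\,\xi_{e_a}\rfloor\Om \;=\; -\ve^a\, dJ_a \;=\; -d(\ve^a J_a) \;=\; -dJ_\ve. \nonumber
\eeq
This verifies the momentum mapping condition on the basis directions, and by linearity on all of $\cG$.

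\medskip

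I expect no serious obstacle: the only subtle point is that the indefiniteness of each $J_\ve$ up to an additive constant means a linear choice is not automatic; fixing it along a basis as above is the standard trick, and $\mathbb R$-linearity of $\ve\mto\xi_\ve$ makes the rest formal. (Note that this argument produces the \emph{existence} of a momentum map, but does not address its equivariance or the possible cocycle obstruction to being a Poisson morphism, which are not claimed in the statement.)
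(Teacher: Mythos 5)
Your proof is correct. The paper itself states Proposition \ref{gen4} without proof (it is quoted as a standard fact, cf.\ \cite{abr}), and your argument is the standard one: the backward direction is a direct unwinding of the definition of a Hamiltonian vector field, and the forward direction correctly handles the only subtlety --- that the individual Hamiltonians $J_\ve$ are defined only up to constants, so a linear-in-$\ve$ choice must be fixed by selecting primitives $J_a$ along a basis of $\cG$ and extending by linearity, which works because $\ve\mto\xi_\ve$ is $\mathbb R$-linear and $\Om^\fl$ is linear. Your closing remark is also apt: equivariance and the Poisson-morphism property are genuinely additional issues (addressed separately in the paper via the cocycle $\si(g)$ of (\ref{z220}) and Proposition \ref{spr948}) and are not part of the claim here.
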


The momentum mapping (\ref{spr943}) is defined up to a constant
map. Indeed, if $\wh J$ and $\wh J'$ are different momentum
mappings for the same symplectic action of $G$ on $Z$, then
\be
d(\lng \wh J(z)-\wh J'(z),\ve\rng)=0, \qquad  \ve\in\cG.
\ee
Given $g\in G$, let us us consider the difference
\mar{z220}\beq
\si(g) =\wh J(gz)-{\rm Ad}^*g(\wh J(z)), \label{z220}
\eeq
where ${\rm Ad}^*g$ is the coadjoint representation (\ref{z211})
on $\cG^*$. One can show that the difference (\ref{z220}) is
constant on a symplectic manifold $Z$ \cite{abr}. A momentum
mapping $\wh J$ is called equivariant if $\si(g)=0$, $g\in G$.

\begin{example}\label{imp} \mar{imp}
Let a symplectic form on $Z$ be exact, i.e., $\Om=d\thh$, and let
$\thh$ be $G$-invariant, i.e.,
\be
\bL_{\xi_\ve}\thh=d(\xi_\ve\rfloor\thh) +\xi_\ve\rfloor\Om=0,
\qquad
  \ve\in\cG.
\ee
Then the momentum mapping $\wh J$ (\ref{spr943}) can be given by
the relation
\be
\lng\wh J(z),\ve\rng=(\xi_\ve\rfloor\thh)(z).
\ee
It is equivariant. In accordance with the relation (\ref{z211}),
it suffices to show that
\be
J_\ve(gz)=J_{{\rm Ad}\,g^{-1}(\ve)}(z), \qquad (\xi_\ve\rfloor
\thh)(gz)=(\xi_{{\rm Ad}\,g^{-1}(\ve)}\rfloor\thh)(z).
\ee
This holds by virtue of the relation (\ref{z212}). For instance,
let $T^*Q$ be a symplectic manifold equipped with the canonical
symplectic form $\Om_T$ (\ref{m83}). Let a left action of a Lie
group $G$ on $Q$ have the infinitesimal generators
$\tau_m=\ve^i_m(q)\dr_i$. The canonical lift of this action onto
$T^*Q$ has the  infinitesimal generators (\ref{l28}):
\mar{z216}\beq
\xi_m=\wt\tau_m=ve^i_m\dr_i -p_j\dr_i\ve_m^j\dr^i, \label{z216}
\eeq
and preserves the canonical Liouville form $\Xi$ on $T^*Q$. The
$\xi_m$ (\ref{z216}) are Hamiltonian vector fields of the
functions $J_m=\ve_m^i(q)p_i$, determined by the equivariant
momentum mapping $\wh J=\ve^i_m(q)p_i\ve^m$.
\end{example}

\begin{theorem}
A momentum mapping $\wh J$ associated to a symplectic action of a
Lie group $G$ on a symplectic manifold $Z$ obeys the relation
\mar{z223}\beq
\{J_\ve,J_{\ve'}\}= J_{[\ve,\ve']} - \lng T_e\si(\ve'),\ve\rng.
\label{z223}
\eeq
\end{theorem}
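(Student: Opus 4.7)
The plan is to establish the identity in two stages: first show that the function $\{J_\ve,J_{\ve'}\}-J_{[\ve,\ve']}$ is locally constant on $Z$, and then compute that constant by differentiating the cocycle $\si$ of (\ref{z220}) at the group identity.

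For the first stage, recall from (\ref{z210}) that $\xi_\ve=\vt_{J_\ve}$, and that the map $\ve\mapsto\xi_\ve$ is a Lie algebra homomorphism $\cG\to\cT_1(Z)$ by (\ref{spr941}). Combining these with the identity (\ref{m81}), I would compute
\be
\vt_{\{J_\ve,J_{\ve'}\}} \;=\; [\vt_{J_\ve},\vt_{J_{\ve'}}] \;=\; [\xi_\ve,\xi_{\ve'}] \;=\; \xi_{[\ve,\ve']} \;=\; \vt_{J_{[\ve,\ve']}}.
\ee
Since $\Om$ is non-degenerate, the assignment $h\mapsto\vt_h=\Om^\sh(dh)$ has locally constant kernel, hence $c(\ve,\ve'):=\{J_\ve,J_{\ve'}\}-J_{[\ve,\ve']}$ is constant on each connected component of $Z$.

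For the second stage, I would evaluate $c(\ve,\ve')$ at any convenient point $z$ by writing $\{J_\ve,J_{\ve'}\}(z)=\xi_\ve(J_{\ve'})(z)$ via (\ref{m81'}), and recognising this as the derivative of $t\mapsto J_{\ve'}(\exp(t\ve)z)$ at $t=0$. Substituting the defining relation (\ref{z220}) in the form $\wh J(gz)=\si(g)+{\rm Ad}^*g(\wh J(z))$, pairing with $\ve'$, and differentiating at $t=0$ splits into two pieces: a term $\lng T_e\si(\ve),\ve'\rng$ coming from $\si$, and a term coming from $\frac{d}{dt}|_{t=0}{\rm Ad}^*(\exp t\ve)\wh J(z)$, which by the definition of the coadjoint representation (\ref{z211}) produces $J_{[\ve,\ve']}(z)$ up to a sign.

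The main obstacle is the bookkeeping of signs and of the order of arguments: the theorem pairs $T_e\si(\ve')$ with $\ve$, whereas the naive differentiation along $\ve$ first produces $T_e\si(\ve)$ paired with $\ve'$. These must be reconciled via the antisymmetry of the bilinear form $(\ve,\ve')\mapsto\lng T_e\si(\ve),\ve'\rng$, which follows either by differentiating the 1-cocycle identity $\si(gh)=\si(g)+{\rm Ad}^*g(\si(h))$ at $(e,e)$, or a posteriori from antisymmetry of $\{\cdot,\cdot\}$ together with that of $[\cdot,\cdot]$ (using the stage-one result). Once all conventions are aligned with (\ref{z211}) and (\ref{z212}), the constant comes out to exactly $-\lng T_e\si(\ve'),\ve\rng$, completing the proof.
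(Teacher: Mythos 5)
The paper states this theorem without proof (the only related remark is that the difference $\si(g)$ of (\ref{z220}) is constant, attributed to \cite{abr}), so there is no in-text argument to compare yours against; what you propose is precisely the standard proof from that reference. Your two-stage structure is sound. Stage one correctly combines $\xi_\ve=\vt_{J_\ve}$ (from (\ref{z210}) and (\ref{z100})), the identity (\ref{m81}) and the homomorphism property of $\ve\to\xi_\ve$ in (\ref{spr941}) to get $\vt_{\{J_\ve,J_{\ve'}\}-J_{[\ve,\ve']}}=0$, whence the difference is constant by non-degeneracy of $\Om$ and connectedness of $Z$. Stage two correctly identifies that constant by writing $\wh J(gz)=\si(g)+{\rm Ad}^*g(\wh J(z))$, pairing with $\ve'$ and differentiating along $g=\exp(t\ve)$, using that $\si(g)$ is a constant element of $\cG^*$.

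One caution about the sign bookkeeping you defer to the end. The antisymmetry of $(\ve,\ve')\mapsto\lng T_e\si(\ve),\ve'\rng$ reconciles only the order of arguments in the cocycle term; it cannot repair the sign in front of $J_{[\ve,\ve']}$. That sign is already pinned to $+1$ by stage one, so the ${\rm Ad}^*$-term produced in stage two must also come out as $+J_{[\ve,\ve']}$, otherwise the two stages contradict each other (the discrepancy would be $2J_{[\ve,\ve']}$, which is not constant in general). With the paper's conventions — $\cG_r$ the right Lie algebra, ${\rm Ad}^*$ as in (\ref{z211}) — a naive differentiation of $\lng{\rm Ad}^*(\exp t\ve)\wh J(z),\ve'\rng$ yields $-\lng\wh J(z),[\ve,\ve']\rng$, so you must track carefully which of the two mutually opposite brackets on $T_eG$ is meant by $[\ve,\ve']$ in (\ref{spr941}) versus (\ref{z211}), and how the flow of $\xi_\ve$ is related to $t\mapsto\exp(t\ve)$. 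This is a consistency check on conventions rather than a missing idea, but it is exactly where such proofs usually go astray, so it should be carried out explicitly rather than asserted.
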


In the case of an equivariant momentum mapping, the relation
(\ref{z223}) leads to a homomorphism
\mar{z224}\beq
\{J_\ve,J_{\ve'}\}= J_{[\ve,\ve']} \label{z224}
\eeq
of a Lie algebra $\cG$ to a Poisson algebra of smooth functions on
a symplectic manifold $Z$ (cf. Proposition \ref{spr948} below).

Now let a Lie group $G$ act on a Poisson manifold $(Z,w)$ on the
left by Poisson automorphism. This is a Poisson action. Since $G$
is connected, its action on a manifold $Z$ is a Poisson action iff
the homomorphism $\ve\to\xi_\ve$, $\ve\in\cG$, (\ref{spr941}) of a
Lie algebra $\cG$ to a Lie algebra $\cT_1(Z)$ of vector fields on
$Z$ is carried out by canonical vector fields for a Poisson
bivector field $w$, i.e., the condition (\ref{gm508}) holds. The
equivalent conditions are
\be
&&\xi_\ve(\{f,g\})=\{\xi_\ve(f),g\} +\{f, \xi_\ve(g)\}, \qquad f,g\in
C^\infty(Z), \\
&&\xi_\ve(\{f,g\})=[\xi_\ve,\vt_f](g)- [\xi_\ve,\vt_g](f),\\
&& [\xi_\ve,\vt_f] =\vt_{\xi_\ve(f)},
\ee
where $\vt_f$ is the Hamiltonian vector field (\ref{gm509}) of a
function $f$.

A Hamiltonian action of $G$ on a Poisson manifold $Z$ is defined
similarly to that on a symplectic manifold. Its infinitesimal
generators are tangent to leaves of the symplectic foliation of
$Z$, and there is a Hamiltonian action of $G$ on every symplectic
leaf. Proposition \ref{gen4} together with the notions of a
momentum mapping and an equivariant momentum mapping also are
extended to a Poisson action. However, the difference $\si$
(\ref{z220}) is constant only on leaves of the symplectic
foliation of $Z$ in general. At the same time, one can say
something more on an equivariant momentum mapping (that also is
valid for a symplectic action).

\begin{proposition} \label{spr948} \mar{spr948}
An equivariant momentum mapping $\wh J$  (\ref{spr943}) is a
Poisson morphism to the Lie coalgebra $\cG^*$, provided with the
Lie -- Poisson structure (\ref{gm510}).
\end{proposition}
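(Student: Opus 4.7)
The plan is to verify the Poisson morphism identity
\[
\{f\circ\wh J,g\circ\wh J\}_Z=\{f,g\}_{\cG^*}\circ\wh J,\qquad f,g\in C^\infty(\cG^*),
\]
by reducing it to linear functions on $\cG^*$ and then invoking the equivariance relation (\ref{z224}).

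First, for each $\ve\in\cG$ I would introduce the linear function $\ell_\ve:\cG^*\to\mathbb R$, $\m\mto\lng\m,\ve\rng$, and observe that $\ell_\ve\circ\wh J=J_\ve$ by the defining formula (\ref{z210}). Under the canonical identification $T^*_\m\cG^*\cong\cG$, the differential of $\ell_\ve$ at any point is the constant element $\ve$, so the Lie -- Poisson bracket (\ref{gm510}) specializes to $\{\ell_\ve,\ell_{\ve'}\}_{\cG^*}=\ell_{[\ve,\ve']}$.

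Second, equivariance of $\wh J$ means $\si(g)=0$ in (\ref{z220}), so the correction term $\lng T_e\si(\ve'),\ve\rng$ in (\ref{z223}) vanishes and that relation collapses to the Lie algebra homomorphism (\ref{z224}). Combining the two observations gives the Poisson morphism identity on pairs of linear functions:
\[
\{\ell_\ve\circ\wh J,\ell_{\ve'}\circ\wh J\}_Z=\{J_\ve,J_{\ve'}\}_Z=J_{[\ve,\ve']}=\ell_{[\ve,\ve']}\circ\wh J=\{\ell_\ve,\ell_{\ve'}\}_{\cG^*}\circ\wh J.
\]

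To extend this to arbitrary $f,g\in C^\infty(\cG^*)$, I would invoke the fact that Poisson brackets are bi-derivations, so the value of each side at a point $z\in Z$ depends only on the differentials $df(\wh J(z))$ and $dg(\wh J(z))$, viewed as elements of $\cG$ via the identification above. Setting $\ve:=df(\wh J(z))$ and $\ve':=dg(\wh J(z))$, the linear functions $\ell_\ve,\ell_{\ve'}$ match $f,g$ to first order at $\wh J(z)$, so the general identity reduces pointwise to the linear case already established. The substantive step is the equivariance relation (\ref{z224}); the reduction from smooth functions to linear ones is a standard feature of the Lie -- Poisson structure and presents no genuine obstacle.
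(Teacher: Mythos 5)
Your argument is correct. Note that the paper states Proposition \ref{spr948} without proof (it is quoted as a standard fact, essentially from the references on symplectic geometry), so there is no in-text argument to compare against; what you have written is the standard derivation. Both of your ingredients check out against the paper's conventions: the Lie--Poisson bracket (\ref{gm510}) evaluated on the linear functions $\ell_\ve$, $\ell_{\ve'}$ gives $\lng\m,[\ve,\ve']\rng=\ell_{[\ve,\ve']}(\m)$, and equivariance kills the cocycle term in (\ref{z223}), leaving exactly (\ref{z224}), i.e.\ $\{J_\ve,J_{\ve'}\}=J_{[\ve,\ve']}$ with matching signs. The reduction from arbitrary $f,g\in C^\infty(\cG^*)$ to linear functions is also sound: by Theorem \ref{p11.1} a Poisson bracket is given by a bivector field, so $\{F,G\}(z)$ depends only on $dF(z)$ and $dG(z)$; since $d\ell_\ve$ is the constant $\ve$ under the identification $T^*_\m\cG^*\cong\cG$, the function $\ell_\ve$ with $\ve=df(\wh J(z))$ has the same differential as $f$ at $\wh J(z)$, and hence $\ell_\ve\circ\wh J$ has the same differential as $f\circ\wh J$ at $z$. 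The only caveat worth flagging is that in the Poisson (non-symplectic) setting the paper observes that $\si(g)$ is constant only on symplectic leaves, so "equivariant" must be taken in the global sense $\si(g)=0$ used in the proposition; with that reading your proof goes through unchanged.
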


\subsection{Poisson and symplectic Hamiltonian systems}

Given a Poisson manifold $(Z,w)$, a Poisson Hamiltonian system
$(w,\cH)$ on $Z$ for a Hamiltonian $\cH\in C^\infty(Z)$ with
respect to a Poisson structure $w$ is defined as a set
\mar{gm151}\beq
S_\cH=\op\bigcup_{z\in Z} \{v\in T_zZ:\, v-w^\sh(d\cH)(z)=0\}.
\label{gm151}
\eeq
By a solution of this Hamiltonian system is meant a vector field
$\vt$ on $Z$ which takes its values into $TN\cap S_\cH$. Clearly,
the Poisson Hamiltonian system (\ref{gm151}) has a unique solution
which is the Hamiltonian vector field
\mar{gm152}\beq
\vt_\cH=w^\sh(d\cH) \label{gm152}
\eeq
of $\cH$. Hence, $S_\cH$ (\ref{gm151}) is an autonomous first
order dynamic equation (see forthcoming Remark \ref{gena70}),
called the Hamilton equation for a Hamiltonian $\cH$ with respect
to a Poisson structure $w$.

\begin{remark}\label{gena70} \mar{gena70}
Let $u$ be a vector field $u$ on $Z$. A closed subbundle $u(Z)$ of
the tangent bundle $TZ$ given by the coordinate relations
\mar{gm150}\beq
\dot z^\la=u^\la(z) \label{gm150}
\eeq
is said to be a first order autonomous dynamic equation
 on a manifold $Z$ \cite{book98,book00}. By a solution  of the autonomous first order dynamic
equation (\ref{gm150}) is meant an integral curve of a vector
field $u$.
\end{remark}

Relative to local canonical coordinates $(z^\la,q^i,p_i)$
(\ref{m111}) for a Poisson structure $w$ on $Z$ and corresponding
holonomic coordinates $(z^\la,q^i,p_i, \dot z^\la,\dot q^i, \dot
p_i)$ on $TZ$, the Hamilton equation (\ref{gm151}) and the
Hamiltonian vector field (\ref{gm152}) take a form
\mar{051,2}\ben
&& \dot q^i = \dr^i\cH, \qquad \dot p_i =-\dr_i\cH, \qquad \dot
z^\la=0, \label{051}\\
&& \vt_\cH= \dr^i\cH\dr_i -\dr_i\cH\dr^i. \label{052}
\een
Solutions  of the Hamilton equation (\ref{051}) are integral
curves of the Hamiltonian vector field (\ref{052}).

Let $(Z,w,\cH)$ be a Poisson Hamiltonian system. Its integral of
motion is a smooth function $F$ on $Z$ whose Lie derivative
\mar{z72}\beq
\bL_{\vt_\cH}F=\{\cH,F\} \label{z72}
\eeq
along the Hamiltonian vector field $\vt_\cH$ (\ref{052}) vanishes
in accordance with the equality (\ref{0115}). The equality
(\ref{z72}) is called the evolution equation.

It is readily observed that the Poisson bracket $\{F,F'\}$ of any
two integrals of motion $F$ and $F'$ also is an integral of
motion. Consequently, the integrals of motion of a Poisson
Hamiltonian system constitute a real Lie algebra.

Since
\be
\vt_{\{\cH,F\}}=[\vt_\cH,\vt_F], \qquad \{\cH,F\}=-\bL_{\vt_F}\cH,
\ee
the Hamiltonian vector field $\vt_F$ of any integral of motion $F$
of a Poisson Hamiltonian system is a symmetry both of the Hamilton
equation (\ref{051}) (Proposition \ref{082}) and a Hamiltonian
$\cH$ (Definition \ref{084}).

Let $(Z,\Om)$ be a symplectic manifold. The notion of a symplectic
Hamiltonian system is a repetition of the Poisson one, but all
expressions are rewritten in terms of a symplectic form $\Om$ as
follows.

A symplectic Hamiltonian system $(\Om,\cH)$ on a manifold $Z$ for
a Hamiltonian $\cH$ with respect to a symplectic structure $\Om$
is a set
\mar{z90}\beq
S_\cH=\op\bigcup_{z\in Z} \{v\in T_zZ:\, v\rfloor \Om +
d\cH(z)=0\}. \label{z90}
\eeq
As in the general case of Poisson Hamiltonian systems, the
symplectic one $(\Om,\cH)$ has a unique solution which is the
Hamiltonian vector field
\mar{053}\beq
\vt_\cH\rfloor\Om=-d\cH \label{053}
\eeq
of $\cH$. Hence, $S_\cH$ (\ref{z90}) is an autonomous first order
dynamic equation, called the Hamilton equation for a Hamiltonian
$\cH$ with respect to a symplectic structure $\Om$. Relative to
the local canonical coordinates $(q^i,p_i)$ for a symplectic
structure $\Om$, the Hamilton equation (\ref{z90}) and the
Hamiltonian vector field (\ref{053}) read
\mar{z66,054}\ben
&& \dot q^i = \dr^i\cH, \qquad \dot p_i =-\dr_i\cH, \label{z66}\\
&& \vt_\cH= \dr^i\cH\dr_i -\dr_i\cH\dr^i. \label{054}
\een

Integrals of motion of a symplectic Hamiltonian system are defined
just as those of a Poisson Hamiltonian system.

\subsection{Partially integrable systems on a Poisson manifold}

Completely integrable and superintegrable systems are considered
with respect to a symplectic structure on a manifold which holds
fixed from the beginning. As was mentioned above, partially
integrable system admits different compatible Poisson structures
(see Theorem \ref{bi92} below). Treating partially integrable
systems, we therefore are based on a wider notion of the dynamical
algebra \cite{jmp03,book10}.

Let we have $m$ mutually commutative vector fields $\{\vt_\la\}$
on a connected smooth real manifold $Z$ which are independent
almost everywhere on $Z$, i.e., the set of points, where the
multivector field $\op\w^m\vt_\la$ vanishes, is nowhere dense. We
denote by $\cS\subset C^\infty(Z)$ the $\mathbb R$-subring of
smooth real functions $f$ on $Z$ whose derivations $\vt_\la\rfloor
df$ vanish for all $\vt_\la$. Let $\cA$ be an $m$-dimensional Lie
$\cS$-algebra generated by the vector fields $\{\vt_\la\}$. One
can think of one of its elements as being an autonomous first
order dynamic equation on $Z$ and of the other as being its
integrals of motion in accordance with Definition \ref{026}. By
virtue of this definition, elements of $\cS$ also are regarded as
integrals of motion. Therefore, we agree to call $\cA$ a dynamical
algebra.

Given a commutative dynamical algebra $\cA$ on a manifold $Z$, let
$G$ be the group of local diffeomorphisms of $Z$ generated by the
flows of these vector fields. The orbits of $G$ are maximal
invariant submanifolds of $\cA$ (we follow the terminology of
\cite{susm}). Tangent spaces to these submanifolds form a
(non-regular) distribution $\cV\subset TZ$ whose maximal integral
manifolds  coincide with orbits of $G$. Let $z\in Z$ be a regular
point of the distribution $\cV$, i.e., $\op\w^m\vt_\la(z)\neq 0$.
Since the group $G$ preserves $\op\w^m\vt_\la$, a maximal integral
manifold $M$ of $\cV$ through $z$ also is  regular (i.e., its
points are regular). Furthermore, there exists an open
neighborhood $U$ of $M$ such that, restricted to $U$, the
distribution $\cV$ is an $m$-dimensional regular distribution on
$U$. Being involutive, it yields a foliation $\gF$ of $U$. A
regular open neighborhood $U$ of an invariant submanifold of $M$
is called  saturated
 if any invariant submanifold
through a point of $U$ belongs to $U$. For instance, any compact
invariant submanifold has such an open neighborhood.

\begin{definition} \label{cc1} \mar{cc1} Let $\cA$ be
an $m$-dimensional dynamical algebra on a regular Poisson manifold
$(Z,w)$. It is said to be a  partially integrable system if:

(a) its generators $\vt_\la$ are Hamiltonian vector fields of some
functions $S_\la\in \cS$ which are independent almost everywhere
on $Z$, i.e., the set of points where the $m$-form $\op\w^m
dS_\la$ vanishes is nowhere dense;

(b) all elements of $\cS\subset C^\infty(Z)$ are mutually  in
involution, i.e., their Poisson brackets equal zero.
\end{definition}

It follows at once from this definition that the Poisson structure
$w$ is at least of rank $2m$, and that $\cS$ is a commutative
Poisson algebra. We call the functions $S_\la$ in item (a) of
Definition \ref{cc1} the  generating functions of a partially
integrable system, which is uniquely defined by a family
$(S_1,\ldots,S_m)$ of these functions.

\begin{remark} \label{000} \mar{000}
If $2m=\di Z$ in Definition \ref{cc1}, we have a  completely
integrable system on a symplectic manifold $Z$ (see Definition
\ref{cmp21} below).
\end{remark}

If $2m<\di Z$, there exist different Poisson structures on $Z$
which bring a dynamical algebra $\cA$ into a partially integrable
system. Forthcoming Theorems \ref{bi0} and \ref{bi92} describe all
these Poisson structures around a regular invariant submanifold
$M\subset Z$ of $\cA$ \cite{jmp03}.

\begin{theorem} \label{bi0} \mar{bi0}
Let $\cA$ be a dynamical algebra, $M$ its regular invariant
submanifold, and $U$ a saturated regular open neighborhood of $M$.
Let us suppose that:

(i) the vector fields $\vt_\la$ on $U$ are complete,

(ii) the foliation $\gF$ of $U$ admits a transversal manifold
$\Si$ and its holonomy pseudogroup on $\Si$ is trivial,

(iii) the leaves of this foliation are mutually diffeomorphic.

\noindent Then the following hold.

(I) The leaves of $\cF$ are diffeomorphic to a toroidal cylinder
\mar{g120}\beq
\mathbb R^{m-r}\times T^r, \qquad 0\leq r\leq m. \label{g120}
\eeq

(II) There exists an open saturated neighborhood of $M$, say $U$
again, which is the trivial principal bundle
\mar{z10'}\beq
U=N\times(\mathbb R^{m-r}\times T^r)\ar^\pi N \label{z10'}
\eeq
over a domain $N\subset \mathbb R^{\di Z-m}$ with the structure
group (\ref{g120}).

(III) If $2m\leq\di Z$, there exists a Poisson structure of rank
$2m$ on $U$ such that $\cA$ is a partially integrable system in
accordance with Definition \ref{cc1}.
\end{theorem}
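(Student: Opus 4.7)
The plan is to treat the three parts in order, since (II) depends on (I) and (III) depends on (II). For part (I), observe that the $m$ complete mutually commuting vector fields $\vt_1,\dots,\vt_m$ define a smooth left action of the abelian Lie group $\mathbb R^m$ on $U$; by regularity of the distribution $\cV$ on $U$ this action is locally free on leaves, and by involutivity of $\cV$ together with Frobenius the orbits of $\mathbb R^m$ coincide with the leaves of $\gF$. For $z$ in a leaf $F$, the orbit map $\mathbb R^m\to F$ is a surjective submersion, and its isotropy group is a closed discrete subgroup of $\mathbb R^m$, hence a lattice isomorphic to $\mathbb Z^r$ for some $0\leq r\leq m$. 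Therefore $F\cong \mathbb R^m/\mathbb Z^r\cong \mathbb R^{m-r}\times T^r$, and by hypothesis (iii) the integer $r$ is constant across leaves.

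For part (II), the assumption that the holonomy pseudogroup of $\gF$ on the transversal $\Si$ is trivial, combined with $U$ being saturated, implies that the leaf space $U/\gF$ is a smooth Hausdorff manifold of dimension $\di Z-m$ and that the projection $\pi:U\to U/\gF$ is a fibration. Shrinking $U$ to a smaller saturated neighborhood of $M$ if necessary, one can identify $U/\gF$ with a contractible domain $N\subset \mathbb R^{\di Z-m}$. The transversal $\Si$ then supplies a global section $\si:N\to U$ of $\pi$, and the $\mathbb R^m$-action gives a map $N\times\mathbb R^m\to U$, $(y,t)\mto t\cdot\si(y)$. Trivial holonomy is precisely what allows a single lattice $\mathbb Z^r\subset\mathbb R^m$ to realize the isotropy of \emph{every} fiber simultaneously, so the map descends to a diffeomorphism $N\times(\mathbb R^{m-r}\times T^r)\to U$ equivariant for the structure group $\mathbb R^{m-r}\times T^r$, exhibiting (\ref{z10'}) as a trivial principal bundle.

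For part (III), use the trivialization $U=N\times(\mathbb R^{m-r}\times T^r)$ to pick global coordinates $(I_1,\dots,I_{\di Z-m})$ on $N$ and fiber coordinates $(\phi^1,\dots,\phi^m)$ adapted so that $\vt_\la=\dr/\dr\phi^\la$. Because $2m\leq\di Z$, we may select $m$ of the $N$-coordinates; relabel them $(I_1,\dots,I_m)$ and define
\[
w \;=\; \sum_{\la=1}^{m}\frac{\dr}{\dr I_\la}\w\frac{\dr}{\dr\phi^\la}
\]
on $U$. Its components are constant, so $[w,w]_{\rm SN}=0$ and $w$ is by Theorem \ref{p11.1} a Poisson bivector field of constant rank $2m$. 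Setting $S_\la=I_\la$ one obtains $w^\sh(dS_\la)=\vt_\la$ in accordance with Theorem \ref{canpo}, and the subring $\cS\subset C^\infty(U)$ of functions killed by every $\vt_\la\rfloor d(\cdot)$ is exactly the pull-back of $C^\infty(N)$, whose elements are manifestly in pairwise involution under this $w$. Both clauses of Definition \ref{cc1} hold, so $\cA$ is a partially integrable system.

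The main obstacle is part (II): promoting the pointwise leaf description of (I) to a global trivialization of $U$ as a principal $\mathbb R^{m-r}\times T^r$-bundle. What has to be verified is that the fibrewise isotropy lattices, each an abstract copy of $\mathbb Z^r\subset\mathbb R^m$, can be identified consistently across all leaves, and this is exactly what the triviality of the holonomy pseudogroup on $\Si$ provides. Writing down the explicit $\mathbb R^m$-equivariant diffeomorphism from $N\times(\mathbb R^{m-r}\times T^r)$ onto $U$, and checking smooth dependence of the lattice generators as functions on $N$, is where the real work of the proof lies; parts (I) and (III) are then essentially bookkeeping with the canonical coordinate model.
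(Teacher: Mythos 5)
Your overall route coincides with the paper's: part (I) via the locally free $\mathbb R^m$-action and the lattice isotropy groups, part (II) via Molino's result that trivial holonomy makes $\gF$ a fibred manifold plus a section, and part (III) via the constant bivector $\dr^\la\w\dr_\la$. But in part (II) you have a genuine gap, and the sentence you use to paper over it is false. You claim that triviality of the holonomy pseudogroup ``allows a single lattice $\mathbb Z^r\subset\mathbb R^m$ to realize the isotropy of every fiber simultaneously.'' It does not: the isotropy subgroups $K_x\subset\mathbb R^m$ genuinely vary with $x\in N$ (they are abstractly isomorphic to $\mathbb Z^r$, not equal as subgroups), so the naive map $N\times\mathbb R^m\to U$, $(y,t)\mapsto t\cdot\si(y)$, does not descend to a map from $N\times(\mathbb R^{m-r}\times T^r)$ with a \emph{fixed} quotient lattice. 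The content of part (II) is precisely to repair this: one must (a) show that generators $v_i(x)$ of $K_x$ can be chosen to depend smoothly on $x$ — in the paper they are obtained as the unique local solutions of the return equation $g(s_i^\al)\si(x)=\si(x)$ along the section $\si$ — and (b) assemble them into a smooth family of automorphisms $A(x)$ of $\mathbb R^m$ (the matrix (\ref{d6})) carrying $K_0$ onto $K_x$, which yields group isomorphisms $\rho_x:G_0\to G_x$ and hence a smooth free fibrewise $G_0$-action with $U/G_0=N$. You explicitly flag this as ``where the real work of the proof lies'' and then do not do it, so the proposal does not establish (II).

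A secondary slip occurs in part (III): you assert that the fibre coordinates can be adapted so that $\vt_\la=\dr/\dr\phi^\la$ for all $\la$. In the standard coordinates $(t^a,\vf^i)$ on the fixed group $G_0$ this is not possible in general; the original generators take the form (\ref{ww25}), i.e.\ $\vt_a=\dr_a$ but $\vt_i=-(BC^{-1})^a_i(x)\dr_a+(C^{-1})^k_i(x)\dr_k$ with $x$-dependent coefficients, precisely because the periodicity lattice varies with $x$. The conclusion survives because $\{\dr_\la\}$ and $\{\vt_\la\}$ generate the same $\cS$-algebra $\cA$ (the matrix relating them is invertible with entries in $\cS$), so one takes $S_\la=J_\la$ with Hamiltonian vector fields $\dr_\la$ as the new generators — but that substitution of generators needs to be stated, not elided by a coordinate adaptation that does not exist.
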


\begin{proof} We follow the proof in \cite{cush,laz} generalized to
the case of non-compact invariant submanifolds
\cite{jmp03,book05,book10}.

(I). Since $m$-dimensional leaves of the foliation $\cF$ admit $m$
complete independent vector fields, they are locally affine
manifolds diffeomorphic to a toroidal cylinder (\ref{g120}).

(II). By virtue of the condition (ii), the foliation $\gF$ of $U$
is a fibred manifold \cite{mol}. Then one can always choose an
open fibred neighborhood of its fibre $M$, say $U$ again, over a
domain $N$ such that this fibred manifold
\mar{d20}\beq
\pi:U\to N \label{d20}
\eeq
admits a section $\si$. In accordance with the well-known theorem
\cite{onish,palais} complete Hamiltonian vector fields $\vt_\la$
define an action of a simply connected Lie group $G$ on $Z$.
Because vector fields $\vt_\la$ are mutually commutative, it is
the additive group $\mathbb R^m$ whose group space is coordinated
by parameters $s^\la$ of the flows with respect to the basis
$\{e_\la=\vt_\la\}$ for its Lie algebra. The orbits of the group
$\mathbb R^m$ in $U\subset Z$ coincide with the fibres of the
fibred manifold (\ref{d20}). Since vector fields $\vt_\la$ are
independent everywhere on $U$, the action of $\mathbb R^m$ on $U$
is locally free, i.e., isotropy groups of points of $U$ are
discrete subgroups of the group $\mathbb R^m$. Given a point $x\in
N$, the action of $\mathbb R^m$ on the fibre $M_x=\pi^{-1}(x)$
factorizes as
\mar{d4}\beq
\mathbb R^m\times M_x\to G_x\times M_x\to M_x \label{d4}
\eeq
through the free transitive action on $M_x$ of the factor group
$G_x=\mathbb R^m/K_x$, where $K_x$ is the isotropy group of an
arbitrary point of $M_x$. It is the same group for all points of
$M_x$ because $\mathbb R^m$ is a commutative group. Clearly, $M_x$
is diffeomorphic to the group space of $G_x$. Since the fibres
$M_x$ are mutually diffeomorphic, all isotropy groups $K_x$ are
isomorphic to the group $\mathbb Z_r$ for some fixed $0\leq r\leq
m$. Accordingly, the groups $G_x$ are isomorphic to the additive
group (\ref{g120}). Let us bring the fibred manifold (\ref{d20})
into a principal bundle with the structure group $G_0$, where we
denote $\{0\}=\pi(M)$. For this purpose, let us determine
isomorphisms $\rho_x: G_0\to G_x$ of the group $G_0$ to the groups
$G_x$, $x\in N$. Then a desired fibrewise action of $G_0$ on $U$
is defined by the law
\mar{d5}\beq
G_0\times M_x\to\rho_x(G_0)\times M_x\to M_x. \label{d5}
\eeq
Generators of each isotropy subgroup $K_x$ of $\mathbb R^m$ are
given by $r$ linearly independent vectors of the group space
$\mathbb R^m$. One can show that there exist ordered collections
of generators $(v_1(x),\ldots,v_r(x))$ of the groups $K_x$ such
that $x\to v_i(x)$ are smooth $\mathbb R^m$-valued fields on $N$.
Indeed, given a vector $v_i(0)$ and a section $\si$ of the fibred
manifold (\ref{d20}), each field $v_i(x)=(s^\al_i(x))$ is a unique
smooth solution of the equation
\be
g(s_i^\al)\si(x)=\si(x), \qquad  (s_i^\al(0))=v_i(0),
\ee
on an open neighborhood of $\{0\}$. Let us consider the
decomposition
\be
v_i(0)=B_i^a(0) e_a + C_i^j(0) e_j, \qquad a=1,\ldots,m-r, \qquad
j=1,\ldots, r,
\ee
where $C_i^j(0)$ is a non-degenerate matrix. Since the fields
$v_i(x)$ are smooth, there exists an open neighborhood of $\{0\}$,
say $N$ again, where the matrices $C_i^j(x)$ are non-degenerate.
Then
\mar{d6}\beq
A(x)=\left(
\begin{array}{ccc}
\id & \qquad & (B(x)-B(0))C^{-1}(0) \\
0 & & C(x)C^{-1}(0)
\end{array}
\right) \label{d6}
\eeq
is a unique linear endomorphism
\be
(e_a,e_i)\to (e_a,e_j)A(x)
\ee
of the vector space $\mathbb R^m$ which transforms the frame
$\{v_\la(0)\}=\{e_a,v_i(0)\}$ into the frame
$\{v_\la(x)\}=\{e_a,\vt_i(x)\}$, i.e.,
\be
v_i(x)=B_i^a(x) e_a + C_i^j(x) e_j=B_i^a(0) e_a + C_i^j(0)
[A_j^b(x)e_b +A_j^k(x)e_k].
\ee
Since $A(x)$ (\ref{d6}) also is an automorphism of the group
$\mathbb R^m$ sending $K_0$ onto $K_x$, we obtain a desired
isomorphism $\rho_x$ of the group $G_0$ to the group $G_x$. Let an
element $g$ of the group $G_0$ be the coset of an element
$g(s^\la)$ of the group $\mathbb R^m$. Then it acts on $M_x$ by
the rule (\ref{d5}) just as the element $g((A_x^{-1})^\la_\bt
s^\bt)$ of the group $\mathbb R^m$ does. Since entries of the
matrix $A$ (\ref{d6}) are smooth functions on $N$, this action of
the group $G_0$ on $U$ is smooth. It is free, and $U/G_0=N$. Then
the fibred manifold (\ref{d20}) is a trivial principal bundle with
the structure group $G_0$. Given a section $\si$ of this principal
bundle, its trivialization $U=N\times G_0$ is defined by assigning
the points $\rho^{-1}(g_x)$ of the group space $G_0$ to the points
$g_x\si(x)$, $g_x\in G_x$, of a fibre $M_x$. Let us endow $G_0$
with the standard coordinate atlas $(r^\la)=(t^a,\vf^i)$ of the
group (\ref{g120}). Then $U$ admits the trivialization
(\ref{z10'}) with respect to the bundle coordinates
$(x^A,t^a,\vf^i)$ where $x^A$, $A=1,\ldots,\di Z-m$, are
coordinates on a base $N$. The vector fields $\vt_\la$ on $U$
relative to these coordinates read
\mar{ww25}\beq
\vt_a=\dr_a, \qquad \vt_i=-(BC^{-1})^a_i(x)\dr_a +
(C^{-1})_i^k(x)\dr_k.\label{ww25}
\eeq
Accordingly, the subring $\cS$ restricted to $U$ is the pull-back
$\pi^*C^\infty(N)$ onto $U$ of the ring of smooth functions on
$N$.

(III). Let us split the coordinates $(x^A)$ on $N$ into some $m$
coordinates $(J_\la)$ and the rest $\di Z- 2m$ coordinates
$(z^A)$. Then we can provide the toroidal domain $U$ (\ref{z10'})
with the Poisson bivector field
\mar{kk500}\beq
w=\dr^\la\w\dr_\la \label{kk500}
\eeq
of rank $2m$. The independent complete vector fields $\dr_a$ and
$\dr_i$ are Hamiltonian vector fields of the functions $S_a=J_a$
and $S_i=J_i$ on $U$ which are in involution with respect to the
Poisson bracket
\mar{bi12}\beq
\{f,f'\}=\dr^\la f\dr_\la f'-\dr_\la f\dr^\la f' \label{bi12}
\eeq
defined by the bivector field $w$ (\ref{kk500}). By virtue of the
expression (\ref{ww25}), the Hamiltonian vector fields
$\{\dr_\la\}$ generate the $\cS$-algebra $\cA$. Therefore,
$(w,\cA)$ is a partially integrable system.
\end{proof}

\begin{remark} \label{kk501} \mar{kk501}
Condition (ii) of Theorem \ref{bi0} is equivalent to that $U\to
U/G$ is a fibred manifold \cite{mol}. It should be emphasized that
a fibration in invariant submanifolds is a standard property of
integrable systems \cite{arn,bog,broer,gaeta,acang2,nekh94}. If
fibres of such a fibred manifold are assumed to be compact then
this fibred manifold is a fibre bundle (Theorem \ref{11t2}) and
vertical vector fields on it (e.g., in condition (i) of Theorem
\ref{bi0}) are complete (Theorem \ref{10b3}).
\end{remark}

A Poisson structure in Theorem \ref{bi0} is by no means unique.
Given the toroidal domain $U$ (\ref{z10'}) provided with bundle
coordinates $(x^A,r^\la)$, it is readily observed that, if a
Poisson bivector field on $U$ satisfies Definition \ref{cc1}, it
takes the form
\mar{bi20}\beq
w=w_1+w_2=w^{A\la}(x^B)\dr_A\w\dr_\la +
w^{\m\nu}(x^B,r^\la)\dr_\m\w \dr_\nu. \label{bi20}
\eeq
The converse also holds as follows.

\begin{theorem} \label{bi92} \mar{bi92}
For any Poisson bivector field $w$ (\ref{bi20}) of rank $2m$ on
the toroidal domain $U$ (\ref{z10'}), there exists a toroidal
domain $U'\subset U$ such that a dynamical algebra $\cA$ in
Theorem \ref{bi0} is a partially integrable system on $U'$.
\end{theorem}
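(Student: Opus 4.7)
The plan is to exhibit explicit generating functions on a suitable subdomain, using the fact that the rank-$2m$ hypothesis forces the ``coupling'' matrix $w^{A\la}(x)$ to have maximal row rank. Write $n=\di Z-m$, so $w$ is an antisymmetric bivector on $U$ whose matrix in the coordinates $(x^A,r^\la)$ has the block form
\[
\left(\begin{array}{cc} 0 & w^{A\la}(x) \\ -w^{A\la}(x)^T & w^{\m\nu}(x,r)\end{array}\right).
\]
The vanishing of the upper-left block, together with $\mathrm{rank}\,w=2m$, is the central structural input; I would first extract from it that the $n\times m$ matrix $(w^{A\la}(x))$ has rank $m$ at every point of $U$. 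The argument is elementary linear algebra: the image of $w^\sh:T^*Z\to TZ$ projects to the $\dr_\la$-subspace with image at most $m$-dimensional, and to the $\dr_A$-subspace with image of dimension equal to $\mathrm{rank}(w^{A\la})$; hence $\mathrm{rank}\,w\le m+\mathrm{rank}(w^{A\la})$, so $\mathrm{rank}(w^{A\la})=m$.

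Next, I would pick any point $x_0\in N$ and select indices $A_1,\dots,A_m$ so that the $m\times m$ submatrix $\bigl(w^{A_\la\m}(x_0)\bigr)_{\la,\m}$ is non-singular. Invertibility of this submatrix is an open condition on $N$, so it persists on some open neighborhood $N'\subset N$ of $x_0$. I would then define the toroidal subdomain
\[
U'=N'\xx(\mathbb R^{m-r}\xx T^r)\subset U,
\]
and take as candidate generating functions $S_\la:=x^{A_\la}|_{U'}$, which clearly lie in $\cS=\pi^*C^\infty(N')$ and have independent differentials everywhere on $U'$.

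I would then verify the two conditions of Definition \ref{cc1} on $U'$. Involution of all of $\cS$ with itself is immediate from the absence of a $\dr_A\w\dr_B$ term in $w$ (\ref{bi20}): for any $f,g$ depending only on $x$, every term of $\{f,g\}=w(df,dg)$ contains at least one derivative $\dr_\la$, which annihilates $f$ or $g$. For the Hamiltonian vector fields, a direct computation from (\ref{gm509}) gives $\vt_{S_\la}=w^{A_\la\m}(x)\dr_\m$ (no $\dr_A$-terms, since $dS_\la$ is pure $dx$ and only $w_1$ contributes). Because $(w^{A_\la\m}(x))$ is invertible on $U'$ and its entries lie in $\cS$, the family $\{\vt_{S_\la}\}$ is an $\cS$-basis of the same module as $\{\dr_\la\}$, which by (\ref{ww25}) is exactly $\cA$. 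Since the $S_\la$ are mutually in involution, $[\vt_{S_\la},\vt_{S_\m}]=\vt_{\{S_\la,S_\m\}}=0$, consistent with commutativity of $\cA$.

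The only genuinely non-routine step is the rank argument identifying $\mathrm{rank}(w^{A\la})=m$ from $\mathrm{rank}\,w=2m$; everything after that is bookkeeping. A subtlety worth flagging is that the choice of $m$ indices $A_1,\dots,A_m$ is local on $N$, which is precisely why the conclusion must be stated on a subdomain $U'\subset U$ rather than on all of $U$.
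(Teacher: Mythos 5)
Your proof is correct, and every step checks out: the inequality $\mathrm{rank}\,w\le m+\mathrm{rank}(w^{A\la})$ is the standard estimate $\di S\le\di\pr_1(S)+\di\pr_2(S)$ for a subspace of a direct sum, applied to the image of $w^\sh$, and the verification of conditions (a) and (b) of Definition \ref{cc1} for $S_\la=x^{A_\la}$ is as routine as you say. However, your route is genuinely different from the paper's. The paper does not select individual coordinate functions; it observes that the vector fields $v^\la=w^{A\la}\dr_A$ project onto $N$ and, by $[w,w]=0$, span an involutive rank-$m$ distribution $\cV_N$ there, integrates this distribution to a foliation $\cF_N$, and then takes as generating functions the leafwise adapted coordinates $J_\la$ of a foliated chart $(J_\la,z^A)$, whose Hamiltonian vector fields $\vt_\la=w^\m_\la\dr_\m$ (\ref{bi93}) generate $\cA$. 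Both arguments hinge on the same rank fact, namely $\mathrm{rank}(w^{A\la})=m$, but yours replaces the Frobenius-theorem step by pointwise linear algebra plus an openness argument, which is more elementary and fully sufficient for Theorem \ref{bi92} as stated. What the paper's construction buys in exchange is the normal form (\ref{bi42}) of $w$ in the coordinates $(J_\la,z^A,r^\la)$ --- no $\dr/\dr z^A$ terms at all, so that the $z^A$ behave as Casimir-type transverse coordinates of the characteristic foliation --- and that normal form is precisely the starting point of the proof of Theorem \ref{bi100}. Your generating functions $x^{A_\la}$ do not by themselves bring $w$ into the form (\ref{bi42}), so your argument establishes Theorem \ref{bi92} but could not be substituted verbatim at the point where the paper later invokes the content of its proof rather than merely its statement.
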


\begin{remark}
It is readily observed that any Poisson bivector field $w$
(\ref{bi20}) fulfills condition (b) in Definition \ref{cc1}, but
condition (a) imposes a restriction on the toroidal domain $U$.
The key point is that the characteristic foliation $\cF$ of $U$
yielded by the Poisson bivector fields $w$ (\ref{bi20}) is the
pull-back of an $m$-dimensional foliation $\cF_N$ of the base $N$,
which is defined by the first summand $w_1$ (\ref{bi20}) of $w$.
With respect to the adapted coordinates $(J_\la,z^A)$,
$\la=1,\ldots, m$, on the foliated manifold $(N,\cF_N)$, the
Poisson bivector field $w$ reads
\mar{bi42}\beq
w= w^\m_\n(J_\la,z^A)\dr^\n\w \dr_\m +
w^{\m\n}(J_\la,z^A,r^\la)\dr_\m\w \dr_\n. \label{bi42}
\eeq
Then condition (a) in Definition \ref{cc1} is satisfied if
$N'\subset N$ is a domain of a coordinate chart $(J_\la,z^A)$ of
the foliation $\cF_N$. In this case, the dynamical algebra $\cA$
on the toroidal domain $U'=\pi^{-1}(N')$ is generated by the
Hamiltonian vector fields
\mar{bi93}\beq
\vt_\la=-w\lfloor dJ_\la=w^\m_\la\dr_\m \label{bi93}
\eeq
of the $m$ independent functions $S_\la=J_\la$.
\end{remark}

\begin{proof}
The characteristic distribution of the Poisson bivector field $w$
(\ref{bi20}) is spanned by the Hamiltonian vector fields
\mar{bi21}\beq
v^A=-w\lfloor dx^A=w^{A\m}\dr_\m \label{bi21}
\eeq
and the vector fields
\be
w\lfloor dr^\la= w^{A\la}\dr_A + 2w^{\m\la}\dr_\m.
\ee
Since $w$ is of rank $2m$, the vector fields $\dr_\m$ can be
expressed in the vector fields $v^A$ (\ref{bi21}). Hence, the
characteristic distribution of $w$ is spanned by the Hamiltonian
vector fields $v^A$ (\ref{bi21}) and the vector fields
\mar{bi25}\beq
v^\la=w^{A\la}\dr_A. \label{bi25}
\eeq
The vector fields (\ref{bi25}) are projected onto $N$. Moreover,
one can derive from the relation $[w,w]=0$ that they generate a
Lie algebra and, consequently, span an involutive distribution
$\cV_N$ of rank $m$ on $N$. Let $\cF_N$ denote the corresponding
foliation of $N$. We consider the pull-back $\cF=\pi^*\cF_N$ of
this foliation onto $U$ by the trivial fibration $\pi$ \cite{mol}.
Its leaves are the inverse images $\pi^{-1}(F_N)$ of leaves $F_N$
of the foliation $\cF_N$, and so is its characteristic
distribution
\be
T\cF=(T\pi)^{-1}(\cV_N).
\ee
This distribution is spanned by the vector fields $v^\la$
(\ref{bi25}) on $U$ and the vertical vector fields on $U\to N$,
namely,  the vector fields $v^A$ (\ref{bi21}) generating the
algebra $\cA$. Hence, $T\cF$ is the characteristic distribution of
the Poisson bivector field $w$. Furthermore, since $U\to N$ is a
trivial bundle, each leaf $\pi^{-1}(F_N)$ of the pull-back
foliation $\cF$ is the manifold product of a leaf $F_N$ of $N$ and
the toroidal cylinder $\mathbb R^{k-m}\times T^m$. It follows that
the foliated manifold $(U,\cF)$ can be provided with an adapted
coordinate atlas
\be
\{(U_\iota,J_\la,z^A,r^\la)\}, \qquad \la=1,\ldots, k, \qquad
A=1,\ldots,\di Z-2m,
\ee
such that $(J_\la,z^A)$ are adapted coordinates on the foliated
manifold $(N,\cF_N)$. Relative to these coordinates, the Poisson
bivector field (\ref{bi20}) takes the form (\ref{bi42}). Let $N'$
be the domain of this coordinate chart. Then the  dynamical
algebra $\cA$ on the toroidal domain $U'=\pi^{-1}(N')$ is
generated by the  Hamiltonian vector fields $\vt_\la$ (\ref{bi93})
of functions $S_\la=J_\la$. {}
\end{proof}

\begin{remark} \label{kk508} \mar{kk508}
Let us note that the coefficients $w^{\m\nu}$ in the expressions
(\ref{bi20}) and (\ref{bi42}) are affine in coordinates $r^\la$
because of the relation $[w,w]=0$ and, consequently, they are
constant on tori.
\end{remark}

Now, let $w$ and $w'$ be two different Poisson structures
(\ref{bi20}) on the toroidal domain (\ref{z10'}) which make a
commutative dynamical algebra $\cA$ into different partially
integrable systems $(w,\cA)$ and $(w',\cA)$.

\begin{definition} \label{cc2} \mar{cc2}
We agree to call the triple $(w,w',\cA)$ a  bi-Hamiltonian
partially integrable system if any Hamiltonian vector field
$\vt\in\cA$ with respect to $w$ possesses the same Hamiltonian
representation
\mar{bi71}\beq
\vt=-w\lfloor df=-w'\lfloor df, \qquad f\in\cS, \label{bi71}
\eeq
relative to $w'$, and {\it vice versa}.
\end{definition}

Definition \ref{cc2} establishes a {\it sui generis} equivalence
between the partially integrable systems $(w,\cA)$ and $(w',\cA)$.
Theorem \ref{bi72} below states that the triple $(w,w',\cA)$ is a
bi-Hamiltonian partially integrable system in accordance with
Definition \ref{cc2} iff the Poisson bivector fields $w$ and $w'$
(\ref{bi20}) differ only in the second terms $w_2$ and $w'_2$.
Moreover, these Poisson bivector fields admit a recursion operator
as follows.

\begin{theorem} \label{bi72} \mar{bi72}
(I) The triple $(w,w',\cA)$ is a bi-Hamiltonian partially
integrable system in accordance with Definition \ref{cc2} iff the
Poisson bivector fields $w$ and $w'$ (\ref{bi20}) differ in the
second terms $w_2$ and $w'_2$. (II) These Poisson bivector fields
admit a recursion operator.
\end{theorem}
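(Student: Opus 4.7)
My plan is to work throughout in the adapted coordinates $(J_\lambda,z^A,r^\mu)$ on the toroidal domain $U'$ of Theorem \ref{bi92}, in which $w$ and $w'$ take the canonical form \eqref{bi42}. The observation that powers both parts is that $\cS$ consists precisely of smooth functions on the base $N$, so for any $f\in\cS$ the differential $df$ involves only $dJ_\lambda$ and $dz^A$ and has no $dr^\mu$ component.

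For Part (I), the ``if'' direction is immediate: if $w_1=w'_1$, then for any $f\in\cS$ the contraction $w\lfloor df$ picks up only contributions from the $w_1$-block, because the pure block $w_2=w^{\mu\nu}\partial_\mu\wedge\partial_\nu$ contracts trivially with $dJ_\lambda$ and $dz^A$. Hence $-w\lfloor df=-w'\lfloor df$ for every $f\in\cS$, verifying Definition \ref{cc2}. For the ``only if'' direction, take the generators $\vartheta_\lambda=-w\lfloor dJ_\lambda$: the requirement that $\vartheta_\lambda=-w'\lfloor dJ_\lambda$ also holds forces $w^{\mu}_{\lambda}=w'^{\mu}_{\lambda}$, so $w_1=w'_1$ in the form \eqref{bi42}. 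In the more general coordinate form \eqref{bi20} one additionally tests against Hamiltonian vector fields $-w\lfloor dh$ for $h\in\cS$ depending on the $z^A$; these do belong to $\cA$ because the $m\times m$ matrix $(w^{\mu}_{\lambda})$ is invertible, and this additional testing yields $w^{A\mu}=w'^{A\mu}$ for every $A$.

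For Part (II), I would construct a recursion operator $R$, i.e.\ a $(1,1)$-tensor satisfying $w'^\sharp=R\circ w^\sharp$, by prescribing its action on the coordinate frame. Since $w_1=w'_1$, the morphisms $w^\sharp$ and $w'^\sharp$ agree on the covectors $dJ_\alpha$ and $dz^A$, and differ only on the covectors $dr^\lambda$ with difference proportional to $\Delta^{\lambda\mu}\partial_\mu$, where $\Delta^{\mu\nu}=w'^{\mu\nu}-w^{\mu\nu}$. I would therefore set $R$ to act as the identity on $\partial_{r^\mu}$ and on $\partial_{z^A}$, and on $\partial_{J_\nu}$ by
\[
R(\partial_{J_\nu})=\partial_{J_\nu}+B^{\mu}_{\nu}\,\partial_{r^\mu},
\]
where the coefficients $B^{\mu}_{\nu}$ are uniquely determined by the linear equation expressing $R\circ w^\sharp(dr^\lambda)=w'^\sharp(dr^\lambda)$; this equation is solvable and has a unique solution because the $m\times m$ matrix $(w^{\mu}_{\nu})$ from \eqref{bi42} is invertible (the rank of $w$ being $2m$). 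A direct check on each basic covector then shows $w'^\sharp=R\circ w^\sharp$.

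The main obstacle I anticipate is verifying the global, chart-independent well-definedness of $R$. The clean way to do this is to observe intrinsically that $w$ and $w'$ share the same characteristic foliation $\cF$ (its tangent distribution $T\cF$ equals $w^\sharp(T^*U')=w'^\sharp(T^*U')$) and hence the same kernel $\mathrm{Ann}\,T\cF$; the isomorphisms $w^\sharp_\cF,w'^\sharp_\cF:T\cF^*\to T\cF$ from \eqref{lmp02} are then related by a canonical leafwise automorphism $w'^\sharp_\cF\circ(w^\sharp_\cF)^{-1}$, and $R$ is its extension by the identity on a coordinate complement of $T\cF$. I would stop once existence of some $R$ with $w'^\sharp=R\circ w^\sharp$ has been exhibited; additional properties such as the vanishing of the Nijenhuis torsion are not asserted in the statement and would require further compatibility hypotheses between $w$ and $w'$.
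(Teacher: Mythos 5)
Your proof is correct and follows essentially the same route as the paper's: part (I) is the same coordinate computation (which the paper leaves as ``easily justified''), and your intrinsic resolution of the well-definedness of $R$ --- extending $w'^\sharp_\cF\circ w^\flat_\cF$ by the identity on a complement of $T\cF$ --- is exactly the paper's Lemma \ref{pp1}, from which part (II) is deduced. The explicit coordinate form of $R$ you construct also appears in the paper as the example (\ref{bi24}) following that lemma, together with the same caveat that its Nijenhuis torsion need not vanish.
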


\begin{proof} (I). It is easily justified that, if Poisson bivector
fields $w$ (\ref{bi20}) fulfil Definition \ref{cc2}, they are
distinguished only by the second summand $w_2$. Conversely, as
follows from the proof of Theorem \ref{bi92}, the characteristic
distribution of a Poisson bivector field $w$ (\ref{bi20}) is
spanned by the vector fields (\ref{bi21}) and (\ref{bi25}). Hence,
all Poisson bivector fields $w$ (\ref{bi20}) distinguished only by
the second summand $w_2$ have the same characteristic
distribution, and they bring $\cA$ into a partially integrable
system on the same toroidal domain $U'$. Then the condition in
Definition \ref{cc2} is easily justified.   (II). The result
follows from forthcoming Lemma \ref{pp1}.
\end{proof}

Given  a smooth real manifold $X$, let $w$ and $w'$ be Poisson
bivector fields of rank $2m$ on $X$, and let $w^\sh$ and $w'^\sh$
be the corresponding bundle homomorphisms (\ref{m51}). A
tangent-valued one-form $R$ on $X$ yields bundle endomorphisms
\mar{bi90}\beq
R: TX\to TX, \qquad R^*: T^*X\to T^*X. \label{bi90}
\eeq
It is called a  recursion operator if
\mar{pp0}\beq
w'^\sh=R\circ w^\sh=w^\sh\circ R^*. \label{pp0}
\eeq
Given a Poisson bivector field $w$ and a tangent valued one-form
$R$ such that $R\circ w^\sh=w^\sh\circ R^*$, the well-known
sufficient condition for $R\circ w^\sh$ to be a Poisson bivector
field is that the Nijenhuis torsion (\ref{spr613}) of $R$, seen as
a tangent-valued one-form, and the Magri -- Morosi concomitant of
$R$ and $w$ vanish \cite{1,2}. However, as we will see later,
recursion operators between Poisson bivector fields in Theorem
\ref{bi72} need not satisfy these conditions.

\begin{lemma} \label{pp1} \mar{pp1}
A recursion operator between Poisson structures of the same rank
exists iff their characteristic distributions coincide.
\end{lemma}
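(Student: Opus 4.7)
\noindent{\it Proof plan for Lemma \ref{pp1}:}

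The plan is to prove both implications by exploiting the factorization (\ref{lmp03}) of $w^\sh$ (and analogously $w'^\sh$) through the characteristic distribution, together with the skew-symmetry of bivector fields.

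For the \emph{only if} direction, suppose a recursion operator $R$ exists. From $w'^\sh=w^\sh\circ R^*$ I immediately get $\mathrm{Im}(w'^\sh)\subseteq\mathrm{Im}(w^\sh)=\bT$, i.e.\ $\bT'\subseteq\bT$. Because $w$ and $w'$ have the same rank $2m$, the subbundles $\bT$ and $\bT'$ have the same fibrewise dimension, so this inclusion of smooth subbundles of $TX$ forces $\bT=\bT'$.

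For the \emph{if} direction, set $\bT=\bT'$ and observe that $\Ker w^\sh=\mathrm{Ann}\,\bT=\Ker w'^\sh$, so by (\ref{lmp03}) both $w^\sh$ and $w'^\sh$ factor through the common quotient $\bT^*=T^*X/\mathrm{Ann}\,\bT$, yielding bundle isomorphisms $w^\sh_\bT,\,w'^\sh_\bT:\bT^*\to\bT$. On $\bT$ I define $R_0=w'^\sh_\bT\circ(w^\sh_\bT)^{-1}:\bT\to\bT$. Choosing any smooth complement $TX=\bT\oplus\bT^c$ (available since $\bT$ is a smooth subbundle; use a Riemannian metric), extend $R_0$ to a tangent-valued one-form $R$ on $X$ by setting $R=R_0$ on $\bT$ and $R=0$ on $\bT^c$. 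The identity $R\circ w^\sh=w'^\sh$ then follows from the factorization: for $\al\in T^*_zX$ with image $p(\al)\in\bT^*$, one has $w^\sh(\al)=w^\sh_\bT(p(\al))$ and therefore $R(w^\sh(\al))=w'^\sh_\bT(p(\al))=w'^\sh(\al)$.

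It remains to verify the second half of (\ref{pp0}), namely $w'^\sh=w^\sh\circ R^*$. Here the key observation is that both $w^\sh$ and $w'^\sh$ are skew with respect to the natural pairing, i.e.\ $\lng\bt,w^\sh(\al)\rng=-\lng\al,w^\sh(\bt)\rng$ and similarly for $w'$. Taking $\lng\bt,R(w^\sh(\al))\rng=\lng R^*\bt,w^\sh(\al)\rng$ and applying this skew-symmetry on both sides converts $R\circ w^\sh=w'^\sh$ into $w^\sh\circ R^*=w'^\sh$; this is a short manipulation with no further geometric content.

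The one step where care is needed is the construction of the smooth complement $\bT^c$ and the verification that the resulting $R$ is a genuine tangent-valued one-form (smooth bundle endomorphism of $TX$); this is where regularity of the Poisson structures enters, ensuring $\bT$ is a smooth subbundle so that a partition-of-unity/metric argument gives a smooth splitting. Non-uniqueness of $R$ on $\bT^c$ is expected and harmless, since the defining conditions (\ref{pp0}) constrain $R$ only on the characteristic distribution.
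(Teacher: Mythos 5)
Your proof is correct and follows essentially the same route as the paper's: both directions exploit the factorization of $w^\sh$ and $w'^\sh$ through the bundle isomorphisms $T\cF^*\to T\cF$, with the recursion operator built as $w'^\sh_\cF\circ w^\fl_\cF$ on the characteristic distribution and then extended to $TX$ via a splitting of the exact sequence. The only differences are cosmetic: you extend by zero on the complement (the paper extends by the identity, which additionally makes $R$ invertible, though invertibility is not required by the statement), and you verify $w^\sh\circ R^*=w'^\sh$ explicitly via skew-symmetry of the bivector fields, a point the paper leaves implicit.
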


\begin{proof} It follows from the equalities
(\ref{pp0}) that a recursion operator $R$ sends the characteristic
distribution of $w$ to that of $w'$, and these distributions
coincide if $w$ and $w'$ are of the same rank. Conversely, let
regular Poisson structures $w$ and $w'$ possess the same
characteristic distribution $T\cF\to TX$ tangent to a foliation
$\cF$ of $X$. We have the exact sequences (\ref{pp2}) --
(\ref{pp3}). The bundle homomorphisms $w^\sh$ and $w'^\sh$
(\ref{m51}) factorize in a unique fashion (\ref{lmp03}) through
the bundle isomorphisms $w_\cF^\sh$ and $w'^\sh_\cF$
(\ref{lmp03}). Let us consider the inverse isomorphisms
\mar{pp13}\beq
w_\cF^\fl : T\cF\to T\cF^*, \qquad w'^\fl_\cF : T\cF\to T\cF^*
\label{pp13}
\eeq
and the compositions
\mar{pp10}\beq
R_\cF= w'^\sh_\cF\circ w_\cF^\fl: T\cF\to T\cF, \qquad R_\cF^*=
w_\cF^\fl \circ w'^\sh_\cF: T\cF^*\to T\cF^*. \label{pp10}
\eeq
There is the obvious relation
\be
w'^\sh_\cF=R_\cF\circ w^\sh_\cF=  w^\sh_\cF\circ R^*_\cF.
\ee
In order to obtain a recursion operator (\ref{pp0}), it suffices
to extend the morphisms $R_\cF$ and $R_\cF^*$ (\ref{pp10}) onto
$TX$ and $T^*X$, respectively. For this purpose, let us consider a
splitting
\be
&& \zeta: TX\to T\cF, \\
&& TX=T\cF\oplus (\id-i_\cF\circ\zeta)TX=T\cF\oplus E,
\ee
of the exact sequence (\ref{pp2}) and the dual splitting
\be
&& \zeta^* :T\cF^*\to T^*X, \\
&& T^*X=\zeta^*(T\cF^*)\oplus (\id-\zeta^*\circ i^*_\cF)T^*X=
\zeta^*(T\cF^*)\oplus E',
\ee
of the exact sequence (\ref{pp3}). Then the desired extensions are
\be
R=R_\cF\times \id E, \qquad R^*=(\zeta^*\circ R^*_\cF)\times \id
E'.
\ee
This recursion operator is invertible, i.e., the morphisms
(\ref{bi90}) are bundle isomorphisms.
\end{proof}

For instance,  the Poisson bivector field $w$ (\ref{bi20}) and the
Poisson bivector field
\be
w_0=w^{A\la}\dr_A\w\dr_\la
\ee
admit a recursion operator $w^\sh_0=R\circ w^\sh$ whose entries
are given by the equalities
\mar{bi24}\beq
R^A_B=\dl^A_B, \qquad R^\m_\n=\dl^\m_\n, \qquad R^A_\la=0, \qquad
w^{\m\la}=R^\la_Bw^{B\m}. \label{bi24}
\eeq
Its Nijenhuis torsion (\ref{spr613}) fails to vanish, unless
coefficients $w^{\m\la}$ are independent of coordinates $r^\la$.

Given a partially integrable system $(w,\cA)$ in Theorem
\ref{bi92}, the bivector field $w$ (\ref{bi42}) can be brought
into the canonical form (\ref{kk500}) with respect to partial
action-angle coordinates in forthcoming Theorem \ref{bi100}. This
theorem extends the Liouville -- Arnold theorem to the case of a
Poisson structure and a non-compact invariant submanifold
\cite{jmp03,book10}.

\begin{theorem} \label{bi100} \mar{bi100}
Given a partially integrable system $(w,\cA)$ on a Poisson
manifold $(U,w)$, there exists a toroidal domain $U'\subset U$
equipped with partial  action-angle coordinates $(I_a,I_i,z^A,
\tau^a,\f^i)$ such that, restricted to $U'$, a Poisson bivector
field takes the canonical form
\mar{bi101}\beq
w=\dr^a\w \dr_a + \dr^i\w \dr_i, \label{bi101}
\eeq
while the dynamical algebra $\cA$ is generated by Hamiltonian
vector fields of the action coordinate functions $S_a=I_a$,
$S_i=I_i$.
\end{theorem}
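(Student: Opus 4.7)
The plan is to build directly on Theorem \ref{bi92}. We have a toroidal domain $U'$ with adapted coordinates $(J_\la,z^A,r^\la) = (J_\la,z^A,t^a,\f^i)$ in which the Poisson bivector has the form (\ref{bi42}) and the dynamical algebra $\cA$ is generated by the Hamiltonian vector fields $\vt_{J_\la}= w^\m_\la(J,z)\dr_\m$. The goal is to replace $(J_\la,r^\la)$ by new coordinates $(I_\la,\tau^\la)$ in which $w$ becomes (\ref{bi101}), leaving $z^A$ as Casimirs.

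\textbf{Stage 1 (action variables).} Because the bivector $w$ has rank $2m$ and the vertical vector fields $\vt_{J_\la}$ span the fibre tangent spaces of $U'\to N'$, the matrix $w^\m_\la(J,z)$ is nondegenerate. I would look for action coordinates $I_\la=I_\la(J,z)$ characterized by $(\dr I_\la/\dr J_\n)\,w^\m_\n=\dl^\m_\la$, equivalently $\dr I_\la/\dr J_\n=(w^{-1})^\n_\la$. The integrability condition $\dr_\kp(w^{-1})^\n_\la=\dr_\n(w^{-1})^\kp_\la$ follows from $\{J_\la,J_\m\}=0$ together with the Jacobi identity $[w,w]_{\rm SN}=0$; shrinking $N'$ to a contractible coordinate chart, the Poincar\'e lemma yields such $I_\la$. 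Since $z^A$ are Casimirs of $w$ (as one checks directly from (\ref{bi42})), they Poisson-commute with every $I_\la$, so the first block of $w$ becomes $\dr^\la\w\dr_\la$ in mixed coordinates $(I_\la,z^A,r^\la)$, while the second block retains the form $\wt w^{\m\n}(I,z,r)\dr_\m\w\dr_\n$.

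\textbf{Stage 2 (angle variables).} I would then seek a correction $\tau^\la=r^\la+f^\la(I,z,r)$ that eliminates the residual bivector $\wt w^{\m\n}\dr_\m\w\dr_\n$. Writing out the push-forward of $w$, the condition that $\wt w$ be killed becomes a linear system $\dr_\m f^\n-\dr_\n f^\m=\wt w^{\m\n}$ (up to terms linear in $I$-derivatives, which are already cancelled by the choice of $I$). By Remark \ref{kk508}, $\wt w^{\m\n}$ is affine in $r^\la$ and, by $[w,w]_{\rm SN}=0$, the right-hand side is a closed $r$-dependent two-form on each fibre. On the toroidal cylinder $\mathbb R^{m-r}\times T^r$, the non-compact factor is contractible and the integration is immediate; for the torus factor one averages over the angular directions and integrates the zero-mean remainder. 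The section $\si:N'\to U'$ provided by the triviality in Theorem \ref{bi0}(II) furnishes the constant of integration, giving $\tau^\la$ vanishing on $\si(N')$.

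\textbf{Stage 3 (verification).} Finally, one reads off that in the coordinates $(I_\la,z^A,\tau^\la)$ the bivector is (\ref{bi101}), that $\vt_{I_\la}=\dr_{\tau^\la}$, and that these vector fields generate $\cA$, so $S_\la=I_\la$ as required. Splitting $\la=(a,i)$ matches the non-compact ($\tau^a\in\mathbb R$) and compact ($\f^i\in S^1$) fibre directions.

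\textbf{Main obstacle.} Stage 2 is the genuine difficulty. Constructing $f^\la$ globally on the toroidal cylinder, rather than merely on each leaf, requires carefully separating the compact-angular and non-compact directions: on the compact directions the equation is solvable only after subtracting the $T^r$-average (which itself must be absorbed into a further shift of $I_\la$), whereas on the non-compact directions the Poincar\'e lemma applies directly. The compatibility between these two cases, and with the previously fixed action variables, is what Remark \ref{kk508} and the vanishing of $[w,w]_{\rm SN}$ are used to guarantee.
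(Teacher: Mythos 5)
Your overall strategy (start from Theorem \ref{bi92}, invert the matrix $w^\m_\la$ to produce action variables, then shift the fibre coordinates to kill the residual bivector) has the same skeleton as the paper's proof, but both of your key analytic steps contain genuine gaps. First, the integrability condition in Stage 1 does not follow from $\{J_\la,J_\m\}=0$ together with $[w,w]_{\rm SN}=0$. Passing to the leafwise symplectic form $\Om_\cF$ (\ref{bi102}) inverse to $w$ on the characteristic foliation, the Jacobi identity is equivalent to $\wt d\Om_\cF=0$, and the $\wt dJ_\kp\w\wt dJ_\n\w\wt dr^\la$ component of this identity shows that the antisymmetrized derivative $\dr^\kp\Om^\n_\la-\dr^\n\Om^\kp_\la$, with $\Om^\n_\la=(w^{-1})^\n_\la$, is proportional to $\dr_\la\Om^{\kp\n}$. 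By Remark \ref{kk508} the second summand $w_2$, and hence $\Om^{\kp\n}$, is affine in $t^a$, so this obstruction is generically nonzero in the non-compact fibre directions and your system $\dr I_\la/\dr J_\n=(w^{-1})^\n_\la$ is not integrable for $\la=a$. The paper circumvents this by an intermediate step you omit: it takes the $\vt_{J_\la}$ as new generators and returns to Theorem \ref{bi0}, re-trivializing the toroidal bundle so that $\vt_a=\dr_a$; the relation (\ref{ww22'}) then forces $\Om^\al_a=\dl^\al_a$, so $I_a=J_a$ works outright, and only the compact directions require integration, where the obstruction $\dr_i\Om^{\kp\n}$ vanishes because $w^{\m\n}$ is constant on tori.

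Second, Stage 2 solves the wrong equation. The residual term $\wt w^{\m\n}\dr_\m\w\dr_\n$ is a bivector tangent to the fibres, and the shift removing it must be by functions of the base variables only, $\tau^\la=r^\la+f^\la(I,z)$ with $\dr f^\n/\dr I_\m-\dr f^\m/\dr I_\n$ matching $\wt w^{\m\n}$: any $r$-dependent $f^\la$ changes $\{I_\la,\tau^\m\}=\dl^\m_\la+\dr f^\m/\dr r^\la$ and destroys the block you have just arranged, so your fibrewise Poincar\'e-lemma-plus-torus-averaging argument cannot be carried out as stated. For the correct equation on the base to be solvable one must first know that $\wt w^{\m\n}$ is independent of $r^\la$ and closed in the $I$-variables; in the paper this is extracted from the exactness of $\Om_\cF$ (established by the computation $H^2_\cF(U)=H^2_{\rm DR}(T^r)$ together with the absence of a $\wt dr^\m\w\wt dr^\n$ component) and from the explicit form $\Xi^a=-t^a+E^a(J,z)$, $\Xi^i=E^i(J,z)$ of the leafwise Liouville potential, with the periodicity of $\vf^i$ used repeatedly to kill unwanted dependences and to guarantee that the shifted angles remain well defined on $T^r$. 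These are exactly the points your outline defers to the ``main obstacle'' paragraph, and they are where the actual content of the theorem lies.
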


\begin{proof}
First, let us employ Theorem \ref{bi92} and restrict $U$ to the
toroidal domain, say  $U$ again, equipped with coordinates
$(J_\la,z^A,r^\la)$ such that the Poisson bivector field $w$ takes
the form (\ref{bi42}) and the algebra $\cA$ is generated by the
Hamiltonian vector fields $\vt_\la$ (\ref{bi93}) of $m$
independent functions $S_\la=J_\la$ in involution. Let us choose
these vector fields as new generators of the group $G$ and return
to Theorem \ref{bi0}. In accordance with this theorem, there
exists a toroidal domain $U'\subset U$ provided with another
trivialization $U'\to N'\subset N$ in toroidal cylinders $\mathbb
R^{m-r}\times T^r$ and endowed with bundle coordinates
$(J_\la,z^A,r^\la)$ such that the vector fields $\vt_\la$
(\ref{bi93}) take the form (\ref{ww25}). For the sake of
simplicity, let $U'$, $N'$ and $y^\la$ be denoted $U$, $N$ and
$r^\la=(t^a,\vf^i)$ again. Herewith, the Poisson bivector field
$w$ is given by the expression (\ref{bi42}) with new coefficients.
Let $w^\sh: T^*U\to TU$ be the corresponding bundle homomorphism.
It factorizes in a unique fashion (\ref{lmp03}):
\be
w^\sh: T^*U\ar^{i^*_\cF} T\cF^*\ar^{w^\sh_\cF} T\cF\ar^{i_\cF} TU
\ee
through the bundle isomorphism
\be
w_\cF^\sh: T\cF^*\to T\cF,  \qquad w^\sh_\cF:\al\to -w(x)\lfloor
\al.
\ee
Then the inverse isomorphisms $w_\cF^\fl : T\cF\to T\cF^*$
provides the foliated manifold $(U,\cF)$ with the leafwise
symplectic form
\mar{bi102,'}\ben
&& \Om_\cF=\Om^{\m\n}(J_\la,z^A,t^a) \wt dJ_\m\w \wt dJ_\n +
\Om_\m^\n(J_\la,z^A) \wt dJ_\n\w \wt dr^\m, \label{bi102}\\
&& \Om_\m^\al w^\m_\bt=\dl^\al_\bt, \qquad
\Om^{\al\bt}=-\Om^\al_\m\Om^\bt_\n w^{\m\n}. \label{bi102'}
\een
Let us show that it is $\wt d$-exact. Let $F$ be a leaf of the
foliation $\cF$ of $U$. There is a homomorphism of the de Rham
cohomology $H^*_{\rm DR}(U)$ of $U$ to the de Rham cohomology
$H^*_{\rm DR}(F)$ of $F$, and it factorizes through the leafwise
cohomology $H^*_\cF(U)$. Since $N$ is a domain of an adapted
coordinate chart of the foliation $\cF_N$, the foliation $\cF_N$
of $N$ is a trivial fibre bundle
\be
N=V\times W\to W.
\ee
Since $\cF$ is the pull-back onto $U$ of the foliation $\cF_N$ of
$N$, it also is a trivial fibre bundle
\mar{bi103}\beq
U=V\times W\times (\mathbb R^{k-m}\times T^m) \to W \label{bi103}
\eeq
over a domain $W\subset \mathbb R^{\di Z-2m}$. It follows that
\be
H^*_{\rm DR}(U)=H^*_{\rm DR}(T^r)=H^*_\cF(U).
\ee
Then the closed leafwise two-form $\Om_\cF$ (\ref{bi102}) is exact
due to the absence of the term $\Om_{\m\n}dr^\m\w dr^\nu$.
Moreover, $\Om_\cF=\wt d\Xi$ where $\Xi$ reads
\be
\Xi=\Xi^\al(J_\la,z^A,r^\la)\wt dJ_\al + \Xi_i(J_\la,z^A)\wt
d\vf^i
\ee
up to a $\wt d$-exact leafwise form. The Hamiltonian vector fields
$\vt_\la=\vt_\la^\m\dr_\m$ (\ref{ww25}) obey the relation
\mar{ww22'}\beq
\vt_\la\rfloor\Om_\cF=-\wt dJ_\la, \qquad \Om^\al_\bt
\vt^\bt_\la=\dl^\al_\la, \label{ww22'}
\eeq
which falls into the following conditions
\mar{bi110,1}\ben
&& \Om^\la_i=\dr^\la\Xi_i-\dr_i\Xi^\la, \label{bi110} \\
&& \Om^\la_a=-\dr_a\Xi^\la=\dl^\la_a. \label{bi111}
\een
The first of the relations (\ref{bi102'})  shows that
$\Om^\al_\bt$ is a non-degenerate matrix  independent of
coordinates $r^\la$. Then the condition (\ref{bi110}) implies that
$\dr_i\Xi^\la$ are  independent of $\vf^i$, and so are $\Xi^\la$
since $\vf^i$ are cyclic coordinates. Hence,
\mar{bi112,3}\ben
&&\Om^\la_i=\dr^\la\Xi_i, \label{bi112}\\
&& \dr_i\rfloor\Om_\cF=-\wt d\Xi_i. \label{bi113}
\een
Let us introduce new coordinates $I_a=J_a$, $I_i=\Xi_i(J_\la)$. By
virtue of the equalities (\ref{bi111}) and (\ref{bi112}), the
Jacobian of this coordinate transformation is regular. The
relation (\ref{bi113}) shows that $\dr_i$ are Hamiltonian vector
fields of the functions $S_i=I_i$. Consequently, we can choose
vector fields $\dr_\la$ as generators of the algebra $\cA$. One
obtains from the equality (\ref{bi111}) that
\be
\Xi^a=-t^a+E^a(J_\la,z^A)
\ee
and $\Xi^i$ are independent of $t^a$. Then the leafwise Liouville
form $\Xi$ reads
\be
\Xi=(-t^a+E^a(I_\la,z^A))\wt dI_a + E^i(I_\la, z^A)\wt dI_i + I_i
\wt d\vf^i.
\ee
The coordinate shifts
\be
\tau^a=-t^a+E^a(I_\la,z^A), \qquad \f^i=\vf^i-E^i(I_\la,z^A)
\ee
bring the leafwise form $\Om_\cF$ (\ref{bi102}) into the canonical
form
\be
\Om_\cF= \wt dI_a\w \wt d \tau^a + \wt dI_i\w \wt d\f^i
\ee
which ensures the canonical form (\ref{bi101}) of a Poisson
bivector field $w$.
\end{proof}

\subsection{Partially integrable system on a symplectic manifold}

Let $\cA$ be a commutative dynamical algebra on a $2n$-dimensional
connected symplectic manifold $(Z,\Om)$. Let it obey condition (a)
in Definition \ref{cc1}. However, condition (b) is not necessarily
satisfied, unless $m=n$, i.e., a system is completely integrable.
Therefore, we modify a definition of partially integrable systems
on a symplectic manifold.

\begin{definition} \label{kk506} \mar{kk506} A
collection $\{S_1,\ldots, S_m\}$ of $m\leq n$ independent smooth
real functions in involution on a symplectic manifold $(Z,\Om)$ is
called a  partially integrable system.
\end{definition}

\begin{remark} \label{kk507} \mar{507} By analogy with Definition \ref{cc1},
one can require that functions $S_\la$ in Definition \ref{kk506}
are independent almost everywhere on $Z$. However, all theorems
that we have proved above are concerned with partially integrable
systems restricted to some open submanifold $Z'\subset Z$ of
regular points of $Z$. Therefore, let us restrict functions
$S_\la$ to an open submanifold $Z'\subset Z$ where they are
independent, and we obtain a partially integrable system on a
symplectic manifold $(Z',\Om)$ which obeys Definition \ref{kk506}.
However, it may happen that $Z'$ is not connected. In this case,
we have different partially integrable systems on different
components of $Z'$.
\end{remark}

Given a partially integrable system $(S_\la)$ in Definition
\ref{kk506}, let us consider the map
\mar{g106}\beq
S:Z\to W\subset\mathbb R^m. \label{g106}
\eeq
Since functions $S_\la$ are everywhere independent, this map is a
submersion onto a domain $W\subset \mathbb R^m$, i.e., $S$
(\ref{g106}) is a fibred manifold of fibre dimension $2n-m$.
Hamiltonian vector fields $\vt_\la$ of functions $S_\la$ are
mutually commutative and independent. Consequently, they span an
$m$-dimensional involutive distribution on $Z$ whose maximal
integral manifolds constitute an isotropic foliation $\cF$ of $Z$.
Because functions $S_\la$ are constant on leaves of this
foliation, each fibre of a fibred manifold $Z\to W$ (\ref{g106})
is foliated by the leaves of the foliation $\cF$.

If $m=n$, we are in the case of a completely integrable system,
and leaves of $\cF$ are connected components of fibres of the
fibred manifold (\ref{g106}).

The Poincar\'e -- Lyapounov -- Nekhoroshev theorem
\cite{gaeta,nekh94} generalizes the Liouville -- Arnold one to a
partially integrable system if leaves of the foliation $\cF$ are
compact. It imposes a sufficient condition which Hamiltonian
vector fields $v_\la$ must satisfy in order that the foliation
$\cF$ is a fibred manifold \cite{gaeta,gaeta03}. Extending the
Poincar\'e -- Lyapounov -- Nekhoroshev theorem to the case of
non-compact invariant submanifolds, we in fact assume from the
beginning that these submanifolds form a fibred manifold
\cite{jmp03,book10}.

\begin{theorem} \label{nc6} \mar{nc6}
Let a partially integrable system $\{S_1,\ldots,S_m\}$ on a
symplectic manifold $(Z,\Om)$ satisfy the following conditions.

(i) The Hamiltonian vector fields $\vt_\la$ of $S_\la$ are
complete.

(ii) The foliation $\cF$ is a fibred manifold
\mar{d20'}\beq
\pi:Z\to N \label{d20'}
\eeq
whose fibres are mutually diffeomorphic.

\noindent Then the following hold.

(I) The fibres of $\cF$ are diffeomorphic to the toroidal cylinder
(\ref{g120}).

(II) Given a fibre $M$ of $\cF$, there exists its open saturated
neighborhood $U$ whose fibration (\ref{d20'}) is a trivial
principal bundle with the structure group (\ref{g120}).

(III) The neighborhood $U$ is provided with the bundle (partial
action-angle) coordinates
\be
(I_\la,p_s,q^s,y^\la)\to (I_\la,p_s,q^s), \qquad \la=1,\ldots,m,
\quad s=1,\ldots n-m,
\ee
such that: (i) the action coordinates $(I_\la)$ (\ref{cmp25}) are
expressed in the values of the functions $(S_\la)$, (ii) the angle
coordinates $(y^\la)$ (\ref{dd11}) are coordinates on a toroidal
cylinder, and (iii) the symplectic form $\Om$ on $U$ reads
\mar{cmp6'}\beq
\Om= dI_\la\w dy^\la + dp_s\w dq^s. \label{cmp6'}
\eeq
\end{theorem}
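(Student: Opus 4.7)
The plan is to derive parts (I) and (II) directly from Theorem \ref{bi0}, and then to obtain part (III) by combining the Poisson coinduction of Theorem \ref{p11.3}, the Darboux theorem \ref{canpo}, and the coordinate straightening carried out in the proof of Theorem \ref{bi100}.

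For (I) and (II): since $\{S_\la,S_\mu\}=0$, the Hamiltonian vector fields $\vt_\la$ pairwise commute by (\ref{m81}) and thus generate a commutative $m$-dimensional dynamical algebra $\cA$ whose orbit foliation coincides with $\cF$. Condition (i) of Theorem \ref{nc6} is exactly condition (i) of Theorem \ref{bi0}; the hypothesis that $\cF$ is the fibred manifold $\pi\colon Z\to N$ (\ref{d20'}) supplies a global transversal whose holonomy pseudogroup is trivial (condition (ii) of Theorem \ref{bi0}), and the mutually-diffeomorphic-fibres hypothesis is its condition (iii). Parts (I) and (II) of Theorem \ref{bi0} then give both the toroidal cylinder diffeomorphism type (\ref{g120}) of each leaf and the trivial principal bundle decomposition $U=N\times(\mathbb R^{m-r}\times T^r)$ of (\ref{z10'}), with bundle coordinates $(x^A,t^a,\vf^i)$ in which the $\vt_\la$ take the normal form (\ref{ww25}).

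For part (III), I would proceed in three stages. First, extract the structure of $\Om$ in these coordinates: isotropy of the $\cF$-leaves, $\Om(\vt_\la,\vt_\mu)=\{S_\la,S_\mu\}=0$, kills the purely leafwise components of $\Om$; the relations $\vt_\la\rfloor\Om=-dS_\la$ combined with $\bL_{\vt_\mu}S_\la=0$ force the $S_\la$ and the mixed components $\Om_{A\mu}$ to be $\pi$-basic functions of the $x^A$ only; and $d\Om=0$ makes the horizontal components $\Om_{AB}$ affine in $r^\la$, hence constant on the torus factors by single-valuedness (cf.\ Remark \ref{kk508}). Second, since $\Om$ is $G$-invariant and the $S_\la$ are Poisson-commuting $\pi$-basic functions, the Poisson bracket of any two $\pi$-basic functions is again $\pi$-basic, so Theorem \ref{p11.3} coinduces a Poisson structure on $N$ of rank $2(n-m)$ for which the $S_\la$ are Casimirs; the Darboux theorem \ref{canpo} applied on $N$ then provides local coordinates $(I'_\la,p_s,q^s)$ with $I'_\la$ Casimir functions of the $S_\la$ and $(p_s,q^s)$ canonical. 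Third, following the argument in the proof of Theorem \ref{bi100}, the trivial fibration structure yields $H^2_\cF(U)\cong H^2_{\rm DR}(T^r)$, so the closed leafwise form obtained from $\Om$ admits a leafwise primitive $\Xi$; defining the action coordinates $I_i$ by the period integrals $(2\pi)^{-1}\oint_{\g_i}\Xi$ over the fundamental cycles of $T^r$, taking $I_a$ as the remaining cylinder action coordinates, and performing the compensating shifts $\tau^a=-t^a+E^a(I,p,q)$, $\f^i=\vf^i-E^i(I,p,q)$ brings $\Om$ into the canonical form (\ref{cmp6'}).

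The main obstacle will be ensuring that the transverse Darboux coordinates $(p_s,q^s)$ on $N$ can be lifted through a section of the principal bundle (\ref{z10'}) to $U$ without introducing cross terms $dI_\la\w dq^s$ or $dp_s\w dy^\la$ into $\Om$. Existence of such a section is guaranteed by triviality of the bundle, but verifying the absence of cross terms requires careful coordinate bookkeeping paralleling exactly the final shift step in the proof of Theorem \ref{bi100}; apart from this, all the ingredients are already in place.
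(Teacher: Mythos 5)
Your treatment of parts (I) and (II) is exactly the paper's: its proof literally states that these parts ``repeat exactly'' parts (I) and (II) of Theorem \ref{bi0}, and your identification of the hypotheses with conditions (i)--(iii) of that theorem is the intended reduction. For part (III) you take a genuinely different route on the transverse block: you coinduce a Poisson structure on $N$ via Theorem \ref{p11.3} (with the $S_\la$ as Casimirs) and apply the Darboux Theorem \ref{canpo} downstairs, whereas the paper never leaves the total space --- it writes $\Om$ in the coordinates $(J_\la,x^A,r^\la)$, kills the $dx^A\w dr^\bt$ block and the $r^\la$-dependence via the relations (\ref{ww22}), proves exactness by the K\"unneth formula, and reads the action coordinates $I_a=J_a$, $I_i=\Xi_i(J_j)$ directly off the components of the Liouville form (\ref{cmp25}) rather than from period integrals. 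One factual slip in your stage 1: the relations $\vt_\la\rfloor\Om=-dJ_\la$ do not merely make the mixed components $\Om_{A\bt}$ basic; combined with the invertibility of the matrix $\vt^\bt_\la$ (which follows from the first condition in (\ref{ww22})) they force $\Om_{A\bt}=0$, and it is this vanishing, not mere basicness, that is needed for (\ref{cmp6'}).

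The obstacle you flag at the end is a genuine gap, and it sits precisely where the paper's proof still has work to do after the analogue of your stages. After the shifts the paper only reaches the intermediate form (\ref{d26}), $\Om=dI_\la\w dr'^\la+\Om_{AB}\,dx^A\w dx^B+\Om^\la_A\, dI_\la\w dx^A$, which still carries cross terms $dI_\la\w dx^A$. These are removed not by bookkeeping but by a local Darboux argument on the total space producing coordinates $(I_\la,p_s,q^s,y^\la)$ with $y^\la=r'^\la+f^\la(I_\la,x^A)$ (\ref{dd11}) on a neighborhood $U_z=\pi(U_z)\times O_z$ covering only a piece $O_z$ of the fibre, followed by an extension of the angle functions to the whole non-compact toroidal cylinder by the law $y^\la(I_\la,x^A,G(z)^\al)=G(z)^\la+f^\la(I_\la,x^A)$ using the group of flows. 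Your proposal omits this extension entirely; without it the angle coordinates are defined only on a slab of each fibre, and the claim that $(p_s,q^s)$ lift without introducing cross terms is exactly the content of (\ref{dd12}) that remains to be proved. So the outline is sound and correctly locates the missing step, but the step itself is not filled in.
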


\begin{proof} (I) The proof of parts (I) and (II) repeats exactly that of parts (I) and (II)
of Theorem \ref{bi0}. As a result, let
\mar{d20a}\beq
\pi:U\to \pi(U)\subset N \label{d20a}
\eeq
be a trivial principal bundle with the structure group $\mathbb
R^{m-r}\times T^r$, endowed with the standard coordinate atlas
$(r^\la)=(t^a,\vf^i)$. Then $U$ (\ref{d20a}) admits a
trivialization
\mar{z10}\beq
U=\pi(U)\times(\mathbb R^{m-r}\times T^r)\to\pi(U) \label{z10}
\eeq
with respect to the fibre coordinates $(t^a,\vf^i)$. The
Hamiltonian vector fields $\vt_\la$ on $U$ relative to these
coordinates read (\ref{ww25}):
\mar{ww25a}\beq
\vt_a=\dr_a, \qquad \vt_i=-(BC^{-1})^a_i(x)\dr_a +
(C^{-1})_i^k(x)\dr_k. \label{ww25a}
\eeq
In order to specify coordinates on the base $\pi(U)$ of the
trivial bundle (\ref{z10}), let us consider the fibred manifold
$S$ (\ref{g106}). It factorizes as
\be
S: U\ar^\pi \pi(U)\ar^{\pi'} S(U)
\ee
through the fibre bundle $\pi$. The map $\pi'$ also is a fibred
manifold. One can always restrict the domain $\pi(U)$ to a chart
of the fibred manifold $\pi'$, say $\pi(U)$ again. Then $\pi(U)\to
S(U)$ is a trivial bundle $\pi(U)=S(U)\times V$, and so is $U\to
S(U)$. Thus, we have the composite bundle
\mar{z10a}\beq
U=S(U)\times V\times (\mathbb R^{m-r}\times T^r)\to S(U)\times
V\to S(U). \label{z10a}
\eeq
Let us provide its base $S(U)$ with the coordinates $(J_\la)$ such
that
\mar{cmp23}\beq
J_\la\circ S=S_\la. \label{cmp23}
\eeq
Then $\pi(U)$ can be equipped with the bundle coordinates $(J_\la,
x^A)$, $A=1,\ldots, 2(n-m)$, and $(J_\la, x^A, t^a,\vf^i)$ are
coordinates on $U$ (\ref{z10'}). Since fibres of $U\to \pi(U)$ are
isotropic, a symplectic form $\Om$ on $U$ relative to the
coordinates $(J_\la,x^A,r^\la)$ reads
\mar{d23}\ben
&&\Om=\Om^{\al\bt}dJ_\al\w dJ_\bt + \Om^\al_\bt dJ_\al\w dr^\bt +
\label{d23}\\
&& \qquad \Om_{AB}dx^A\w dx^B +\Om_A^\la dJ_\la\w dx^A +
  \Om_{A\bt} dx^A\w dr^\bt. \nonumber
\een
The Hamiltonian vector fields $\vt_\la=\vt_\la^\m\dr_\m$
(\ref{ww25a}) obey the relations $\vt_\la\rfloor\Om=-dJ_\la$ which
result in the coordinate conditions
\mar{ww22}\beq
  \Om^\al_\bt \vt^\bt_\la=\dl^\al_\la, \qquad \Om_{A\bt}\vt^\bt_\la=0.
\label{ww22}
\eeq
The first of them shows that $\Om^\al_\bt$ is a non-degenerate
matrix independent of coordinates $r^\la$. Then the second one
implies that $\Om_{A\bt}=0$. By virtue of the well-known K\"unneth
formula for the de Rham cohomology of manifold products, the
closed form $\Om$ (\ref{d23}) is exact, i.e., $\Om=d\Xi$ where the
Liouville form $\Xi$ is
\be
\Xi=\Xi^\al(J_\la,x^B,r^\la)dJ_\al + \Xi_i(J_\la,x^B) d\vf^i
+\Xi_A(J_\la,x^B,r^\la)dx^A.
\ee
Since $\Xi_a=0$ and $\Xi_i$ are independent of $\vf^i$, it follows
from the relations
\be
\Om_{A\bt}=\dr_A\Xi_\bt-\dr_\bt\Xi_A=0
\ee
that $\Xi_A$ are independent of coordinates $t^a$ and are at most
affine in $\vf^i$. Since $\vf^i$ are cyclic coordinates, $\Xi_A$
are independent of $\vf^i$. Hence, $\Xi_i$ are  independent of
coordinates $x^A$, and the Liouville form reads
\mar{ac2}\beq
\Xi=\Xi^\al(J_\la,x^B,r^\la)dJ_\al + \Xi_i(J_\la) d\vf^i
+\Xi_A(J_\la,x^B)dx^A. \label{ac2}
\eeq
Because entries $\Om^\al_\bt$ of $d\Xi=\Om$ are independent of
$r^\la$, we obtain the following.

(i) $\Om^\la_i=\dr^\la\Xi_i-\dr_i\Xi^\la$. Consequently,
$\dr_i\Xi^\la$ are independent of $\vf^i$, and so are $\Xi^\la$
since $\vf^i$ are cyclic coordinates. Hence,
$\Om^\la_i=\dr^\la\Xi_i$ and $\dr_i\rfloor\Om=-d\Xi_i$. A glance
at the last equality shows that $\dr_i$ are Hamiltonian vector
fields. It follows that, from the beginning, one can separate $m$
generating functions on $U$, say $S_i$ again, whose Hamiltonian
vector fields are tangent to invariant tori. In this case, the
matrix $B$ in the expressions (\ref{d6}) and (\ref{ww25a})
vanishes, and the Hamiltonian vector fields $\vt_\la$
(\ref{ww25a}) read
\mar{ww25'}\beq
\vt_a=\dr_a, \qquad \vt_i=(C^{-1})_i^k\dr_k. \label{ww25'}
\eeq
Moreover, the coordinates $t^a$ are exactly the flow parameters
$s^a$. Substituting the expressions (\ref{ww25'}) into the first
condition (\ref{ww22}), we obtain
\be
&& \Om=\Om^{\al\bt}dJ_\al\w dJ_\bt +dJ_a\w ds^a + C^i_k dJ_i\w
d\vf^k + \\
&& \qquad \Om_{AB}dx^A\w dx^B +\Om_A^\la dJ_\la\w dx^A.
\ee
It follows that $\Xi_i$ are independent of $J_a$, and so are
$C^k_i=\dr^k\Xi_i$.

(ii) $\Om^\la_a=-\dr_a\Xi^\la=\dl^\la_a$. Hence,
$\Xi^a=-s^a+E^a(J_\la)$ and $\Xi^i=E^i(J_\la,x^B)$ are independent
of $s^a$.

In view of items (i) -- (ii), the Liouville form $\Xi$ (\ref{ac2})
reads
\be
&& \Xi=(-s^a+E^a(J_\la,x^B))dJ_a + E^i(J_\la,x^B) dJ_i + \\
&& \qquad \Xi_i(J_j) d\vf^i + \Xi_A(J_\la,x^B)dx^A.
\ee
Since the matrix $\dr^k\Xi_i$ is non-degenerate, we can perform
the coordinate transformations
\mar{cmp25}\ben
&& I_a=J_a, \qquad I_i=\Xi_i(J_j), \label{cmp25}\\
&& r'^a=-s^a+E^a(J_\la,x^B), \quad
r'^i=\vf^i-E^j(J_\la,x^B)\frac{\dr J_j}{\dr I_i}. \nonumber
\een
These transformations bring $\Om$ into the form
\mar{d26}\beq
\Om= dI_\la\w d r'^\la +\Om_{AB}(I_\m,x^C) dx^A\w dx^B +
\Om_A^\la(I_\m,x^C) dI_\la\w dx^A. \label{d26}
\eeq
Since functions $I_\la$ are in involution and their Hamiltonian
vector fields $\dr_\la$ mutually commute, a point $z\in M$ has an
open neighborhood
\be
U_z=\pi(U_z)\times O_z, \qquad O_z\subset \mathbb R^{m-r}\times
T^r,
\ee
endowed with local Darboux coordinates $(I_\la,p_s,q^s, y^\la)$,
$s=1,\ldots,n-m$, such that the symplectic form $\Om$ (\ref{d26})
is given by the expression
\mar{dd12}\beq
\Om= dI_\la\w d y^\la + dp_s\w dq^s. \label{dd12}
\eeq
Here, $y^\la(I_\la,x^A,r'^\al)$ are local functions
\mar{dd11}\beq
y^\la=r'^\la + f^\la(I_\la,x^A) \label{dd11}
\eeq
on $U_z$. With the above-mentioned group $G$ of flows of
Hamiltonian vector fields $\vt_\la$, one can extend these
functions to an open neighborhood
\be
\pi(U_z)\times \mathbb R^{k-m}\times T^m
\ee
of $M$, say $U$ again, by the law
\be
y^\la(I_\la,x^A,G(z)^\al)= G(z)^\la + f^\la(I_\la,x^A).
\ee
Substituting the functions (\ref{dd11}) on $U$ into the expression
(\ref{d26}), one brings the symplectic form $\Om$ into the
canonical form (\ref{cmp6'}) on $U$.
\end{proof}

\begin{remark}
If one supposes from the beginning that leaves of the foliation
$\cF$ are compact, the conditions of Theorem \ref{nc6} can be
replaced with that $\cF$ is a fibred manifold (see Theorems
\ref{11t2} and \ref{10b3}).
\end{remark}

\subsection{Global partially integrable systems}

As was mentioned above, there is a topological obstruction to the
existence of global action-angle coordinates. Forthcoming Theorem
\ref{cmp35} is a global generalization of Theorem \ref{nc6}
\cite{jmp07a,book10,ijgmmp09a}.

\begin{theorem} \label{cmp35} \mar{cmp35}
Let a partially integrable system $\{S_1,\ldots,S_m\}$ on a
symplectic manifold $(Z,\Om)$ satisfy the following conditions.

(i) The Hamiltonian vector fields $\vt_\la$ of $S_\la$ are
complete.

(ii) The foliation $\cF$ is a fibre bundle
\mar{cmp40}\beq
\pi:Z\to N. \label{cmp40}
\eeq

(iii) Its base $N$ is simply connected and the cohomology
$H^2(N;\mathbb Z)$ of $N$ with coefficients in the constant sheaf
$\mathbb Z$ is trivial.

\noindent Then the following hold.

(I) The fibre bundle (\ref{cmp40}) is a trivial principal bundle
with the structure group (\ref{g120}), and we have a composite
fibred manifold
\mar{g107}\beq
S=\zeta\circ\pi: Z\ar N\ar W, \label{g107}
\eeq
where $N\to W$ however need not be a fibre bundle.

(II) The fibred manifold (\ref{g107}) is provided with the global
fibred action-angle coordinates
\be
(I_\la,x^A,y^\la)\to (I_\la,x^A)\to (I_\la), \qquad
\la=1,\ldots,m, \quad A=1,\ldots 2(n-m),
\ee
such that: (i) the action coordinates $(I_\la)$ (\ref{g142}) are
expressed in the values of the functions $(S_\la)$ and they
possess identity transition functions, (ii) the angle coordinates
$(y^\la)$ (\ref{g142}) are coordinates on a toroidal cylinder,
(iii) the symplectic form $\Om$ on $U$ reads
\mar{nc3'}\beq
\Om= dI_\la\w dy^\la + \Om_A^\la dI_\la\w dx^A+ \Om_{AB} dx^A\w
dx^B. \label{nc3'}
\eeq
\end{theorem}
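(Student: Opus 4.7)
The plan is to lift Theorem \ref{nc6}, which is local in $N$, to a global statement by using the topological hypotheses on $N$ to trivialize the principal bundle produced in that theorem, and then to assemble the local action-angle charts into global ones. I proceed in three steps.

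First I replay the group-theoretic construction underlying Theorems \ref{bi0} and \ref{nc6}. Since the complete, mutually commuting Hamiltonian vector fields $\vt_\la$ integrate to a fibrewise action of $\mathbb R^m$ on $\pi\colon Z\to N$ with discrete isotropy subgroups $K_x\subset\mathbb R^m$ of rank $r$, they organize $\pi$ into a fibrewise action of the quotients $G_x=\mathbb R^m/K_x\cong\mathbb R^{m-r}\times T^r$. To upgrade this into a bona fide principal bundle with a single structure group $G_0=\mathbb R^{m-r}\times T^r$, one must identify the period lattices $K_x$ across $N$. The disjoint union $\bigsqcup_{x\in N}K_x$ is naturally a covering space of $N$ with fibre $\mathbb Z^r$, whose monodromy representation $\pi_1(N)\to GL(r,\mathbb Z)$ is trivial by simple connectedness. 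Hence the period lattice is globally constant, and $\pi\colon Z\to N$ becomes a principal $G_0$-bundle.

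Next I show this principal $G_0$-bundle is trivial. Since $G_0$ is abelian, the isomorphism classes of principal $G_0$-bundles over $N$ are classified by $H^1(N;\underline{G_0})$, where $\underline{G_0}$ is the sheaf of smooth $G_0$-valued functions. The short exact sequence of sheaves $0\to\underline{\mathbb Z}^r\to\underline{\mathbb R}^m\to\underline{G_0}\to 0$, together with the acyclicity of $\underline{\mathbb R}^m$ on the paracompact base $N$, gives an isomorphism $H^1(N;\underline{G_0})\cong H^2(N;\mathbb Z^r)=H^2(N;\mathbb Z)^r$, which vanishes by the remaining part of hypothesis (iii). Hence a global section $\si\colon N\to Z$ exists, yielding the trivialization $Z\cong N\times G_0$ and the composite fibration (\ref{g107}).

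Finally, I build the global action-angle coordinates by replaying the local construction of Theorem \ref{nc6} on this global trivialization. Because $S\colon Z\to W$ is a submersion, I may choose coordinates $J_\la$ on $W$ with $J_\la\circ S=S_\la$; these pull back to globally defined functions on $N$, supplemented by transverse coordinates $x^A$ on $N$. The action coordinates $I_a=J_a$ and $I_i=\Xi_i(J_\m)$ arise, exactly as in the proof of Theorem \ref{nc6}, as period integrals of the Liouville form along the basis cycles of the now globally constant period lattice; since these integrals depend only on the $J_\m$, the transition functions between local $I$-charts are the identity. The angle coordinates $y^\la$ are then the standard $G_0$-fibre coordinates shifted by the section-dependent function $f^\la(I_\m,x^A)$ from the local proof and extended globally via the $G_0$-action. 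Substituting into $\Om$ reproduces the expression (\ref{nc3'}); the residual terms $\Om_A^\la\,dI_\la\w dx^A+\Om_{AB}\,dx^A\w dx^B$ persist because no global Darboux frame for the transverse directions $(x^A)$ on $N$ is available. The main obstacle is the second step: extracting triviality of the principal $G_0$-bundle from the cohomological hypothesis. The contractible factor $\mathbb R^{m-r}$ contributes no obstruction, so the entire obstruction lives in the Chern-class-type invariant in $H^2(N;\mathbb Z)^r$ associated with the $T^r$-factor, and hypothesis (iii) is precisely what is needed to kill it; once triviality is in hand, the remaining work is a careful globalization of the already-established local construction.
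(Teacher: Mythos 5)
Your overall strategy coincides with the paper's: both proofs (a) use simple connectedness of $N$ to kill the monodromy of the period lattice and thereby organize $\pi:Z\to N$ into a principal $G_0$-bundle with $G_0=\mathbb R^{m-r}\times T^r$, (b) use the vanishing of $H^2(N;\mathbb Z)$ to trivialize that bundle, and (c) globalize the local action-angle construction of Theorem \ref{nc6}. Your step (b), via the sheaf sequence $0\to\underline{\mathbb Z}^r\to\underline{\mathbb R}^m\to\underline{G_0}\to 0$ and the acyclicity of $\underline{\mathbb R}^m$, is an equivalent reformulation of the paper's argument, which instead reduces the structure group to the maximal compact torus $T^r$ and kills the Chern classes $c_1(\xi_i)\in H^2(N;\mathbb Z)$; the connecting homomorphism of your sequence is exactly the first Chern class, so the two arguments are interchangeable.

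The one substantive omission is the global exactness of $\Om$. You invoke ``period integrals of the Liouville form,'' which presupposes a globally defined primitive $\Xi$ with $\Om=d\Xi$ on all of $Z$; this is not automatic for a symplectic manifold and is precisely the content of the paper's Lemma \ref{g144}. The argument there uses the K\"unneth decomposition $H^2_{\rm DR}(Z)=H^2_{\rm DR}(N)\oplus H^1_{\rm DR}(N)\ot H^1_{\rm DR}(T^r)\oplus H^2_{\rm DR}(T^r)$, the hypotheses $\pi_1(N)=0$ and $H^2(N;\mathbb Z)=0$ to kill the first two summands, and the fact that the expression (\ref{g103}) contains no $\Om_{ij}d\vf^i\w d\vf^j$ term (because the fibres are isotropic) to dispose of the third. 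Without this step your construction of the action variables $I_i=\Xi_i(J_\la)$ and of the shifted angle coordinates (\ref{g142}) has no starting point. All the ingredients needed to close this gap are already among your hypotheses, but it must be closed explicitly: it is the second of the two places where condition (iii) is actually used, and it is logically independent of the bundle-triviality argument you do give.
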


\begin{proof} Following part (I) of the proof of Theorems \ref{bi0} and \ref{nc6}, one can show
that a typical fibre of the fibre bundle (\ref{cmp40}) is the
toroidal cylinder (\ref{g120}). Let us bring this fibre bundle
into a principal bundle with the structure group (\ref{g120}).
Generators of each isotropy subgroup $K_x$ of $\mathbb R^m$ are
given by $r$ linearly independent vectors $u_i(x)$ of a group
space $\mathbb R^m$. These vectors are assembled into an $r$-fold
covering $K\to N$. This is a subbundle of the trivial bundle
\mar{g101}\beq
N\times \mathbb R^m\to N \label{g101}
\eeq
whose local sections are local smooth sections of the fibre bundle
(\ref{g101}). Such a section over an open neighborhood of a point
$x\in N$ is given by a unique local solution $s^\la(x')e_\la$,
$e_\la=\vt_\la$, of the equation
\be
g(s^\la)\si(x')=\exp(s^\la e_\la)\si(x')=\si(x'), \qquad
s^\la(x)e_\la=u_i(x),
\ee
where $\si$ is an arbitrary local section of the fibre bundle
$Z\to N$ over an open neighborhood of $x$. Since $N$ is simply
connected, the covering $K\to N$ admits $r$ everywhere different
global sections $u_i$ which are global smooth sections
$u_i(x)=u^\la_i(x)e_\la$ of the fibre bundle (\ref{g101}). Let us
fix a point of $N$ further denoted by $\{0\}$. One can determine
linear combinations of the functions $S_\la$, say again $S_\la$,
such that $u_i(0)=e_i$, $i=m-r,\ldots,m$, and the group $G_0$ is
identified to the group $\mathbb R^{m-r}\times T^r$. Let $E_x$
denote an $r$-dimensional subspace of $\mathbb R^m$ passing
through the points $u_1(x),\ldots,u_r(x)$. The spaces $E_x$, $x\in
N$, constitute an $r$-dimensional subbundle $E\to N$ of the
trivial bundle (\ref{g101}). Moreover, the latter is split into
the Whitney sum of vector bundles $E\oplus E'$, where
$E'_x=\mathbb R^m/E_x$ \cite{hir}. Then there is a global smooth
section $\g$ of the trivial principal bundle $N\times GL(m,\mathbb
R)\to N$ such that $\g(x)$ is a morphism of $E_0$ onto $E_x$,
where
\be
u_i(x)=\g(x)(e_i)=\g_i^\la e_\la.
\ee
This morphism also is an automorphism of the group $\mathbb R^m$
sending $K_0$ onto $K_x$. Therefore, it provides a group
isomorphism $\rho_x: G_0\to G_x$. With these isomorphisms, one can
define the fibrewise action of the group $G_0$ on $Z$ given by the
law
\mar{d5'}\beq
G_0\times M_x\to\rho_x(G_0)\times M_x\to M_x. \label{d5'}
\eeq
Namely, let an element of the group $G_0$ be the coset
$g(s^\la)/K_0$ of an element $g(s^\la)$ of the group $\mathbb
R^m$. Then it acts on $M_x$ by the rule (\ref{d5'}) just as the
coset $g((\g(x)^{-1})^\la_\bt s^\bt)/K_x$ of an element
$g((\g(x)^{-1})^\la_\bt s^\bt)$ of $\mathbb R^m$ does. Since
entries of the matrix $\g$ are smooth functions on $N$, the action
(\ref{d5'}) of the group $G_0$ on $Z$ is smooth. It is free, and
$Z/G_0=N$. Thus, $Z\to N$ (\ref{cmp40}) is a principal bundle with
the structure group $G_0=\mathbb R^{m-r}\times T^r$.

Furthermore, this principal bundle over a paracompact smooth
manifold $N$ is trivial as follows. In accordance with the
well-known theorem \cite{hir}, its structure group $G_0$
(\ref{g120}) is reducible to the maximal compact subgroup $T^r$,
which also is the maximal compact subgroup of the group product
$\op\times^rGL(1,\mathbb C)$. Therefore, the equivalence classes
of $T^r$-principal bundles $\xi$ are defined as
\be
c(\xi)=c(\xi_1\oplus\cdots\oplus \xi_r)=(1+c_1(\xi_1))\cdots
(1+c_1(\xi_r))
\ee
by the Chern classes $c_1(\xi_i)\in H^2(N;\mathbb Z)$ of
$U(1)$-principal bundles $\xi_i$ over $N$ \cite{hir}. Since the
cohomology group $H^2(N;\mathbb Z)$ of $N$ is trivial, all Chern
classes $c_1$ are trivial, and the principal bundle $Z\to N$ over
a contractible base also is trivial. This principal bundle can be
provided with the following coordinate atlas.

Let us consider the fibred manifold $S:Z\to W$ (\ref{g106}).
Because functions $S_\la$ are constant on fibres of the fibre
bundle $Z\to N$ (\ref{cmp40}), the fibred manifold (\ref{g106})
factorizes through the fibre bundle (\ref{cmp40}), and we have the
composite fibred manifold (\ref{g107}). Let us provide the
principal bundle $Z\to N$ with a trivialization
\mar{g110}\beq
Z=N\times \mathbb R^{m-r}\times T^r\to N, \label{g110}
\eeq
whose fibres are endowed with the standard coordinates
$(r^\la)=(t^a,\vf^i)$ on the toroidal cylinder (\ref{g120}). Then
the composite fibred manifold (\ref{g107}) is provided with the
fibred coordinates
\mar{g108}\ben
&& (J_\la,x^A,t^a,\vf^i), \label{g108}\\
&& \la=1,\ldots, m, \quad A=1, \ldots, 2(n-m), \quad a=1, \ldots,
m-r, \quad i=1,\ldots, r, \nonumber
\een
where $J_\la$ (\ref{cmp23}) are coordinates on the base $W$
induced by Cartesian coordinates on $\mathbb R^m$, and $(J_\la,
x^A)$ are fibred coordinates on the fibred manifold $\zeta:N\to
W$. The coordinates $J_\la$ on $W\subset \mathbb R^m$ and the
coordinates $(t^a,\vf^i)$ on the trivial bundle (\ref{g110})
possess the identity transition functions, while the transition
function of coordinates $(x^A)$ depends on the coordinates
$(J_\la)$ in general.

The Hamiltonian vector fields $\vt_\la$ on $Z$ relative to the
coordinates (\ref{g108}) take the form
\mar{ww25b}\beq
\vt_\la=\vt_\la^a(x)\dr_a + \vt^i_\la(x)\dr_i. \label{ww25b}
\eeq
Since these vector fields commute (i.e., fibres of $Z\to N$ are
isotropic), the symplectic form $\Om$ on $Z$ reads
\mar{g103}\ben
&& \Om=\Om^\al_\bt dJ_\al\w dr^\bt + \Om_{\al A}dr^\al\w dx^A +
\Om^{\al\bt} dJ_\al\w dJ_\bt + \label{g103}\\
&& \qquad \Om^\al_A d J_\al\w dx^A +\Om_{AB} dx^A\w dx^B.
\nonumber
\een
This form is exact (see Lemma \ref{g144} below). Thus, we can
write
\mar{g113}\ben
&& \Om=d\Xi, \qquad \Xi=\Xi^\la(J_\al,x^B,r^\al) dJ_\la + \Xi_\la(J_\al,x^B)
dr^\la + \label{g113} \\
&& \qquad \Xi_A(J_\al,x^B,r^\al) dx^A. \nonumber
\een
Up to an exact summand, the Liouville form $\Xi$ (\ref{g113}) is
brought into the form
\be
\Xi=\Xi^\la(J_\al,x^B,r^\al) dJ_\la + \Xi_i(J_\al,x^B) d\vf^i
+\Xi_A(J_\al,x^B,r^\al) dx^A,
\ee
i.e., it does not contain the term $\Xi_a dt^a$.

The Hamiltonian vector fields $\vt_\la$ (\ref{ww25b}) obey the
relations $\vt_\la\rfloor\Om=-dJ_\la$, which result in the
coordinate conditions (\ref{ww22}). Then following the proof of
Theorem \ref{nc6}, we can show that a symplectic form $\Om$ on $Z$
is given by the expression (\ref{nc3'}) with respect to the
coordinates
\mar{g142}\ben
&& I_a=J_a, \qquad I_i=\Xi_i(J_j), \label{g142}\\
&& y^a = -\Xi^a=t^a-E^a(J_\la,x^B), \quad y^i
=\vf^i-\Xi^j(J_\la,x^B)\frac{\dr J_j}{\dr I_i}. \nonumber
\een
\end{proof}

\begin{lemma} \label{g144} \mar{g144}
The symplectic form $\Om$ (\ref{g103}) is exact.
\end{lemma}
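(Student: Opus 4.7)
The plan is to show $[\Om]=0$ in $H^2(Z;\mathbb{R})$ by computing this cohomology group via K\"unneth and then evaluating $\Om$ on a set of generators of $H_2(Z;\mathbb{R})$. First, part (I) of the proof of Theorem \ref{cmp35} already establishes that the principal bundle $\pi:Z\to N$ with structure group (\ref{g120}) is trivial, so as a smooth manifold $Z$ is diffeomorphic to $N\times\mathbb{R}^{m-r}\times T^r$. Since $\mathbb{R}^{m-r}$ is contractible, $Z$ is homotopy equivalent to $N\times T^r$.

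Second, I would apply the K\"unneth formula, which yields $H^2(Z;\mathbb{R})\cong H^2(N;\mathbb{R})\oplus (H^1(N;\mathbb{R})\otimes H^1(T^r;\mathbb{R}))\oplus H^2(T^r;\mathbb{R})$. The hypothesis that $N$ is simply connected forces $H^1(N;\mathbb{R})=0$, while the hypothesis $H^2(N;\mathbb{Z})=0$ combined with flatness of $\mathbb{R}$ over $\mathbb{Z}$ (equivalently, universal coefficients for paracompact manifolds) gives $H^2(N;\mathbb{R})=0$. Hence $H^2(Z;\mathbb{R})\cong H^2(T^r;\mathbb{R})$, and a basis of $H_2(Z;\mathbb{R})$ is given by the $2$-subtori $T_{ij}\subset T^r$ lying inside any single fibre of $\pi$.

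Third, by construction the fibres of $\pi:Z\to N$ are the leaves of the foliation $\cF$ spanned by the commuting Hamiltonian vector fields $\vt_\la$ of the functions $S_\la$; since these functions are in involution, the fibres are isotropic, so the pull-back of $\Om$ to every fibre vanishes. Consequently $\int_{T_{ij}}\Om=0$ for each of the generating $2$-cycles, and de Rham's theorem gives $[\Om]=0\in H^2(Z;\mathbb{R})$. This produces a globally defined one-form $\Xi$ on $Z$ with $\Om=d\Xi$, which is exactly the claim.

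The only mildly technical step is the passage from $H^2(N;\mathbb{Z})=0$ to $H^2(N;\mathbb{R})=0$ and the identification of the generators of $H_2(Z;\mathbb{R})$; both are routine applications of universal coefficients and K\"unneth, so I do not expect a substantive obstacle here. It is worth emphasizing that the argument depends crucially on the triviality of $\pi:Z\to N$ as a principal bundle, which in turn relies on the vanishing of the Chern classes deduced from $H^2(N;\mathbb{Z})=0$ earlier in the proof; this is why the lemma is naturally placed after that portion of the argument rather than preceding it.
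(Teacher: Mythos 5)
Your proposal is correct and follows essentially the same route as the paper's proof: the K\"unneth formula together with the hypotheses on $N$ reduces the computation to $H^2_{\rm DR}(Z)\cong H^2_{\rm DR}(T^r)$, and the vanishing of the class of $\Om$ then comes from the isotropy of the fibres. Your final step (integrating $\Om$ over the $2$-subtori of a fibre) is just a reformulation of the paper's observation that $\Om$ (\ref{g103}) contains no $d\vf^i\w d\vf^j$ term, which is itself the coordinate expression of that isotropy.
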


\begin{proof} In accordance with the well-known K\"unneth formula,
the de Rham cohomology group of the product (\ref{g110}) reads
\be
H^2_{\rm DR}(Z)=H^2_{\rm DR}(N)\oplus H^1_{\rm DR}(N)\ot H^1_{\rm
DR}(T^r) \oplus H^2_{\rm DR}(T^r).
\ee
By the de Rham theorem \cite{hir}, the de Rham cohomology
$H^2_{\rm DR}(N)$ is isomorphic to the cohomology $H^2(N;\mathbb
R)$ of $N$ with coefficients in the constant sheaf $\mathbb R$. It
is trivial since
\be
H^2(N;\mathbb R)=H^2(N;\mathbb Z)\ot\mathbb R
\ee
where $H^2(N;\mathbb Z)$ is trivial. The first cohomology group
$H^1_{\rm DR}(N)$ of $N$ is trivial because $N$ is simply
connected. Consequently, $H^2_{\rm DR}(Z)=H^2_{\rm DR}(T^r)$. Then
the closed form $\Om$ (\ref{g103}) is exact since it does not
contain the term $\Om_{ij}d\vf^i\w d\vf^j$.
\end{proof}

\section{Superintegrable systems}

In comparison with partially integrable and completely integrable
systems integrals of motion of a superintegrable system need not
be in involution. We consider superintegrable systems on a
symplectic manifold. A key point is that invariant submanifolds of
any superintegrable system are maximal integral manifolds of a
certain partially integrable system (Proposition \ref{nc8}).
Completely integrable systems are particular superintegrable
systems (see Definition \ref{cmp21}).

Our goal are Theorem \ref{nc0'} for superintegrable systems,
Theorem \ref{cmp20} for completely integrable systems, Theorem
\ref{cmp34} for globally superintegrable systems, and Theorem
\ref{nc6'a} for globally completely integrable systems.

\begin{definition} \label{i0} \mar{i0}
Let $(Z,\Om)$ be a $2n$-dimensional connected symplectic manifold,
and let $(C^\infty(Z), \{,\})$ be the Poisson algebra of smooth
real functions on $Z$. A subset
\mar{i00}\beq
F=(F_1,\ldots,F_k), \qquad n\leq k<2n, \label{i00}
\eeq
of the Poisson algebra $C^\infty(Z)$ is called a superintegrable
system if the following conditions hold.

(i) All the functions $F_i$ (called the  generating functions of a
superintegrable system) are independent, i.e., the $k$-form
$\op\w^kdF_i$ nowhere vanishes on $Z$. It follows that the map
$F:Z\to \mathbb R^k$ is a submersion, i.e.,
\mar{nc4}\beq
F:Z\to N=F(Z) \label{nc4}
\eeq
is a fibred manifold over a domain (i.e., contractible open
subset) $N\subset\mathbb R^k$ endowed with the coordinates $(x_i)$
such that $x_i\circ F=F_i$.

(ii) There exist smooth real functions $s_{ij}$ on $N$ such that
\mar{nc1}\beq
\{F_i,F_j\}= s_{ij}\circ F, \qquad i,j=1,\ldots, k. \label{nc1}
\eeq

(iii) The matrix function $\bs$ with the entries $s_{ij}$
(\ref{nc1}) is of constant corank $m=2n-k$ at all points of $N$.
\end{definition}

\begin{remark} We restrict our consideration to the case of
generating functions which are independent everywhere on a
symplectic manifold $Z$ (see Remarks \ref{kk507} and \ref{cmp8}).
\end{remark}

If $k=n$, then $\bs=0$, and we are in the case of completely
integrable systems as follows.

\begin{definition} \label{cmp21} \mar{cmp21} The subset $F$, $k=n$, (\ref{i00})
of the Poisson algebra $C^\infty(Z)$ on a symplectic manifold
$(Z,\Om)$ is called a completely integrable system if $F_i$ are
independent functions in involution.
\end{definition}

If $k>n$, the matrix $\bs$ is necessarily non-zero. Therefore,
superintegrable systems also are called  non-commutative
completely integrable systems. If $k=2n-1$, a superintegrable
system is called  maximally superintegrable.

The following two assertions clarify the structure of
superintegrable systems \cite{fasso05,fior2,book10}.

\begin{proposition} \label{nc7} \mar{nc7} Given a symplectic manifold $(Z,\Om)$,
let $F:Z\to N$ be a fibred manifold such that, for any two
functions $f$, $f'$ constant on fibres of $F$, their Poisson
bracket $\{f,f'\}$ is so. By virtue of Theorem \ref{p11.3}, $N$ is
provided with an unique coinduced Poisson structure $\{,\}_N$ such
that $F$ is a Poisson morphism.
\end{proposition}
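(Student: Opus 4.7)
The plan is to reduce this statement to a direct application of Theorem \ref{p11.3}, after reinterpreting the symplectic manifold $(Z,\Om)$ as a Poisson manifold via the canonical correspondence $\Om\leftrightarrow w_\Om$ established earlier in the section. Indeed, since $\Om$ is non-degenerate, it determines a Poisson bivector field $w_\Om$ on $Z$, and the associated Poisson bracket is precisely $\{f,f'\} = \vt_{f'}\rfloor\vt_f\rfloor\Om$. The hypothesis on $F$ then becomes exactly the hypothesis of Theorem \ref{p11.3}: for any $f,f'\in C^\infty(N)$, the pull-backs $F^*f,F^*f'$ are constant on fibres of $F$, and so is $\{F^*f,F^*f'\}$.

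Next I would make precise the passage from "constant on fibres" to "pull-back of a function on $N$". Since $F:Z\to N$ is a fibred manifold (hence a surjective submersion with connected fibres), any smooth function on $Z$ constant along the fibres is uniquely of the form $F^*h$ for a smooth function $h$ on $N$: uniqueness follows from surjectivity of $F$, and smoothness of $h$ follows locally by choosing a smooth section $\si:U\subset N\to Z$ of $F$ and setting $h=\si^*(F^*h)$. Applying this to $\{F^*f,F^*f'\}$, one obtains a well-defined operation
\be
\{f,f'\}_N := h, \qquad F^*h = \{F^*f,F^*f'\},
\ee
on $C^\infty(N)$.

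It then remains to check that $\{,\}_N$ is a Poisson bracket and that $F$ is a Poisson morphism. Bilinearity, skew-symmetry, Leibniz and Jacobi all transfer from the corresponding properties of $\{,\}$ on $Z$, using the injectivity of $F^*$: e.g. the Jacobiator of $\{,\}_N$ pulls back under $F^*$ to the Jacobiator of $\{,\}$ applied to $F^*f,F^*f',F^*f''$, which vanishes, so the Jacobiator on $N$ vanishes as well. By construction $\{F^*f,F^*f'\} = F^*\{f,f'\}_N$, which is exactly the definition of $F$ being a Poisson morphism. Uniqueness of $\{,\}_N$ is immediate: any Poisson bracket on $N$ making $F$ a Poisson morphism must satisfy $F^*\{f,f'\}_N = \{F^*f,F^*f'\}$, and injectivity of $F^*$ forces $\{f,f'\}_N$ to be the function defined above.

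The only mildly delicate point — the one I would be most careful about — is the smoothness of $\{f,f'\}_N$ on $N$, since it is defined implicitly by pull-back. This is where the fibred-manifold hypothesis is essential: local sections of $F$ exist and give an explicit smooth formula $\{f,f'\}_N = \si^*\{F^*f,F^*f'\}$ on each chart, and independence of $\si$ follows from the fact that $\{F^*f,F^*f'\}$ is constant on fibres by hypothesis. Aside from this bookkeeping, the proposition is essentially a rephrasing of Theorem \ref{p11.3} in the symplectic setting.
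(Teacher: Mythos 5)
Your proposal is correct and follows exactly the route the paper intends: the paper offers no separate proof of Proposition \ref{nc7}, treating it as an immediate application of Theorem \ref{p11.3} to the Poisson structure $w_\Om$ associated with $\Om$, and your argument simply fills in the standard details (descent of fibrewise-constant functions, transfer of the Poisson axioms via injectivity of $F^*$, smoothness via local sections, uniqueness). One minor remark: your parenthetical claim that a fibred manifold has connected fibres is not part of the paper's definition, but it is also not needed, since the hypothesis ``constant on fibres'' (rather than ``locally constant'') already guarantees that the bracket descends to a well-defined function on $N$.
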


Since any function constant on fibres of $F$ is a pull-back of
some function on $N$, the superintegrable system (\ref{i00})
satisfies the condition of Proposition \ref{nc7} due to item (ii)
of Definition \ref{i0}. Thus, the base $N$ of the fibration
(\ref{nc4}) is endowed with a coinduced Poisson structure of
corank $m$. With respect to coordinates $x_i$ in item (i) of
Definition \ref{i0} its bivector field reads
\mar{cmp1}\beq
w=s_{ij}(x_k)\dr^i\w\dr^j. \label{cmp1}
\eeq

\begin{proposition} \label{nc8} \mar{nc8} Given a fibred manifold $F:Z\to N$ in
Proposition \ref{nc7}, the following conditions are equivalent
\cite{fasso05,libe}:

(i) the rank of the coinduced Poisson structure $\{,\}_N$ on $N$
equals $2\di N-\di Z$,

(ii) the fibres of $F$ are isotropic,

(iii) the fibres of $F$ are  maximal integral manifolds of the
involutive distribution spanned by the Hamiltonian vector fields
of the pull-back $F^*C$ of Casimir functions $C$ of the coinduced
Poisson structure (\ref{cmp1}) on $N$.
\end{proposition}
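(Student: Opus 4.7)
My plan is to translate all three conditions into statements about the vertical subbundle $V=\Ker TF\subset TZ$ and its $\Om$-orthogonal $V^\perp=\rO_\Om V$, and to show each of them is equivalent to the isotropy condition $V\subseteq V^\perp$, which is precisely (ii). The initial setup is as follows. Since $F$ is a submersion, $F^*:T^*N\to T^*Z$ is an injective bundle morphism with pointwise image $\rA V$, and since $\Om^\sh$ is an isomorphism one has $\Om^\sh(F^*T^*_{F(z)}N)=V^\perp_z$, with $\di V=2n-k$ and $\di V^\perp=k$. A short calculation from $\{F^*f,F^*g\}_Z=\{f,g\}_N\circ F$ gives the Poisson-morphism identity
\[
TF(\vt_{F^*\al})=w^\sh_N(\al),\qquad \al\in T^*_{F(z)}N,
\]
where $\vt_{F^*\al}:=\Om^\sh(F^*\al)$ is independent of the choice of local extension of $\al$.

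For (i)$\Leftrightarrow$(ii), I would read off from this identity that the image of $w^\sh_N$ at $x=F(z)$ coincides with $TF(V^\perp_z)$, so
\[
\mathrm{rank}\,w_N(x)=\di V^\perp_z-\di(V_z\cap V^\perp_z)=k-\di(V_z\cap V^\perp_z).
\]
This equals $2k-2n=2\di N-\di Z$ exactly when $\di(V_z\cap V^\perp_z)=2n-k=\di V_z$, i.e., iff $V_z\subseteq V^\perp_z$.

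For (ii)$\Leftrightarrow$(iii), I would introduce the distribution $\cD$ on $Z$ spanned by the Hamiltonian vector fields $\vt_{F^*C}$ for $C$ a Casimir of the coinduced structure. The identity above gives $TF(\vt_{F^*C})=w^\sh_N(dC)=0$, so $\vt_{F^*C}\in V$; together with $\vt_{F^*C}\in\Om^\sh(\rA V)=V^\perp$ this yields $\cD\subseteq V\cap V^\perp$, and involutivity is automatic from $[\vt_{F^*C},\vt_{F^*C'}]=\vt_{F^*\{C,C'\}_N}=0$. Under the (already-established) equivalent conditions (i)/(ii) the coinduced Poisson structure is regular of constant corank $2n-k$, so Theorem \ref{canpo} furnishes, near any point of $N$, exactly $2n-k$ independent local Casimirs whose differentials span $\Ker w^\sh_N$. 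Pulling these back and applying the isomorphism $\Om^\sh$ produces enough independent vectors to conclude $\cD=V\cap V^\perp$ pointwise. Hence the fibres of $F$, whose tangent bundle is $V$, coincide with the maximal integral manifolds of $\cD$ iff $V=V\cap V^\perp$, i.e., iff (ii).

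The main subtlety I anticipate is handling (iii) before regularity of $w_N$ has been established, since Theorem \ref{canpo} requires constant rank. I would resolve this by noting that (iii) itself forces $\cD$ to be a distribution of constant rank $2n-k$ tangent to the fibres; hence $\di(V\cap V^\perp)=2n-k$ pointwise, $w_N$ has constant corank $2n-k$, and the regular-Poisson machinery applies. With this reduction in hand, the entire argument is dimension counting together with repeated use of $\Om^\sh$ being a bundle isomorphism.
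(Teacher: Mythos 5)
The paper does not prove Proposition \ref{nc8} at all: it is quoted from the literature (Fass\'o; Libermann--Marle) with a bare citation, so there is no in-text argument to compare yours against. Judged on its own, your proof is correct and is essentially the standard symplectic-linear-algebra argument one finds in those references: everything is reduced to the vertical bundle $V=\Ker TF$ and its orthogonal $V^\perp=\rO_\Om V$ via the two identities $\Om^\sh(F^*T^*N)=V^\perp$ and $TF\circ\Om^\sh\circ F^*=w_N^\sh$, after which (i)$\Leftrightarrow$(ii) is the dimension count $\mathrm{rank}\,w_N=k-\di(V\cap V^\perp)$ and (ii)$\Leftrightarrow$(iii) follows from $\Om^\sh(F^*(\Ker w_N^\sh))=V\cap V^\perp$. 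Your handling of the apparent circularity in (iii) is also right: the implication (iii)$\Rightarrow$(ii) needs only $\cD\subseteq V\cap V^\perp$ together with $\cD=V$ forced by the definition of a maximal integral manifold, and does not invoke Theorem \ref{canpo}. The one point worth flagging is that in the direction (ii)$\Rightarrow$(iii) you must read ``Casimir functions'' as \emph{local} Casimirs supplied by Theorem \ref{canpo} (or Theorem \ref{wenst1}); globally defined Casimirs spanning $\Ker w_N^\sh$ need not exist, which is precisely why the paper later imposes their existence as a separate hypothesis (condition (iv) of Theorem \ref{cmp34}). With that reading made explicit, the argument is complete.
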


It is readily observed that the fibred manifold $F$ (\ref{nc4})
obeys condition (i) of Proposition \ref{nc8} due to item (iii) of
Definition \ref{i0}, namely, $k-m= 2(k-n)$.

Fibres of the fibred manifold $F$ (\ref{nc4}) are called the
invariant submanifolds.

\begin{remark} \label{cmp8} \mar{cmp8} In many physical models, condition (i) of Definition
\ref{i0} fails to hold. Just as in the case of partially
integrable systems, it can be replaced with that a subset
$Z_R\subset Z$ of regular points (where $\op\w^kdF_i\neq 0$) is
open and dense. Let $M$ be an invariant submanifold through a
regular point $z\in Z_R\subset Z$. Then it is regular, i.e.,
$M\subset Z_R$. Let $M$ admit a regular open saturated
neighborhood $U_M$ (i.e., a fibre of $F$ through a point of $U_M$
belongs to $U_M$). For instance, any compact invariant submanifold
$M$ has such a neighborhood $U_M$. The restriction of functions
$F_i$ to $U_M$ defines a superintegrable system on $U_M$ which
obeys Definition \ref{i0}. In this case, one says that a
superintegrable system is considered around its invariant
submanifold $M$.
\end{remark}

Let $(Z,\Om)$ be a $2n$-dimensional connected symplectic manifold.
Given the superintegrable system $(F_i)$ (\ref{i00}) on $(Z,\Om)$,
the well known Mishchenko -- Fomenko theorem (Theorem \ref{nc0})
states the existence of (semi-local) generalized action-angle
coordinates around its connected compact invariant submanifold
\cite{bols03,fasso05,mishc}. The Mishchenko -- Fomenko theorem is
extended to superintegrable systems with non-compact invariant
submanifolds (Theorem \ref{nc0'})
\cite{fior2,f2,book10,ijgmmp09a}. These submanifolds are
diffeomorphic to a toroidal cylinder
\mar{g120'}\beq
\mathbb R^{m-r}\times T^r, \qquad m=2n-k, \qquad 0\leq r\leq m.
\label{g120'}
\eeq

Note that the Mishchenko -- Fomenko theorem is mainly applied to
superintegrable systems whose integrals of motion form a compact
Lie algebra. The group generated by flows of their Hamiltonian
vector fields is compact. Since a fibration of a compact manifold
possesses compact fibres, invariant submanifolds of such a
superintegrable system are compact. With Theorem \ref{nc0'}, one
can describe superintegrable Hamiltonian system with an arbitrary
Lie algebra of integrals of motion.

Given a superintegrable system in accordance with Definition
\ref{i0}, the above mentioned generalization of the Mishchenko --
Fomenko theorem to non-compact invariant submanifolds states the
following.

\begin{theorem} \label{nc0'} \mar{nc0'} Let the Hamiltonian vector fields $\vt_i$ of the
functions $F_i$ be complete, and let the fibres of the fibred
manifold $F$ (\ref{nc4}) be connected and mutually diffeomorphic.
Then the following hold.

(I) The fibres of $F$ (\ref{nc4}) are diffeomorphic to the
toroidal cylinder (\ref{g120'}).

(II) Given a fibre $M$ of $F$ (\ref{nc4}), there exists its open
saturated neighborhood $U_M$ which is a trivial principal bundle
\mar{kk600}\beq
U_M=N_M\times \mathbb R^{m-r}\times T^r\ar^F N_M \label{kk600}
\eeq
with the structure group (\ref{g120'}).

(III) The neighborhood $U_M$ is provided with the bundle (
generalized action-angle) coordinates $(I_\la,p_s,q^s, y^\la)$,
$\la=1,\ldots, m$, $s=1,\ldots,n-m$, such that: (i) the
generalized angle coordinates $(y^\la)$ are coordinates on a
toroidal cylinder, i.e., fibre coordinates on the fibre bundle
(\ref{kk600}), (ii) $(I_\la,p_s,q^s)$ are coordinates on its base
$N_M$ where the action coordinates $(I_\la)$ are values of Casimir
functions of the coinduced Poisson structure $\{,\}_N$ on $N_M$,
and (iii) the symplectic form $\Om$ on $U_M$ reads
\mar{cmp6}\beq
\Om= dI_\la\w dy^\la + dp_s\w dq^s. \label{cmp6}
\eeq
\end{theorem}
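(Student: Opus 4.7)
The plan is to reduce Theorem \ref{nc0'} to the partially integrable case of Theorem \ref{nc6}. The bridge is Propositions \ref{nc7} and \ref{nc8}: the base $N$ carries a coinduced Poisson structure of corank $m=2n-k$, whose pull-back Casimirs have Hamiltonian vector fields spanning the tangent distribution to the fibres of $F$. So the strategy is to extract a partially integrable system from the Casimirs of $(N,\{,\}_N)$ and transport all conclusions of Theorem \ref{nc6} back.

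First I would work locally on $N$ and, using that the coinduced Poisson structure on $N$ has constant corank $m$, choose $m$ independent Casimir functions $C_1,\ldots,C_m$ on a contractible neighbourhood $V\subset N$ of $F(M)$. Setting $S_\lambda=F^*C_\lambda$, the Poisson-morphism property gives $\{S_\lambda,S_\mu\}=\{C_\lambda,C_\mu\}_N\circ F=0$, and independence of $S_\lambda$ follows from independence of $C_\lambda$ together with $F$ being a submersion. Thus $\{S_1,\ldots,S_m\}$ is a partially integrable system on $F^{-1}(V)$ in the sense of Definition \ref{kk506}. A direct computation yields
\[
\vt_{S_\lambda}=\sum_i \Bigl(\frac{\partial C_\lambda}{\partial x_i}\circ F\Bigr)\,\vt_i,
\]
so each $\vt_{S_\lambda}$ is a linear combination of the original Hamiltonian vector fields $\vt_i$ with coefficients that are constant along the fibres of $F$. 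By Proposition \ref{nc8}(iii), the foliation $\cF$ defined by the $\vt_{S_\lambda}$ coincides with the fibration $F$.

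Next I would verify the hypotheses of Theorem \ref{nc6}. Condition (ii) is automatic: the foliation $\cF$ is exactly $F$, which is a fibred manifold with mutually diffeomorphic connected fibres by hypothesis. Condition (i), completeness of the $\vt_{S_\lambda}$, is the main obstacle. The key observation is that on a given fibre $M_0$ the coefficients $(\partial C_\lambda/\partial x_i)\circ F$ are constants, so $\vt_{S_\lambda}|_{M_0}$ is a fixed linear combination of the complete vector fields $\vt_i|_{M_0}$; furthermore $[\vt_i,\vt_j]=\vt_{s_{ij}\circ F}$ is again a combination of the $\vt_k$ with fibrewise-constant coefficients, so the restrictions of the $\vt_i$ to $M_0$ span a finite-dimensional Lie algebra and integrate to a Lie group action. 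The flow of $\vt_{S_\lambda}$ along $M_0$ is realised inside this group action, hence is defined for all times; since the coefficients are constant along fibres, this extends globally to give completeness of $\vt_{S_\lambda}$ on $F^{-1}(V)$.

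Finally, Theorem \ref{nc6} applied to $\{S_1,\ldots,S_m\}$ yields parts (I) and (II) directly: the fibres are toroidal cylinders $\mathbb{R}^{m-r}\times T^r$, and there is an open saturated neighbourhood $U_M=N_M\times\mathbb{R}^{m-r}\times T^r\to N_M$ of $M$ on which the fibration trivialises as a principal bundle with structure group (\ref{g120'}). Theorem \ref{nc6} also provides partial action-angle coordinates $(I_\lambda,p_s,q^s,y^\lambda)$ on $U_M$ in which $\Om$ takes the canonical form (\ref{cmp6}), with $I_\lambda$ expressed through the $S_\lambda=F^*C_\lambda$, hence through values of Casimirs of $(N,\{,\}_N)$. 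The coordinates $(p_s,q^s)$ are $F$-constant in the angle directions, so they descend to $N_M$ and, together with $(I_\lambda)$, parametrise $N_M$ as claimed in (III). A small verification ensures that the $(I_\lambda,p_s,q^s)$ so constructed are indeed independent on $N_M$, which is immediate from the non-degenerate form (\ref{cmp6}) of $\Om$.
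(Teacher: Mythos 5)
Your proposal is correct and follows essentially the same route as the paper: both reduce the statement to Theorem \ref{nc6} by passing to the partially integrable system $(F^*C_1,\ldots,F^*C_m)$ built from independent Casimir functions of the coinduced Poisson structure on $N$, identify its leaves with the fibres of $F$ via Proposition \ref{nc8}(iii), and establish completeness of the $\vt_{F^*C_\la}$ by observing that on each fibre they are constant-coefficient combinations of the complete fields $\vt_i$, hence lie in a finite-dimensional Lie algebra of complete vector fields (Remark \ref{zz95}). Your treatment of the completeness step is somewhat more explicit than the paper's, but the argument is the same.
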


\begin{proof}
It follows from item (iii) of Proposition \ref{nc8} that every
fibre $M$ of the fibred manifold (\ref{nc4}) is a maximal integral
manifolds of the involutive distribution spanned by the
Hamiltonian vector fields $\up_\la$ of the pull-back $F^*C_\la$ of
$m$ independent Casimir functions $\{C_1,\ldots, C_m\}$ of the
Poisson structure $\{,\}_N$ (\ref{cmp1}) on an open neighborhood
$N_M$ of a point $F(M)\in N$. Let us put $U_M=F^{-1}(N_M)$. It is
an open saturated neighborhood of $M$. Consequently, invariant
submanifolds of a superintegrable system (\ref{i00}) on $U_M$ are
maximal integral manifolds of the partially integrable system
\mar{cmp4}\beq
C^*=(F^*C_1, \ldots, F^*C_m), \qquad 0<m\leq n, \label{cmp4}
\eeq
on a symplectic manifold $(U_M,\Om)$. Therefore, statements (I) --
(III) of Theorem \ref{nc0'} are the corollaries of Theorem
\ref{nc6}. Its condition (i) is satisfied as follows. Let $M'$ be
an arbitrary fibre of the fibred manifold $F:U_M\to N_M$
(\ref{nc4}). Since
\be
F^*C_\la(z)= (C_\la\circ F)(z)= C_\la(F_i(z)),\qquad z\in M',
\ee
the Hamiltonian vector fields $\up_\la$ on $M'$ are $\mathbb
R$-linear combinations of Hamiltonian vector fields $\vt_i$ of the
functions $F_i$ It follows that $\up_\la$ are elements of a
finite-dimensional real Lie algebra of vector fields on $M'$
generated by the vector fields $\vt_i$. Since vector fields
$\vt_i$ are complete, the vector fields $\up_\la$ on $M'$ also are
complete (see forthcoming Remark \ref{zz95}). Consequently, these
vector fields are complete on $U_M$ because they are vertical
vector fields on $U_M\to N$. The proof of Theorem \ref{nc6} shows
that the action coordinates $(I_\la)$ are values of Casimir
functions expressed in the original ones $C_\la$.
\end{proof}

\begin{remark} \label{zz95} \mar{zz95} If complete vector fields on a
smooth manifold constitute a basis for a finite-dimensional real
Lie algebra, any element of this Lie algebra is complete
\cite{palais}.
\end{remark}

\begin{remark} Since an open neighborhood $U_M$ (\ref{kk600}) in item (II) of
Theorem \ref{nc0'} is not contractible, unless $r=0$, the
generalized action-angle coordinates on $U$ sometimes are called
semi-local.
\end{remark}

\begin{remark} \label{zz90} The condition of the completeness of Hamiltonian
vector fields of the generating functions $F_i$ in Theorem
\ref{nc0'} is rather restrictive (see the Kepler system in Section
3). One can replace this condition with that the Hamiltonian
vector fields of the pull-back onto $Z$ of Casimir functions on
$N$ are complete.
\end{remark}

If the conditions of Theorem \ref{nc0'} are replaced with that the
fibres of the fibred manifold $F$ (\ref{nc4}) are compact and
connected, this theorem restarts the Mishchenko -- Fomenko one as
follows.

\begin{theorem} \label{nc0} \mar{nc0}
Let the fibres of the fibred manifold $F$ (\ref{nc4}) be connected
and compact. Then they are diffeomorphic to a torus $T^m$, and
statements (II) -- (III) of Theorem \ref{nc0'} hold.
\end{theorem}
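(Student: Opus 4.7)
The plan is to deduce this theorem directly from Theorem \ref{nc0'} by showing that its hypotheses are automatically satisfied when the fibres of $F:Z\to N$ are connected and compact, and then to observe that compactness forces the toroidal cylinder (\ref{g120'}) to actually be a torus.

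First I would verify completeness of the Hamiltonian vector fields $\vt_i$ of the generating functions $F_i$. Since each $\vt_i$ is tangent to fibres of $F$ (because $\{F_j,F_i\}=s_{ji}\circ F$ is constant on fibres, hence $\vt_i\rfloor dF_j$ vanishes on the tangent distribution to fibres in the sense needed, and in fact $\vt_i$ preserves each invariant submanifold), the $\vt_i$ are vertical vector fields on the fibred manifold $F$. By Theorem \ref{11t2} cited in Remark \ref{kk501}, a fibred manifold with compact fibres is automatically a fibre bundle, and by Theorem \ref{10b3} vertical vector fields on such a bundle are complete. Hence condition (i) of Theorem \ref{nc0'} is fulfilled.

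Next I would verify the mutual-diffeomorphism condition on the fibres. Since $F:Z\to N$ is now known to be a fibre bundle (by Theorem \ref{11t2}) over the connected base $N$ (a domain in $\mathbb R^k$), all its fibres are diffeomorphic to a common typical fibre. Together with connectedness, this gives condition (ii) of Theorem \ref{nc0'}.

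With both hypotheses in place, Theorem \ref{nc0'} applies verbatim, yielding conclusions (II) and (III) of that theorem, which are precisely the claims (II)--(III) here. It remains only to identify the diffeomorphism type of a fibre: by Theorem \ref{nc0'}(I) each fibre is diffeomorphic to $\mathbb R^{m-r}\times T^r$ for some $0\le r\le m$, and compactness of the fibre forces the noncompact factor $\mathbb R^{m-r}$ to be trivial, i.e.\ $r=m$, so that each fibre is diffeomorphic to $T^m$. I do not foresee a substantive obstacle, since the work has already been done in Theorem \ref{nc0'}; the only subtle point is the appeal to Theorems \ref{11t2} and \ref{10b3} to convert compactness into the completeness and bundle-triviality needed to invoke Theorem \ref{nc0'}.
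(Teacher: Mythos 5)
Your reduction to Theorem \ref{nc0'} is the right overall strategy and matches the paper's, but the step where you verify condition (i) contains a genuine error. You claim each $\vt_i$ is vertical over $N$ because $\{F_j,F_i\}=s_{ji}\circ F$ is constant on fibres. Constancy on fibres is not the relevant point: by the relation (\ref{m81'}) one has $\vt_i\rfloor dF_j=\{F_i,F_j\}=s_{ij}\circ F$, and for a genuinely noncommutative superintegrable system ($k>n$) the matrix $\bs$ is necessarily nonzero, so $\vt_i$ does not annihilate all the $dF_j$ and hence is not tangent to the fibres of $F$, which are the common level sets of all the $F_j$. Indeed the fibres are $m=2n-k$-dimensional isotropic submanifolds, whereas the $\vt_i$ span a $k$-dimensional distribution. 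Consequently compactness of the fibres tells you nothing about completeness of the $\vt_i$. The paper's Kepler system on $U_-$ is an explicit counterexample: its invariant submanifolds are compact circles, yet the vector fields $\vt_1$ and $\vt_2$ are shown there to fail to be complete.

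The correct repair, which is what the paper's remark following Theorem \ref{nc0} does, is to work instead with the Hamiltonian vector fields $\up_\la$ of the pull-backs $F^*C_\la$ of Casimir functions. By item (iii) of Proposition \ref{nc8} these \emph{are} tangent to the fibres of $F$, so on compact fibres they are complete by Theorem \ref{10b3}; and by Remark \ref{zz90} the completeness hypothesis of Theorem \ref{nc0'} may be replaced by completeness of the $\up_\la$, since the proof of that theorem uses completeness of the $\vt_i$ only as a means to establish completeness of the $\up_\la$. The remainder of your argument --- Theorem \ref{11t2} turning the compact-fibred manifold into a fibre bundle with mutually diffeomorphic connected fibres, and compactness forcing $r=m$ in the toroidal cylinder (\ref{g120'}) so that each fibre is $T^m$ --- is correct and agrees with the paper.
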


\begin{remark}
In Theorem \ref{nc0}, the Hamiltonian vector fields $\up_\la$ are
complete because fibres of the fibred manifold $F$ (\ref{nc4}) are
compact. As well known, any vector field on a compact manifold is
complete.
\end{remark}

If $F$ (\ref{i00}) is a completely integrable system, the
coinduced Poisson structure on $N$ equals zero, and the generating
functions $F_i$ are the pull-back of $n$ independent functions on
$N$. Then Theorems \ref{nc0} and \ref{nc0'} come to the Liouville
-- Arnold theorem \cite{arn,laz} and its generalization (Theorem
\ref{cmp20}) to the case of non-compact invariant submanifolds
\cite{fior,book05}, respectively. In this case, the partially
integrable system $C^*$ (\ref{cmp4}) is exactly the original
completely integrable system $F$.

\begin{theorem} \label{cmp20} \mar{cmp20} Given a completely integrable system,
$F$ in accordance with Definition \ref{cmp21}, let the Hamiltonian
vector fields $\vt_i$ of the functions $F_i$ be complete, and let
the fibres of the fibred manifold $F$ (\ref{nc4}) be connected and
mutually diffeomorphic. Then items (I) and (II) of Theorem
\ref{nc0'} hold, and its item (III) is replaced with the following
one.

(III') The neighborhood $U_M$ (\ref{kk600}) where $m=n$ is
provided with the bundle (generalized action-angle) coordinates
$(I_\la,y^\la)$, $\la=1,\ldots, n$, such that the angle
coordinates $(y^\la)$ are coordinates on a toroidal cylinder, and
the symplectic form $\Om$ on $U_M$ reads
\mar{cmp66}\beq
\Om= dI_\la\w dy^\la. \label{cmp66}
\eeq
\end{theorem}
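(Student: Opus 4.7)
The plan is to deduce Theorem \ref{cmp20} as the specialization of Theorem \ref{nc0'} to the case $k=n$, since a completely integrable system in the sense of Definition \ref{cmp21} satisfies Definition \ref{i0} with the extra constraint that all Poisson brackets $\{F_i,F_j\}$ vanish. First I would verify that the hypotheses transfer: independence of the $F_i$ holds by assumption, involutivity forces the structure constants $s_{ij}$ in (\ref{nc1}) to be identically zero, so the matrix $\bs$ has corank $m=2n-k=n$ everywhere, and conditions (i)--(iii) of Definition \ref{i0} are met with $k=n$. The completeness of the Hamiltonian vector fields $\vt_i$ and the connectedness/mutual diffeomorphism of the fibres of $F:Z\to N$ are assumed outright, so the hypotheses of Theorem \ref{nc0'} hold.

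Next I would observe that the coinduced Poisson structure on $N$ from Proposition \ref{nc7} vanishes identically, because its bivector (\ref{cmp1}) is $w=s_{ij}\dr^i\w\dr^j=0$. Consequently every smooth function on $N$ is a Casimir function, so the $m=n$ coordinate functions $x_\la$ on $N$ themselves serve as an independent collection $(C_1,\ldots,C_n)$ of Casimir functions on a contractible neighborhood $N_M$ of $F(M)$. The associated partially integrable system $C^*=(F^*C_1,\ldots,F^*C_n)$ of (\ref{cmp4}) then coincides (up to relabeling) with the original $F$, and the maximal integral manifolds of the distribution spanned by the Hamiltonian vector fields $\up_\la$ of $F^*C_\la$ are precisely the fibres of $F$. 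Items (I) and (II) of Theorem \ref{nc0'} therefore apply verbatim: the fibres are diffeomorphic to the toroidal cylinder (\ref{g120'}) with $m=n$, and $U_M=F^{-1}(N_M)$ is a trivial principal bundle with structure group $\mathbb R^{n-r}\times T^r$.

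For item (III'), I would apply item (III) of Theorem \ref{nc0'} directly and track the dimensions. The transverse symplectic coordinates $(p_s,q^s)$ in Theorem \ref{nc0'} range over $s=1,\ldots,n-m$, but now $n-m=n-n=0$, so these coordinates are absent. The remaining coordinates are $(I_\la,y^\la)$ with $\la=1,\ldots,n$, and the symplectic form (\ref{cmp6}) collapses to $\Om=dI_\la\w dy^\la$, which is exactly (\ref{cmp66}). The identification of $I_\la$ as values of Casimir functions becomes vacuous since in the zero Poisson structure every function is Casimir; equivalently, $I_\la$ is a function of the values $F_i$ alone, recovering the classical action-coordinate interpretation.

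The main step (and essentially the only non-bookkeeping one) is recognizing that involutivity forces the coinduced Poisson structure to vanish and hence allows the Casimir functions in the proof of Theorem \ref{nc0'} to be chosen as the coordinates on $N$ themselves. There is no substantive obstacle beyond this dimensional count, since all the analytic work (completeness arguments, construction of the principal bundle structure, exactness of the Liouville form, and reduction to Darboux form) has already been carried out in the proofs of Theorems \ref{nc6} and \ref{nc0'}.
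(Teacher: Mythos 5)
Your proposal is correct and follows essentially the same route as the paper: the paper derives Theorem \ref{cmp20} precisely by observing that for a completely integrable system the coinduced Poisson structure on $N$ vanishes, so the partially integrable system $C^*$ (\ref{cmp4}) coincides with the original $F$, and Theorem \ref{nc0'} then applies with $m=n$, eliminating the transverse coordinates $(p_s,q^s)$ and reducing (\ref{cmp6}) to (\ref{cmp66}). Your elaboration of why every function on $N$ is a Casimir function and of the dimension count $n-m=0$ is exactly the bookkeeping the paper leaves implicit.
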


To study a superintegrable system, one conventionally considers it
with respect to generalized action-angle coordinates. A problem is
that, restricted to an action-angle coordinate chart on an open
subbundle $U$ of the fibred manifold $Z\to N$ (\ref{nc4}), a
superintegrable system becomes different from the original one
since there is no morphism of the Poisson algebra $C^\infty(U)$ on
$(U,\Om)$ to that $C^\infty(Z)$ on $(Z,\Om)$. Moreover, a
superintegrable system on $U$ need not satisfy the conditions of
Theorem \ref{nc0'} because it may happen that the Hamiltonian
vector fields of the generating functions on $U$ are not complete.
To describe superintegrable systems in terms of generalized
action-angle coordinates, we therefore follow the notion of a
globally superintegrable system \cite{book10,ijgmmp09a}.

\begin{definition} \mar{cmp30a} \label{cmp30a} A superintegrable system
$F$ (\ref{i00}) on a symplectic manifold $(Z,\Om)$ in Definition
\ref{i0} is called  globally superintegrable if there exist global
generalized action-angle coordinates
\mar{cmp31a}\beq
(I_\la, x^A, y^\la), \qquad \la=1,\ldots,m, \qquad A=1,\ldots,
2(n-m), \label{cmp31a}
\eeq
such that: (i) the action coordinates $(I_\la)$ are expressed in
the values of some Casimir functions $C_\la$ on the Poisson
manifold $(N,\{,\}_N)$, (ii) the angle coordinates $(y^\la)$ are
coordinates on the toroidal cylinder (\ref{g120}), and (iii) the
symplectic form $\Om$ on $Z$ reads
\mar{cmp32}\beq
\Om= dI_\la\w d y^\la +\Om_{AB}(I_\m,x^C) dx^A\w dx^B.
\label{cmp32}
\eeq
\end{definition}

It is readily observed that the semi-local generalized
action-angle coordinates on $U$ in Theorem \ref{nc0'} are global
on $U$ in accordance with Definition \ref{cmp30a}.

Forthcoming Theorem \ref{cmp34} provides the sufficient conditions
of the existence of global generalized action-angle coordinates of
a superintegrable system on a symplectic manifold $(Z,\Om)$
\cite{jmp07a,book10,ijgmmp09a}. It generalizes the well-known
result for the case of compact invariant submanifolds
\cite{daz,fasso05}.

\begin{theorem} \label{cmp34} \mar{cmp34} A superintegrable system $F$ on
a symplectic manifold $(Z,\Om)$ is globally superintegrable if the
following conditions hold.

(i) Hamiltonian vector fields $\vt_i$ of the generating functions
$F_i$ are complete.

(ii) The fibred manifold $F$ (\ref{nc4}) is a fibre bundle with
connected fibres.

(iii) Its base $N$ is simply connected and the cohomology
$H^2(V;\mathbb Z)$ is trivial

(iv) The coinduced Poisson structure $\{,\}_N$ on a base $N$
admits $m$ independent Casimir functions $C_\la$.
\end{theorem}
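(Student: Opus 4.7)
The plan is to reduce Theorem \ref{cmp34} to Theorem \ref{cmp35} applied to a partially integrable subsystem built from the Casimir functions, mirroring the reduction used in the proof of Theorem \ref{nc0'}. From hypothesis (iv) I pick $m$ independent Casimir functions $C_\la$ on $(N,\{,\}_N)$ and form the pull-backs $S_\la=F^*C_\la$ on $Z$. By Proposition \ref{nc8}(iii) the invariant submanifolds of the original superintegrable system coincide with the maximal integral manifolds of the involutive distribution spanned by the Hamiltonian vector fields $\up_\la$ of the $S_\la$. Combined with hypothesis (ii), this identifies the characteristic foliation of the partially integrable system $(S_1,\dots,S_m)$ with the fibre bundle $F:Z\to N$.

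Next I verify the three hypotheses of Theorem \ref{cmp35} for $(S_1,\dots,S_m)$. Hypotheses (ii) and (iii) of that theorem are immediate from the identification above together with assumption (iii) on the topology of $N$. The substantive step is hypothesis (i), completeness of $\up_\la$. On each fibre $M$ of $F$ the map $F$ is constant, so
\[
\up_\la|_M=\sum_{i=1}^{k}\Bigl(\frac{\dr C_\la}{\dr x_i}(F(M))\Bigr)\,\vt_i|_M
\]
is a real linear combination, with constant coefficients on $M$, of the mutually commuting complete vector fields $\vt_i|_M$. By the Palais argument recalled in Remark \ref{zz95}, every element of the finite-dimensional abelian Lie algebra generated by the $\vt_i|_M$ is complete on $M$, so $\up_\la|_M$ is complete on $M$. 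Since $\up_\la$ is vertical over $N$, this leafwise completeness upgrades to completeness on $Z$.

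Theorem \ref{cmp35} then produces global coordinates $(I_\la,x^A,y^\la)$ on $Z$ in which the $I_\la$ are functions of the values of the $C_\la$, the $y^\la$ are coordinates on the toroidal cylinder $\mathbb R^{m-r}\times T^r$ along the fibres of $F$, and the symplectic form takes the shape (\ref{nc3'}). A fibrewise shift $y^\la\mapsto y^\la+g^\la(I_\m,x^B)$, whose solvability $\dr g^\la/\dr x^A=-\Om^\la_A$ follows from $d\Om=0$ on the simply connected base $N$ and which preserves the toroidal-cylinder structure because $g^\la$ is independent of $y$, absorbs the cross term $\Om^\la_A\,dI_\la\w dx^A$ in (\ref{nc3'}) and brings $\Om$ into the canonical form (\ref{cmp32}), exhibiting $(I_\la,x^A,y^\la)$ as the global generalized action-angle coordinates required by Definition \ref{cmp30a}. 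The principal obstacle is the completeness step in paragraph two: because the coefficients $(\dr C_\la/\dr x_i)\circ F$ vary across $Z$, $\up_\la$ is not a global constant-coefficient combination of the $\vt_i$, and the argument must genuinely be made leafwise before being globalised via the verticality of $\up_\la$ over $N$.
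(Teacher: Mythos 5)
Your reduction to Theorem \ref{cmp35} via the partially integrable system $(F^*C_1,\dots,F^*C_m)$, the identification of its leaves with the fibres of $F$ through Proposition \ref{nc8}(iii), and the leafwise Palais argument for completeness of the $\up_\la$ all match the paper's route (the paper states that Theorem \ref{cmp34} is a corollary of Theorem \ref{cmp35}, and your completeness argument is the one used in the proof of Theorem \ref{nc0'}). The problem is your last step. You claim that the cross term $\Om^\la_A\,dI_\la\w dx^A$ in (\ref{nc3'}) can always be absorbed by a fibrewise shift $y^\la\mapsto y^\la+g^\la(I_\m,x^B)$ with $\dr_Ag^\la=-\Om^\la_A$, solvable "because $d\Om=0$". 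That integrability claim is false: the $dI_\la\w dx^A\w dx^B$ component of $d\Om=0$ gives $\dr_A\Om^\la_B-\dr_B\Om^\la_A=2\,\dr^\la\Om_{AB}$ (up to sign convention), not zero, so the one-form $\Om^\la_A\,dx^A$ need not be closed in $x$ unless $\Om_{AB}$ is independent of the action variables. Concretely, on $\mathbb R\times S^1\times\mathbb R^2$ with coordinates $(I,\vf,q,p)$ the closed nondegenerate form
\[
\Om=dI\w d\vf+p\,dI\w dq+(1+I)\,dp\w dq
\]
is exactly of the shape (\ref{nc3'}) with $\vt_I=\dr_\vf$ vertical and complete (and it does arise from a superintegrable system $F=(I,q,p)$ with $\{q,p\}=-(1+I)^{-1}$ and Casimir $C=x_1$, so all hypotheses of the theorem hold); here $\Om^I_q=p$, $\Om^I_p=0$, and $\dr_p\Om^I_q\neq\dr_q\Om^I_p$, so no $g(I,q,p)$ with $\dr_qg=-p$, $\dr_pg=0$ exists. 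The cross term is removed only by the $I$-dependent change of \emph{base} coordinate $p\mapsto(1+I)p$, not by a shift of the angle variable. (You also do not address the $\dr^\m g^\la\,dI_\m\w dI_\la$ term your shift would create unless $\dr^\m g^\la$ is symmetric.)

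The structural symptom of the gap is that your final step never invokes hypothesis (iv), yet (iv) is precisely what separates the conclusion (\ref{cmp32}) from the conclusion (\ref{nc3'}) of Theorem \ref{cmp35}: if the cross term were always removable by an angle shift, Theorem \ref{cmp35} would already assert (\ref{cmp32}). In the paper's proof the Casimir property is used earlier and differently: writing $\Om$ in coordinates $(J_\la,x^A,r^\la)$ adapted to the composite fibration $Z\ar N\ar W$ of the symplectic foliation of $(N,\{,\}_N)$, the fact that $J_\la$ are Casimirs forces the Liouville form into the shape $\Xi=\Xi^\la(J_\al,r^\m)dJ_\la+\Xi_i(J_\al)d\vf^i+\Xi_A(x^B)dx^A$, so that no $dJ_\la\w dx^A$ component ever appears and the final transformations (\ref{g151}) are independent of $x^A$. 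To repair your argument you would need either to reproduce that control of the Liouville form, or to allow and justify a global $I$-dependent change of the coordinates $x^A$ on $N$ (adapted to the symplectic foliation of the coinduced Poisson structure), rather than a shift of $y^\la$ alone.
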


\begin{proof} Theorem \ref{cmp34} is a corollary of
Theorem \ref{cmp35}. In accordance with Theorem \ref{cmp35}, we
have a composite fibred manifold
\mar{g150}\beq
Z\ar^F N\ar^C W, \label{g150}
\eeq
where $C:N\to W$ is a fibred manifold of level surfaces of the
Casimir functions $C_\la$ (which coincides with the symplectic
foliation of a Poisson manifold $N$). The composite fibred
manifold (\ref{g150}) is provided with the adapted fibred
coordinates $(J_\la, x^A, r^\la)$ (\ref{g108}), where $J_\la$ are
values of independent Casimir functions and $(r^\la)=(t^a,\vf^i)$
are coordinates on a toroidal cylinder. Since $C_\la=J_\la$ are
Casimir functions on $N$, the symplectic form $\Om$ (\ref{g103})
on $Z$ reads
\mar{g141}\beq
\Om=\Om^\al_\bt dJ_\al\w r^\bt + \Om_{\al A}dy^\al\w dx^A +
\Om_{AB} dx^A\w dx^B. \label{g141}
\eeq
In particular, it follows that transition functions of coordinates
$x^A$ on $N$ are independent of coordinates $J_\la$, i.e., $C:V\to
W$ is a trivial bundle. By virtue of Lemma \ref{g144}, the
symplectic form (\ref{g141}) is exact, i.e., $\Om=d\Xi$, where the
Liouville form $\Xi$ (\ref{g113}) is
\be
\Xi=\Xi^\la(J_\al,y^\m)dJ_\la + \Xi_i(J_\al) d\vf^i
+\Xi_A(x^B)dx^A.
\ee
Then the coordinate transformations (\ref{g142}):
\mar{g151}\ben
&& I_a=J_a, \qquad I_i=\Xi_i(J_j), \label{g151}\\
&&  y^a = -\Xi^a=t^a-E^a(J_\la), \qquad y^i
=\vf^i-\Xi^j(J_\la)\frac{\dr J_j}{\dr I_i}, \nonumber
\een
bring $\Om$ (\ref{g141}) into the form (\ref{cmp32}). In
comparison with the general case (\ref{g142}), the coordinate
transformations (\ref{g151}) are independent of coordinates $x^A$.
Therefore, the angle coordinates $y^i$ possess identity transition
functions on $N$.
\end{proof}

Theorem \ref{cmp34} restarts Theorem \ref{nc0'} if one considers
an open subset $V$ of $N$ admitting the Darboux coordinates $x^A$
on the symplectic leaves of $U$.

Note that, if invariant submanifolds of a superintegrable system
are assumed to be connected and compact, condition (i) of Theorem
\ref{cmp34} is unnecessary since vector fields $\vt_\la$ on
compact fibres of $F$ are complete. Condition (ii) also holds by
virtue of Theorem \ref{11t2}. In this case, Theorem \ref{cmp34}
reproduces the well known result in \cite{daz}.

If  $F$ in Theorem \ref{cmp34} is a completely integrable system,
the coinduced Poisson structure on $N$ equals zero, the generating
functions $F_i$ are the pull-back of $n$ independent functions on
$N$, and Theorem \ref{cmp34} takes the following form
\cite{jmp07a,book10}.

\begin{theorem} \label{nc6'a} \mar{nc6'a}
Let a completely integrable system $\{F_1,\ldots, F_n\}$ on a
symplectic manifold $(Z,\Om)$ satisfy the following conditions.

(i) The Hamiltonian vector fields $\vt_i$ of $F_i$ are complete.

(ii) The fibred manifold $F$ (\ref{nc4}) is a fibre bundle with
connected fibres over a simply connected base $N$ whose cohomology
$H^2(N,\mathbb Z)$ is trivial.

\noindent Then the following hold.

(I) The fibre bundle $F$ (\ref{nc4}) is a trivial principal bundle
with the structure group $\mathbb R^{2n-r}\times T^r$.

(II) The symplectic manifold $Z$ is provided with the global
Darboux coordinates $(I_\la,y^\la)$ such that $\Om= dI_\la\w
dy^\la$.
\end{theorem}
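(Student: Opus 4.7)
The plan is to derive Theorem \ref{nc6'a} as a direct specialization of Theorem \ref{cmp34} to the involutive case $k=n$. The essential observation is that for a completely integrable system in the sense of Definition \ref{cmp21}, all Poisson brackets $\{F_i,F_j\}$ vanish, so the structure functions $s_{ij}$ in (\ref{nc1}) are identically zero. Consequently the coinduced Poisson bivector (\ref{cmp1}) on the base $N$ of the fibration $F:Z\to N$ (\ref{nc4}) is the zero bivector, and its rank is $0 = 2\di N - \di Z$ as required by Proposition \ref{nc8}. In particular $m = 2n-k = n$, which agrees with $\di N = n$.

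With the zero Poisson structure on $N$, \emph{every} smooth function on $N$ is a Casimir function. In particular, the coordinate functions $x_1,\ldots,x_n$ furnished by item (i) of Definition \ref{i0} constitute $n = m$ independent global Casimir functions on $N$, so condition (iv) of Theorem \ref{cmp34} is satisfied automatically. Conditions (i), (ii), and (iii) of Theorem \ref{cmp34} are exactly the hypotheses (i) and (ii) of the present theorem. Hence Theorem \ref{cmp34} applies and produces global generalized action-angle coordinates $(I_\la,x^A,y^\la)$ of the form (\ref{cmp31a}), together with the symplectic expression (\ref{cmp32}).

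Now I would substitute $m = n$ into the range of indices. The transverse coordinates run over $A = 1,\ldots,2(n-m) = 0$, so there are no $x^A$ at all, and the global coordinate system reduces to $(I_\la,y^\la)$ with $\la=1,\ldots,n$. The second summand in (\ref{cmp32}) disappears, leaving $\Om = dI_\la\w dy^\la$. This yields item (II). For item (I), I would invoke part (I) of Theorem \ref{cmp35} (which is the structural statement used inside the proof of Theorem \ref{cmp34}) to identify $F:Z\to N$ as a trivial principal bundle with structure group $\mathbb R^{m-r}\times T^r$; with $m=n$ this is the toroidal cylinder group attached to a compact part of dimension $r$.

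There is essentially no new analytic work here: the entire argument is bookkeeping on top of Theorem \ref{cmp34}. The only point that deserves an explicit sentence is the triviality of Casimir functions in the commutative case, which is what allows the hypotheses of Theorem \ref{cmp34} to be met without any extra assumption on $N$ beyond simple-connectedness and the vanishing of $H^2(N;\mathbb Z)$. The potential pitfall to flag is purely a matter of indexing: one must check that the $x^A$-slot really collapses when $m=n$, so that the residual symplectic form and the residual coordinate atlas are genuinely those of global Darboux coordinates on all of $Z$, not merely on a leaf of some foliation.
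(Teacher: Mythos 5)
Your proof is correct and follows exactly the route the paper takes: Theorem \ref{nc6'a} is stated there as the specialization of Theorem \ref{cmp34} to the involutive case, where the coinduced Poisson structure on $N$ vanishes, every function on $N$ is a Casimir function (so condition (iv) of Theorem \ref{cmp34} holds automatically), and the transverse coordinates $x^A$ collapse when $m=n$, leaving $\Om=dI_\la\w dy^\la$. The only point worth adding is that your (correct) identification of the structure group as $\mathbb R^{n-r}\times T^r$ exposes an apparent typo in the statement, which writes $\mathbb R^{2n-r}\times T^r$ even though the fibres of $F:Z\to N$ are $n$-dimensional.
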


It follows from the proof of Theorem \ref{cmp35} that its
condition (iii) and, accordingly, condition (iii) of Theorem
\ref{cmp34} guarantee that fibre bundles $F$ in conditions (ii) of
these theorems are trivial. Therefore, Theorem \ref{cmp34} can be
reformulated as follows.

\begin{theorem} \mar{cmp36} \label{cmp36} A superintegrable system $F$ on
a symplectic manifold $(Z,\Om)$ is globally superintegrable iff
the following conditions hold.

(i) The fibred manifold $F$ (\ref{nc4}) is a trivial fibre bundle.

(ii) The coinduced Poisson structure $\{,\}_N$ on a base $N$
admits $m$ independent Casimir functions $C_\la$ such that
Hamiltonian vector fields of their pull-back $F^*C_\la$ are
complete.
\end{theorem}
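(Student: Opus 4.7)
The plan is to establish the two directions of the equivalence separately. Noting the author's remark immediately preceding the theorem, this is a clean reformulation of Theorem \ref{cmp34} in which the direct triviality hypothesis (i) replaces the topological conditions on $N$ that were used in Theorem \ref{cmp35}/\ref{cmp34} only to force the bundle $F$ to be trivial, while (ii) replaces the completeness condition (i) of Theorem \ref{cmp34} by the weaker one indicated in Remark \ref{zz90}.

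For necessity, assume $F$ is globally superintegrable with global generalized action-angle coordinates $(I_\la, x^A, y^\la)$ and symplectic form (\ref{cmp32}). In these coordinates the projection $F$ reads $(I_\la, x^A, y^\la)\mapsto (I_\la, x^A)$, and the $(y^\la)$ are global fibre coordinates on the fixed toroidal cylinder $\mathbb R^{m-r}\times T^r$. Hence $Z = N\times(\mathbb R^{m-r}\times T^r)$ is a trivial fibre bundle, which is (i). By Definition \ref{cmp30a}(i) the action coordinates are expressed in $m$ independent Casimirs $C_\la$ on $N$, giving the required Casimirs. The Hamiltonian vector field of $F^*I_\la$ with respect to $\Om$ in (\ref{cmp32}) is $\partial/\partial y^\la$; since each $F^*C_\la$ is a smooth function of the $F^*I_\mu$ alone, its Hamiltonian vector field is a coefficient-$(I_\mu)$ linear combination of the $\partial/\partial y^\mu$, hence complete on the toroidal-cylinder fibres.

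For sufficiency, condition (ii) supplies $m$ commuting complete Hamiltonian vector fields $\up_\la$ of the $F^*C_\la$. By Proposition \ref{nc8}(iii) they span the vertical distribution of $F$, so on each fibre they generate a free locally transitive action of $\mathbb R^m$. Part (I) of Theorem \ref{bi0} identifies each fibre with a toroidal cylinder (\ref{g120'}); combined with the triviality assumption (i), this furnishes a global diffeomorphism $Z\cong N\times(\mathbb R^{m-r}\times T^r)$ compatible with $F$ — precisely the trivialization that, in Theorem \ref{cmp35}, required simple-connectedness and $H^2(N;\mathbb Z)=0$. Fixing such a trivialization together with toroidal-cylinder coordinates $(r^\la)=(t^a,\vf^i)$, one introduces Casimir coordinates $J_\la=C_\la$ and transverse coordinates $(x^A)$ on $N$, placing us in the composite-fibration setup (\ref{g150}) of Theorem \ref{cmp34}'s proof. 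The Casimir property forces the $dJ_\al\wedge dJ_\bt$ component of $\Om$ to vanish, and the equations $\up_\la\rfloor\Om=-dJ_\la$ yield the coordinate conditions (\ref{ww22}); the substitution (\ref{g151}), which depends only on the base coordinates and hence respects the trivialization, then brings $\Om$ into the desired form (\ref{cmp32}).

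The main subtlety — and the only place where the present argument must depart from Theorem \ref{cmp34}'s — is that the latter invoked Lemma \ref{g144} to conclude $\Om=d\Xi$ globally, using the vanishing of $H^1_{\rm DR}(N)$ and $H^2_{\rm DR}(N)$ obtained from the topological hypotheses on $N$. Here those hypotheses are absent, but Definition \ref{cmp30a} already permits the nontrivial closed term $\Om_{AB}(I_\mu,x^C)\,dx^A\wedge dx^B$ in (\ref{cmp32}), so global exactness of $\Om$ is not required; one only needs the $d\vf^i\wedge d\vf^j$ component to vanish, which follows from the isotropy of the fibres of $F$ together with the trivial bundle structure supplied by (i). This is where the bulk of the bookkeeping resides and where condition (i) does essential work.
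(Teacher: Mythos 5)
Your overall strategy coincides with the paper's: the text offers no separate proof of Theorem \ref{cmp36}, only the remark that conditions (ii)--(iii) of Theorem \ref{cmp34} were used solely to force triviality of the bundle $F$, and that the completeness hypothesis can be weakened as in Remark \ref{zz90}. Your necessity argument is a correct (and welcome) supplement: the global coordinates of Definition \ref{cmp30a} exhibit the trivialization, the action coordinates $I_\la$, being independent functions of Casimirs, are themselves $m$ independent Casimirs, and their Hamiltonian vector fields $\dr/\dr y^\la$ are manifestly complete on the toroidal cylinder fibres.

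The sufficiency direction, however, has two genuine gaps. First, you pass from ``$F$ is a trivial fibre bundle'' to ``precisely the trivialization that, in Theorem \ref{cmp35}, required simple-connectedness and $H^2(N;\mathbb Z)=0$''. These are not the same thing. What the argument needs is a trivialization of $F$ as a \emph{principal} bundle adapted to the $\mathbb R^m$-action of the fields $\up_\la$, and building it requires a global smooth choice of generators $u_i(x)$ of the isotropy lattices $K_x\subset\mathbb R^m$ --- exactly the step for which $\pi_1(N)=0$ is invoked in the proof of Theorem \ref{cmp35}. This can in fact be recovered from hypothesis (i) (a trivial bundle forces the local system of period lattices to have trivial monodromy and hence to admit a global frame), but that argument is absent, and an arbitrary fibrewise diffeomorphism $Z\cong N\times(\mathbb R^{m-r}\times T^r)$ does not substitute for it. Second, the claim that ``global exactness of $\Om$ is not required'' does not square with how the action coordinates are actually produced: in the proofs of Theorems \ref{cmp35} and \ref{cmp34} the functions $I_i=\Xi_i(J_j)$ are read off from a global primitive $\Xi$ of $\Om$, whose existence is Lemma \ref{g144} and rests on $H^1_{\rm DR}(N)=H^2_{\rm DR}(N)=0$. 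Having dropped those hypotheses, you must either show that the part of $\Om$ involving $dJ_\al$ and $dr^\bt$ is still exact, or construct the $I_i$ directly as period integrals of local primitives over the torus cycles; merely noting that the $d\vf^i\w d\vf^j$ component vanishes does not yield the coordinate change (\ref{g151}). Until these two points are supplied, the ``if'' direction is not established.
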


\begin{remark} \label{zz96} \mar{zz96}
It follows from Remark \ref{zz95} and condition (ii) of Theorem
\ref{cmp36} that a Hamiltonian vector field of the the pull-back
$F^*C$ of any Casimir function $C$ on a Poisson manifold $N$ is
complete.
\end{remark}

In autonomous Hamiltonian mechanics, one considers superintegrable
systems whose generating functions are integrals of motion, i.e.,
they are in involution with a Hamiltonian $\cH$, and the functions
$(\cH,F_1,\ldots,F_k)$ are nowhere independent, i.e.,
\mar{cmp11,'}\ben
&&\{\cH, F_i\}=0, \label{cmp11}\\
&& d\cH\w(\op\w^kdF_i)=0. \label{cmp11'}
\een.

In order that an evolution of a Hamiltonian system can be defined
at any instant $t\in\mathbb R$, one supposes that the Hamiltonian
vector field of its Hamiltonian is complete. By virtue of Remark
\ref{zz96} and forthcoming Proposition \ref{cmp12b}, a Hamiltonian
of a superintegrable system always satisfies this condition.

\begin{proposition} \label{cmp12b} \mar{cmp12b}
It follows from the equality (\ref{cmp11'}) that a Hamiltonian
$\cH$ is constant on the invariant submanifolds. Therefore, it is
the pull-back of a function on $N$ which is a Casimir function of
the Poisson structure (\ref{cmp1}) because of the conditions
(\ref{cmp11}).
\end{proposition}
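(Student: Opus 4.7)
The plan is to first extract from (\ref{cmp11'}) that $\cH$ descends through the submersion $F$ to a smooth function $h$ on $N$, and then to use the involution conditions (\ref{cmp11}) together with the Poisson-morphism property of $F$ from Proposition \ref{nc7} to identify $h$ as a Casimir of the coinduced bracket (\ref{cmp1}).

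For the first step, I would argue pointwise: since the $F_i$ are everywhere independent by item (i) of Definition \ref{i0}, the covectors $dF_1(z),\ldots,dF_k(z)$ are linearly independent at each $z\in Z$, and (\ref{cmp11'}) then forces $d\cH(z)$ to lie in their span. Every vector tangent to a fibre of $F$ annihilates each $dF_i$, hence kills $d\cH$ as well. So $\cH$ is locally constant along fibres, and on connected fibres (as assumed in Theorem \ref{nc0'}) there is a unique smooth function $h\in C^\infty(N)$ with $\cH=F^*h$.

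For the second step, I would take an arbitrary $g\in C^\infty(N)$. The coordinate identity $x_i\circ F=F_i$ lets me write $F^*g=g(F_1,\ldots,F_k)$, so the Leibniz rule together with (\ref{cmp11}) gives
\[
\{\cH,F^*g\}=\sum_i\bigl((\partial_i g)\circ F\bigr)\,\{\cH,F_i\}=0.
\]
On the other hand, Proposition \ref{nc7} asserts that $F$ is a Poisson morphism to $(N,\{,\}_N)$, so $\{F^*h,F^*g\}=F^*\{h,g\}_N$. Combining these gives $F^*\{h,g\}_N=0$, and surjectivity of the submersion $F$ yields $\{h,g\}_N=0$ on $N$. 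As $g$ was arbitrary, $h$ is a Casimir function of (\ref{cmp1}).

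No serious obstacle is expected; the only mild subtlety is the global descent $\cH=F^*h$, which requires connected fibres and is supplied by the standing hypotheses. If fibres are disconnected, one applies the same argument on the saturated neighborhoods provided by Remark \ref{cmp8} and patches the resulting local Casimirs.
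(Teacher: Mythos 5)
Your proof is correct and takes essentially the route the paper intends: the proposition is stated there as its own one-line proof (independence of the $dF_i$ plus (\ref{cmp11'}) forces $d\cH$ into their span, so $\cH$ is constant on fibres and descends to a function $h$ on $N$; then (\ref{cmp11}) combined with the Poisson-morphism property of $F$ from Proposition \ref{nc7} makes $h$ a Casimir of (\ref{cmp1})). Your write-up merely supplies the pointwise linear-algebra step, the connectedness caveat, and the Leibniz-rule computation, all correctly.
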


Proposition \ref{cmp12b} leads to the following.

\begin{proposition} \label{zz1} \mar{zz1}
Let $\cH$ be a Hamiltonian of a globally superintegrable system
provided with the generalized action-angle coordinates $(I_\la,
x^A, y^\la)$ (\ref{cmp31a}). Then a Hamiltonian $\cH$ depends only
on the action coordinates $I_\la$. Consequently, the Hamilton
equation of a globally superintegrable system take the form
\be \dot
y^\la=\frac{\dr \cH}{\dr I_\la}, \qquad I_\la={\rm const.}, \qquad
x^A={\rm const.}
\ee
\end{proposition}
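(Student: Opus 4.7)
The plan is to assemble Proposition \ref{cmp12b} with the structural clauses of Definition \ref{cmp30a}, and then read off the Hamilton equation by a one-line computation from the canonical form (\ref{cmp32}) of $\Om$.

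First I would invoke Proposition \ref{cmp12b} to obtain that $\cH = F^*h$ for some Casimir function $h$ on the Poisson manifold $(N,\{,\}_N)$. Next I would use clause (i) of Definition \ref{cmp30a}: the action coordinates $(I_\la)$ are expressed in the values of $m$ independent Casimir functions $C_\la$ on $N$. Since $N$ has dimension $2(n-m)+m=2n-m$ and the coinduced Poisson structure has corank $m$ by Proposition \ref{nc7}, the space of Casimirs is locally generated by any $m$ independent ones; hence $h$ is locally a smooth function of $(C_1,\ldots,C_m)$. Pulling back to $Z$ and re-expressing $F^*C_\la$ in terms of the action coordinates $I_\la$ (which is legitimate by (i) of Definition \ref{cmp30a}), one concludes that $\cH$ depends only on the $I_\la$.

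Now I would plug $\cH=\cH(I_\la)$, so that $d\cH = (\partial\cH/\partial I_\la)\,dI_\la$, into the defining relation $\vt_\cH\rfloor\Om = -d\cH$ for the Hamiltonian vector field, using the canonical form
\[
\Om = dI_\la\w dy^\la + \Om_{AB}(I_\m,x^C)\,dx^A\w dx^B
\]
of (\ref{cmp32}). Since $d\cH$ has no $dy^\la$, $dx^A$ components, matching contractions term by term forces $\vt_\cH = (\partial\cH/\partial I_\la)\,\partial/\partial y^\la$. The resulting integral curves satisfy $\dot y^\la = \partial\cH/\partial I_\la$ and $\dot I_\la = 0$, $\dot x^A = 0$, which is the claimed form of the Hamilton equation.

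There is no serious obstacle here: the only subtlety is the structural remark that a Casimir on a regular Poisson manifold of corank $m$ is locally a function of any $m$ chosen independent Casimirs, which is a standard consequence of Theorem \ref{canpo} applied to the coinduced Poisson structure on $N$. Everything else is bookkeeping against the already-established canonical form (\ref{cmp32}).
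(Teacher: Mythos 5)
Your proposal is correct and follows essentially the same route the paper intends: the paper presents Proposition \ref{zz1} as an immediate consequence of Proposition \ref{cmp12b} (so that $\cH$ is the pull-back of a Casimir function on $N$), combined with clause (i) of Definition \ref{cmp30a} expressing the action coordinates through independent Casimir functions, and you correctly supply the two small details the paper leaves tacit — that any Casimir of the corank-$m$ coinduced structure is locally a function of the chosen $m$ independent ones (via Theorem \ref{canpo}), and that contracting $\vt_\cH$ with the canonical form (\ref{cmp32}) kills the $x^A$-component because $\Om_{AB}$ is non-degenerate. No gaps.
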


Following the original Mishchenko -- Fomenko theorem, let us
mention superintegrable systems whose generating functions
$\{F_1,\ldots,F_k\}$ form a $k$-dimensional real Lie algebra $\cG$
of corank $m$ with the commutation relations
\mar{zz60}\beq
\{F_i,F_j\}= c_{ij}^h F_h, \qquad c_{ij}^h={\rm const.}
\label{zz60}
\eeq
Then $F$ (\ref{nc4}) is a momentum mapping of $Z$ to the Lie
coalgebra $\cG^*$ provided with the coordinates $x_i$ in item (i)
of Definition \ref{i0} \cite{book10,guil}. In this case, the
coinduced Poisson structure $\{,\}_N$ coincides with the canonical
Lie -- Poisson structure on $\cG^*$ given by the Poisson bivector
field
\be
w=\frac12 c_{ij}^h x_h\dr^i\w\dr^j.
\ee
Let $V$ be an open subset of $\cG^*$ such that conditions (i) and
(ii) of Theorem \ref{cmp36} are satisfied. Then an open subset
$F^{-1}(V)\subset Z$ is provided with the generalized action-angle
coordinates.

\begin{remark}
Let Hamiltonian vector fields $\vt_i$ of the generating functions
$F_i$ which form a Lie algebra $\cG$ be complete. Then they define
a locally free Hamiltonian action on $Z$ of some simply connected
Lie group $G$ whose Lie algebra is isomorphic to $\cG$
\cite{onish,palais}. Orbits of $G$ coincide with $k$-dimensional
maximal integral manifolds of the regular distribution $\cV$ on
$Z$ spanned by Hamiltonian vector fields $\vt_i$ \cite{susm}.
Furthermore, Casimir functions of the Lie -- Poisson structure on
$\cG^*$ are exactly the coadjoint invariant functions on $\cG^*$.
They are constant on orbits of the coadjoint action of $G$ on
$\cG^*$ which coincide with leaves of the symplectic foliation of
$\cG^*$.
\end{remark}

\begin{theorem} \label{zz34} \mar{zz34} Let a globally superintegrable Hamiltonian system
on a symplectic manifold $Z$ obey the following conditions.

(i) It is maximally superintegrable.

(ii) Its Hamiltonian $\cH$ is regular, i.e, $d\cH$ nowhere
vanishes.

(iii) Its generating functions $F_i$ constitute a finite
dimensional real Lie algebra and their Hamiltonian vector fields
are complete.

\noindent Then any integral of motion of this Hamiltonian system
is the pull-back of a function on a base $N$ of the fibration $F$
(\ref{nc4}). In other words, it is expressed in the integrals of
motion $F_i$.
\end{theorem}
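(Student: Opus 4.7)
The plan is to pass to the global generalized action-angle coordinates and read off the constraint on integrals of motion directly from the Hamilton equations. Since the system is maximally superintegrable, $k=2n-1$ and $m=2n-k=1$. By the global superintegrability hypothesis (Definition \ref{cmp30a}), there exist global coordinates $(I,x^A,y)$ on $Z$ with $A=1,\ldots,2(n-1)$, a single action coordinate $I$ and a single angle coordinate $y$, in which the symplectic form has the shape (\ref{cmp32}).

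Next, I would invoke Proposition \ref{zz1}: the Hamiltonian depends only on the action coordinate, i.e.\ $\cH=\cH(I)$, and the Hamilton equations reduce to $\dot y=\dr\cH/\dr I$, $\dot I=0$, $\dot x^A=0$. The regularity hypothesis $d\cH\neq 0$ together with $\cH=\cH(I)$ then forces $\dr\cH/\dr I\neq 0$ everywhere on $Z$. Equivalently, the Hamiltonian vector field takes the form $\vt_\cH=(\dr\cH/\dr I)\dr_y$, the transverse components being killed by non-degeneracy of the block $\Om_{AB}$ in (\ref{cmp32}).

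Now let $F$ be any integral of motion. Its defining property, the evolution equation $\{\cH,F\}=\bL_{\vt_\cH}F=0$, reads in these coordinates $(\dr\cH/\dr I)(\dr F/\dr y)=0$; since the first factor is nowhere zero, we conclude $\dr F/\dr y\equiv 0$, so $F$ depends only on $(I,x^A)$. But $(I,x^A)$ parameterise the base $N$ of the fibration $F:Z\to N$ (the angle $y$ being the fibre coordinate on the one-dimensional invariant submanifolds), hence $F$ is the pull-back of a smooth function on $N$. Since the generating functions $F_i$ are themselves the pull-backs of coordinates $x_i$ on $N$ (item (i) of Definition \ref{i0}), any such pull-back is expressible in terms of $F_1,\ldots,F_k$, which is exactly the claim.

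There is no real obstacle once the global action-angle coordinates are in hand: the computation is essentially a single line. Condition (iii) of the theorem does not enter this final step; rather, together with condition (i), it underpins the global geometric setting via Theorems \ref{cmp34} and \ref{cmp36}, ensuring that the action-angle description used above is legitimately defined on all of $Z$. The only point requiring modest care is identifying $\cH$ as a function of $I$ alone; this is where maximal superintegrability ($m=1$, so essentially a single Casimir on $N$) is used, via Proposition \ref{cmp12b} as packaged into Proposition \ref{zz1}.
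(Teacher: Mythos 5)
Your proof is correct, but it routes through different machinery than the paper's own argument. The paper never touches the action-angle chart: it works directly with the $(2n-1)$-dimensional regular distribution $\cV$ spanned by the $\vt_i$, observes that a function is a pull-back from $N$ iff its Hamiltonian vector field lies in $\cV$ (because $\Om^\fl(\cV)$ is exactly the span of the $dF_i$, i.e.\ the annihilator of the tangent spaces to fibres of $F$), notes that $\{\cH,F_i\}=0$ together with $d\cH\neq 0$ and the codimension count $k=2n-1$ forces $\cV=\Ker d\cH$, and concludes that $\vt_\Phi\in\Ker d\cH=\cV$ for any integral of motion $\Phi$. Your version expresses the same identity $\{\cH,\Phi\}=0$ in the global coordinates of Definition \ref{cmp30a}, computing $\vt_\cH=(\dr\cH/\dr I)\dr_y$ (correctly using non-degeneracy of the block $\Om_{AB}$ in (\ref{cmp32})) and killing the $y$-dependence of $\Phi$; note you apply $\vt_\cH$ to $\Phi$ where the paper applies $\vt_\Phi$ to $\cH$ --- equivalent by antisymmetry of the bracket. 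What the paper's route buys is economy: it needs only the regularity of $\cV$ and the involution $\{\cH,F_i\}=0$, not the existence of a global action-angle chart, so the heavy input of Theorem \ref{cmp34} is bypassed. What your route buys is concreteness: once $\cH=\cH(I)$ is in hand via Proposition \ref{zz1}, the conclusion $\dr_y\Phi=0$ is a one-line computation, and the passage from ``independent of $y$'' to ``pull-back from $N$'' is immediate from connectedness of the fibres. Your closing remark that condition (iii) is not used in the final step is consistent with the paper, where it likewise serves only to set up the geometric framework.
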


\begin{proof}
The proof is based on the following. A Hamiltonian vector field of
a function $f$ on $Z$ lives in the one-codimensional regular
distribution $\cV$ on $Z$ spanned by Hamiltonian vector fields
$\vt_i$ iff $f$ is the pull-back of a function on a base $N$ of
the fibration $F$ (\ref{nc4}). A Hamiltonian $\cH$ brings $Z$ into
a fibred manifold of its level surfaces whose vertical tangent
bundle coincide with $\cV$. Therefore, a Hamiltonian vector field
of any integral of motion of $\cH$ lives in $\cV$.
\end{proof}

It may happen that, given a Hamiltonian $\cH$ of a Hamiltonian
system on a symplectic manifold $Z$, we have different
superintegrable Hamiltonian systems on different open subsets of
$Z$. For instance, this is the case of the Kepler system.

\section{Global Kepler system}

We consider the Kepler system on a plane $\mathbb R^2$
\cite{book10,ijgmmp09a}.

Its phase space is $T^*\mathbb R^2=\mathbb R^4$ provided with the
Cartesian coordinates $(q_i,p_i)$, $i=1,2$, and the canonical
symplectic form
\mar{zz43}\beq
\Om_T=\op\sum_idp_i\w dq_i. \label{zz43}
\eeq
Let us denote
\be
p=\left(\op\sum_i(p_i)^2\right)^{1/2}, \qquad
r=\left(\op\sum_i(q^i)^2\right)^{1/2}, \qquad
(p,q)=\op\sum_ip_iq_i.
\ee
An autonomous Hamiltonian of the Kepler system reads
\mar{zz41}\beq
\cH=\frac12p^2-\frac1r. \label{zz41}
\eeq
The Kepler system is a Hamiltonian system on a symplectic manifold
\mar{zz42}\beq
Z=\mathbb R^4\setminus \{0\} \label{zz42}
\eeq
endowed with the symplectic form $\Om_T$ (\ref{zz43}).

Let us consider the functions
\mar{zz44,5}\ben
&& M_{12}=-M_{21}=q_1p_2-q_2p_1,  \label{zz44}\\
&& A_i=\op\sum_j M_{ij}p_j -\frac{q_i}{r}=q_ip^2 -p_i(p,q)-\frac{q_i}{r}, \qquad i=1,2,
\label{zz45}
\een
on the symplectic manifold $Z$ (\ref{zz42}). It is readily
observed that they are integrals of motion of the Hamiltonian
$\cH$ (\ref{zz41}) where $M_{12}$ is an angular momentum and
$(A_i)$ is a Rung -- Lenz vector. Let us denote
\mar{zz50}\beq
M^2=(M_{12})^2, \qquad  A^2=(A_1)^2 + (A_a)^2=2M^2\cH+1.
\label{zz50}
\eeq

Let $Z_0\subset Z$ be a closed subset of points where $M_{12}=0$.
A direct computation shows that the functions $(M_{12},A_i)$
(\ref{zz44}) -- (\ref{zz45}) are independent of an open
submanifold
\mar{zz47}\beq
U=Z\setminus Z_0 \label{zz47}
\eeq
of $Z$. At the same time, the functions $(\cH,M_{12},A_i)$ are
independent nowhere on $U$ because it follows from the expression
(\ref{zz50}) that
\mar{zz52}\beq
\cH=\frac{A^2-1}{2M^2} \label{zz52}
\eeq
on $U$ (\ref{zz47}). The well known dynamics of the Kepler system
shows that the Hamiltonian vector field of its Hamiltonian is
complete on $U$ (but not on Z).

The Poisson bracket of integrals of motion $M_{12}$ (\ref{zz44})
and $A_i$ (\ref{zz45}) obeys the relations
\mar{zz56,7}\ben
&& \{M_{12},A_i\}=\eta_{2i}A_1 -\eta_{1i}A_2, \label{zz56}\\
&& \{A_1,A_2\}=2\cH M_{12}=\frac{A^2-1}{M_{12}}, \label{zz57}
\een
where $\eta_{ij}$ is an Euclidean metric on $\mathbb R^2$. It is
readily observed that these relations take the form (\ref{nc1}).
However, the matrix function $\bs$ of the relations (\ref{zz56})
-- (\ref{zz57}) fails to be of constant rank at points where
$\cH=0$. Therefore, let us consider the open submanifolds
$U_-\subset U$ where $\cH<0$ and $U_+$ where $\cH>0$. Then we
observe that the Kepler system with the Hamiltonian $\cH$
(\ref{zz41}) and the integrals of motion $(M_{ij},A_i)$
(\ref{zz44}) -- (\ref{zz45}) on $U_-$ and the Kepler system with
the Hamiltonian $\cH$ (\ref{zz41}) and the integrals of motion
$(M_{ij},A_i)$ (\ref{zz44}) -- (\ref{zz45}) on $U_+$ are
superintegrable Hamiltonian systems. Moreover, these
superintegrable systems can be brought into the form (\ref{zz60})
as follows.

Let us replace the integrals of motions $A_i$ with the integrals
of motion
\mar{zz61}\beq
L_i=\frac{A_i}{\sqrt{-2\cH}} \label{zz61}
\eeq
on $U_-$, and with the integrals of motion
\mar{zz62}\beq
K_i=\frac{A_i}{\sqrt{2\cH}} \label{zz62}
\eeq
on $U_+$.

The superintegrable system $(M_{12},L_i)$ on $U_-$ obeys the
relations
\mar{zz66,7}\ben
&& \{M_{12},L_i\}=\eta_{2i}L_1 -\eta_{1i}L_2, \label{zz66}\\
&& \{L_1,L_2\}=-M_{12}. \label{zz67}
\een
Let us denote $M_{i3}=-L_i$ and put the indexes
$\m,\nu,\al,\bt=1,2,3$. Then the relations (\ref{zz66}) --
(\ref{zz67}) are brought into the form
\mar{zz68}\beq
\{M_{\m\nu},M_{\al\bt}\}=\eta_{\m\bt}M_{\nu\al} +
\eta_{\nu\al}M_{\m\bt} -
\eta_{\m\al}M_{\nu\bt}-\eta_{\nu\bt}M_{\m\al} \label{zz68}
\eeq
where $\eta_{\m\nu}$ is an Euclidean metric on $\mathbb R^3$. A
glance at the expression (\ref{zz68}) shows that the integrals of
motion $M_{12}$ (\ref{zz44}) and $L_i$ (\ref{zz61}) constitute the
Lie algebra $\cG=so(3)$. Its corank equals 1. Therefore the
superintegrable system $(M_{12}, L_i)$ on $U_-$ is maximally
superintegrable. The equality (\ref{zz52}) takes the form
\mar{zz100}\beq
M^2 +L^2=-\frac1{2\cH}. \label{zz100}
\eeq

The superintegrable system $(M_{12},K_i)$ on $U_+$ obeys the
relations
\mar{zz76,7}\ben
&& \{M_{12},K_i\}=\eta_{2i}K_1 -\eta_{1i}K_2, \label{zz76}\\
&& \{K_1,K_2\}=M_{12}. \label{zz77}
\een
Let us denote $M_{i3}=-K_i$ and put the indexes
$\m,\nu,\al,\bt=1,2,3$. Then the relations (\ref{zz76}) --
(\ref{zz77}) are brought into the form
\mar{zz78}\beq
\{M_{\m\nu},M_{\al\bt}\}=\rho_{\m\bt}M_{\nu\al} +
\rho_{\nu\al}M_{\m\bt} -
\rho_{\m\al}M_{\nu\bt}-\rho_{\nu\bt}M_{\m\al} \label{zz78}
\eeq
where $\rho_{\m\nu}$ is a pseudo-Euclidean metric of signature
$(+,+,-)$ on $\mathbb R^3$. A glance at the expression
(\ref{zz78}) shows that the integrals of motion $M_{12}$
(\ref{zz44}) and $K_i$ (\ref{zz62}) constitute the Lie algebra
$so(2,1)$. Its corank equals 1. Therefore the superintegrable
system $(M_{12}, K_i)$ on $U_+$ is maximally superintegrable. The
equality (\ref{zz52}) takes the form
\mar{zz101}\beq
K^2 -M^2=\frac1{2\cH}. \label{zz101}
\eeq

Thus, the Kepler system on a phase space $\mathbb R^4$ falls into
two different maximally superintegrable systems on open
submanifolds $U_-$ and $U_+$ of $\mathbb R^4$. We agree to call
them the Kepler superintegrable systems on $U_-$ and $U_+$,
respectively.

Let us study the first one and put
\mar{zz102}\ben
&& F_1=-L_1, \qquad F_2=-L_2, \qquad F_3=-M_{12}, \label{zz102}\\
&& \{F_1,F_2\}=F_3, \qquad \{F_2,F_3\}=F_1, \qquad
\{F_3,F_1\}=F_2.\nonumber
\een
We have a fibred manifold
\mar{zz103}\beq
F: U_-\to N\subset\cG^*, \label{zz103}
\eeq
which is the momentum mapping to the Lie coalgebra
$\cG^*=so(3)^*$, endowed with the coordinates $(x_i)$ such that
integrals of motion $F_i$ on $\cG^*$ read $F_i=x_i$. A base $N$ of
the fibred manifold (\ref{zz103}) is an open submanifold of
$\cG^*$ given by the coordinate condition $x_3\neq 0$. It is a
union of two contractible components defined by the conditions
$x_3>0$ and $x_3<0$. The coinduced Lie -- Poisson structure on $N$
takes the form
\mar{j51b}\beq
w= x_2\dr^3\w\dr^1 + x_3\dr^1\w\dr^2 + x_1\dr^2\w\dr^3.
\label{j51b}
\eeq

The coadjoint action of $so(3)$ on $N$ reads
\mar{kk605}\beq
\ve_1=x_3\dr^2-x_2\dr^3, \quad \ve_2=x_1\dr^3-x_3\dr^1, \quad
\ve_3=x_2\dr^1-x_1\dr^2. \label{kk605}
\eeq
The orbits of this coadjoint action are given by the equation
\mar{kk606}\beq
x_1^2 + x_2^2 + x_3^2={\rm const}. \label{kk606}
\eeq
They are the level surfaces of the Casimir function
\be
C=x_1^2 + x_2^2 + x_3^2
\ee
and, consequently, the Casimir function
\mar{zz120}\beq
h=-\frac12(x_1^2 + x_2^2 + x_3^2)^{-1}. \label{zz120}
\eeq
A glance at the expression (\ref{zz100}) shows that the pull-back
$F^*h$ of this Casimir function (\ref{zz120}) onto $U_-$ is the
Hamiltonian $\cH$ (\ref{zz41}) of the Kepler system on $U_-$.

As was mentioned above, the Hamiltonian vector field of $F^*h$ is
complete. Furthermore, it is known that invariant submanifolds of
the superintegrable Kepler system on $U_-$ are compact. Therefore,
the fibred manifold $F$ (\ref{zz103}) is a fibre bundle in
accordance with Theorem \ref{11t2}. Moreover, this fibre bundle is
trivial because $N$ is a disjoint union of two contractible
manifolds. Consequently, it follows from Theorem \ref{cmp36} that
the Kepler superintegrable system on $U_-$ is globally
superintegrable, i.e., it admits global generalized action-angle
coordinates as follows.

The Poisson manifold $N$ (\ref{zz103}) can be endowed with the
coordinates
\mar{j52x}\beq
(I,x_1,\g), \qquad I<0, \qquad \g\neq\frac{\pi}2,\frac{3\pi}2,
\label{j52x}
\eeq
defined by the equalities
\mar{j52}\ben
&& I=-\frac12(x_1^2 + x_2^2 + x_3^2)^{-1}, \label{j52}\\
&& x_2=\left(-\frac1{2I}-x_1^2\right)^{1/2}\sin\g, \qquad
x_3=\left(-\frac1{2I}-x_1^2\right)^{1/2}\cos\g. \nonumber
\een
It is readily observed that the coordinates (\ref{j52x}) are
Darboux coordinates of the Lie -- Poisson structure (\ref{j51b})
on $U_-$, namely,
\mar{j53}\beq
w=\frac{\dr}{\dr x_1}\w \frac{\dr}{\dr \g}. \label{j53}
\eeq

Let $\vt_I$ be the Hamiltonian vector field of the Casimir
function $I$ (\ref{j52}). By virtue of Proposition \ref{nc8}, its
flows are invariant submanifolds of the Kepler superintegrable
system on $U_-$. Let $\al$ be a parameter along the flow of this
vector field, i.e.,
\mar{zz121}\beq
\vt_I= \frac{\dr}{\dr \al}. \label{zz121}
\eeq
Then $U_-$ is provided with the generalized action-angle
coordinates $(I,x_1,\g,\al)$ such that the Poisson bivector
associated to the symplectic form $\Om_T$ on $U_-$ reads
\mar{j54}\beq
W= \frac{\dr}{\dr I}\w \frac{\dr}{\dr \al} + \frac{\dr}{\dr x_1}\w
\frac{\dr}{\dr \g}. \label{j54}
\eeq
Accordingly, Hamiltonian vector fields of integrals of motion
$F_i$ (\ref{zz102}) take the form
\be
&& \vt_1= \frac{\dr}{\dr \g}, \\
&& \vt_2= \frac1{4I^2}\left(-\frac1{2I}-x_1^2\right)^{-1/2}\sin\g
\frac{\dr}{\dr \al} - x_1
\left(-\frac1{2I}-x_1^2\right)^{-1/2}\sin\g
\frac{\dr}{\dr \g} - \\
&& \qquad \left(-\frac1{2I}-x_1^2\right)^{1/2}\cos\g
\frac{\dr}{\dr x_1}, \\
&& \vt_3= \frac1{4I^2}\left(-\frac1{2I}-x_1^2\right)^{-1/2}\cos\g
\frac{\dr}{\dr \al} - x_1
\left(-\frac1{2I}-x_1^2\right)^{-1/2}\cos\g
\frac{\dr}{\dr \g} + \\
&&\qquad \left(-\frac1{2I}-x_1^2\right)^{1/2}\sin\g \frac{\dr}{\dr x_1}.
\ee
A glance at these expressions shows that the vector fields $\vt_1$
and $\vt_2$ fail to be complete on $U_-$ (see Remark \ref{zz90}).

One can say something more about the angle coordinate $\al$. The
vector field $\vt_I$ (\ref{zz121}) reads
\be
\frac{\dr}{\dr\al}= \op\sum_i\left(\frac{\dr \cH}{\dr
p_i}\frac{\dr}{\dr q_i}-\frac{\dr \cH}{\dr q_i}\frac{\dr}{\dr
p_i}\right).
\ee
This equality leads to the relations
\be
\frac{\dr q_i}{\dr \al}=\frac{\dr \cH}{\dr p_i}, \qquad \frac{\dr
p_i}{\dr \al}=-\frac{\dr \cH}{\dr q_i},
\ee
which take the form of the Hamilton equation. Therefore, the
coordinate $\al$ is a cyclic time $\al=t\,{\rm mod}2\pi$ given by
the well-known expression
\be
&&\al=\f-a^{3/2}e\sin(a^{-3/2}\f),\qquad
r=a(1-e\cos(a^{-3/2}\f)),\\
&& a=-\frac1{2I}, \qquad e=(1+2IM^2)^{1/2}.
\ee

Now let us turn to the Kepler superintegrable system on $U_+$. It
is a globally superintegrable system with non-compact invariant
submanifolds as follows.

Let us put
\mar{zz102a}\ben
&& S_1=-K_1, \qquad S_2=-K_2, \qquad S_3=-M_{12}, \label{zz102a}\\
&& \{S_1,S_2\}=-S_3, \qquad \{S_2,S_3\}=S_1, \qquad \{S_3,S_1\}=S_2.
\nonumber
\een
We have a fibred manifold
\mar{zz103a}\beq
S: U_+\to N\subset\cG^*, \label{zz103a}
\eeq
which is the momentum mapping to the Lie coalgebra
$\cG^*=so(2,1)^*$, endowed with the coordinates $(x_i)$ such that
integrals of motion $S_i$ on $\cG^*$ read $S_i=x_i$. A base $N$ of
the fibred manifold (\ref{zz103a}) is an open submanifold of
$\cG^*$ given by the coordinate condition $x_3\neq 0$. It is a
union of two contractible components defined by the conditions
$x_3>0$ and $x_3<0$. The coinduced Lie -- Poisson structure on $N$
takes the form
\mar{j51a}\beq
w= x_2\dr^3\w\dr^1 - x_3\dr^1\w\dr^2 + x_1\dr^2\w\dr^3.
\label{j51a}
\eeq

The coadjoint action of $so(2,1)$ on $N$ reads
\be
\ve_1=-x_3\dr^2-x_2\dr^3, \qquad \ve_2=x_1\dr^3+x_3\dr^1, \qquad
\ve_3=x_2\dr^1-x_1\dr^2.
\ee
The orbits of this coadjoint action are given by the equation
\be
x_1^2 + x_2^2 - x_3^2={\rm const}.
\ee
They are the level surfaces of the Casimir function
\be
C=x_1^2 + x_2^2 - x_3^2
\ee
and, consequently, the Casimir function
\mar{zz120a}\beq
h=\frac12(x_1^2 + x_2^2 - x_3^2)^{-1}. \label{zz120a}
\eeq
A glance at the expression (\ref{zz101}) shows that the pull-back
$S^*h$ of this Casimir function (\ref{zz120a}) onto $U_+$ is the
Hamiltonian $\cH$ (\ref{zz41}) of the Kepler system on $U_+$.

As was mentioned above, the Hamiltonian vector field of $S^*h$ is
complete. Furthermore, it is known that invariant submanifolds of
the superintegrable Kepler system on $U_+$ are diffeomorphic to
$\mathbb R$. Therefore, the fibred manifold $S$ (\ref{zz103a}) is
a fibre bundle in accordance with Theorem \ref{11t2}. Moreover,
this fibre bundle is trivial because $N$ is a disjoint union of
two contractible manifolds. Consequently, it follows from Theorem
\ref{cmp36} that the Kepler superintegrable system on $U_+$ is
globally superintegrable, i.e., it admits global generalized
action-angle coordinates as follows.

The Poisson manifold $N$ (\ref{zz103a}) can be endowed with the
coordinates
\be
(I,x_1,\la), \qquad I>0, \qquad \la\neq 0,
\ee
defined by the equalities
\be
&& I=\frac12(x_1^2 + x_2^2 - x_3^2)^{-1}, \\
&& x_2=\left(\frac1{2I}-x_1^2\right)^{1/2}\cosh\la, \qquad
x_3=\left(\frac1{2I}-x_1^2\right)^{1/2}\sinh\la.
\ee
These coordinates are Darboux coordinates of the Lie -- Poisson
structure (\ref{j51a}) on $N$, namely,
\mar{j53a}\beq
w=\frac{\dr}{\dr \la}\w \frac{\dr}{\dr x_1}. \label{j53a}
\eeq

Let $\vt_I$ be the Hamiltonian vector field of the Casimir
function $I$ (\ref{j52}). By virtue of Proposition \ref{nc8}, its
flows are invariant submanifolds of the Kepler superintegrable
system on $U_+$. Let $\tau$ be a parameter along the flows of this
vector field, i.e.,
\mar{zz121a}\beq
\vt_I= \frac{\dr}{\dr \tau}. \label{zz121a}
\eeq
Then $U_+$ (\ref{zz103a}) is provided with the generalized
action-angle coordinates $(I,x_1,\la,\tau)$ such that the Poisson
bivector associated to the symplectic form $\Om_T$ on $U_+$ reads
\mar{j54a}\beq
W= \frac{\dr}{\dr I}\w \frac{\dr}{\dr \tau} + \frac{\dr}{\dr
\la}\w \frac{\dr}{\dr x_1}. \label{j54a}
\eeq
Accordingly, Hamiltonian vector fields of integrals of motion
$S_i$ (\ref{zz102a}) take the form
\be
&& \vt_1= -\frac{\dr}{\dr \la}, \\
&& \vt_2= \frac1{4I^2}\left(\frac1{2I}-x_1^2\right)^{-1/2}\cosh\la
\frac{\dr}{\dr \tau} + x_1
\left(\frac1{2I}-x_1^2\right)^{-1/2}\cosh\la
\frac{\dr}{\dr \la} + \\
&& \qquad \left(\frac1{2I}-x_1^2\right)^{1/2}\sinh\la
\frac{\dr}{\dr x_1}, \\
&& \vt_3= \frac1{4I^2}\left(\frac1{2I}-x_1^2\right)^{-1/2}\sinh\la
\frac{\dr}{\dr \tau} + x_1
\left(\frac1{2I}-x_1^2\right)^{-1/2}\sinh\la
\frac{\dr}{\dr \la} + \\
&&\qquad \left(\frac1{2I}-x_1^2\right)^{1/2}\cosh\la \frac{\dr}{\dr x_1}.
\ee

Similarly to the angle coordinate $\al$ (\ref{zz121}), the
generalized angle coordinate $\tau$ (\ref{zz121a}) obeys the
Hamilton equation
\be
\frac{\dr q_i}{\dr \tau}=\frac{\dr \cH}{\dr p_i}, \qquad \frac{\dr
p_i}{\dr \tau}=-\frac{\dr \cH}{\dr q_i}.
\ee
Therefore, it is the time $\tau=t$ given by the well-known
expression
\be
&& \tau=s-a^{3/2}e\sinh (a^{-3/2}s),\qquad r=a(e\cosh
(a^{-3/2}s)-1),\\
&& a=\frac1{2I}, \qquad e=(1+2IM^2)^{1/2}.
\ee

\section{Non-autonomous integrable systems}

The generalization of Liouville -- Arnold and Mishchenko --
Fomenko theorems to the case of non-compact invariant submanifolds
(Theorems \ref{nc0'} and \ref{cmp20}) enables one to analyze
completely integrable and superintegrable non-autonomous
Hamiltonian systems whose invariant submanifolds are necessarily
non-compact \cite{acang2,book10,ijgmmp12}.

A non-autonomous classical mechanics is described on a
configuration space $Q$ which is a fibre bundle $Q\to \mathbb R$
over the time axis $\mathbb R$. Its phase space is the vertical
cotangent bundle $V^*Q$ of $Q\to \mathbb R$ provided with the
canonical Poisson structure (\ref{m72}). However, non-autonomous
mechanics fails to be a familiar Poisson Hamiltonian system on
$V^*Q$. At the same time, it is equivalent to an autonomous
Hamiltonian system on the cotangent bundle $T^*Q$ provided with
the canonical symplectic form (\ref{m91'}).

This formulation of non-relativistic mechanics is similar to that
of classical field theory on fibre bundles over a base of
dimension $>1$ \cite{jpa99,book09,sard08}. A difference between
mechanics and field theory however lies in the fact that
connections on bundles over $\mathbb R$ are flat, and they fail to
be dynamic variables, but describe reference frames.

\subsection{Geometry of fibre bundle over $\mathbb R$}

This Section summarizes some peculiarities of fibre bundles over
$\mathbb R$.

Let
\mar{gm360}\beq
\pi:Q\to \mathbb R \label{gm360}
\eeq
be a fibred manifold whose base is treated as a time axis.
Throughout the Lectures, the time axis $\mathbb R$ is
parameterized by the Cartesian coordinate $t$ with the transition
functions $t'=t+$const. Relative to the Cartesian coordinate $t$,
the time axis $\mathbb R$ is provided with the  standard vector
field $\dr_t$ and the  standard one-form $dt$ which also is the
volume element on $\mathbb R$. The symbol $dt$ also stands for any
pull-back of the standard one-form $dt$ onto a fibre bundle over
$\mathbb R$.

\begin{remark} \mar{010} \label{010}
Point out one-to-one correspondence between the vector fields
$f\dr_t$, the densities $fdt$ and the real functions $f$ on
$\mathbb R$. Roughly speaking, we can neglect the contribution of
$T\mathbb R$ and $T^*\mathbb R$ to some expressions.
\end{remark}

In order that the dynamics of a mechanical system can be defined
at any instant $t\in\mathbb R$, we further assume that a fibred
manifold $Q\to \mathbb R$ is a fibre bundle with a typical fibre
$M$.

\begin{remark} \label{047} \mar{047}
In accordance with Remark \ref{Ehresmann}, a fibred manifold
$Q\to\mathbb R$ is a fibre bundle iff it admits an Ehresmann
connection $\G$, i.e., the horizontal lift $\G\dr_t$ onto $Q$ of
the standard vector field $\dr_t$ on $\mathbb R$ is complete. By
virtue of Theorem \ref{11t3}, any fibre bundle $Q\to \mathbb R$ is
trivial. Its different trivializations
\mar{gm219}\beq
\psi: Q=  \mathbb R\times M \label{gm219}
\eeq
differ from each other in fibrations $Q\to M$.
\end{remark}

Given bundle coordinates $(t,q^i)$ on the fibre bundle
$Q\to\mathbb R$ (\ref{gm360}), the first order jet manifold $J^1Q$
of $Q\to\mathbb R$  is provided with the adapted coordinates
$(t,q^i,q^i_t)$ possessing transition functions (\ref{50}) which
read
\be
q'^i_t=(\dr_t + q^j_t\dr_j)q'^i.
\ee

Note that, if $Q=\mathbb R\times M$ coordinated by $(t,\ol q^i)$,
there is the canonical isomorphism
\mar{gm220}\beq
J^1(\mathbb R\times M)=\mathbb R\times TM, \qquad \ol q^i_t=
\dot{\ol q}^i, \label{gm220}
\eeq
that one can justify by inspection of the transition functions of
the coordinates $\ol q^i_t$ and  $\dot{\ol q}^i$ when transition
functions of $q^i$ are time-independent. Due to the isomorphism
(\ref{gm220}), every trivialization (\ref{gm219}) yields the
corresponding trivialization of the jet manifold
\mar{jp2}\beq
J^1Q= \mathbb R\times TM. \label{jp2}
\eeq

The canonical imbedding (\ref{18}) of $J^1Q$ takes the form
\mar{z260}\ben
&& \la_{(1)}: J^1Q\ni (t,q^i,q^i_t)\to (t,q^i,\dot t=1, \dot q^i=q^i_t) \in TQ, \label{z260}\\
&& \la_{(1)}=d_t=\dr_t +q^i_t\dr_i, \nonumber
\een
where  by $d_t$  is meant the total derivative. From now on, a jet
manifold $J^1Q$ is identified with its image in $TQ$.

In view of the morphism $\la_{(1)}$ (\ref{z260}), any connection
\mar{z270}\beq
\G=dt\ot (\dr_t +\G^i\dr_i) \label{z270}
\eeq
on a fibre bundle $Q\to\mathbb R$  can be identified with a
nowhere vanishing   horizontal vector field
\mar{a1.10}\beq
\G = \dr_t + \G^i \dr_i \label{a1.10}
\eeq
on $Q$ which is the horizontal lift $\G\dr_t$ (\ref{b1.85}) of the
standard vector field $\dr_t$ on $\mathbb R$ by means of the
connection (\ref{z270}). Conversely, any vector field $\G$ on $Q$
such that $dt\rfloor\G =1$ defines a connection on $Q\to\mathbb
R$. Therefore, the connections (\ref{z270}) further are identified
with the vector fields (\ref{a1.10}). The integral curves of the
vector field (\ref{a1.10}) coincide with the integral sections for
the connection (\ref{z270}).

Connections on a fibre bundle $Q\to\mathbb R$ constitute an affine
space modelled over the vector space of vertical vector fields on
$Q\to\mathbb R$. Accordingly, the covariant differential
(\ref{2116}), associated with a connection $\G$ on $Q\to\mathbb
R$, takes its values into the vertical tangent bundle $VQ$ of
$Q\to\mathbb R$:
\mar{z279}\beq
D^\G: J^1Q\op\to_Q VQ, \qquad \dot q^i\circ D^\G =q^i_t-\G^i.
\label{z279}
\eeq
Its kernel, given by the coordinate equation
\mar{ggm155}\beq
q_t^i= \G^i(t,q^i), \label{ggm155}
\eeq
is a closed subbundle of the jet bundle $J^1Q\to\mathbb R$. This
is a first order dynamic differential equation on a fibre bundle
$Q\to \mathbb R$ \cite{book10,sard98}.

A connection $\G$ on a fibre bundle $Q\to\mathbb R$ is obviously
flat. It yields a horizontal distribution on $Q$. The integral
manifolds of this distribution are integral curves  of the vector
field (\ref{a1.10}) which are transversal to fibres of a fibre
bundle $Q\to\mathbb R$.

\begin{theorem} \label{gn1} \mar{gn1}
By virtue of Theorem \ref{gena113}, every connection $\G$ on a
fibre bundle $Q\to\mathbb R$ defines an atlas of local constant
trivializations of $Q\to\mathbb R$ such that the associated bundle
coordinates $(t,q^i)$ on $Q$ possess the transition functions
$q^i\to q'^i(q^j)$ independent of $t$, and
\mar{z271}\beq
\G=\dr_t \label{z271}
\eeq
with respect to these coordinates. Conversely, every atlas of
local constant trivializations of the fibre bundle $Q\to\mathbb R$
determines a connection on  $Q\to\mathbb R$ which is equal to
(\ref{z271}) relative to this atlas.
\end{theorem}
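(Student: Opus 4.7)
The plan is to reduce the forward direction to Theorem~\ref{gena113}, using the fact that any connection on a fibre bundle over a one-dimensional base is automatically flat (its curvature is a horizontal $2$-form on $\mathbb R$ and therefore vanishes). Once flatness is in hand, Theorem~\ref{gena113} furnishes an atlas of local trivializations adapted to $\Gamma$; the task is then to check that, with respect to these trivializations, $\Gamma$ takes the standard form $\partial_t$ and the transition functions are $t$-independent. The converse direction is a direct gluing argument.

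For the forward direction I would alternatively give a direct construction, which is perhaps more transparent. Fix a point $q_0\in Q$ with $\pi(q_0)=t_0$, choose local coordinates $(\bar q^i)$ on an open neighborhood $U_0\subset Q_{t_0}$ of $q_0$ in the fibre, and for each $\bar q\in U_0$ let $\phi_{\bar q}(t)$ be the (local) integral curve of the vector field $\Gamma=\partial_t+\Gamma^i\partial_i$ starting at $\bar q$ at time $t_0$. Since $\pi_*\Gamma=\partial_t$, there exists $\varepsilon>0$ such that the map
\[
(t,\bar q^i)\longmapsto \phi_{\bar q^i}(t)
\]
is a diffeomorphism from $(t_0-\varepsilon,t_0+\varepsilon)\times U_0$ onto an open neighborhood of $q_0$ in $Q$. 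Setting $q^i:=\bar q^i$ defines bundle coordinates $(t,q^i)$ in which every integral curve of $\Gamma$ has the form $t\mapsto(t,\mathrm{const})$; hence $\Gamma^i\equiv 0$ and $\Gamma=\partial_t$ in this chart. If $(t,q^i)$ and $(t,q'^i)$ are two such charts with overlapping domains, then on the overlap each of them trivializes $\Gamma$ as $\partial_t$, so the transition $q'^i=q'^i(t,q^j)$ must satisfy $\partial q'^i/\partial t=0$, i.e.\ $q'^i=q'^i(q^j)$. This proves the first half.

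For the converse, suppose $\{(t,q^i_\alpha)\}$ is an atlas of local trivializations whose transition functions are $t$-independent, $q^i_\beta=q^i_\beta(q^j_\alpha)$. Define $\Gamma$ locally in each chart by $\Gamma_\alpha=\partial_t$. On any overlap, the change-of-variables formula gives
\[
\frac{\partial}{\partial t}\bigg|_{\alpha}= \frac{\partial}{\partial t}\bigg|_{\beta} + \frac{\partial q^i_\beta}{\partial t}\,\frac{\partial}{\partial q^i_\beta} = \frac{\partial}{\partial t}\bigg|_{\beta},
\]
so the local expressions agree, and $\Gamma$ is a well-defined global vector field on $Q$. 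It satisfies $dt\rfloor\Gamma=1$ and is therefore a connection in the form (\ref{a1.10}) with $\Gamma^i=0$, which is precisely (\ref{z271}) in this atlas.

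The main obstacle is essentially bookkeeping: because $\Gamma$ is not assumed complete, the straightening construction works only locally, and one must be careful to state the conclusion in terms of an \emph{atlas} of local constant trivializations rather than a single global one (a global trivialization with $\Gamma=\partial_t$ exists iff the horizontal lift $\Gamma$ is complete, cf.\ Remark~\ref{047}). Apart from this point, no serious difficulty arises, since flatness of $\Gamma$ on a one-dimensional base is automatic and the integrability of the horizontal distribution follows at once from $\dim\mathbb R=1$.
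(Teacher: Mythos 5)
Your proof is correct, and your primary route---observing that any connection on a fibre bundle over the one-dimensional base $\mathbb R$ is automatically flat and then invoking Theorem \ref{gena113}---is precisely the paper's own argument (the paper offers nothing beyond this citation together with Example \ref{spr852}). Your supplementary direct construction via the flow box of $\G$ is just the explicit one-dimensional instance of the horizontal-foliation argument underlying Theorem \ref{gena113}, so it adds transparency without changing the substance.
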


A connection $\G$ on a fibre bundle $Q\to \mathbb R$ is said to be
 complete  if the horizontal vector field (\ref{a1.10}) is
complete. In accordance with Remark \ref{Ehresmann}, a connection
on a fibre bundle $Q\to \mathbb R$ is complete iff it is an
Ehresmann connection. The following holds \cite{book98}.

\begin{theorem}\label{compl} \mar{compl}
Every trivialization of a fibre bundle $Q\to \mathbb R$ yields a
complete connection on this fibre bundle. Conversely, every
complete connection $\G$ on $Q\to\mathbb R$ defines its
trivialization (\ref{gm219}) such that the horizontal vector field
(\ref{a1.10}) equals $\dr_t$ relative to the bundle coordinates
associated with this trivialization.
\end{theorem}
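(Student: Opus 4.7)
The plan is to handle the two implications essentially independently, both resting on the identification (\ref{a1.10}) of a connection on $Q\to\mathbb R$ with a horizontal vector field $\G=\dr_t+\G^i\dr_i$ projecting onto $\dr_t$.

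For the first direction, suppose a trivialization $\psi: Q\to \mathbb R\times M$ is given. I would declare $\G$ to be the vector field $\psi^*\dr_t$ on $Q$; this is horizontal and satisfies $dt\rfloor\G=1$, so it defines a connection. Its integral curves in the trivializing chart are $s\mapsto (t_0+s,m_0)$, defined for all $s\in\mathbb R$, which is exactly the completeness condition. (This can also be read off from Theorem \ref{gn1}, since a global trivialization is a fortiori an atlas of constant trivializations.)

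For the converse, let $\G$ be complete, and denote by $\{\Phi_s\}_{s\in\mathbb R}$ its global flow. Because $\G$ projects onto $\dr_t$, the projection relation
\begin{equation*}
\pi\circ\Phi_s(q)=\pi(q)+s,\qquad q\in Q,\ s\in\mathbb R,
\end{equation*}
holds. Fix the fibre $M_0=\pi^{-1}(0)$ (any other fibre works; this is where a choice enters) and define
\begin{equation*}
\psi: \mathbb R\times M_0\to Q,\qquad \psi(t,m)=\Phi_t(m).
\end{equation*}
The map $q\mapsto (\pi(q),\Phi_{-\pi(q)}(q))$ is a smooth two-sided inverse, so $\psi$ is a diffeomorphism fibred over $\mathbb R$, yielding the required trivialization (\ref{gm219}). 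Moreover, since $\Phi_s$ is the flow of $\G$, and in the new coordinates it is simply $(t,m)\mapsto(t+s,m)$, the vector field $\G$ pulls back to $\dr_t$. This gives bundle coordinates $(t,q^i)$ in which $\G^i\equiv 0$.

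The step that deserves the most care is the construction and smoothness of $\psi$ and its inverse: here one uses completeness in an essential way to ensure $\Phi_t(m)$ is defined for every $(t,m)\in\mathbb R\times M_0$, and the smooth dependence of the flow on initial data to make $\psi$ and $\psi^{-1}$ smooth. Once this is in hand, the identity $\psi_*\dr_t=\G$ is automatic from the definition of the flow, completing the proof. The only ``obstacle'' worth flagging is checking globality of $\psi^{-1}$, which collapses to the remark that $\pi\circ\Phi_{-\pi(q)}(q)=0$ so the inverse genuinely lands in $\mathbb R\times M_0$.
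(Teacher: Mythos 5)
Your proof is correct and is the standard argument: the paper itself states Theorem \ref{compl} without proof, referring to \cite{book98}, and the argument given there is essentially yours --- push $\dr_t$ forward through the trivialization for one direction, and use the global flow $\Phi_t$ of $\G$ together with $\pi\circ\Phi_s=\pi+s$ to build the trivialization $(t,m)\mapsto\Phi_t(m)$ off a fixed fibre for the converse. Your explicit smooth two-sided inverse $q\mapsto(\pi(q),\Phi_{-\pi(q)}(q))$ correctly isolates where completeness is used, so there is nothing to add.
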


\subsection{Non-autonomous Hamiltonian systems}

In non-autonomous mechanics on a configuration space $Q\to\mathbb
R$, the jet manifold $J^1Q$ plays a role of the velocity space. To
describe non-autonomous mechanics, let us restrict our
consideration to first order Lagrangian theory on a fibre bundle
$Q\to \mathbb R$ \cite{arxiv09,book10}. A  first order Lagrangian
is defined as a density
\mar{23f2}\beq
L=\cL dt, \qquad \cL: J^1Q\to \mathbb R, \label{23f2}
\eeq
on a velocity space $J^1Q$. The corresponding  second-order
Lagrange operator reads
\mar{305}\beq
\dl L= (\dr_i\cL- d_t\dr^t_i\cL) \thh^i\w dt. \label{305}
\eeq
Let us further use the  notation
\mar{03}\beq
\pi_i=\dr^t_i\cL, \qquad \pi_{ji}=\dr_j^t\dr_i^t\cL. \label{03}
\eeq
The kernel $\Ker\dl L\subset J^2Q$ of the Lagrange operator
defines the   second order Lagrange equation
\mar{b327}\beq
(\dr_i- d_t\dr^t_i)\cL=0. \label{b327}
\eeq
Its solutions  are (local) sections $c$ of the fibre bundle
$Q\to\mathbb R$ whose second order jet prolongations $\ddot c$
live in (\ref{b327}). They obey the equations
\beq
\dr_i\cL\circ \dot c- \frac{d}{dt}(\pi_i\circ \dot c)=0.
\label{z333}
\eeq

As was mentioned above, a  phase space of non-relativistic
mechanics on a configuration space $Q\to\mathbb R$ is the vertical
cotangent bundle
\be
V^*Q\ar^{\pi_\Pi} Q \ar^\pi \mathbb R,
\ee
of $Q\to\mathbb R$ equipped with the holonomic coordinates $(t,
q^i,p_i=\dot q_i)$ with respect to the fibre bases $\{\ol dq^i\}$
for the bundle $V^*Q\to Q$.

The cotangent bundle $T^*Q$ of the configuration space $Q$ is
endowed  with the holonomic coordinates $(t,q^i,p_0,p_i)$,
possessing the transition functions
\mar{2.3'}\beq
{p'}_i = \frac{\dr q^j}{\dr{q'}^i}p_j, \qquad p'_0=
\left(p_0+\frac{\dr q^j}{\dr t}p_j\right). \label{2.3'}
\eeq
It admits the Liouville form
\mar{N43}\beq
\Xi= p_0dt + p_i dq^i, \label{N43}
\eeq
the symplectic form
\mar{m91'}\beq
\Om_T=d\Xi=dp_0\w dt +dp_i\w dq^i, \label{m91'}
\eeq
and the corresponding Poisson bracket
\mar{m116}\beq
\{f,g\}_T =\dr^0f\dr_tg - \dr^0g\dr_tf +\dr^if\dr_ig-\dr^ig\dr_if,
\quad f,g\in C^\infty(T^*Q). \label{m116}
\eeq
Provided with the structures (\ref{m91'}) -- (\ref{m116}), the
cotangent bundle $T^*Q$ of $Q$ plays a role of the homogeneous
phase space of Hamiltonian non-relativistic mechanics.

There is the canonical one-dimensional affine bundle
\mar{z11'}\beq
\zeta:T^*Q\to V^*Q. \label{z11'}
\eeq
A glance at the transformation law (\ref{2.3'}) shows that it is a
trivial affine bundle. Indeed, given a global section $h$ of
$\zeta$, one can equip $T^*Q$ with the global fibre coordinate
\mar{09151}\beq
I_0=p_0-h, \qquad I_0\circ h=0, \label{09151}
\eeq
possessing the identity transition functions. With respect to the
coordinates
\mar{09150}\beq
(t,q^i,I_0,p_i), \qquad i=1,\ldots,m, \label{09150}
\eeq
the fibration (\ref{z11'}) reads
\mar{z11}\beq
\zeta: \mathbb R\times V^*Q \ni (t,q^i,I_0,p_i)\to (t,q^i,p_i)\in
V^*Q. \label{z11}
\eeq

Let us consider the subring of $C^\infty(T^*Q)$ which comprises
the pull-back $\zeta^*f$ onto $T^*Q$ of functions $f$ on the
vertical cotangent bundle $V^*Q$ by the fibration $\zeta$
(\ref{z11'}).  This subring is closed under the Poisson bracket
(\ref{m116}). Then by virtue of Theorem \ref{p11.3}, there exists
the degenerate coinduced Poisson structure
\mar{m72}\beq
\{f,g\}_V = \dr^if\dr_ig-\dr^ig\dr_if, \qquad f,g\in
C^\infty(V^*Q), \label{m72}
\eeq
on a phase space $V^*Q$ such that
\mar{m72'}\beq
\zeta^*\{f,g\}_V=\{\zeta^*f,\zeta^*g\}_T.\label{m72'}
\eeq
The holonomic coordinates on $V^*Q$ are canonical for the Poisson
structure (\ref{m72}).

With respect to the Poisson bracket (\ref{m72}), the Hamiltonian
vector fields of functions on $V^*Q$ read
\mar{m73,94}\ben
&& \vt_f = \dr^if\dr_i- \dr_if\dr^i, \qquad f\in C^\infty(V^*Q),
\label{m73}\\
&& [\vt_f,\vt_{f'}]=\vt_{\{f,f'\}_V}. \label{094}
\een
They are vertical vector fields on $V^*Q\to \mathbb R$.
Accordingly, the characteristic distribution of the Poisson
structure (\ref{m72}) is the vertical tangent bundle $VV^*Q\subset
TV^*Q$ of a fibre bundle $V^*Q\to \mathbb R$. The corresponding
symplectic foliation on the phase space $V^*Q$ coincides with the
fibration $V^*Q\to \mathbb R$.

It is readily observed that the ring $\cC(V^*Q)$ of Casimir
functions on a Poisson manifold $V^*Q$ consists of the pull-back
onto $V^*Q$ of functions on $\mathbb R$. Therefore, the Poisson
algebra $C^\infty(V^*Q)$ is a Lie $C^\infty(\mathbb R)$-algebra.

\begin{remark} \label{ws529} \mar{ws529}
The Poisson structure (\ref{m72}) can be introduced in a different
way \cite{book98,sard98}. Given  any section $h$ of the fibre
bundle (\ref{z11'}), let us consider the pull-back forms
\mar{z401}\ben
&& \bth=h^*(\Xi\w dt)=p_idq^i\w dt, \nonumber\\
&& \bom=h^*(d\Xi\w dt)=dp_i\w dq^i\w dt \label{z401}
\een
on $V^*Q$. They are independent of the choice of $h$. With $\bom$
(\ref{z401}), the Hamiltonian vector field $\vt_f$ (\ref{m73}) for
a function $f$ on $V^*Q$ is given by the relation
\be
\vt_f\rfloor\bom = -df\w dt,
\ee
while the Poisson bracket (\ref{m72}) is written as
\be
\{f,g\}_Vdt=\vt_g\rfloor\vt_f\rfloor\bom.
\ee
Moreover, one can show that a projectable vector field $\vt$ on
$V^*Q$ such that $\vt\rfloor dt=$const. is a canonical vector
field for the Poisson structure (\ref{m72}) iff
\mar{0100}\beq
\bL_\vt\bom=d(\vt\rfloor\bom)=0. \label{0100}
\eeq
\end{remark}

In contrast with autonomous Hamiltonian mechanics, the Poisson
structure (\ref{m72}) fails to provide any dynamic equation on a
fibre bundle $V^*Q\to\mathbb R$ because Hamiltonian vector fields
(\ref{m73}) of functions on $V^*Q$ are vertical vector fields, but
not connections on $V^*Q\to\mathbb R$. Hamiltonian dynamics on
$V^*Q$ is described as a particular Hamiltonian dynamics on fibre
bundles \cite{jpa99,book09,sard08}.

A  Hamiltonian on a phase space $V^*Q\to\mathbb R$ of
non-relativistic mechanics is defined as a global section
\mar{ws513}\beq
h:V^*Q\to T^*Q, \qquad p_0\circ h=\cH(t,q^j,p_j), \label{ws513}
\eeq
of the affine bundle $\zeta$ (\ref{z11'}). Given the Liouville
form $\Xi$ (\ref{N43}) on $T^*Q$, this section yields the
pull-back  Hamiltonian form
\mar{b4210}\beq
H=(-h)^*\Xi= p_k dq^k -\cH dt  \label{b4210}
\eeq
on $V^*Q$. This is the well-known  invariant of Poincar\'e --
Cartan \cite{arn}.

It should be emphasized that, in contrast with a Hamiltonian in
autonomous mechanics, the Hamiltonian $\cH$ (\ref{ws513}) is not a
function on $V^*Q$, but it obeys the transformation law
\mar{0144}\beq
\cH'(t,q'^i,p'_i)=\cH(t,q^i,p_i)+ p'_i\dr_t q'^i. \label{0144}
\eeq

\begin{remark} \label{ws512} \mar{ws512}
Any connection $\G$ (\ref{a1.10}) on a configuration bundle
$Q\to\mathbb R$ defines the global section $h_\G=p_i\G^i$
(\ref{ws513}) of the affine bundle $\zeta$ (\ref{z11'}) and the
corresponding Hamiltonian form
\mar{ws515}\beq
H_\G= p_k dq^k -\cH_\G dt= p_k dq^k -p_i\G^i dt. \label{ws515}
\eeq
Furthermore, given a connection $\G$, any Hamiltonian form
(\ref{b4210}) admits the splitting
\mar{m46'}\beq
H= H_\G -\cE_\G dt, \label{m46'}
\eeq
where
\mar{xx60}\beq
\cE_\G=\cH-\cH_\G=\cH- p_i\G^i \label{xx60}
\eeq
is a function on $V^*Q$. It is called the  Hamiltonian function
relative to a reference frame $\G$. With respect to the
coordinates adapted to a reference frame $\G$, we have
$\cE_\G=\cH$. Given different reference frames $\G$ and $\G'$, the
decomposition (\ref{m46'}) leads at once to the relation
\mar{0200}\beq
\cE_{\G'}=\cE_\G + \cH_\G -\cH_{\G'}=\cE_\G + (\G^i -\G'^i)p_i
\label{0200}
\eeq
between the Hamiltonian functions with respect to different
reference frames.
\end{remark}

Given a Hamiltonian form $H$ (\ref{b4210}), there exists a unique
horizontal vector field (\ref{a1.10}):
\be
\g_H=\dr_t -\g^i\dr_i -\g_i\dr^i,
\ee
on $V^*Q$ (i.e., a connection on $V^*Q\to \mathbb R$) such that
\mar{w255}\beq
\g_H\rfloor dH=0. \label{w255}
\eeq
This  vector field, called the  Hamilton vector field, reads
\mar{z3}\beq
\g_H=\dr_t + \dr^k\cH\dr_k- \dr_k\cH\dr^k. \label{z3}
\eeq
In a different way (Remark \ref{ws529}), the Hamilton vector field
$\g_H$ is defined by the relation
\be
\g_H\rfloor\bom=dH.
\ee
Consequently, it is canonical for the Poisson structure $\{,\}_V$
(\ref{m72}). This vector field  yields the first order dynamic
 Hamilton equation
\mar{z20a,b}\ben
&& q^k_t=\dr^k\cH, \label{z20a}\\
&&  p_{tk}=-\dr_k\cH \label{z20b}
\een
on $V^*Q\to\mathbb R$, where $(t,q^k,p_k,q^k_t,\dot p_{tk})$ are
the adapted coordinates on the first order jet manifold $J^1V^*Q$
of $V^*Q\to\mathbb R$.

Due to the canonical imbedding $J^1V^*Q\to TV^*Q$ (\ref{z260}),
the Hamilton equation (\ref{z20a}) -- (\ref{z20b}) is equivalent
to the autonomous first order dynamic equation
\mar{z20}\beq
\dot t=1, \qquad \dot q^i=\dr^i\cH, \qquad \dot p_i=-\dr_i\cH
\label{z20}
\eeq
on a manifold $V^*Q$ (Remark \ref{gena70}).

A  solution  of the Hamilton equation (\ref{z20a}) -- (\ref{z20b})
is an integral section $r$ for the connection $\g_H$.

We agree to call $(V^*Q,H)$ the  Hamiltonian system of $k=\di Q-1$
degrees of freedom.

In order to describe evolution of a Hamiltonian system at any
instant, the Hamilton vector field $\g_H$ (\ref{z3}) is assumed to
be complete, i.e., it is an Ehresmann connection (Remark
\ref{047}). In this case, the Hamilton equation (\ref{z20a}) --
(\ref{z20b}) admits a unique global solution through each point of
the phase space $V^*Q$. By virtue of Theorem \ref{compl}, there
exists a trivialization of a fibre bundle $V^*Q\to \mathbb R$ (not
necessarily compatible with its fibration $V^*Q\to Q$) such that
\mar{0102}\beq
\g_H=\dr_t, \qquad H=\ol p_id\ol q^i \label{0102}
\eeq
with respect to the associated coordinates $(t,\ol q^i, \ol p_i)$.
A direct computation shows that the Hamilton vector field $\g_H$
(\ref{z3}) satisfies the relation (\ref{0100}) and, consequently,
it is an infinitesimal generator of a one-parameter group of
automorphisms of the Poisson manifold $(V^*Q,\{,\}_V)$. Then one
can show that $(t,\ol q^i,\ol p_i)$ are canonical coordinates for
the Poisson manifold $(V^*Q,\{,\}_V)$ \cite{book98}, i.e.,
\be
w=\frac{\dr}{\dr \ol p_i}\w \frac{\dr}{\dr \ol q^i}.
\ee
Since $\cH=0$, the Hamilton equation (\ref{z20a}) -- (\ref{z20b})
in these coordinates takes the form
\be
\ol q^i_t=0, \qquad \ol p_{ti}=0,
\ee
i.e., $(t,\ol q^i,\ol p_i)$ are the  initial data coordinates.

As was mentioned above,  one can associate to any Hamiltonian
system on a phase space $V^*Q$ an equivalent autonomous symplectic
Hamiltonian system on the cotangent bundle $T^*Q$ (Theorem
\ref{09121}).

Given a Hamiltonian system $(V^*Q,H)$, its Hamiltonian $\cH$
(\ref{ws513}) defines the function
\mar{mm16}\beq
\cH^*=\dr_t\rfloor(\Xi-\zeta^* (-h)^*\Xi))=p_0+h=p_0+\cH
\label{mm16}
\eeq
on  $T^*Q$. Let us regard $\cH^*$ (\ref{mm16}) as a Hamiltonian of
an autonomous Hamiltonian system on the symplectic manifold
$(T^*Q,\Om_T)$. The corresponding autonomous Hamilton equation on
$T^*Q$ takes the form
\mar{z20'}\beq
\dot t=1, \qquad \dot p_0=-\dr_t\cH, \qquad \dot q^i=\dr^i\cH,
\qquad \dot p_i=-\dr_i\cH. \label{z20'}
\eeq

\begin{remark} \label{0170} \mar{0170}
Let us note that the splitting $\cH^*=p_0+\cH$ (\ref{mm16}) is ill
defined. At the same time, any reference frame $\G$ yields the
decomposition
\mar{j3}\beq
\cH^*=(p_0+\cH_\G) + (\cH-\cH_\G) = \cH^*_\G +\cE_\G, \label{j3}
\eeq
where $\cH_\G$ is the Hamiltonian (\ref{ws515}) and $\cE_\G$
(\ref{xx60}) is the Hamiltonian function relative to a reference
frame $\G$.
\end{remark}

The Hamiltonian vector field $\vt_{\cH^*}$ of $\cH^*$ (\ref{mm16})
on $T^*Q$ is
\mar{z5}\beq
\vt_{\cH^*}=\dr_t -\dr_t\cH\dr^0+ \dr^i\cH\dr_i- \dr_i\cH\dr^i.
\label{z5}
\eeq
Written relative to the coordinates (\ref{09150}), this vector
field reads
\mar{z5'}\beq
\vt_{\cH^*}=\dr_t + \dr^i\cH\dr_i- \dr_i\cH\dr^i. \label{z5'}
\eeq
It is identically projected onto the Hamilton vector field $\g_H$
(\ref{z3}) on $V^*Q$ such that
\mar{ws525}\beq
\zeta^*(\bL_{\g_H}f)=\{\cH^*,\zeta^*f\}_T, \qquad f\in
C^\infty(V^*Q). \label{ws525}
\eeq
Therefore, the Hamilton equation (\ref{z20a}) -- (\ref{z20b}) is
equivalent to the autonomous Hamilton equation (\ref{z20'}).

Obviously, the Hamiltonian vector field $\vt_{\cH^*}$ (\ref{z5'})
is complete if the Hamilton vector field $\g_H$ (\ref{z3}) is
complete.

Thus, the following has been proved \cite{dew,book10,mang00}.

\begin{theorem} \label{09121} \mar{09121} A Hamiltonian system $(V^*Q,H)$
of $k$ degrees of freedom is equivalent to an autonomous
Hamiltonian system $(T^*Q,\cH^*)$ of $k+1$ degrees of freedom on a
symplectic manifold $(T^*Q,\Om_T)$ whose Hamiltonian is the
function $\cH^*$ (\ref{mm16}).
\end{theorem}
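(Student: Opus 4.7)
\medskip\noindent\textit{Proof plan.}
The plan is to verify three things which, taken together, give the asserted equivalence: (a) $\cH^*$ as defined by (\ref{mm16}) is a globally well-defined smooth function on $T^*Q$, so that $(T^*Q,\Om_T,\cH^*)$ really is an autonomous symplectic Hamiltonian system; (b) the Hamiltonian vector field $\vt_{\cH^*}$ of $\cH^*$ with respect to $\Om_T$ is $\zeta$-related to the Hamilton vector field $\g_H$ of (\ref{z3}) on $V^*Q$; and (c) the two Hamilton equations (\ref{z20'}) and (\ref{z20a})--(\ref{z20b}) have the same solutions modulo the projection $\zeta$, so the two dynamical systems are equivalent.

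For (a), the cleanest route is to use the intrinsic formula $\cH^*=\dr_t\rfloor(\Xi-\zeta^*(-h)^*\Xi)$ directly: both $\Xi$ and $(-h)^*\Xi=-H$ are globally defined forms on $T^*Q$ and $V^*Q$ respectively, so $\cH^*$ is a global function. As a sanity check, one can verify coordinate-independence by combining the transformation law (\ref{2.3'}) for $p_0$ with the non-tensorial transformation (\ref{0144}) of the Hamiltonian $\cH$; the inhomogeneous term $p'_i\dr_t q'^i$ appearing in (\ref{0144}) is exactly cancelled by the corresponding term in (\ref{2.3'}), so the sum $p_0+\cH$ is a well-defined scalar. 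Switching to the adapted coordinates $(t,q^i,I_0,p_i)$ of (\ref{09150}), where $I_0=p_0-h$ has identity transition functions, makes this transparent: $\cH^*=I_0+$ (pullback of a function on $V^*Q$).

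For (b), compute $\vt_{\cH^*}$ from $\vt_{\cH^*}\rfloor\Om_T=-d\cH^*$ using the canonical form $\Om_T=dp_0\w dt+dp_i\w dq^i$, which yields the expression (\ref{z5}). In the coordinates $(t,q^i,I_0,p_i)$ this reduces to (\ref{z5'}), which is manifestly $\zeta$-projectable and whose projection coincides with $\g_H$ of (\ref{z3}). This projection property is nothing but the relation (\ref{ws525}): the pull-back subring $\zeta^*C^\infty(V^*Q)\subset C^\infty(T^*Q)$ is stable under $\{\cH^*,\cdot\}_T$, and the induced derivation on $C^\infty(V^*Q)$ is precisely $\bL_{\g_H}$ on account of (\ref{m72'}).

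For (c), observe that the extra equation $\dot p_0=-\dr_t\cH$ in (\ref{z20'}) decouples: given any integral curve $c(t)=(t,q^i(t),p_i(t))$ of $\g_H$ in $V^*Q$, one obtains a unique lift to an integral curve of $\vt_{\cH^*}$ in $T^*Q$ by prescribing $p_0(0)$ and integrating $\dot p_0=-\dr_t\cH\circ c$; conversely, every integral curve of $\vt_{\cH^*}$ projects by $\zeta$ to one of $\g_H$ by step (b). Finally, completeness of $\g_H$ implies completeness of $\vt_{\cH^*}$ because the $p_0$-equation is linear (indeed first-order inhomogeneous) along the already-complete projected flow, so it admits a global solution for all $t$. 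The only subtle point in the whole argument is the well-definedness of $\cH^*$ in (a); once that is secured, the rest is a direct coordinate computation combined with the projection property (\ref{ws525}).
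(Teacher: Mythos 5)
Your proposal is correct and follows essentially the same route as the paper: define $\cH^*$ by (\ref{mm16}), compute its Hamiltonian vector field (\ref{z5}), pass to the coordinates (\ref{09150}) to see that it projects onto $\g_H$ via the relation (\ref{ws525}), and conclude that the two Hamilton equations are equivalent. Your extra care about the global well-definedness of $\cH^*$ (the cancellation between (\ref{2.3'}) and (\ref{0144})) and about lifting integral curves through the decoupled $p_0$-equation only makes explicit what the paper leaves implicit.
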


We agree to call $(T^*Q,\cH^*)$ the  homogeneous Hamiltonian
system  and $\cH^*$ (\ref{mm16}) the  homogeneous Hamiltonian.

It is readily observed that the Hamiltonian form $H$ (\ref{b4210})
is the Poincar\'e -- Cartan form of the Lagrangian
\mar{Q33}\beq
L_H=h_0(H) = (p_iq^i_t - \cH)dt \label{Q33}
\eeq
on the jet manifold $J^1V^*Q$ of $V^*Q\to\mathbb R$
\cite{book09,jmp07,sard08}.

The Lagrange operator (\ref{305}) associated to the Lagrangian
$L_H$ reads
\mar{3.9}\beq
\cE_H=\dl L_H=[(q^i_t-\dr^i\cH) dp_i -(p_{ti}+\dr_i\cH) dq^i]\w
dt. \label{3.9}
\eeq
The corresponding Lagrange equation (\ref{b327}) is of first
order, and it coincides with the Hamilton equation (\ref{z20a}) --
(\ref{z20b}) on $J^1V^*Q$.

Due to this fact, the Lagrangian $L_H$ (\ref{Q33}) plays a
prominent role in Hamiltonian non-relativistic mechanics.

In particular, let
\be
u=u^t\dr_t + u^i\dr_i, \qquad u^t=0,1,
\ee
be a projectable vector field  on a configuration space $Q$. Its
functorial lift (\ref{l27'}) onto the cotangent bundle $T^*Q$ is
\mar{gm513}\beq
\wt u=u^t\dr_t + u^i\dr_i - p_j\dr_i u^j \dr^i. \label{gm513}
\eeq
This vector field is identically projected onto a vector field,
also given by the expression (\ref{gm513}), on the phase space
$V^*Q$ as a base of the trivial fibre bundle (\ref{z11'}). Then we
have the equality
\mar{mm24}\beq
\bL_{\wt u}H= \bL_{J^1\wt u}L_H= (-u^t\dr_t\cH+p_i\dr_tu^i
-u^i\dr_i\cH +p_i\dr_j u^i\dr^j\cH)dt. \label{mm24}
\eeq
This equality enables us to study conservation laws in Hamiltonian
mechanics similarly to those in Lagrangian mechanics.

Let an equation of motion of a mechanical system on a fibre bundle
$Y\to\mathbb R$ be described by an $r$-order differential equation
$\gE$ given by a closed subbundle of the jet bundle
$J^rY\to\mathbb R$ \cite{book09,sard08}.

\begin{definition} \label{026} \mar{026}
An integral of motion  of this mechanical system is defined as a
$(k<r)$-order differential operator $\Phi$ on $Y$ such that $\gE$
belongs to the kernel of an $r$-order jet prolongation of the
differential operator $d_t\Phi$, i.e.,
\mar{021}\beq
J^{r-k-1}(d_t\Phi)|_{\gE}=J^{r-k}\Phi|_{\gE}=0. \label{021}
\eeq
\end{definition}

It follows that an integral of motion $\Phi$ is constant on
solutions $s$ of a differential equation $\gE$, i.e., there is the
  differential
conservation law
\mar{020}\beq
(J^ks)^*\Phi={\rm const}., \qquad (J^{k+1}s)^*d_t\Phi=0.
\label{020}
\eeq

We agree to write the condition (\ref{021}) as the  weak equality
\mar{022}\beq
J^{r-k-1}(d_t\Phi)\ap 0, \label{022}
\eeq
which holds  on-shell, i.e.,  on solutions of a differential
equation $\gE$ by the formula (\ref{020}).

In non-relativistic mechanics, we can restrict our consideration
to integrals of motion $\Phi$ which are functions on $J^kY$. As
was mentioned above, equations of motion of non-relativistic
mechanics mainly are of first or second order. Accordingly, their
integrals of motion are functions on $Y$ or $J^kY$. In this case,
the corresponding weak equality (\ref{021}) takes the form
\mar{027}\beq
d_t\Phi\ap 0 \label{027}
\eeq
of a  weak conservation law  or,  simply, a conservation law.

Different integrals of motion need not be independent. Let
integrals of motion $\Phi_1,\ldots, \Phi_m$ of a mechanical system
on $Y$ be functions on $J^kY$. They are called  independent if
\mar{024}\beq
d\Phi_1\w\cdots\w d\Phi_m\neq 0 \label{024}
\eeq
everywhere on $J^kY$. In this case, any motion $J^ks$ of this
mechanical system lies in the common level surfaces of functions
$\Phi_1,\ldots, \Phi_m$ which bring $J^kY$ into a fibred manifold.

Integrals of motion can come from symmetries. This is the case of
Lagrangian and Hamiltonian mechanics.

\begin{definition} \label{025} \mar{025}
Let an equation of motion of a mechanical system be an $r$-order
differential equation $\gE\subset J^rY$. Its  infinitesimal
symmetry  (or, simply, a symmetry)  is defined as a vector field
on $J^rY$ whose restriction to $\gE$ is tangent to $\gE$.
\end{definition}

For instance, let us consider first order dynamic equations.

\begin{proposition} \label{080} \mar{080}
Let $\gE$ be the autonomous first order dynamic equation
(\ref{gm150}) given by a vector field $u$ on a manifold $Z$. A
vector field $\vt$ on $Z$ is its symmetry iff $[u,\vt]\ap 0$.
\end{proposition}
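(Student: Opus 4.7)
The plan is to work directly from Definition \ref{025}, after making precise how a vector field $\vt$ on $Z$ is to be regarded as a vector field on $J^1Z\subset TZ$. Since $\gE$ is a closed subbundle of $TZ$ and a symmetry is, by Definition \ref{025}, a vector field on the jet manifold whose restriction to $\gE$ is tangent to $\gE$, the natural interpretation is to replace $\vt$ by its canonical (functorial) lift $\wt\vt$ to $TZ$. In bundle coordinates $(z^\la,\dot z^\la)$ on $TZ$, this lift reads
\be
\wt\vt=\vt^\la\dr_\la + \dot z^\m\dr_\m\vt^\la\,\dot\dr_\la,
\ee
where $\dot\dr_\la=\dr/\dr\dot z^\la$.

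Next, I would identify $\gE=u(Z)\subset TZ$ as the closed submanifold cut out by the $\di Z$ equations $F^\la:=\dot z^\la-u^\la(z)=0$ (cf.\ (\ref{gm150})). Since the differentials $dF^\la$ are linearly independent at every point of $\gE$, tangency of $\wt\vt$ to $\gE$ is equivalent to the pointwise vanishing on $\gE$ of the functions $\wt\vt\rfloor dF^\la$.

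The key computation is then just
\be
\wt\vt(F^\la)=\wt\vt(\dot z^\la-u^\la)=\dot z^\m\dr_\m\vt^\la - \vt^\m\dr_\m u^\la.
\ee
Restricting to $\gE$ amounts to substituting $\dot z^\m=u^\m(z)$, which yields
\be
\wt\vt(F^\la)|_\gE = u^\m\dr_\m\vt^\la - \vt^\m\dr_\m u^\la = [u,\vt]^\la.
\ee
Hence $\wt\vt$ is tangent to $\gE$, i.e.\ $\vt$ is a symmetry, if and only if $[u,\vt]^\la=0$ on $\gE$, which is precisely the weak equality $[u,\vt]\ap 0$ in the sense of (\ref{022}).

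The only real subtlety is the first step, namely pinning down the convention that a ``vector field on $Z$'' is promoted to a vector field on $J^1Z\cong TZ$ via its canonical lift; once that is fixed, the remaining argument is a one-line coordinate check. There is no global obstruction, since the equations $F^\la=0$ are globally defined by the section $u:Z\to TZ$, and the Lie bracket $[u,\vt]$ is a coordinate-free object, so the equivalence $\wt\vt$ tangent to $\gE$ $\Longleftrightarrow$ $[u,\vt]\ap 0$ holds on all of $Z$.
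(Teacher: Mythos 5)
Your proof is correct. The paper states Proposition \ref{080} without supplying any argument, and yours is precisely the routine verification it implicitly has in mind: promote $\vt$ to $TZ$ by the functorial lift (\ref{l27}), test tangency to the submanifold cut out by $\dot z^\la-u^\la(z)=0$, and observe that on-shell the obstruction is $u^\m\dr_\m\vt^\la-\vt^\m\dr_\m u^\la=[u,\vt]^\la$, consistent with the bracket convention fixed in Section 7.2, so that tangency is equivalent to $[u,\vt]\ap 0$.
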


One can show that a smooth real function $F$ on a manifold $Z$ is
an integral of motion of the autonomous first order dynamic
equation (\ref{gm150}) (i.e., it is constant on solutions of this
equation) iff its Lie derivative along $u$ vanishes:
\mar{0115}\beq
\bL_u F=u^\la\dr_\la \Phi=0. \label{0115}
\eeq

\begin{proposition} \label{081} \mar{081}
Let $\gE$ be the first order dynamic equation (\ref{ggm155}) given
by a connection $\G$ (\ref{a1.10}) on a fibre bundle $Y\to\mathbb
R$. Then a vector field $\vt$ on $Y$ is its symmetry iff
$[\G,\vt]\ap 0$.
\end{proposition}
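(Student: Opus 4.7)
The plan is to reduce Proposition 081 to the already-established Proposition 080 by exploiting the canonical embedding $\la_{(1)}: J^1Y\to TY$ (formula \ref{z260}), which identifies a connection on $Y\to\mathbb R$ with a horizontal vector field on $Y$.

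First, I would translate the dynamic equation $\gE\subset J^1Y$ defined by $\G$ into an autonomous first order dynamic equation on $Y$. The equation (\ref{ggm155}) reads $q^i_t=\G^i(t,q^j)$, so under $\la_{(1)}$ a point of $\gE$ is mapped to $(t,q^i,\dot t=1,\dot q^i=\G^i(t,q^j))\in TY$. This is precisely the image $\G(Y)\subset TY$ of the section $\G=\dr_t+\G^i\dr_i$ of $TY\to Y$. Thus, as a closed subbundle of $TY$, the equation $\gE$ coincides with the autonomous first order dynamic equation on $Y$ determined by the vector field $\G$ in the sense of Remark \ref{gena70}.

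Next, I would compare the notions of symmetry. By Definition \ref{025}, an infinitesimal symmetry of the first order equation $\gE\subset J^1Y$ is a vector field on $J^1Y$ tangent to $\gE$; for a vector field $\vt$ on $Y$ one takes its first jet prolongation. Because $\la_{(1)}$ is an embedding, the jet prolongation of $\vt$ is tangent to $\gE$ iff the tangent (complete) lift of $\vt$ to $TY$ is tangent to $\la_{(1)}(\gE)=\G(Y)$. The latter is exactly the condition in Proposition \ref{080} for $\vt$ to be a symmetry of the autonomous equation $\dot y=\G(y)$ on $Y$.

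Applying Proposition \ref{080} with $Z=Y$ and $u=\G$ then immediately yields $\vt$ a symmetry of $\gE$ iff $[\G,\vt]\ap 0$, where the weak equality is evaluated on $\G(Y)\subset TY$, i.e.\ on solutions of (\ref{ggm155}). The only subtle point, and the one step requiring care, is verifying that the two prolongation procedures (jet prolongation to $J^1Y$ and tangent lift to $TY$) agree under $\la_{(1)}$ so that tangency to $\gE$ is equivalent to tangency to $\G(Y)$; once this compatibility is noted, the proposition follows without further calculation.
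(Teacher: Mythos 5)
The paper states Proposition \ref{081} (like Proposition \ref{080}) without any proof, so there is no argument of the author's to compare against; your reduction through the canonical imbedding $\la_{(1)}$ of (\ref{z260}), identifying $\gE$ with the autonomous equation $\G(Y)\subset TY$ and invoking Proposition \ref{080} with $u=\G$, is the natural route and is surely the intended one.

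The step you flag and defer, however, is not a formality --- it is where all the content sits, and as stated your equivalence fails for a general vector field. For a projectable $\vt=\vt^t(t)\dr_t+\vt^i\dr_i$ the functorial lift $\wt\vt$ (\ref{l27}) is tangent to the hyperplane $\{\dot t=1\}=\la_{(1)}(J^1Y)$ only when $\dr_t\vt^t=0$, so $T\la_{(1)}\circ J^1\vt$ and $\wt\vt\circ\la_{(1)}$ do not agree in general. A direct computation with $\Phi^i=q^i_t-\G^i$ gives $J^1\vt(\Phi^i)|_{\gE}=[\G,\vt]^i-\G^i\,\dr_t\vt^t$, so tangency of $J^1\vt$ to $\gE$ is equivalent to $[\G,\vt]\ap(\dr_t\vt^t)\G$, whereas tangency of $\wt\vt$ to $\G(Y)$ (Proposition \ref{080} with $u=\G$) is equivalent to $[\G,\vt]\ap 0$. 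The two conditions coincide precisely when $\vt^t$ is constant, which is the paper's standing convention $u^t=0,1$ for vector fields on fibre bundles over $\mathbb R$ (see the discussion preceding (\ref{gm513})). You should therefore either carry out the compatibility check --- a one-line coordinate computation --- and record the hypothesis $\dr_t\vt^t=0$ that it requires, or restrict $\vt$ accordingly from the outset; with that proviso your argument is complete and correct.
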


A smooth real function $\Phi$ on $Y$ is an integral of motion of
the first order dynamic equation (\ref{ggm155}) in accordance with
the equality (\ref{027}) iff
\mar{0116}\beq
\bL_\G \Phi=(\dr_t +\G^i\dr_i)\Phi=0. \label{0116}
\eeq

Following Definition \ref{025}, let us introduce the notion of a
symmetry of differential operators in the following relevant case.
Let us consider an $r$-order differential operator on a fibre
bundle $Y\to\mathbb R$ which is represented by an exterior form
$\cE$ on $J^rY$. Let its kernel $\Ker\cE$ be an $r$-order
differential equation on $Y\to\mathbb R$.

\begin{proposition} \label{082} \mar{082}
It is readily justified that a vector field $\vt$ on $J^rY$ is a
symmetry of the equation $\Ker\cE$ in accordance with Definition
\ref{025} iff
\mar{083}\beq
\bL_\vt \cE\ap 0. \label{083}
\eeq
\end{proposition}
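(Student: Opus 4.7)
\medskip\noindent\textit{Proof plan.} By Definition \ref{025}, $\vt$ is a symmetry of $\Ker\cE$ iff $\vt|_{\Ker\cE}$ is tangent to the submanifold $\Ker\cE\subset J^rY$. The plan is to translate this geometric tangency condition into the analytic condition $\bL_\vt\cE\ap 0$ by looking at the component functions of $\cE$ in a local coordinate chart. First I would recall the standard fact that if a submanifold $S\subset Z$ is locally cut out by the vanishing of a family of smooth functions $\{f_I\}$, then a vector field $\vt$ on $Z$ is tangent to $S$ iff $\vt(f_I)|_S=0$ for every $I$. Applied here, if $\cE=\cE_I\,dz^I$ in some coordinate basis on $J^rY$ (the multi-index $I$ running over the relevant exterior degree), then $\Ker\cE$ is precisely the common zero set of the component functions $\{\cE_I\}$, and tangency of $\vt$ is the condition $\vt(\cE_I)\ap 0$ for all $I$.

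Next, I would relate this to $\bL_\vt\cE$ via a short coordinate computation. Writing out the Lie derivative yields
\beq
\bL_\vt\cE \;=\; \vt^\m(\dr_\m\cE_I)\,dz^I \;+\; \cE_I\,\bL_\vt(dz^I),
\eeq
where the second summand is $\mathbb R$-linear in the components $\cE_J$ themselves (the coefficients being built from $\dr\vt$). Consequently, restricted to $\Ker\cE$ every term in the second summand vanishes, so the components of $\bL_\vt\cE$ reduce on-shell to $\vt(\cE_I)=\vt^\m\dr_\m\cE_I$. Thus $\bL_\vt\cE\ap 0$ is equivalent to $\vt(\cE_I)\ap 0$ for all $I$, which by the preceding paragraph is equivalent to tangency of $\vt$ to $\Ker\cE$, i.e.\ to $\vt$ being a symmetry in the sense of Definition \ref{025}.

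The only subtle point, and what I expect to be the main thing to get right, is the observation that the ``extra'' piece $\cE_I\,\bL_\vt(dz^I)$ in the expansion of $\bL_\vt\cE$ is genuinely proportional to the components $\cE_J$ and therefore drops out on $\Ker\cE$. This can be made coordinate-free by using the Cartan magic formula $\bL_\vt=d\circ\iota_\vt+\iota_\vt\circ d$ and noting that both $\iota_\vt\cE$ and $\iota_\vt d\cE$, when expressed in components, are built out of the same functions $\cE_J$ and their derivatives; the ambiguity inherent in ``representing an operator by a form'' does not affect the conclusion because any two such representations differ by a form vanishing on $\Ker\cE$, and their Lie derivatives along $\vt$ differ by a form that also vanishes on $\Ker\cE$ whenever $\vt$ is tangent. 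This gives the weak equality \eqref{083} in both directions.
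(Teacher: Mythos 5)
Your argument is correct and is precisely the standard justification that the paper compresses into the phrase ``it is readily justified'': the paper offers no written proof of Proposition \ref{082}, so there is nothing different to compare against. Writing $\cE=\cE_I\,\om^I$ in a coframe basis, the decomposition $\bL_\vt\cE=\vt(\cE_I)\,\om^I+\cE_I\,\bL_\vt\om^I$ with the second summand proportional to the components $\cE_J$ (hence vanishing on-shell) is exactly the intended computation, and the pointwise linear independence of the $\om^I$ then reduces (\ref{083}) to $\vt(\cE_I)\ap 0$. The only caveat, which both you and the paper leave implicit, is that the equivalence of $\vt(\cE_I)\ap 0$ with tangency to $\Ker\cE$ in the ``if'' direction requires that the component functions $\cE_I$ cut out $\Ker\cE$ regularly (their differentials spanning the conormal bundle); this is part of the standing assumption that $\Ker\cE$ is a closed subbundle, i.e.\ a well-defined differential equation.
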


Motivated by Proposition \ref{082}, we come to the following.

\begin{definition} \label{084} \mar{084} Let $\cE$ be the above
mentioned differential operator. A vector field $\vt$ on $J^rY$ is
called a symmetry  of a differential operator $\cE$ if the Lie
derivative $\bL_\vt \cE$ vanishes.
\end{definition}

By virtue of Proposition \ref{082}, a symmetry of a differential
operator $\cE$ also is a symmetry of the differential equation
$\Ker\cE$.

\subsection{Non-autonomous integrable systems}

Let us consider a non-autonomous mechanical system on a
configuration space $Q\to \mathbb R$ in Section 4.2. Its phase
space is the vertical cotangent bundle $V^*Q\to Q$ of $Q\to\mathbb
R$ endowed with the Poisson structure $\{,\}_V$ (\ref{m72}).A
Hamiltonian of a non-autonomous mechanical system is a section $h$
(\ref{ws513}) of the one-dimensional fibre bundle (\ref{z11'}) --
(\ref{z11}):
\mar{z11x}\beq
\zeta:T^*Q\to V^*Q, \label{z11x}
\eeq
where $T^*Q$ is the cotangent bundle of $Q$ endowed with the
canonical symplectic form $\Om_T$ (\ref{m91'}). The Hamiltonian
$h$ (\ref{ws513}) yields the pull-back Hamiltonian form $H$
(\ref{b4210}) on $V^*Q$ and defines the Hamilton vector field
$\g_H$ (\ref{z3}) on $V^*Q$. A smooth real function $F$ on $V^*Q$
is an integral of motion of a Hamiltonian system $(V^*Q,H)$ if its
Lie derivative $\bL_{\g_H} F$ vanishes.

\begin{definition} \label{i0x} \mar{i0x}
A non-autonomous Hamiltonian system $(V^*Q,H)$ of $n=\di Q-1$
degrees of freedom is called superintegrable if it admits $n\leq
k<2n$ integrals of motion $\F_1,\ldots,\F_k$, obeying the
following conditions.

(i) All the functions $\F_\al$ are independent, i.e., the $k$-form
$d\F_1\w\cdots\w d\F_k$ nowhere vanishes on $V^*Q$. It follows
that the map
\mar{nc4x}\beq
\F:V^*Q\to N=(\F_1(V^*Q),\ldots,\F_k(V^*Q))\subset \mathbb R^k
\label{nc4x}
\eeq
is a fibred manifold over a connected open subset $N\subset\mathbb
R^k$.

(ii) There exist smooth real functions $s_{\al\bt}$ on $N$ such
that
\mar{nc1z}\beq
\{\F_\al,\F_\bt\}_V= s_{\al\bt}\circ \F, \qquad \al,\bt=1,\ldots,
k. \label{nc1z}
\eeq

(iii) The matrix function with the entries $s_{\al\bt}$
(\ref{nc1z}) is of constant corank $m=2n-k$ at all points of $N$.
\end{definition}

In order to describe this non-autonomous superintegrable
Hamiltonian system, we use the fact that there exists an
equivalent autonomous Hamiltonian system $(T^*Q,\cH^*)$ of $n+1$
degrees of freedom on a symplectic manifold $(T^*Q,\Om_T)$ whose
Hamiltonian is the function $\cH^*$ (\ref{mm16}) (Theorem
\ref{09121}), and that this Hamiltonian system is superintegrable
(Theorem \ref{09141}). Our goal is the following.

\begin{theorem} \label{nc0'x} \mar{nc0'x}
Let Hamiltonian vector fields of the functions $\F_\al$ be
complete, and let fibres of the fibred manifold $\F$ (\ref{nc4x})
be connected and mutually diffeomorphic. Then there exists an open
neighborhood $U_M$ of a fibre $M$ of $\F$ (\ref{nc4x}) which is a
trivial principal bundle with the structure group
\mar{g120z}\beq
\mathbb R^{1+m-r}\times T^r \label{g120z}
\eeq
whose bundle coordinates are the generalized action-angle
coordinates
\mar{09135}\beq
(p_A,q^A,I_\la,t,y^\la), \qquad A=1,\ldots,k-n, \qquad
\la=1,\ldots, m,\label{09135}
\eeq
such that:

(i) $(t,y^\la)$ are coordinates on the toroidal cylinder
(\ref{g120z}),

(ii) the Poisson bracket $\{,\}_V$ on $U_M$ reads
\be
\{f,g\}_V = \dr^Af\dr_Ag-\dr^Ag\dr_Af + \dr^\la f\dr_\la g-\dr^\la
g\dr_\la f,
\ee

(iii) a Hamiltonian $\cH$ depends only on the action coordinates
$I_\la$,

(iv) the integrals of motion $\F_1, \ldots \F_k$ are independent
of coordinates $(t,y^\la)$.
\end{theorem}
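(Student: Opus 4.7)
The plan is to reduce Theorem \ref{nc0'x} to the autonomous superintegrable case (Theorem \ref{nc0'}) using the equivalence of Theorem \ref{09121} between the non-autonomous system $(V^*Q,H)$ and the autonomous Hamiltonian system $(T^*Q,\cH^*)$ on the symplectic manifold $(T^*Q,\Om_T)$. The key idea is to augment the $k$ integrals of motion $\F_\al$ on $V^*Q$ by the homogeneous Hamiltonian $\cH^*=p_0+h$ itself, obtaining $k+1$ integrals on $T^*Q$, and then translate the action-angle coordinates so produced back down through $\zeta: T^*Q\to V^*Q$.

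Concretely, I would first check that $(\cH^*,\zeta^*\F_1,\ldots,\zeta^*\F_k)$ is a superintegrable system on $T^*Q$ in the sense of Definition \ref{i0}. Independence follows because $\cH^*$ is the only function involving $p_0$ while the $\zeta^*\F_\al$ are $p_0$-independent and the $\F_\al$ were independent on $V^*Q$. The bracket relations (\ref{m72'}) and (\ref{ws525}) give $\{\zeta^*\F_\al,\zeta^*\F_\bt\}_T=\zeta^*(s_{\al\bt}\circ\F)$ and $\{\cH^*,\zeta^*\F_\al\}_T=\zeta^*(\bL_{\g_H}\F_\al)=0$, since each $\F_\al$ is an integral of motion of $H$. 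Consequently, the bracket matrix on $T^*Q$ is the original $\bs$ bordered by a zero row and column; its corank equals $m+1=2(n+1)-(k+1)$, as required. Completeness of the Hamiltonian vector fields of $\zeta^*\F_\al$ on $T^*Q$ follows from completeness of $\vt_{\F_\al}$ on $V^*Q$ (since the extra $-\dr_t\F_\al\,\dr^0$ component is linear in $p_0$ and thus integrates globally), and $\vt_{\cH^*}$ is complete when $\g_H$ is. The fibres of the enlarged momentum map $\tilde\F=(\cH^*,\zeta^*\F_1,\ldots,\zeta^*\F_k):T^*Q\to\tilde N$ are $\mathbb R$-bundles over the fibres of $\F$ (via the flow of $\vt_{\cH^*}$, which projects to $\dr_t$), hence connected and mutually diffeomorphic.

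Applying Theorem \ref{nc0'} to this autonomous system yields, around any fibre $\tilde M$ over $\zeta(M)$, a neighborhood $U'_M\subset T^*Q$ that is a trivial principal bundle with structure group $\mathbb R^{(m+1)-r'}\times T^{r'}$ and generalized action-angle coordinates $(p_A,q^A,I_0,I_\la,y^0,y^\la)$ in which $\Om_T=dI_0\w dy^0+dI_\la\w dy^\la+dp_A\w dq^A$. The crucial step is to normalize the choice: since $\cH^*$ is itself a Casimir of the coinduced Poisson structure on $\tilde N$, I take $I_0=\cH^*$; then its conjugate angle variable $y^0$ is the flow parameter of $\vt_{\cH^*}=\dr_t+\dr^i\cH\dr_i-\dr_i\cH\dr^i$, which projects to $\dr_t$ on $\mathbb R$, so one can identify $y^0=t$ on the nose. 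Because this real-line factor is split off, the toroidal dimension is preserved, giving the claimed structure group $\mathbb R^{1+m-r}\times T^r$. Pushing forward by $\zeta$ (which eliminates $I_0$) produces the chart $(p_A,q^A,I_\la,t,y^\la)$ on $U_M=\zeta(U'_M)\subset V^*Q$; conclusion (ii) follows from the canonical expression of $\Om_T$ through (\ref{m72'}), (iii) follows from Proposition \ref{zz1} applied to $\cH^*$ combined with the splitting (\ref{j3}) relative to the reference frame $\dr_t$ adapted to the new chart, and (iv) is immediate because the $\F_\al$ depend only on the fibre coordinates of $\F$. The main obstacle, and the delicate step, is item (iii): verifying that the non-autonomous Hamiltonian $\cH$ (which is not a function on $V^*Q$ but a reference-frame-dependent quantity) can be expressed purely in the action coordinates $I_\la$ after the shift $\cH^*=I_0=p_0+\cH$, which requires that the transformation to $(p_A,q^A,I_\la,t,y^\la,I_0)$ can be arranged so that $p_0$ in the new chart takes the form $I_0-\cH(I_\la)$, i.e., that the angle transformation along $y^0=t$ absorbs all explicit $t$-dependence of the original Hamiltonian.
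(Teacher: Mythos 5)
Your proposal follows essentially the same route as the paper: pass to the homogeneous autonomous system $(T^*Q,\cH^*)$ with the augmented family $(\cH^*,\zeta^*\F_1,\ldots,\zeta^*\F_k)$, verify it is superintegrable with corank $2(n+1)-(k+1)$ (Theorem \ref{09141}), note that $\wt\F$ is the pull-back of $\F$ so its fibres inherit connectedness and mutual diffeomorphism, apply Theorem \ref{nc0'} with the normalization $I_0=\cH^*$ and conjugate angle $t$, and project through $\zeta$ to obtain the chart $(p_A,q^A,I_\la,t,y^\la)$ on $U_M=\zeta(U')$. Your added care about the completeness of $u_{\zeta^*\F_\al}$ (the extra $\dr^0$-component) and about item (iii) only fills in details the paper leaves implicit; the argument is the same.
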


Let us start with the case $k=n$  of a completely integrable
non-autonomous Hamiltonian system (Theorem \ref{z13}).

\begin{definition} \label{09122} \mar{09122}
A non-autonomous Hamiltonian system $(V^*Q,H)$ of $n$ degrees of
freedom is said to be completely integrable if it admits $n$
independent integrals of motion $F_1,\ldots,F_n$ which are in
involution with respect to the Poisson bracket $\{,\}_V$
(\ref{m72}).
\end{definition}

By virtue of the relations (\ref{094}), the vector fields
\mar{095}\beq
(\g_H,\vt_{F_1},\ldots,\vt_{F_n}), \qquad \vt_{F_\al} =
\dr^iF_\al\dr_i- \dr_iF_\al\dr^i, \label{095}
\eeq
mutually commute and, therefore, they span an $(n+1)$-dimensional
involutive distribution $\cV$ on $V^*Q$. Let $G$ be the group of
local diffeomorphisms of $V^*Q$ generated by the flows of vector
fields (\ref{095}). Maximal integral manifolds of $\cV$ are the
orbits of $G$ and invariant submanifolds of vector fields
(\ref{095}). They yield a foliation $\cF$ of $V^*Q$.

Let $(V^*Q,H)$ be a non-autonomous Hamiltonian system and
$(T^*Q,\cH^*)$ an equivalent autonomous Hamiltonian system on
$T^*Q$. An immediate consequence of the relations (\ref{m72'}) and
(\ref{ws525}) is the following.

\begin{theorem} \label{z6} \mar{z6}
Given a non-autonomous completely integrable Hamiltonian system
\mar{097'}\beq
(\g_H,F_1,\ldots,F_n) \label{097'}
\eeq
of $n$ degrees of freedom on $V^*Q$, the associated autonomous
Hamiltonian system
\mar{097}\beq
(\cH^*,\zeta^*F_1,\ldots,\zeta^*F_n) \label{097}
\eeq
of $n+1$ degrees of freedom on $T^*Q$ is completely integrable.
\end{theorem}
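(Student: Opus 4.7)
The plan is to verify the three defining properties of a completely integrable autonomous Hamiltonian system on the $2(n+1)$-dimensional symplectic manifold $(T^*Q,\Om_T)$ for the collection $(\cH^*,\zeta^*F_1,\ldots,\zeta^*F_n)$, which consists of exactly $n+1$ functions, as required.

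First, I would establish pairwise involution with respect to the canonical Poisson bracket $\{,\}_T$ on $T^*Q$. For any pair $F_i,F_j$ of the given integrals of motion on $V^*Q$, the relation (\ref{m72'}) gives
\[
\{\zeta^*F_i,\zeta^*F_j\}_T = \zeta^*\{F_i,F_j\}_V = 0,
\]
since $F_1,\ldots,F_n$ are assumed to be in involution with respect to $\{,\}_V$. For the brackets with $\cH^*$, I would invoke (\ref{ws525}):
\[
\{\cH^*,\zeta^*F_i\}_T = \zeta^*(\bL_{\g_H}F_i) = 0,
\]
because each $F_i$ is an integral of motion of the non-autonomous system, i.e., $\bL_{\g_H}F_i=0$. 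This gives all the required involution relations in one stroke.

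Next, I would verify independence of $\cH^*,\zeta^*F_1,\ldots,\zeta^*F_n$ on $T^*Q$. The key observation, using the coordinates $(t,q^i,p_0,p_i)$, is that $\zeta^*F_i$ are pullbacks along the fibration $\zeta:T^*Q\to V^*Q$ (\ref{z11x}), so they are independent of $p_0$; their differentials span the same rank-$n$ subbundle of $T^*(T^*Q)$ as the differentials of $F_i$ on $V^*Q$, which is everywhere of rank $n$ by hypothesis. On the other hand, $\cH^*=p_0+\cH$ by (\ref{mm16}), hence $\partial\cH^*/\partial p_0=1$, so $d\cH^*$ cannot lie in the span of $d\zeta^*F_1,\ldots,d\zeta^*F_n$. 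Consequently
\[
d\cH^*\w d\zeta^*F_1\w\cdots\w d\zeta^*F_n \neq 0
\]
everywhere on $T^*Q$, which is the desired independence.

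These two points, combined with Theorem \ref{09121} which identifies the two Hamiltonian dynamics, show that $(\cH^*,\zeta^*F_1,\ldots,\zeta^*F_n)$ satisfies Definition \ref{cmp21} on $(T^*Q,\Om_T)$. I do not anticipate a genuine obstacle: the relations (\ref{m72'}) and (\ref{ws525}) do all the work for involution, and the splitting $\cH^*=p_0+\cH$ makes independence almost automatic. The only subtlety worth a sentence in the writeup is that, although the global decomposition $\cH^*=p_0+\cH$ is ill defined as noted in Remark \ref{0170}, the local computation $\partial\cH^*/\partial p_0=1$ is coordinate-invariant along the fibres of $\zeta$, so the independence argument is global.
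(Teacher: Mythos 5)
Your proposal is correct and follows essentially the same route as the paper, which presents the theorem as an immediate consequence of the relations (\ref{m72'}) and (\ref{ws525}) — exactly the two identities you use for the involution computations. The only addition is your explicit verification of independence via $\partial\cH^*/\partial p_0=1$, a routine point the paper leaves implicit.
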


The Hamiltonian vector fields
\mar{099}\beq
(u_{\cH^*},u_{\zeta^*F_1},\ldots,u_{\zeta^*F_m}), \qquad
u_{\zeta^*F_\al} = \dr^iF_\al\dr_i- \dr_iF_\al\dr^i, \label{099}
\eeq
of the autonomous integrals of motion (\ref{097}) span an
$(n+1)$-dimensional involutive distribution $\cV_T$ on $T^*Q$ such
that
\mar{098}\beq
T\zeta(\cV_T)=\cV, \qquad Th(\cV)=\cV_T|_{h(V^*Q)=I_0=0},
\label{098}
\eeq
where
\be
&& Th: TV^*Q\ni(t,q^i,p_i,\dot t,\dot q^i,\dot
p_i)\to\\
&& \qquad (t,q^i,p_i,I_0=0,\dot t,\dot q^i,\dot p_i,\dot I_0=0)\in TT^*Q.
\ee
It follows that, if $M$ is an invariant submanifold of the
non-autonomous completely integrable Hamiltonian system
(\ref{097'}), then $h(M)$ is an invariant submanifold of the
autonomous completely integrable Hamiltonian system (\ref{097}).

In order do introduce generalized action-angle coordinates around
an invariant submanifold $M$ of the non-autonomous completely
integrable Hamiltonian system (\ref{097'}), let us suppose that
the vector fields (\ref{095}) on $M$ are complete. It follows that
$M$ is a locally affine manifold diffeomorphic to a toroidal
cylinder
\mar{0111}\beq
\mathbb R^{1+n-r}\times T^r. \label{0111}
\eeq
Moreover, let assume that there exists an open neighborhood $U_M$
of $M$ such that the foliation $\cF$ of $U_M$ is a fibred manifold
$\f: U_M\to N$ over a domain $N\subset \mathbb R^n$ whose fibres
are mutually diffeomorphic.

Because the morphism $Th$ (\ref{098}) is a bundle isomorphism, the
Hamiltonian vector fields (\ref{099}) on the invariant submanifold
$h(M)$ of the autonomous completely integrable Hamiltonian system
are complete. Since the affine bundle $\zeta$ (\ref{z11x}) is
trivial, the open neighborhood $\zeta^{-1}(U_M)$ of the invariant
submanifold $h(M)$ is a fibred manifold
\be
\wt\f: \zeta^{-1}(U_M)= \mathbb R\times U_M \ar^{(\id\mathbb
R,\f)} \mathbb R\times N = N'
\ee
over a domain $N'\subset \mathbb R^{n+1}$ whose fibres are
diffeomorphic to the toroidal cylinder (\ref{0111}). In accordance
with Theorem \ref{cmp20}, the open neighborhood $\zeta^{-1}(U_M)$
of $h(M)$ is a trivial principal bundle
\mar{0910}\beq
\zeta^{-1}(U_M)=N'\times (\mathbb R^{1+n-r}\times T^r)\to N'
\label{0910}
\eeq
with the structure group (\ref{0111}) whose bundle coordinates are
the generalized action-angle coordinates
\mar{0911}\beq
(I_0,I_1,\ldots,I_n, t,z^1,\ldots,z^n) \label{0911}
\eeq
such that:

(i) $(t,z^a)$ are coordinates on the toroidal cylinder
(\ref{0111}),

(ii) the symplectic form $\Om_T$ on $\zeta^{-1}(U)$ reads
\be
\Om_T=dI_0\w dt + dI_a\w dz^a,
\ee

(iii) $\cH^*=I_0$,

(iv) the integrals of motion $\zeta^*F_1,\ldots,\zeta^*F_n$ depend
only on the action coordinates $I_1,\ldots,I_n$.

Provided with the coordinates (\ref{0911}),
\be
\zeta^{-1}(U_M)= U_M\times\mathbb R
\ee
is a trivial bundle possessing the fibre coordinate $I_0$
(\ref{09151}). Consequently, the non-autonomous open neighborhood
$U_M$ of an invariant submanifold $M$ of the completely integrable
Hamiltonian system (\ref{095}) is diffeomorphic to the Poisson
annulus
\mar{0915}\beq
U_M=N\times (\mathbb R^{1+n-r}\times T^r) \label{0915}
\eeq
endowed with the generalized action-angle coordinates
\mar{0916}\beq
(I_1,\ldots,I_n, t,z^1,\ldots,z^n) \label{0916}
\eeq
such that:

(i) the Poisson structure (\ref{m72}) on $U_M$ takes the form
\be
\{f,g\}_V = \dr^af\dr_ag-\dr^ag\dr_af,
\ee

(ii) the Hamiltonian (\ref{ws513}) reads $\cH=0$,

(iii) the integrals of motion $F_1,\ldots,F_n$ depend only on the
action coordinates $I_1,\ldots,I_n$.

The Hamilton equation (\ref{z20a}) -- (\ref{z20b}) relative to the
generalized action-angle coordinates (\ref{0916}) takes the form
\be
 z^a_t=0, \qquad I_{ta}=0.
\ee
It follows that the generalized action-angle coordinates
(\ref{0916}) are the initial date coordinates.

Note that the generalized action-angle coordinates (\ref{0916}) by
no means are unique. Given a smooth function $\cH'$ on $\mathbb
R^n$, one can provide $\zeta^{-1}(U_M)$ with the generalized
action-angle coordinates
\mar{ww26}\beq
t, \qquad z'^a=z^a- t\dr^a\cH', \qquad I'_0=I_0+\cH'(I_b), \qquad
I'_a=I_a. \label{ww26}
\eeq
With respect to these coordinates, a Hamiltonian of the autonomous
Hamiltonian system on $\zeta^{-1}(U_M)$ reads $\cH'^*=I'_0-\cH'$.
A Hamiltonian of the non-autonomous Hamiltonian system on $U$
endowed with the generalized action-angle coordinates $(I_a,t,
z'^a)$ is $\cH'$.

Thus, the following has been proved.

\begin{theorem} \label{z13} \mar{z13}
Let $(\g_H,F_1,\ldots,F_n)$ be a non-autonomous completely
integrable Hamiltonian system. Let $M$ be its invariant
submanifold such that the vector fields (\ref{095}) on $M$ are
complete and that there exists an open neighborhood $U_M$ of $M$
which is a fibred manifold in mutually diffeomorphic invariant
submanifolds. Then $U_M$ is diffeomorphic to the Poisson annulus
(\ref{0915}), and it can be provided with the generalized
action-angle coordinates (\ref{0916}) such that the integrals of
motion $(F_1,\ldots,F_n)$ and the Hamiltonian $\cH$ depend only on
the action coordinates $I_1,\ldots,I_n$.
\end{theorem}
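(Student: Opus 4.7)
The plan is to reduce the non-autonomous case to the associated autonomous completely integrable system on $T^*Q$ via Theorem \ref{09121}, apply the non-compact Liouville--Arnold result (Theorem \ref{cmp20}) there, and then project the resulting action-angle coordinates back down to $V^*Q$ through the trivial affine bundle $\zeta: T^*Q\to V^*Q$ (\ref{z11x}).

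First, I would invoke Theorem \ref{z6} to pass from the non-autonomous completely integrable system $(\g_H,F_1,\ldots,F_n)$ on $V^*Q$ to the autonomous completely integrable system $(\cH^*,\zeta^*F_1,\ldots,\zeta^*F_n)$ of $n+1$ degrees of freedom on $(T^*Q,\Om_T)$. The hypothesis that the vector fields (\ref{095}) on $M$ are complete, together with the bundle isomorphism $Th$ of (\ref{098}), transports completeness to the Hamiltonian vector fields $(u_{\cH^*},u_{\zeta^*F_1},\ldots,u_{\zeta^*F_n})$ on the invariant submanifold $h(M)\subset T^*Q$. Since $\zeta$ is a trivial affine line bundle, the neighbourhood $\zeta^{-1}(U_M)=\mathbb R\times U_M$ inherits a fibred-manifold structure over $N'=\mathbb R\times N$ whose fibres are all diffeomorphic to the toroidal cylinder (\ref{0111}).

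Next, I would apply Theorem \ref{cmp20} to this autonomous system to obtain a trivial principal bundle
\[
\zeta^{-1}(U_M)=N'\times(\mathbb R^{1+n-r}\times T^r)\to N'
\]
equipped with generalized action-angle coordinates $(I_0,I_1,\ldots,I_n,t,z^1,\ldots,z^n)$ such that $\Om_T=dI_0\w dt+dI_a\w dz^a$, the homogeneous Hamiltonian takes the canonical form $\cH^*=I_0$, and the functions $\zeta^*F_\al$ depend only on $(I_1,\ldots,I_n)$. Here the $\mathbb R$-factor singled out as $t$ is identified with the time axis of $Q\to\mathbb R$ by exploiting the projection $\vt_{\cH^*}\mapsto\dr_t$; this is consistent because, in these coordinates, $I_0$ plays the role of the fibre coordinate $p_0-h$ of the trivial bundle $\zeta$, so quotienting by the action of the vector field $\dr^0$ (which is the vertical direction of $\zeta$) yields the Poisson annulus (\ref{0915}) on $U_M$ with the bracket $\{f,g\}_V=\dr^af\dr_ag-\dr^ag\dr_af$ consistent with (\ref{m72}).

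Having descended to $U_M$, the coordinates $(I_1,\ldots,I_n,t,z^1,\ldots,z^n)$ (\ref{0916}) are the desired generalized action-angle coordinates. In these coordinates the chosen section $h$ corresponds to $I_0=0$, whence $\cH=0$; the ambiguity (\ref{ww26}) shows that one may equivalently arrange any Hamiltonian of the form $\cH=\cH'(I_b)$ depending only on action variables, and that the $F_\al$ likewise depend only on $I_1,\ldots,I_n$. The Hamilton equation (\ref{z20a})--(\ref{z20b}) then reduces to $z^a_t=0$, $I_{ta}=0$, confirming the initial-data character of these coordinates. The main obstacle in this argument is the bookkeeping required to align the distinguished time coordinate of the non-autonomous picture with one of the $\mathbb R$-factors produced by Theorem \ref{cmp20}; this is handled by tracking the projections $\zeta$ and $Q\to\mathbb R$ through the trivialization, using that $\vt_{\cH^*}$ is tangent to level sets of $I_0$ and projects onto the canonical vector field $\dr_t$ on the base.
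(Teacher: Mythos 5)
Your proposal follows essentially the same route as the paper: pass to the associated autonomous system on $T^*Q$ via Theorems \ref{z6} and \ref{09121}, transport completeness through the isomorphism $Th$ of (\ref{098}), apply Theorem \ref{cmp20} to $\zeta^{-1}(U_M)=\mathbb R\times U_M$ over $N'=\mathbb R\times N$, and then use $I_0$ as the fibre coordinate of the trivial affine bundle $\zeta$ to descend to the Poisson annulus (\ref{0915}) with coordinates (\ref{0916}), including the observation that $\cH=0$ for the chosen section $h$ and the ambiguity (\ref{ww26}). The argument is correct and matches the paper's proof step for step.
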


Let now $(\g_H,\F_1,\ldots,\F_k)$ be a non-autonomous
superintegrable Hamiltonian system in accordance with Definition
\ref{i0x}. The associated autonomous Hamiltonian system on $T^*Q$
possesses $k+1$ integrals of motion
\mar{09136}\beq
(\cH^*,\zeta^*\F_1,\ldots,\zeta^*\F_k) \label{09136}
\eeq
with the following properties.

(i) The functions (\ref{09136}) are mutually independent, and the
map
\mar{09140}\ben
&& \wt\F:T^*Q\to
(\cH^*(T^*Q),\zeta^*\F_1(T^*Q),\ldots,\zeta^*\F_k(T^*Q))= \label{09140}\\
&& \qquad (I_0,\F_1(V^*Q),\ldots,\F_k(V^*Q))= \mathbb R\times
N=N'\nonumber
\een
is a fibred manifold.

(ii) The functions (\ref{09136}) obey the relations
\be
\{\zeta^*\F_\al,\zeta^*\F_\bt\}= s_{\al\bt}\circ \zeta^*\F,\qquad
\{\cH^*,\zeta^*\F_\al\}=s_{0\al}=0
\ee
so that the matrix function with the entries
$(s_{0\al},s_{\al\bt})$ on $N'$ is of constant corank $2n+1-k$.

Refereing to Definition \ref{i0} of an autonomous superintegrable
system, we come to the following.

\begin{theorem} \label{09141} \mar{09141} Given a
non-autonomous superintegrable Hamiltonian system $(\g_H,\F_\al)$
on $V^*Q$, the associated autonomous Hamiltonian system
(\ref{09136}) on $T^*Q$ is superintegrable.
\end{theorem}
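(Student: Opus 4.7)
The plan is to verify, one by one, the three clauses of Definition \ref{i0} for the collection $(\cH^*,\zeta^*\F_1,\ldots,\zeta^*\F_k)$ on the $2(n+1)$-dimensional symplectic manifold $(T^*Q,\Om_T)$. The required corank in Definition \ref{i0} is $2(n+1)-(k+1)=2n+1-k$, which is exactly the number reported in property (ii) of the text preceding the theorem, so the counting is consistent from the outset and the task is to certify the three properties formally.

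For independence, I would exploit the adapted fibre coordinate $I_0$ on $T^*Q$ from (\ref{09150}), in terms of which $\zeta$ is simply a projection that kills $\dr^0$. Each $\zeta^*\F_\al$ is constant along $\dr^0$, so $d(\zeta^*\F_\al)$ lies in the annihilator of $\dr^0$, whereas $d\cH^*$ has a nontrivial $dp_0$ component (equivalently a nontrivial $dI_0$ component). This separates $d\cH^*$ from the linear span of the $d(\zeta^*\F_\al)$; the latter are themselves independent because the $d\F_\al$ are independent on $V^*Q$ by hypothesis and $\zeta$ is a surjective submersion. Hence $d\cH^*\wedge d(\zeta^*\F_1)\wedge\cdots\wedge d(\zeta^*\F_k)$ is nowhere zero on $T^*Q$, and the map $\wt\F$ of (\ref{09140}) is a fibred manifold onto $N'=\mathbb R\times N$.

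For the Poisson-bracket closure, the mutual brackets among the $\zeta^*\F_\al$ follow immediately from the intertwining formula (\ref{m72'}):
\[
\{\zeta^*\F_\al,\zeta^*\F_\bt\}_T=\zeta^*\{\F_\al,\F_\bt\}_V=\zeta^*(s_{\al\bt}\circ\F)=s_{\al\bt}\circ\zeta^*\F=s_{\al\bt}\circ\wt\F.
\]
For the brackets with $\cH^*$, I would use the identity (\ref{ws525}): since each $\F_\al$ is an integral of motion of $\g_H$, one has $\bL_{\g_H}\F_\al=0$, and therefore $\{\cH^*,\zeta^*\F_\al\}_T=\zeta^*(\bL_{\g_H}\F_\al)=0$, giving the announced $s_{0\al}=0$. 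Thus all brackets among $(\cH^*,\zeta^*\F_\al)$ are pull-backs of smooth functions on $N'$, and condition (ii) of Definition \ref{i0} holds.

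The only point requiring a short computation is the constant-corank condition (iii). The $(k+1)\times(k+1)$ structure matrix of the autonomous system has a vanishing first row and column (the entries $s_{0\al}=s_{\al 0}=0$) and its lower-right $k\times k$ block is the pull-back of $s_{\al\bt}(x)$. At each point of $N'$ its kernel therefore decomposes as $\mathbb R\oplus\ker s(x)$, so the corank is $\mathrm{corank}(s)+1=(2n-k)+1=2n+1-k$, independent of the point. This is the expected main obstacle in the sense of requiring actual verification rather than a definitional remark, but it is entirely linear-algebraic and does not involve further geometric input. With the three clauses established, $(T^*Q,\Om_T)$ with generating functions $(\cH^*,\zeta^*\F_1,\ldots,\zeta^*\F_k)$ fits Definition \ref{i0}, completing the proof.
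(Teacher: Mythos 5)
Your proposal is correct and follows essentially the same route as the paper: the text preceding Theorem \ref{09141} establishes exactly these three facts — independence of $(\cH^*,\zeta^*\F_\al)$ via the fibre coordinate $I_0=\cH^*$ of the affine bundle $\zeta$, closure of the brackets via the relations (\ref{m72'}) and (\ref{ws525}), and constant corank $2n+1-k$ of the augmented structure matrix — and then invokes Definition \ref{i0}. Your explicit linear-algebra check of the corank is a welcome elaboration of a step the paper merely asserts, but it is not a different method.
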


There is the commutative diagram
\be
\begin{array}{rcccl}
&T^*Q &\ar^\zeta & V^*Q&\\
_{\wt \F}& \put(0,10){\vector(0,-1){20}} & & \put(0,10){\vector(0,-1){20}}&_\F\\
& N' &\ar^\xi & N&
\end{array}
\ee
where $\zeta$ (\ref{z11x}) and
\be
\xi:N'=\mathbb R\times N\to N
\ee
are trivial bundles. It follows that the fibred manifold
(\ref{09140}) is the pull-back $\wt \F=\xi^* \F$ of the fibred
manifold $\F$ (\ref{nc4x}) onto $N'$.

Let the conditions of Theorem \ref{nc0'} hold. If the Hamiltonian
vector fields
\be
(\g_H,\vt_{\F_1},\ldots,\vt_{\F_k}),\qquad \vt_{\F_\al}=
\dr^i\F_\al\dr_i- \dr_i\F_\al\dr^i,
\ee
of integrals of motion $\F_\al$ on $V^*Q$ are complete, the
Hamiltonian vector fields
\be
(u_{\cH^*},u_{\zeta^*\F_1},\ldots,u_{\zeta^*\F_k}), \qquad
u_{\zeta^*\F_\al} = \dr^i\F_\al\dr_i- \dr_i\F_\al\dr^i,
\ee
on $T^*Q$ are complete. If fibres of the fibred manifold $\F$
(\ref{nc4x}) are connected and mutually diffeomorphic, the fibres
of the fibred manifold $\wt\F$ (\ref{09140}) also are well.

Let $M$ be a fibre of $\F$ (\ref{nc4x}) and $h(M)$ the
corresponding fibre of $\wt\F$ (\ref{09140}). In accordance
Theorem \ref{nc0'}, there exists an open neighborhood $U'$ of
$h(M)$ which is a trivial principal bundle with the structure
group (\ref{g120z}) whose bundle coordinates are the generalized
action-angle coordinates
\mar{09135'}\beq
(I_0,I_\la,t,y^\la,p_A,q^A), \qquad A=1,\ldots,n-m, \qquad
\la=1,\ldots, k,\label{09135'}
\eeq
such that:

(i) $(t,y^\la)$ are coordinates on the toroidal cylinder
(\ref{g120z}),

(ii) the symplectic form $\Om_T$ on $U'$ reads
\be
\Om_T= dI_0\w dt + dI_\al\w dy^\al + dp_A\w dq^A,
\ee

(iii) the action coordinates $(I_0,I_\al)$ are expressed in the
values of the Casimir functions $C_0=I_0$, $C_\al$ of the
coinduced Poisson structure
\be
w=\dr^A\w\dr_A
\ee
on $N'$,

(iv) a homogeneous Hamiltonian $\cH^*$ depends on the action
coordinates, namely, $\cH^*=I_0$,

(iv) the integrals of motion $\zeta^*\F_1, \ldots \zeta^*\F_k$ are
independent of the coordinates $(t,y^\la)$.

Provided with the generalized action-angle coordinates
(\ref{09135'}), the above mentioned neighborhood $U'$ is a trivial
bundle $U'=\mathbb R\times U_M$ where $U_M=\zeta(U')$ is an open
neighborhood of the fibre $M$ of the fibre bundle $\F$
(\ref{nc4x}). As a result, we come to Theorem \ref{nc0'x}.

\section{Quantum superintegrable systems}

To quantize classical Hamiltonian systems, one usually follows
canonical quantization which replaces the Poisson bracket
$\{f,f'\}$ of smooth functions with the bracket $[\wh f,\wh f']$
of Hermitian operators in a Hilbert space such that Dirac's
condition
\mar{qm514}\beq
[\wh f,\wh f']=-i\wh{\{f,f'\}} \label{qm514}
\eeq
holds. Canonical quantization of Hamiltonian non-relativistic
mechanics on a configuration space $Q\to\mathbb R$ is geometric
quantization \cite{jmp02a,book05,book10}. In the case of
integrable Hamiltonian systems, there is a reason that, since a
Hamiltonian of an integrable system depends only on action
variables (Proposition \ref{zz1}), it seems natural to provide the
Schr\"odinger representation of action variables by first order
differential operators on functions of angle coordinates

For the sake of simplicity, symplectic and Poisson manifolds
throughout this Section are assumed to be simple connected (see
Remark \ref{kk105}). Geometric quantization of toroidal cylinders
possessing a non-trivial first homotopy group is considered in
Section 6.4.

\subsection{Geometric quantization of symplectic manifolds}

We start with the basic geometric quantization of symplectic
manifolds \cite{eche98,book05,book10,sni}. It falls into the
following three steps: prequantization, polarization and
metaplectic correction.

Let $(Z,\Om)$ be a $2m$-dimensional simply connected symplectic
manifold. Let $C\to Z$ be  a complex line bundle whose typical
fibre is $\mathbb C$. It is coordinated by $(z^\la,c)$ where $c$
is a complex coordinate.

\begin{proposition} \label{0220} \mar{0220}
By virtue of the well-known theorems \cite{hir,book00}, the
structure group of a complex line bundle $C\to Z$ is reducible to
$U(1)$ such that:

$\bullet$ given a bundle atlas of $C\to Z$ with $U(1)$-valued
transition functions and associated bundle coordinates
$(z^\la,c)$, there exists a Hermitian fibre metric
\mar{qm513}\beq
g(c,c)=c\ol c \label{qm513}
\eeq
in $C$;

$\bullet$ for any Hermitian fibre metric $g$ in $C\to Z$, there
exists a bundle atlas of $C\to Z$ with $U(1)$-valued transition
functions such that $g$ takes the form (\ref{qm513}) with respect
to the associated bundle coordinates.
\end{proposition}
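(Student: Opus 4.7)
The plan is to verify each of the three assertions in turn: reducibility of the structure group of $C \to Z$ to $U(1)$, and then the two compatibility statements between $U(1)$-atlases and Hermitian fibre metrics. All three rest on the observation that $GL(1,\mathbb{C}) = \mathbb{C}^*$ deformation retracts onto $U(1)$, together with the paracompactness of $Z$.

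First I would establish the existence of a Hermitian fibre metric. Since $Z$ is paracompact, choose a locally finite bundle atlas $\{(U_\iota,\psi_\iota)\}$ of $C \to Z$ with trivializing sections $s_\iota$, together with a subordinate smooth partition of unity $\{f_\iota\}$. On each $U_\iota$ the formula $g_\iota(c,c) = c\bar c$ (read in the local fibre coordinate defined by $\psi_\iota$) gives a local Hermitian metric, and the convex combination $g = \sum_\iota f_\iota g_\iota$ is a globally defined Hermitian fibre metric in $C$. The $g$-unit frames then form a principal $U(1)$-subbundle of the frame bundle of $C$, which proves the reducibility of the structure group to $U(1)$.

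For the first bullet, suppose we are given a bundle atlas $\{(U_\iota,\psi_\iota)\}$ whose transition functions $\rho_{\iota\iota'} : U_\iota \cap U_{\iota'} \to U(1)$ are unitary, and let $(z^\la,c_\iota)$ be the associated bundle coordinates. On each chart set $g(c_\iota,c_\iota) = c_\iota \overline{c_\iota}$. Since $c_{\iota'} = \rho_{\iota\iota'} c_\iota$ with $|\rho_{\iota\iota'}| = 1$, we have $c_{\iota'}\overline{c_{\iota'}} = c_\iota \overline{c_\iota}$, so these local expressions patch together to a globally well-defined Hermitian fibre metric on $C$, which in every chart of the atlas takes the form \eqref{qm513}.

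For the second bullet, given an arbitrary Hermitian fibre metric $g$, I would start from any bundle atlas $\{(U_\iota,\psi_\iota)\}$ with trivializing sections $s_\iota$, and set $h_\iota = \sqrt{g(s_\iota,s_\iota)} > 0$ on $U_\iota$. Replacing $s_\iota$ by $\tilde s_\iota = h_\iota^{-1} s_\iota$ yields a new local trivialization in which $g(\tilde s_\iota, \tilde s_\iota) = 1$. The induced fibre coordinate $\tilde c_\iota = h_\iota c_\iota$ satisfies $g(\tilde c_\iota, \tilde c_\iota) = \tilde c_\iota \overline{\tilde c_\iota}$. Moreover, the new transition functions $\tilde \rho_{\iota\iota'} = (h_{\iota'}/h_\iota)\rho_{\iota\iota'}$ relate two $g$-orthonormal frames, hence are $U(1)$-valued. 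Thus the retrivialized atlas is the desired $U(1)$-atlas with respect to which $g$ takes the form \eqref{qm513}. The only delicate point is verifying that the $h_\iota^{-1}$-rescaling is smooth and strictly positive, which follows at once from positivity and smoothness of the Hermitian metric $g$; no real obstacle appears.
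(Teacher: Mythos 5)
Your proof is correct. The paper itself gives no argument for this proposition, merely citing the standard references, and what you have written is precisely the standard proof those references contain: a partition-of-unity construction of a Hermitian fibre metric (using paracompactness of $Z$), the observation that the $g$-unit frames cut out a principal $U(1)$-subbundle of the $GL(1,\mathbb C)$-frame bundle, and the rescaling $\tilde s_\iota = g(s_\iota,s_\iota)^{-1/2}s_\iota$ of local frames to pass from an arbitrary Hermitian metric to a $U(1)$-atlas in which $g$ takes the canonical form (\ref{qm513}); all steps, including smoothness and positivity of the normalizing factors, are justified as you state.
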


Let $\cK$ be a linear connection on a fibre bundles $C\to Z$. It
reads
\mar{qm508}\beq
\cK=dz^\la\ot(\dr_\la +\cK_\la c\dr_c), \label{qm508}
\eeq
where $\cK_\la$ are local complex functions on $Z$. The
corresponding covariant differential $D^\cK$ (\ref{2116}) takes
the form
\mar{0214}\beq
D^\cK=(c_\la -\cK_\la c)dz^\la\ot\dr_c. \label{0214}
\eeq
The curvature two-form (\ref{mos4}) of the connection $K$
(\ref{qm508}) reads
\mar{0221}\beq
R=\frac12(\dr_\nu \cK_\m-\dr_\m \cK_\nu)cdz^\nu\w dz^\m\ot\dr_c.
\label{0221}
\eeq

\begin{proposition} \label{0216} \mar{0216}
A connection $A$ on a complex line bundle $C\to Z$ is a
$U(1)$-principal connection iff there exists an $A$-invariant
Hermitian fibre metric $g$ in $C$, i.e.,
\be
d_H(g(c,c))=g(D^Ac,c) + g(c,D^Ac).
\ee
With respect to the bundle coordinates $(z^\la,c)$ in Proposition
\ref{0220}, this connection reads
\mar{qm508a}\beq
A=dz^\la\ot(\dr_\la +iA_\la c\dr_c), \label{qm508a}
\eeq
where $A_\la$ are local real functions on $Z$.
\end{proposition}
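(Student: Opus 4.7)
The plan is to exploit Proposition \ref{0220} to work in a preferred bundle atlas where the Hermitian fibre metric is in canonical form $g(c,c)=c\ol c$, and then to translate the invariance identity into a pointwise algebraic condition on the coefficients $\cK_\la$ of the generic linear connection (\ref{qm508}). The key observation is that reduction of the structure group of $C\to Z$ to $U(1)$ gives, simultaneously, both a bundle atlas in which $g(c,c)=c\ol c$ and the requirement that the Lie-algebra-valued local connection forms take values in $\mathrm{Lie}\,U(1)=i\mathbb R$; conversely, an $A$-invariant $g$ singles out such an atlas via the second clause of Proposition \ref{0220}.

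For the $(\Leftarrow)$ direction, pick a bundle atlas with $U(1)$-valued transition functions in which $g(c,c)=c\ol c$, and write $A$ in the generic form (\ref{qm508}) with \emph{a priori} complex coefficients $\cK_\la$. Using (\ref{0214}) and its conjugate $D^A\ol c=(\ol c_\la-\ol\cK_\la\,\ol c)\,dz^\la$, the invariance identity $d_H(c\ol c)=g(D^Ac,c)+g(c,D^Ac)$ expands to
\[
c_\la\ol c+c\,\ol c_\la=(c_\la-\cK_\la c)\ol c+c(\ol c_\la-\ol\cK_\la\ol c),
\]
which reduces to $(\cK_\la+\ol\cK_\la)\,c\ol c=0$. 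Since this holds fibrewise, $\cK_\la=iA_\la$ with $A_\la$ real, so $A$ acquires exactly the form (\ref{qm508a}); because its local connection forms $iA_\la\,dz^\la$ take values in $i\mathbb R$ and the transition functions are $U(1)$-valued, $A$ is a principal connection on the associated $U(1)$-bundle.

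For the $(\Rightarrow)$ direction, suppose $A$ is a $U(1)$-principal connection. In any atlas induced by the reduction to $U(1)$ the local connection forms take values in $i\mathbb R$, hence $\cK_\la=iA_\la$ with $A_\la$ real; by the first clause of Proposition \ref{0220} the same atlas carries a Hermitian metric of the form $g(c,c)=c\ol c$. Reversing the algebraic computation above shows that $(\cK_\la+\ol\cK_\la)c\ol c=0$ holds, and therefore the invariance identity is satisfied, i.e., $g$ is $A$-invariant.

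The only real subtlety is the identification between a \emph{linear} connection on the line bundle $C$ and a \emph{principal} connection on its associated $U(1)$-bundle; this is handled once and for all by invoking Proposition \ref{0220} to align the two preferred atlases. After that, everything reduces to a short calculation with complex conjugation, with no further obstacle.
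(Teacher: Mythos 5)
Your argument is correct: the paper states Proposition \ref{0216} without proof (it is quoted as a standard fact of geometric quantization), and your computation is exactly the standard one — reduce to the atlas of Proposition \ref{0220} where $g(c,c)=c\ol c$, expand the invariance identity via (\ref{0214}) to get $(\cK_\la+\ol\cK_\la)c\ol c=0$, i.e.\ ${\rm Re}\,\cK_\la=0$, and identify purely imaginary local coefficients in a $U(1)$-atlas with a principal connection. The one point worth making explicit is that the condition ${\rm Re}\,\cK_\la=0$ is preserved under the $U(1)$-valued transition functions (since $\rho^{-1}\dr_\la\rho\in i\mathbb R$ when $|\rho|=1$), which is what makes the linear-to-principal identification in your closing paragraph legitimate.
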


The curvature $R$ (\ref{0221}) of the $U(1)$-principal connection
$A$ (\ref{qm508a}) defines the  first Chern characteristic form
\mar{0222,3}\ben
&& c_1(A)=-\frac{1}{4\pi}(\dr_\nu A_\m-\dr_\m A_\nu)cdz^\nu\w
dz^\m, \label{0222}\\
&& R=-2\pi i c_1\ot u_C, \label{0223}
\een
where
\mar{0215}\beq
u_C=c\dr_c \label{0215}
\eeq
is the Liouville vector field (\ref{z112'}) on $C$. The Chern form
(\ref{0222}) is closed, but it need not be exact because $A_\m
dz^\m$ is not a one-form on $Z$ in general.

\begin{definition} \label{+308} \mar{+308}
A complex line bundle $C\to Z$ over a symplectic manifold
$(Z,\Om)$ is called a prequantization bundle if a form
$(2\pi)^{-1}\Om$ on $Z$ belongs to the first Chern characteristic
class of $C$.
\end{definition}

A prequantization bundle, by definition, admits a $U(1)$-principal
connection $A$, called an admissible connection, whose curvature
$R$ (\ref{0221}) obeys the relation
\mar{gm503}\beq
R=- i\Om\ot u_C,  \label{qm503}
\eeq
called the  admissible condition.

\begin{remark} \label{kk105} \mar{kk105}
Let $A$ be the admissible connection (\ref{qm508a}) and $B=B_\m
dz^\m$ a closed one-form on $Z$. Then
\mar{kk106}\beq
A'=A+ icB\ot\dr_c \label{kk106}
\eeq
also is an admissible connection. Since a manifold $Z$ is assumed
to be simply connected, a closed one-form $B$ is exact. In this
case, connections $A$ and $A'$ (\ref{kk106}) are gauge conjugate
\cite{book00}.
\end{remark}

Given an admissible connection $A$, one can assign to each
function $f\in C^\infty(Z)$ the $C$-valued first order
differential operator $\wh f$ on a fibre bundle $C\to Z$ in
accordance with  Kostant -- Souriau formula
\mar{qm506}\beq
\wh f= -i\vt_f\rfloor D^A -fu_C=-[i\vt_f^\la(c_\la -iA_\la c) +
fc]\dr_c, \label{qm506}
\eeq
where $D^A$ is the covariant differential (\ref{0214}) and $\vt_f$
is the Hamiltonian vector field of $f$. It is easily justified
that the operators (\ref{qm506}) obey Dirac's condition
(\ref{qm514}) for all elements $f$ of the Poisson algebra
$C^\infty(Z)$.

The Kostant -- Souriau formula (\ref{qm506}) is called the
prequantization because, in order to obtain Hermitian operators
$\wh f$ (\ref{qm506}) acting on a Hilbert space, one should
restrict both a class of functions $f\in C^\infty(Z)$ and a class
of sections of $C\to Z$ in consideration as follows.

Given a symplectic manifold $(Z,\Om)$, by its  polarization is
meant a maximal involutive distribution ${\bf T}\subset TZ$ such
that
\be
\Om(\vt,\up)=0, \qquad \vt,\up\in\bT_z, \qquad z\in Z.
\ee
This term also stands for the algebra $\cT_\Om$ of sections of the
distribution ${\bf T}$. We denote by $\cA_\cT$ the subalgebra of
the Poisson algebra $C^\infty(Z)$ which consists of the functions
$f$ such that
\be
[\vt_f,\cT_\Om]\subset \cT_\Om.
\ee
It is called the  quantum algebra of a symplectic manifold
$(Z,\Om)$. Elements of this algebra only are quantized.

In order to obtain the carrier space of the algebra $\cA_\cT$, let
us assume that $Z$ is oriented and that its cohomology
$H^2(Z;\mathbb Z_2)$ with coefficients in the constant sheaf
$\mathbb Z_2$ vanishes. In this case, one can consider the
metalinear complex line bundle $\cD_{1/2}[Z]\to Z$ characterized
by a bundle atlas $\{(U;z^\la,r)\}$ with the transition functions
\mar{0232}\beq
r'=Jr, \qquad J\ol J=\left|\det\left(\frac{\dr z^\m}{\dr
z'^\nu}\right)\right|. \label{0232}
\eeq
Global sections $\rho$ of this bundle are called the
half-densities on $Z$ \cite{eche98,book10}. Note that the
metalinear bundle $\cD_{1/2}[Z]\to Z$ admits the canonical lift of
any vector field $u$ on $Z$ such that the corresponding Lie
derivative of its sections reads
\mar{0231}\beq
\bL_u=u^\la\dr_\la+\frac12\dr_\la u^\la. \label{0231}
\eeq

Given an admissible connection $A$, the prequantization formula
(\ref{qm506}) is extended to sections $s\ot\rho$ of the fibre
bundle
\mar{0260}\beq
C\op\ot_Z\cD_{1/2}[Z]\to Z \label{0260}
\eeq
as follows:
\mar{qm515}\ben
&& \wh f(s\ot\rho)= (-i\nabla_{\vt_f}- f)(s\ot\rho)=
(\wh fs)\ot\rho +s\ot\bL_{\vt_f}\rho, \label{qm515}\\
&& \nabla_{\vt_f}(s\ot\rho)=(\nabla_{\vt_f}^As)\ot\rho
+s\ot\bL_{\vt_f}\rho, \nonumber
\een
where $\bL_{\vt_f}\rho$ is the Lie derivative (\ref{0231}) acting
on half-densities. This extension is said to be the metaplectic
correction, and the tensor product (\ref{0260}) is called the
quantization bundle. One can think of its sections $\vr$ as being
$C$-valued  half-forms. It is readily observed that the operators
(\ref{qm515}) obey Dirac's condition (\ref{qm514}). Let us denote
by $\gE_Z$  a complex vector space of sections $\vr$ of the fibre
bundle $C\ot\cD_{1/2}[Z]\to Z$ of compact support such that
\mar{kk120}\ben
&&\nabla_\up \vr=0, \qquad  \up\in \cT_\Om, \label{kk120}\\
&& \nabla_\up \vr=\nabla_\up(s\ot\rho)= (\nabla_\up^As)\ot\rho
+s\ot\bL_\up\rho. \nonumber
\een

\begin{lemma} \label{+310} \mar{+310}
For any function $f\in\cA_\cT$ and an arbitrary section $\vr\in
\gE_Z$, the relation $\wh f\vr\in\gE_Z$ holds.
\end{lemma}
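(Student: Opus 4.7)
The plan is to verify the two conditions defining membership in $\gE_Z$ for the section $\wh f \vr$: namely, that it has compact support, and that $\nabla_\up(\wh f\vr)=0$ for every $\up\in\cT_\Om$. The first is automatic because $\wh f=-i\nabla_{\vt_f}-f$ is a first order linear differential operator on sections of $C\ot\cD_{1/2}[Z]$, hence it preserves compact support.

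For the parallelism condition I would begin with the Leibniz expansion
\[
\nabla_\up(\wh f\vr)=-i\nabla_\up\nabla_{\vt_f}\vr-\up(f)\vr-f\nabla_\up\vr,
\]
and drop the last term at once since $\vr\in\gE_Z$. Next I would apply the commutator identity
\[
[\nabla_\up,\nabla_{\vt_f}]=\nabla_{[\up,\vt_f]}+R(\up,\vt_f)
\]
for the total connection $\nabla$ on $C\ot\cD_{1/2}[Z]$, whose half-density piece commutes as $[\bL_\up,\bL_{\vt_f}]=\bL_{[\up,\vt_f]}$, so that all curvature is supplied by the $U(1)$-connection $A$ on $C$. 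Since $\nabla_{\vt_f}\nabla_\up\vr=0$, this reduces $\nabla_\up(\wh f\vr)$ to $-i\nabla_{[\up,\vt_f]}\vr-iR(\up,\vt_f)\vr-\up(f)\vr$.

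The proof then closes on two independent inputs. The assumption $f\in\cA_\cT$ gives $[\vt_f,\cT_\Om]\subset\cT_\Om$, so $[\up,\vt_f]\in\cT_\Om$ and consequently $\nabla_{[\up,\vt_f]}\vr=0$. The admissible condition (\ref{qm503}) $R=-i\Om\ot u_C$ makes $R(\up,\vt_f)$ act on linear sections as multiplication by $-i\Om(\up,\vt_f)$, and from $\vt_f\rfloor\Om=-df$ one reads off $\Om(\up,\vt_f)=\up(f)$. Substituting these, the curvature contribution and the Leibniz term $-\up(f)\vr$ cancel (in the normalization forced by Dirac's condition (\ref{qm514})), leaving $\nabla_\up(\wh f\vr)=0$, as required.

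The main obstacle, and essentially the only nontrivial step, is the careful execution of the commutator identity on the quantization bundle: one has to separate the $U(1)$-curvature contribution on $C$ from the purely Lie-algebraic commutator of half-density Lie derivatives on $\cD_{1/2}[Z]$, and then verify that the two sign conventions chosen in the Kostant--Souriau formula (\ref{qm506}) and the admissible condition (\ref{qm503}) interlock so that the $\up(f)$-term produced by commuting $\nabla_\up$ past $f$ is exactly annihilated by the curvature term. This is precisely the mechanism that underlies Dirac's condition, so consistency is guaranteed; one just has to track the signs honestly.
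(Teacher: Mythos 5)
Your argument is correct, and it is the standard proof of this lemma; the paper itself states Lemma \ref{+310} without proof, so there is nothing to diverge from. All the essential ingredients are in place: compact support is preserved because $\wh f$ is a differential operator; the term $f\nabla_\up\vr$ and the term $\nabla_{\vt_f}\nabla_\up\vr$ drop by $\vr\in\gE_Z$; the commutator term $\nabla_{[\up,\vt_f]}\vr$ vanishes precisely by the defining property $[\vt_f,\cT_\Om]\subset\cT_\Om$ of $\cA_\cT$; and the curvature term cancels the Leibniz term $\up(f)\vr$ via the admissible condition (\ref{qm503}) together with $\Om(\up,\vt_f)=\up(f)$, which you correctly extract from (\ref{z100}). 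The only point you leave implicit is the resolution of the sign: with the paper's curvature convention, which by (\ref{0221}) amounts to $[\nabla_X,\nabla_Y]=\nabla_{[X,Y]}-R(X,Y)$, one gets $-i\nabla_\up\nabla_{\vt_f}\vr=-i\bigl(iR\,\Om(\up,\vt_f)\text{-type term}\bigr)=+\up(f)\vr$, which cancels $-\up(f)\vr$ exactly, so the computation does close as you assert.
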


Thus, we have a representation of the quantum algebra $\cA_\cT$ in
the space $\gE_Z$. Therefore, by quantization of a function
$f\in\cA_\cT$ is meant the restriction of the operator $\wh f$
(\ref{qm515}) to $\gE_Z$.

Let $g$ be an $A$-invariant Hermitian fibre metric in $C\to Z$ in
accordance with Proposition \ref{0216}. If $\gE_Z\neq 0$, the
Hermitian form
\mar{qm522}\beq
\lng s_1\ot\rho_1| s_2\ot\rho_2\rng= \op\int_Z
g(s_1,s_2)\rho_1\ol\rho_2 \label{qm522}
\eeq
brings $\gE_Z$ into a pre-Hilbert space. Its completion $\ol\gE_Z$
is called a  quantum Hilbert space,  and the operators $\wh f$
(\ref{qm515}) in this Hilbert space are Hermitian.

In particular, let us consider the standard geometric quantization
of a cotangent bundle \cite{eche98,book05,book10,sni}.

Let $M$ be an $m$-dimensional simply connected smooth manifold
coordinated by $(q^i)$. Its cotangent bundle $T^*M$ is simply
connected. It is provided with the canonical symplectic form
$\Om_T$ (\ref{m83}) written with respect to holonomic coordinates
$(q^i,p_i=\dot q_i)$ on $T^*M$. Let us consider the trivial
complex line bundle
\mar{qm501}\beq
C=T^*M\times\mathbb C\to T^*M. \label{qm501}
\eeq
The canonical symplectic form (\ref{m83}) on $T^*M$ is exact,
i.e., it has the same zero de Rham cohomology class as the first
Chern class of the trivial $U(1)$-bundle $C$ (\ref{qm501}).
Therefore, $C$ is a prequantization bundle in accordance with
Definition \ref{+308}.

Coordinated by $(q^i,p_i,c)$, this bundle is provided with the
admissible connection (\ref{qm508a}):
\mar{qm502}\beq
A=dp_j\ot\dr^j +dq^j\ot(\dr_j- ip_jc\dr_c) \label{qm502}
\eeq
such that the condition (\ref{qm503}) is satisfied. The
corresponding $A$-invariant fibre metric in $C$ is given by the
expression (\ref{qm513}). The covariant derivative of sections $s$
of the prequantization bundle $C$ (\ref{qm501}) relative to the
connection $A$ (\ref{qm502}) along the vector field $u=u^j\dr_j
+u_j\dr^j$ on $T^*M$ reads
\mar{qq12}\beq
\nabla_u s= u^j(\dr_j + i p_j)s +u_j\dr^j s. \label{qq12}
\eeq
Given a function $f\in C^\infty(T^*M)$ and its Hamiltonian vector
field
\be
\vt_f=\dr^i f\dr_i -\dr_i f\dr^i,
\ee
the covariant derivative (\ref{qq12}) along $\vt_f$ is
\be
\nabla_{\vt_f}s= \dr^i f(\dr_i +  ip_i)s - \dr_i f\dr^i s.
\ee
With the connection $A$ (\ref{qm502}), the prequantization
(\ref{qm506}) of elements $f$ of the Poisson algebra
$C^\infty(T^*M)$ takes the form
\mar{qm504}\beq
\wh f=-i\dr^jf(\dr_j+ ip_j) +i\dr_jf\dr^j -  f. \label{qm504}
\eeq

Let us note that, since the complex line bundle (\ref{qm501}) is
trivial, its sections are simply smooth complex functions on
$T^*M$. Then the prequantum operators (\ref{qm504}) can be written
in the form
\mar{xjj}\beq
\wh f=-i\bL_{\vt_f} +(\bL_\up f-f), \label{xjj}
\eeq
where $\up=p_j\dr^j$ is the Liouville vector field (\ref{z112'})
on $T^*M\to M$.

It is readily observed that the vertical tangent bundle $VT^*M$ of
the cotangent bundle $T^*M\to M$ provides a polarization of
$T^*M$. Certainly, it is not a unique polarization of $T^*M$. We
call $VT^*M$ the vertical polarization. The corresponding quantum
algebra $\cA_T\subset C^\infty(T^*M)$ consists of affine functions
of momenta
\mar{xci13'}\beq
f=a^i(q^j)p_i +b(q^j) \label{xci13'}
\eeq
on $T^*M$. Their Hamiltonian vector fields read
\mar{kk132}\beq
\vt_f=a^i\dr_i-(p_j\dr_i a^j +\dr_i b)\dr^i. \label{kk132}
\eeq
We call $\cA_T$ the quantum algebra of a cotangent bundle.

Since the Jacobain of holonomic coordinate transformations of the
cotangent bundle $T^*M$ equals 1, the geometric quantization of
$T^*M$ need no metaplectic correction. Consequently, the quantum
algebra $\cA_T$ of the affine functions (\ref{xci13'}) acts on the
subspace $\gE_{T^*M}\subset C^\infty(T^*M)$ of complex functions
of compact support on $T^*M$ which obey the condition
(\ref{kk120}):
\be
\nabla_\up s=\up_i\dr^i s=0, \qquad \cT_\Om\ni \up=\up_i\dr^i.
\ee
A glance at this equality shows that elements of $\gE_{T^*M}$ are
independent of momenta $p_i$, i.e., they are the pull-back of
complex functions on $M$ with respect to the fibration $T^*M\to
M$. These functions fail to be of compact support, unless $s=0$.
Consequently, the carrier space $\gE_{T^*M}$ of the quantum
algebra $\cA_T$ is reduced to zero. One can overcome this
difficulty as follows.

Given the canonical zero section $\wh 0(M)$ of the cotangent
bundle $T^*M \to M$, let
\mar{kk130}\beq
C_M=\wh 0(M)^*C \label{kk130}
\eeq
be the pull-back of the complex line bundle $C$ (\ref{qm501}) over
$M$. It is a trivial complex line bundle $C_M=M\times\mathbb C$
provided with the pull-back Hermitian fibre metric $g(c,c')=c\ol
c'$ and the pull-back (\ref{mos82}):
\be
A_M =\wh 0(M)^*A= dq^j\ot(\dr_j - ip_jc\dr_c)
\ee
of the connection $A$ (\ref{qm502}) on $C$. Sections of $C_M$ are
smooth complex functions on $M$. One can consider a representation
of the quantum algebra $\cA_T$ of the affine functions
(\ref{xci13'}) in the space of complex functions on $M$ by the
prequantum operators (\ref{qm504}):
\be
\wh f=-ia^j\dr_j - b.
\ee
However, this representation need a metaplectic correction.

Let us assume that $M$ is oriented and that its cohomology
$H^2(M;\mathbb Z_2)$ with coefficients in the constant sheaf
$\mathbb Z_2$ vanishes. Let $\cD_{1/2}[M]$ be the metalinear
complex line over $M$. Since the complex line bundle $C_M$
(\ref{kk130}) is trivial, the quantization bundle (\ref{0260}):
\mar{kk131}\beq
C_M\op\ot_M \cD_{1/2}[M]\to M \label{kk131}
\eeq
is isomorphic to $\cD_{1/2}[M]$.

Because the Hamiltonian vector fields (\ref{kk132}) of functions
$f$ (\ref{xci13'}) project onto vector fields $a^j\dr_j$ on $M$
and $\bL_\up f-f=-b$ in the formula (\ref{xjj}) is a function on
$M$, one can assign to each element $f$ of the quantum algebra
$\cA_T$ the following first order differential operator in the
space $\cD_{1/2}(M)$ of complex half-densities $\rho$ on $M$:
\mar{qq82}\beq
\wh f\rho=(-i\bL_{a^j\dr_j} -b)\rho= (-ia^j\dr_j -\frac{i}{2}\dr_j
a^j- b) \rho,\label{qq82}
\eeq
where $\bL_{a^j\dr_j}$ is the Lie derivative (\ref{0231}) of
half-densities. A glance at the expression (\ref{qq82}) shows that
it is the  Schr\"odinger representation of the quantum algebra
$A_T$ of the affine functions (\ref{xci13'}). We call $\wh f$
(\ref{qq82}) the  Schr\"odinger operators.

Let $\gE_M\subset\cD_{1/2}(M)$ be a space of complex
half-densities $\rho$ of compact support on $M$ and $\ol \gE_M$
the completion of $\gE_M$ with respect to the non-degenerate
Hermitian form
\mar{nn}\beq
\lng \rho|\rho'\rng=\op\int_Q \rho\ol\rho'. \label{nn}
\eeq
The (unbounded) Schr\"odinger operators (\ref{qq82}) in the domain
$\gE_M$ of the Hilbert space $\ol \gE_M$ are Hermitian.

\subsection{Leafwise geometric quantization}

Developed for symplectic manifolds \cite{eche98,sni}, the
geometric quantization technique has been generalized to Poisson
manifolds in terms of contravariant connections
\cite{vais91,vais}. Though there is one-to-one correspondence
between  the Poisson structures on a smooth manifold and its
symplectic foliations, geometric quantization of a Poisson
manifold need not imply quantization of its symplectic leaves
\cite{book10,vais97}.

$\bullet$ Firstly, contravariant connections fail to admit the
pull-back operation. Therefore, prequantization of a Poisson
manifold does not determine straightforwardly prequantization of
its symplectic leaves.

$\bullet$ Secondly, polarization of a Poisson manifold is defined
in terms of sheaves of functions, and it need not be associated to
any distribution. As a consequence, its pull-back onto a leaf is
not polarization of a symplectic manifold in general.

$\bullet$ Thirdly, a quantum algebra of a Poisson manifold
contains  the center of a Poisson algebra. However, there are
models  where quantization of this center has no physical meaning.
For instance, a center of the Poisson algebra of a mechanical
system with classical parameters consists of functions of these
parameters.

Geometric quantization of symplectic foliations disposes of these
problems. A quantum algebra $\cA_\cF$ of a symplectic foliation
$\cF$ also is a quantum algebra of the associated Poisson manifold
such that its restriction to each symplectic leaf $F$ is a quantum
algebra of $F$. Thus, geometric quantization of a symplectic
foliation provides leafwise quantization of a Poisson manifold
\cite{jmp02,book05,book10}.

Geometric quantization of a symplectic foliation is phrased in
terms of leafwise connections on a foliated manifold (see
Definition \ref{lmp100} below). Any leafwise connection on a
complex line bundle over a Poisson manifold is proved to come from
a connection on this bundle (Theorem \ref{lmp42}).  Using this
fact, one can state the equivalence of prequantization of a
Poisson manifold to prequantization of its symplectic foliation
\cite{book10}, which also yields prequantization of each
symplectic leaf (Proposition \ref{lmp70}). We show that
polarization of a symplectic foliation is associated to particular
polarization of a Poisson manifold (Proposition \ref{lmp72}), and
its restriction to any symplectic leaf is polarization of this
leaf (Proposition \ref{lmp75}). Therefore, a quantum algebra of a
symplectic foliation is both a quantum algebra of a Poisson
manifold and, restricted to each symplectic leaf, a quantum
algebra of this leaf.

We define metaplectic correction of a symplectic foliation so that
its quantum algebra is represented by Hermitian operators in the
pre-Hilbert module of leafwise half-forms, integrable over the
leaves of this foliation.

Let $(Z,\{,\})$ be a Poisson manifold and $(\cF,\Om_\cF)$ its
symplectic foliation such that $\{,\}=\{,\}_\cF$ (see
(\ref{spr902})). Let leaves of $\cF$ be simply connected.

Prequantization of a symplectic foliation $(\cF,\Om_\cF)$ provides
a representation
\mar{lqm514}\beq
f\to \wh f, \qquad [\wh f,\wh f']=-i\wh{\{f,f'\}}_\cF,
\label{lqm514}
\eeq
of the Poisson algebra $(C^\infty(Z),\{,\}_\cF)$ by first order
differential operators on sections $s$ of some complex line bundle
$C\to Z$, called the  prequantization bundle. These operators are
given by the Kostant -- Souriau  prequantization formula
\mar{lqq46}\beq
\wh f=-i\nabla_{\vt_f}^\cF s +\ve fs, \qquad \vt_f=\Om_\cF^\sh(\wt
df), \qquad \ve\neq 0, \label{lqq46}
\eeq
where $\nabla^\cF$ is an admissible leafwise connection on $C\to
Z$  such that its curvature form $\wt R$ (\ref{lmp08}) obeys the
admissible condition
\mar{lmp61}\beq
\wt R=i\ve\Om_\cF\ot u_C, \label{lmp61}
\eeq
where $u_C$ is the Liouville vector field (\ref{0215}) on $C$.

Using the above mentioned fact that any leafwise connection comes
from a connection, we show  that prequantization of a symplectic
foliation yields prequantization of its symplectic leaves.

\begin{remark} If $Z$ is a symplectic manifold whose symplectic
foliation is reduced to $Z$ itself, the formulas (\ref{lqq46}) --
(\ref{lmp61}), $\ve=-1$, of leafwise prequantization restart the
formulas (\ref{qm506}) and (\ref{qm503}) of geometric quantization
of a symplectic manifold $Z$.
\end{remark}

Let $S_\cF(Z)\subset C^\infty(Z)$ be a subring of functions
constant on leaves of a foliation $\cF$, and let $\cT_1(\cF)$ be
the real Lie algebra of global sections of the tangent bundle
$T\cF\to Z$ to $\cF$. It is the Lie $S_\cF(Z)$-algebra of
derivations of $C^\infty(Z)$, regarded as a $S_\cF(Z)$-ring.

\begin{definition} \label{lmp100} \mar{lmp100}
In the framework of the leafwise differential calculus $\gF^*(Z)$
(\ref{spr892}), a (linear)  leafwise connection on a complex line
bundle $C\to Z$  is defined as a connection $\nabla^\cF$ on the
$C^\infty(Z)$-module $C(Z)$ of global sections of this bundle,
where $C^\infty(Z)$ is regarded as an $S_\cF(Z)$-ring (see
Definition \ref{1016}). It associates to each element $\tau\in
\cT_1(\cF)$ an $S_\cF(Z)$-linear endomorphism $\nabla_\tau^\cF$ of
$C(Z)$ which obeys the Leibniz rule
\mar{lmp55}\beq
\nabla_\tau^\cF(fs)=(\tau\rfloor\wt df)s +f\nabla_\tau^\cF(s),
\qquad f\in C^\infty(Z), \qquad s\in C(Z). \label{lmp55}
\eeq
\end{definition}

A linear connection on $C\to Z$ can be equivalently defined as a
connection on the module $C(Z)$ which assigns to each vector field
$\tau\in \cT_1(Z)$ on $Z$ an $\mathbb R$-linear endomorphism of
$C(Z)$ obeying the Leibniz rule (\ref{lmp55}). Restricted to
$\cT_1(\cF)$, it obviously yields a leafwise connection. In order
to show that any leafwise connection is of this form, we appeal to
an alternative definition of a leafwise connection in terms of
leafwise forms.

The inverse images $\pi^{-1}(F)$ of leaves $F$ of the foliation
$\cF$ of $Z$ provide a (regular) foliation $C_\cF$ of the line
bundle $C$. Given the (holomorphic) tangent bundle $TC_\cF$ of
this foliation, we have the exact sequence of vector bundles
\mar{lmp18}\beq
0\to VC\ar_C TC_\cF\ar_C C\op\times_Z T\cF\to 0, \label{lmp18}
\eeq
where $VC$ is the (holomorphic) vertical tangent bundle of $C\to
Z$.

\begin{definition} \label{lmp101} \mar{lmp101}
A (linear) leafwise connection on the complex line bundle $C\to Z$
is a splitting of the exact sequence (\ref{lmp18}) which is linear
over $C$.
\end{definition}

One can choose an adapted coordinate atlas $\{(U_\xi;z^\la,
z^i)\}$ (\ref{spr850}) of a foliated manifold $(Z,\cF)$ such that
$U_\xi$ are trivialization domains of the complex line bundle
$C\to Z$. Let $(z^\la, z^i,c)$, $c\in\mathbb C$, be the
corresponding bundle coordinates on $C\to Z$. They also are
adapted coordinates on the foliated manifold $(C,C_\cF)$. With
respect to these coordinates, a (linear) leafwise connection is
represented by a $TC_\cF$-valued leafwise one-form
\mar{lmp21}\beq
A_\cF=\wt dz^i\ot(\dr_i +A_ic\dr_c), \label{lmp21}
\eeq
where $A_i$ are local complex functions on $C$.

The exact sequence (\ref{lmp18}) is obviously a subsequence of the
exact sequence
\be
0\to VC\ar_C TC\ar_C C\op\times_Z TZ\to 0,
\ee
where $TC$ is the holomorphic tangent bundle of $C$. Consequently,
any connection
\mar{lmp103}\beq
\cK=dz^\la\ot(\dr_\la + \cK_\la c\dr_c) + dz^i\ot(\dr_i
+\cK_ic\dr_c) \label{lmp103}
\eeq
on the complex line bundle $C\to Z$ yields a leafwise connection
\mar{lmp23}\beq
\cK_\cF=\wt dz^i\ot(\dr_i +\cK_ic\dr_c). \label{lmp23}
\eeq

\begin{theorem} \label{lmp42} \mar{lmp42}
Any leafwise connection on the complex line bundle $C\to Z$ comes
from a connection on it \cite{book10}.
\end{theorem}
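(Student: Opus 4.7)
The plan is to reduce the statement to the surjectivity of the natural pull-back $i_\cF^{*}:T^{*}Z\to T\cF^{*}$ on cotangent bundles. The set of linear connections on the complex line bundle $C\to Z$ is an affine space modelled on the $C^\infty(Z)$-module of one-forms on $Z$ with values in $\mathrm{End}\,C\cong Z\times\mathbb{C}$. Analogously, leafwise connections on $C$ form an affine space modelled on $\mathbb{C}$-valued leafwise one-forms on $(Z,\cF)$. In adapted coordinates, the restriction map $\cK\mto\cK_\cF$ defined by (\ref{lmp23}) annihilates $dz^\la$ and sends $dz^i$ to $\wt dz^i$, so it is affine over the $C^\infty(Z)$-linear surjection $i_\cF^{*}$ applied coefficient-wise.

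First I would pick an arbitrary linear connection $\cK_0$ on $C\to Z$, which exists by paracompactness of $Z$; for instance, any $U(1)$-principal connection provided by Proposition \ref{0216} will do. Its leafwise restriction $(\cK_0)_\cF$ is a leafwise connection, so
\[
\vf=A_\cF-(\cK_0)_\cF
\]
is a $\mathbb{C}$-valued leafwise one-form on $Z$. Next I would lift $\vf$ to a $\mathbb{C}$-valued one-form $\Phi$ on $Z$ satisfying $i_\cF^{*}\Phi=\vf$. Such a lift exists because $i_\cF^{*}$ is a surjective morphism of vector bundles over $Z$, and any surjective bundle morphism over a paracompact base admits a global smooth right inverse; concretely, the exact sequence
\[
0\to\rA T\cF\to T^{*}Z\to T\cF^{*}\to 0
\]
splits, for example via a Riemannian metric on $Z$, or by glueing local splittings through a partition of unity.

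Finally, setting
\[
\cK=\cK_0+\Phi\ot u_C,
\]
with $u_C$ the Liouville field (\ref{0215}), yields a linear connection on $C\to Z$, since modifying the local coefficients $\cK_\la,\cK_i$ of $\cK_0$ by the components of $\Phi$ preserves the local form (\ref{lmp103}) of a connection. By the affinity of the restriction map, $\cK_\cF=(\cK_0)_\cF+\vf=A_\cF$, which proves the theorem. The only nontrivial ingredient is the existence of the lift $\Phi$ of $\vf$, a standard consequence of paracompactness of $Z$ assumed throughout the paper; the remaining verifications reduce to routine local computations in the adapted coordinates of Definition \ref{lmp101}.
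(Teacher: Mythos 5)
Your proof is correct and is essentially the standard argument behind Theorem \ref{lmp42}: the paper itself defers the proof to \cite{book10}, where it is obtained by gluing local connections $\cK_\xi$ with $\cK_{\xi i}=A_i$ through a partition of unity, and your variant via a global splitting of $0\to\rA T\cF\to T^*Z\to T\cF^*\to 0$ rests on exactly the same two ingredients, namely the affine structure of the spaces of (leafwise) connections over the surjection $i^*_\cF$ and paracompactness of $Z$. The splitting you invoke is available by Theorem \ref{sp11} applied to the exact sequence (\ref{pp3}), so no ingredient beyond what the paper already provides is needed.
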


In particular, it follows that Definitions \ref{lmp100} and
\ref{lmp101} of a leafwise connection are equivalent, namely,
\be
\nabla^\cF s=\wt ds- A_i s\wt dz^i, \qquad s\in C(Z).
\ee

The  curvature of a leafwise connection $\nabla^\cF$ is defined as
a $C^\infty(Z)$-linear endomorphism
\mar{lmp07}\beq
\wt R(\tau,\tau')=\nabla_{[\tau,\tau']}^\cF- [\nabla_\tau^\cF,
\nabla_{\tau'}^\cF]=\tau^i \tau'^j R_{ij}, \quad R_{ij}=\dr_i
A_j-\dr_j A_i, \label{lmp07}
\eeq
of $C(Z)$ for any vector fields $\tau,\tau'\in \cT_1(\cF)$. It is
represented by the vertical-valued leafwise two-form
\mar{lmp08}\beq
\wt R=\frac12 R_{ij}\wt dz^i\w \wt dz^j\ot u_C. \label{lmp08}\\
\eeq
If a leafwise connection $\nabla^\cF$ comes from a connection
$\nabla$, its  curvature leafwise form $\wt R$ (\ref{lmp08}) is an
image $\wt R=i^*_\cF R$ of the curvature form $R$ (\ref{161'}) of
the connection $\nabla$ with respect to the morphism $i^*_\cF$
(\ref{lmp04}).

Now let us turn to the admissible condition (\ref{lmp61}).

\begin{lemma} \label{lmp63} \mar{lmp63}
Let us assume that there exists a leafwise connection $\cK_\cF$ on
the complex line bundle $C\to Z$ which fulfils the admissible
condition (\ref{lmp61}). Then, for any Hermitian fibre metric $g$
in $C\to Z$, there exists a leafwise connection $A_\cF^g$ on $C\to
Z$ which:

(i) satisfies the admissible condition (\ref{lmp61}),

(ii) preserves $g$,

(iii) comes from a $U(1)$-principal connection on $C\to Z$.

\noindent This leafwise connection $A_\cF^g$ is called admissible.
\end{lemma}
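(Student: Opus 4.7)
The plan is to construct $A_\cF^g$ as the leafwise restriction of a $U(1)$-principal connection obtained from the skew-Hermitian part (with respect to $g$) of an ordinary connection inducing $\cK_\cF$. First, I would invoke Proposition \ref{0220} to fix a bundle atlas of $C\to Z$ with $U(1)$-valued transition functions in which $g(c,c)=c\ol c$, and then use Theorem \ref{lmp42} to lift $\cK_\cF$ to a connection $\cK$ on $C\to Z$ with local coefficients $\cK_\la$. In the chosen atlas I decompose $\cK_\la=iA_\la+B_\la$ with real functions $A_\la,B_\la$.

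Next, I would verify global consistency of this decomposition. A $U(1)$-valued change of trivialization $c'=e^{i\theta}c$ produces $\cK'_\la=\cK_\la+i\dr_\la\theta$, so $B_\la$ is atlas-invariant while $A_\la$ shifts by $\dr_\la\theta$. Hence $B:=B_\la\,dz^\la$ is a globally defined real one-form on $Z$, and the coefficients $iA_\la$ patch into a $U(1)$-principal connection
\beq
A^g:=dz^\la\ot(\dr_\la+iA_\la c\dr_c)
\eeq
on $C\to Z$ by Proposition \ref{0216}, which preserves $g$ by construction.

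The final step is to match curvatures. Since abelian connection one-forms add and the curvature on a line bundle is $R=d\omega\ot u_C$, one has $R^\cK=R^{A^g}+(dB)\ot u_C$ globally on $Z$. Applying $i^*_\cF$ yields $\wt R^\cK_\cF=\wt R^{A^g}_\cF+(\wt dB_\cF)\ot u_C$. The hypothesis (\ref{lmp61}) says $\wt R^\cK_\cF=i\ve\Om_\cF\ot u_C$ has purely imaginary scalar coefficients, while $\wt R^{A^g}_\cF$ is purely imaginary (since $iA_\la$ is) and $\wt dB_\cF$ is real; separating real and imaginary parts forces both $\wt dB_\cF=0$ and $\wt R^{A^g}_\cF=i\ve\Om_\cF\ot u_C$. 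Setting $A_\cF^g:=i^*_\cF A^g$ then gives (i)--(iii) at once: (iii) is built in, (ii) follows because $A^g$ is $U(1)$-principal, and (i) is the leafwise admissibility just derived. The main point that requires genuine care is the global consistency of the decomposition $\cK_\la=iA_\la+B_\la$, which is exactly what the $U(1)$-bundle atlas of Proposition \ref{0220} secures; once that is in place, the remainder is a direct curvature comparison that works precisely because admissibility kills the real part of the leafwise curvature.
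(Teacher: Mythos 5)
Your proof is correct and follows essentially the same route as the paper: fix the $U(1)$-atlas of Proposition \ref{0220}, lift $\cK_\cF$ to a connection $\cK$ via Theorem \ref{lmp42}, split off the skew-Hermitian (purely imaginary) part to get the $U(1)$-principal connection $A^g$, and observe that restricting curvatures to leaves and taking imaginary parts preserves the admissibility condition because $i\ve\Om_\cF$ is purely imaginary. Your explicit check that the decomposition $\cK_\la=iA_\la+B_\la$ is globally consistent under $U(1)$-valued transition functions is a detail the paper leaves implicit, but it is not a different argument.
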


\begin{proof}
Given a Hermitian fibre metric $g$ in $C\to Z$, let
$\Psi^g=\{(z^\la,z^i,c)\}$ an associated bundle atlas of $C$ with
$U(1)$-valued transition functions such that $g(c,c')=c\ol c'$
(Proposition \ref{0220}). Let the above mentioned leafwise
connection $\cK_\cF$ come from a linear connection $\cK$
(\ref{lmp103}) on $C\to Z$ written with respect to the atlas
$\Psi^g$. The connection $\cK$ is split into the sum $A^g + \g$
where
\mar{lmp62}\beq
A^g=dz^\la\ot(\dr_\la + {\rm Im}(\cK_\la)c\dr_c) + dz^i\ot(\dr_i
+{\rm Im}(\cK_i) c\dr_c) \label{lmp62}
\eeq
is a $U(1)$-principal connection, preserving the Hermitian fibre
metric $g$. The curvature forms $R$ of $\cK$ and $R^g$ of $A^g$
obey the relation $R^g={\rm Im}(R)$. The connection $A^g$
(\ref{lmp62}) defines the leafwise connection
\mar{lmp73}\beq
A_\cF^g=i_\cF^*A= \wt dz^i\ot(\dr_i + iA^g_i c\dr_c), \qquad
iA^g_i= {\rm Im}(\cK_i), \label{lmp73}
\eeq
preserving the Hermitian fibre metric $g$. Its curvature fulfils a
desired relation
\mar{lmp65}\beq
\wt R^g=i_\cF^*R^g={\rm Im}(i_\cF^*R)= i\ve\Om_\cF\ot u_C.
\label{lmp65}
\eeq
{}
\end{proof}

Since $A^g$ (\ref{lmp62}) is a $U(1)$-principal connection, its
curvature form $R^g$ is related to the first Chern form of
integral de Rham cohomology class by the formula (\ref{0223}). If
the admissible condition (\ref{lmp61}) holds, the relation
(\ref{lmp65}) shows that the leafwise cohomology class of the
leafwise form $-(2\pi)^{-1}\ve\Om_\cF$ is an image of an integral
de Rham cohomology class with respect to the cohomology morphism
$[i^*_\cF]$. Conversely, if a leafwise symplectic form $\Om_\cF$
on a foliated manifold $(Z,\cF)$ is of this type, there exist a
prequantization bundle $C\to Z$ and a $U(1)$-principal connection
$A$ on $C\to Z$ such that the leafwise connection $i^*_\cF A$
fulfils the relation (\ref{lmp61}). Thus, we have stated the
following.

\begin{proposition} \label{lmp66} \mar{lmp66}
A symplectic foliation $(\cF,\Om_\cF)$ of a manifold $Z$ admits
the prequantization (\ref{lqq46}) iff the leafwise cohomology
class of $-(2\pi)^{-1}\ve\Om_\cF$ is an image of an integral de
Rham cohomology class of $Z$.
\end{proposition}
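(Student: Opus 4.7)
\medskip\noindent\textit{Proof plan.} Both directions turn on comparing the curvature formula $R=-2\pi i\,c_1(A)\ot u_C$ for a $U(1)$-principal connection (equations (\ref{0222})--(\ref{0223})) with the admissibility condition (\ref{lmp61}), and then passing through the morphism $i^*_\cF$ between the de Rham and leafwise cohomologies. Theorem \ref{lmp42} (every leafwise connection comes from a linear connection) and Lemma \ref{lmp63} (any leafwise connection satisfying (\ref{lmp61}) may be replaced by one coming from a $U(1)$-principal connection on $C$) will do the structural work; the analytical content is a single cohomology-to-form adjustment.

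\medskip\noindent\emph{Necessity.} Suppose a prequantization exists, i.e., there is a complex line bundle $C\to Z$ and a leafwise connection $\nabla^\cF$ whose curvature fulfils (\ref{lmp61}). By Lemma \ref{lmp63} we may assume, after choosing a Hermitian fibre metric $g$, that $\nabla^\cF=i^*_\cF A^g$ for a $U(1)$-principal connection $A^g$ on $C$. The Chern form $c_1(A^g)$ represents an integral de Rham class by Proposition \ref{0220} and the standard Chern--Weil theory. Applying $i^*_\cF$ to (\ref{0223}) and comparing with (\ref{lmp61}) gives the identity of leafwise two-forms $i^*_\cF c_1(A^g)=-(2\pi)^{-1}\ve\,\Om_\cF$, which on cohomology yields $-(2\pi)^{-1}\ve[\Om_\cF]=[i^*_\cF]([c_1(A^g)])$, as required.

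\medskip\noindent\emph{Sufficiency.} Conversely, assume $-(2\pi)^{-1}\ve[\Om_\cF]=[i^*_\cF\om]$ in leafwise cohomology for some closed two-form $\om$ on $Z$ whose de Rham class is integral. By the Weil construction, pick a complex line bundle $C\to Z$ with first Chern class $[\om]$ and a $U(1)$-principal connection $A$ on it whose Chern form $c_1(A)$ is exactly $\om$. Then $A_\cF=i^*_\cF A$ is a leafwise connection with curvature $-2\pi i\,(i^*_\cF\om)\ot u_C$. By the hypothesis there is a leafwise one-form $\alpha$ on $Z$ with $i^*_\cF\om+(2\pi)^{-1}\ve\,\Om_\cF=\wt d\alpha$. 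Set $\nabla^\cF:=A_\cF+2\pi i\,\alpha$, viewed as a leafwise connection on $C$; by Theorem \ref{lmp42} it still comes from a linear connection on $C$. A direct computation using (\ref{lmp07})--(\ref{lmp08}) shows that its curvature equals
\[
-2\pi i\,(i^*_\cF\om)\ot u_C+2\pi i\,\wt d\alpha\ot u_C
 \;=\;-2\pi i\bigl(i^*_\cF\om-\wt d\alpha\bigr)\ot u_C
 \;=\;i\ve\,\Om_\cF\ot u_C,
\]
so the admissible condition (\ref{lmp61}) holds. A final application of Lemma \ref{lmp63} produces an admissible leafwise connection preserving a Hermitian fibre metric, and the Kostant--Souriau formula (\ref{lqq46}) then delivers the desired prequantization.

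\medskip\noindent\emph{Main obstacle.} The only non-formal point is the converse direction, where cohomological equality must be promoted to an equality of forms. This requires two ingredients that I would verify carefully: the Weil realization of an integral class $[\om]$ by the Chern form of a genuine $U(1)$-connection (needed so that the pull-back $i^*_\cF A$ has curvature exactly $-2\pi i(i^*_\cF\om)\ot u_C$), and the fact that adding a leafwise one-form to a leafwise connection shifts the curvature by the corresponding leafwise exact two-form (so that the defect $\wt d\alpha$ between $i^*_\cF\om$ and $-(2\pi)^{-1}\ve\Om_\cF$ can be absorbed). Everything else is bookkeeping through Theorem \ref{lmp42} and Lemma \ref{lmp63}.
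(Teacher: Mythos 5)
Your proposal is correct and follows essentially the same route as the paper, which likewise derives necessity from Lemma \ref{lmp63} together with the relations (\ref{lmp65}) and (\ref{0223}), and asserts the converse via the Weil realization of an integral class by the Chern form of a $U(1)$-principal connection. Your converse is in fact spelled out a bit more carefully than the paper's one-sentence assertion, since you make explicit the correction of $i^*_\cF A$ by a leafwise one-form $\alpha$ needed to promote the cohomological equality to an equality of leafwise forms.
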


Let $F$ be a leaf of a symplectic foliation $(\cF,\Om_\cF)$
provided with the symplectic form
\be
\Om_F=i^*_F\Om_\cF.
\ee
In accordance with Proposition \ref{lmp32} and the commutative
diagram
\be
\begin{array}{ccc}
  H^*(Z;\mathbb Z) &\ar & H^*_{\rm DR}(Z)\\
  \put(0,10){\vector(0,-1){20}} & & \put(0,10){\vector(0,-1){20}}\\
H^*(F;\mathbb Z) &\ar & H^*_{\rm DR}(F)
\end{array}
\ee
of groups of the de Rham cohomology $H^*_{\rm DR}(*)$ and the
cohomology $H^*(*;\mathbb Z)$ with coefficients in the constant
sheaf $\mathbb Z$, the symplectic form $-(2\pi)^{-1}\ve \Om_F$
belongs to an integral de Rham cohomology class if a leafwise
symplectic form $\Om_\cF$ fulfils the condition of Proposition
\ref{lmp66}. This states the following.

\begin{proposition} \label{lmp70} \mar{lmp70}
If a symplectic foliation admits prequantization, each its
symplectic leaf does prequantization too.
\end{proposition}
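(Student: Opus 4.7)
The plan is to chain two integrality results: the characterization of leafwise prequantization given by Proposition \ref{lmp66} and the standard symplectic prequantization criterion applied to each leaf, using the commutative diagram displayed just before the statement as a bridge.

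First, assume that the symplectic foliation $(\cF,\Om_\cF)$ admits prequantization. Then by Proposition \ref{lmp66}, the leafwise cohomology class $[-(2\pi)^{-1}\ve\Om_\cF]\in H^2_\cF(Z)$ lies in the image of the cohomology morphism $[i^*_\cF]:H^2(Z;\mathbb Z)\to H^2_\cF(Z)$ induced from an integral de Rham class on $Z$. That is, there exists $\al\in H^2(Z;\mathbb Z)$ whose image in $H^2_{\rm DR}(Z)$ pulls back to $[-(2\pi)^{-1}\ve\Om_\cF]$ under the leafwise restriction.

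Next, fix a leaf $i_F:F\har Z$ with induced symplectic form $\Om_F=i_F^*\Om_\cF$. Invoking Proposition \ref{lmp32} (together with naturality of the de Rham/singular comparison), the pull-back of $[-(2\pi)^{-1}\ve\Om_\cF]$ under $i_F$ factors through the leafwise cohomology, so the commutative diagram
\be
\begin{array}{ccc}
H^*(Z;\mathbb Z) & \ar & H^*_{\rm DR}(Z)\\
\put(0,10){\vector(0,-1){20}} & & \put(0,10){\vector(0,-1){20}}\\
H^*(F;\mathbb Z) & \ar & H^*_{\rm DR}(F)
\end{array}
\ee
identifies $[-(2\pi)^{-1}\ve\Om_F]\in H^2_{\rm DR}(F)$ with the image of $i_F^*\al\in H^2(F;\mathbb Z)$. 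In particular $[-(2\pi)^{-1}\ve\Om_F]$ is an integral de Rham class on the symplectic manifold $(F,\Om_F)$.

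Finally, apply the standard prequantization criterion for symplectic manifolds (Definition \ref{+308} and the subsequent construction): when $[-(2\pi)^{-1}\ve\Om_F]$ is integral, there exists a complex line bundle $C_F\to F$ with a $U(1)$-principal connection $A_F$ whose curvature satisfies the admissible condition $R_F=i\ve\Om_F\ot u_{C_F}$. The Kostant--Souriau formula (\ref{qm506}) then yields prequantization of $(F,\Om_F)$. The only subtlety is the commutativity of the diagram above and the identification of the leafwise restriction with the geometric pull-back $i_F^*$; this is exactly the content of Proposition \ref{lmp32}, so no further work is required.
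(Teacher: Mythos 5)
Your proof is correct and follows essentially the same route as the paper: Proposition \ref{lmp66} gives an integral class on $Z$ whose leafwise image is $[-(2\pi)^{-1}\ve\Om_\cF]$, and Proposition \ref{lmp32} together with the commutative diagram comparing $H^*(\cdot;\mathbb Z)$ with $H^*_{\rm DR}(\cdot)$ shows $[-(2\pi)^{-1}\ve\Om_F]$ is integral on each leaf $F$, whence $F$ is prequantizable. The only cosmetic difference is that the paper goes on to exhibit the prequantization bundle concretely as the pull-back $i^*_FC$ with the pull-back connection $A_F=i^*_FA^g$, whereas you invoke the integrality criterion abstractly; both are adequate.
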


The corresponding  prequantization bundle for $F$ is the pull-back
complex line bundle $i^*_FC$, coordinated by $(z^i,c)$.
Furthermore, let $A_\cF^g$ (\ref{lmp73}) be a leafwise connection
on the prequantization bundle $C\to Z$ which obeys Lemma
\ref{lmp63}, i.e., comes from a $U(1)$-principal connection $A^g$
on $C\to Z$. Then the pull-back
\mar{lmp130}\beq
A_F=i^*_FA^g=dz^i\ot(\dr_i +ii^*_F(A^g_i)c\dr_c) \label{lmp130}
\eeq
of the connection $A^g$ onto $i^*_FC\to F$ satisfies the
admissible condition
\be
R_F=i^*_FR=i\ve \Om_F,
\ee
and preserves the pull-back Hermitian fibre metric $i^*_Fg$ in
$i^*_\cF C\to F$.

Let us define  polarization of a symplectic foliation
$(\cF,\Om_\cF)$ of a manifold $Z$ as a maximal (regular)
involutive distribution $\bT\subset T\cF$ on $Z$ such that
\mar{lmp71}\beq
\Om_\cF(u,v)=0, \qquad  u,v\in\bT_z, \qquad z\in Z. \label{lmp71}
\eeq
Given  the Lie algebra $\bT(Z)$ of $\bT$-subordinate vector fields
on $Z$, let $\cA_\cF\subset C^\infty(Z)$ be the complexified
subalgebra of functions $f$ whose leafwise Hamiltonian vector
fields $\vt_f$ (\ref{spr902}) fulfil the condition
\be
[\vt_f,\bT(Z)]\subset \bT(Z).
\ee
It is called the  quantum algebra of a symplectic foliation
$(\cF,\Om_\cF)$ with respect to the polarization $\bT$.  This
algebra obviously contains the center $S_\cF(Z)$ of the Poisson
algebra $(C^\infty(Z),\{,\}_\cF)$, and it is a Lie
$S_\cF(Z)$-algebra.

\begin{proposition} \label{lmp72} \mar{lmp72}
Every polarization $\bT$ of a symplectic foliation $(\cF,\Om_\cF)$
yields polarization of the associated Poisson manifold
$(Z,w_\Om)$.
\end{proposition}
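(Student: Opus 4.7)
The plan is to verify directly from the definitions that $\bT$, regarded as a distribution on $Z$ via the inclusion $T\cF\hookrightarrow TZ$, satisfies the defining conditions of a polarization of the Poisson manifold $(Z,w_\Om)$. Recall that such a polarization is a maximal involutive distribution $\bT'\subset TZ$ (contained in the characteristic distribution $T\cF$ of $w_\Om$) which is isotropic in the Poisson sense, i.e., $w_\Om(\al,\bt)=0$ for all $\al,\bt\in\bT'^{\circ}\subset T^*Z$; equivalently $w^\sh(\bT'^{\circ})\subset\bT'$.

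First, since $\bT\subset T\cF$ by hypothesis, it is automatically contained in the characteristic distribution of $w_\Om$, and its involutivity as a subbundle of $TZ$ coincides with its involutivity inside $T\cF$. Therefore the only substantive point is the Poisson isotropy condition. For this I would use the factorization (\ref{lmp03}), namely $w^\sh=i_\cF\circ w^\sh_\cF\circ i^*_\cF$, to reduce the check to a leafwise computation. Given any $\bt\in\bT^\circ\subset T^*Z$, the pull-back $i^*_\cF\bt\in T\cF^*$ annihilates $\bT$, because $\bT\subset T\cF$ and $i_\cF$ is just the inclusion. Since $\bT$ is a Lagrangian (maximal isotropic) subbundle of the symplectic vector bundle $(T\cF,\Om_\cF)$, the bundle isomorphism $\Om_\cF^\fl$ identifies $\bT$ with its annihilator inside $T\cF^*$; applying the inverse $w^\sh_\cF=(\Om_\cF^\fl)^{-1}$ gives $w^\sh_\cF(i^*_\cF\bt)\in\bT$. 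Composing with $i_\cF$ yields $w^\sh(\bt)\in\bT$, and pairing with any $\al\in\bT^\circ$ shows that $w_\Om(\al,\bt)=-\langle\al,w^\sh(\bt)\rangle=0$, as required.

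Maximality will follow from the maximality of $\bT$ among involutive isotropic subbundles of $T\cF$: any involutive distribution $\bT''$ on $Z$ that properly enlarges $\bT$ and still lies in $T\cF$ would fail the leafwise isotropy on some leaf and hence, by the same factorization, fail the Poisson isotropy condition. Combined with the requirement that a Poisson polarization be contained in the characteristic distribution, this pins down $\bT$ as a maximal object on the Poisson side.

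The main obstacle I anticipate is bookkeeping with the precise definition of polarization on a Poisson manifold in Vaisman's sense (whether phrased as a distribution, a subsheaf of functions, or a sub-Lie-algebroid), and ensuring that the maximality clause on the Poisson side matches the Lagrangian/maximality condition in $T\cF$; once the dictionary between $\Om_\cF$ and $w_\Om$ supplied by (\ref{spr904})--(\ref{spr903}) and (\ref{lmp03}) is invoked, the computation itself is essentially one line.
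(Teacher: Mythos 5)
Your proposal does not establish the proposition in the form the paper intends, and it contains a genuine mathematical gap. First, the mismatch of definitions that you flag at the end is not mere bookkeeping: in this paper a polarization of a Poisson manifold is \emph{not} a distribution but a sheaf of germs of functions whose stalks are Abelian algebras with respect to the Poisson bracket (Remark \ref{0245}, following Vaisman). The paper's proof is therefore entirely different from yours and essentially one line: take $\Phi$ to be the sheaf of germs of local functions $f$ whose leafwise Hamiltonian vector fields $\vt_f$ (\ref{spr902}) are subordinate to $\bT$ (equivalently, whose leafwise differentials lie in $\Om_\cF^\fl\bT$); for two such germs the bracket $\{f,g\}_\cF=\vt_f\rfloor\wt dg$ equals $\Om_\cF(\vt_f,\vt_g)$ up to sign and vanishes by the isotropy condition (\ref{lmp71}), so the stalks of $\Phi$ are Abelian and $\Phi$ is a Poisson polarization. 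No factorization (\ref{lmp03}) and no statement about annihilators is needed.

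Second, even granting your distribution-theoretic reformulation, the central step fails. You assert that $\bT$ is leafwise Lagrangian, but the paper's definition only requires $\bT$ to be maximal among involutive isotropic distributions in $T\cF$, and the text immediately following the proof explicitly allows $\di\bT<\di\cF/2$ (it notes that $\Phi$ "need not be maximal, unless $\bT$ is of maximal dimension $\di\cF/2$"). When $\di\bT<\di\cF/2$, the image under $\Om_\cF^\sh=w^\sh_\cF$ of the annihilator of $\bT$ in $T\cF^*$ is the full $\Om_\cF$-orthogonal complement of $\bT$ inside $T\cF$, which strictly contains $\bT$; hence $w^\sh(\rA\bT)\not\subset\bT$ and the inclusion you need is false. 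Your maximality paragraph inherits the same defect, since it compares two notions of maximality that need not coincide. The remedy is to abandon the distribution picture and pass to the function sheaf, as the paper does.
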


\begin{proof}
Let us consider the presheaf of local smooth functions $f$ on $Z$
whose leafwise Hamiltonian vector fields $\vt_f$ (\ref{spr902})
are subordinate to $\bT$. The sheaf $\Phi$ of germs of these
functions is polarization of the Poisson manifold $(Z,w_\Om)$ (see
Remark \ref{0245} below). Equivalently, $\Phi$ is the sheaf of
germs of functions on $Z$ whose leafwise differentials are
subordinate to the codistribution $\Om_\cF^\fl\bT$.
\end{proof}

\begin{remark} \label{0245} \mar{0245}
Let us recall that  polarization of a Poisson manifold $(Z,\{,\})$
is defined as a sheaf $\bT^*$ of germs of complex functions on $Z$
whose stalks $\bT^*_z$, $z\in Z$, are Abelian algebras with
respect to the Poisson bracket $\{,\}$ \cite{vais97}. Let
$\bT^*(Z)$ be the structure algebra  of global sections of the
sheaf $\bT^*$; it also is called the Poisson polarization
\cite{vais91,vais}. A  quantum algebra $\cA$ associated to the
Poisson polarization $\bT^*$ is defined as a subalgebra of the
Poisson algebra $C^\infty(Z)$ which consists of functions $f$ such
that
\be
\{f,\bT^*(Z)\}\subset \bT^*(Z).
\ee
Polarization of a symplectic manifold yields its Poisson one.
\end{remark}

Let us note that the polarization $\Phi$ in the proof of
Proposition \ref{lmp72}) need not be maximal, unless $\bT$ is of
maximal dimension $\di\cF/2$. It belongs to the following
particular type of polarizations of a Poisson manifold. Since the
cochain morphism $i^*_\cF$ (\ref{lmp04}) is an epimorphism, the
leafwise differential calculus $\gF^*$ is universal, i.e., the
leafwise differentials $\wt df$ of functions $f\in C^\infty(Z)$ on
$Z$ make up a basis for the $C^\infty(Z)$-module $\gF^1(Z)$. Let
$\Phi(Z)$ denote the structure $\mathbb R$-module of global
sections of the sheaf $\Phi$. Then the leafwise differentials of
elements of $\Phi(Z)$ make up a basis for the $C^\infty(Z)$-module
of global sections of the codistribution $\Om_\cF^\fl\bT$.
Equivalently, the leafwise Hamiltonian vector fields of elements
of $\Phi(Z)$ constitute a basis for the $C^\infty(Z)$-module
$\bT(Z)$. Then one can easily show that polarization $\bT$ of a
symplectic foliation $(\cF,\Om_\cF)$ and the corresponding
polarization $\Phi$ of the Poisson manifold $(Z,w_\Om)$ in
Proposition \ref{lmp72} define the same quantum algebra $\cA_\cF$.

Let $(F,\Om_F)$ be a symplectic leaf of a symplectic foliation
$(\cF,\Om_\cF)$. Given polarization $\bT\to Z$ of $(\cF,\Om_\cF)$,
its restriction
\be
\bT_F=i^*_F\bT\subset i^*_FT\cF=TF
\ee
to $F$ is an involutive distribution on $F$. It obeys the
condition
\be
i^*_F\Om_\cF(u,v)=0, \qquad  u,v\in\bT_{Fz}, \qquad z\in F,
\ee
i.e., it is  polarization of the symplectic manifold $(F,\Om_F)$.
Thus, we have stated the following.

\begin{proposition} \label{lmp75} \mar{lmp75}
Polarization of a symplectic foliation defines polarization of
each symplectic leaf.
\end{proposition}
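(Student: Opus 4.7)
The plan is to verify the four defining properties of a polarization of the symplectic manifold $(F,\Om_F)$ in turn for the candidate distribution $\bT_F=i_F^*\bT$: that it is a well-defined smooth distribution on $F$, that it is involutive, that it is isotropic for $\Om_F$, and that it is maximal among distributions having these properties. The first three are essentially formal consequences of the definitions; the only substantive point is maximality.

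First I would note that, by the very definition of $\bT\subset T\cF$, at each point $z\in F$ the fibre $\bT_z$ lies in $T_z\cF=T_zF$, so $\bT_F$ is genuinely a subbundle of $TF$, of the same constant rank as $\bT$. Involutivity of $\bT_F$ is inherited from that of $\bT$: any two sections of $\bT_F$ extend locally on a saturated neighbourhood to sections of $\bT$ tangent to each leaf, and their bracket, being a section of $\bT$ tangent to $F$, restricts to a section of $\bT_F$. Isotropy is the one-line computation
\[
\Om_F(u,v)=(i_F^*\Om_\cF)(u,v)=\Om_\cF(u,v)=0,\qquad u,v\in\bT_{Fz},\; z\in F,
\]
using that $\bT_{Fz}\subset\bT_z$ and the defining isotropy condition (\ref{lmp71}) for $\bT$.

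The hard part will be maximality of $\bT_F$, since maximality is imposed on $\bT$ only within the class of involutive isotropic distributions on $Z$ that lie in $T\cF$, and a priori this global constraint need not entail pointwise maximality on an individual leaf. My approach is to argue that any maximal involutive isotropic $\bT\subset T\cF$ is automatically pointwise Lagrangian, i.e.\ $\mathrm{rank}\,\bT=\tfrac{1}{2}\dim\cF$. Indeed, if at some $z_0\in F$ the subspace $\bT_{z_0}$ were strictly contained in a Lagrangian subspace $L$ of $(T_{z_0}F,\Om_F(z_0))$, one could pick a vector in the $\Om_\cF$-orthogonal $\bT_{z_0}^{\perp_\Om}$ not in $\bT_{z_0}$, extend it smoothly to a local section $u$ of $T\cF$ lying in $\bT^{\perp_\Om}$ in a saturated neighbourhood of $z_0$, and adjoin it to $\bT$; standard arguments (Frobenius together with the fact that the $\Om_\cF$-orthogonal of an involutive isotropic distribution is involutive modulo that distribution, since $\wt d\Om_\cF=0$) show the enlarged distribution can be chosen involutive, contradicting maximality of $\bT$. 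Hence $\bT_F$ has rank $\tfrac{1}{2}\dim F$, and a pointwise Lagrangian involutive isotropic subbundle of $TF$ is automatically a maximal one, which is the desired polarization of $(F,\Om_F)$.

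The only subtlety worth flagging is the Frobenius extension step above: one must verify that the enlarging vector field can be chosen so that the enlarged distribution is actually involutive, not merely isotropic; this is where closedness of $\Om_\cF$ along leaves is used. Apart from that technical point, the argument is a straightforward transcription of the classical fact that maximal isotropic involutive distributions on a symplectic manifold are Lagrangian foliations, applied leafwise.
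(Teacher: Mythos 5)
The formal part of your argument --- that $\bT_F=i_F^*\bT$ is a well-defined subbundle of $TF=i_F^*T\cF$, that it is involutive, and that it is isotropic for $\Om_F=i_F^*\Om_\cF$ by the condition (\ref{lmp71}) --- is correct and is in fact the \emph{entire} content of the paper's own justification: the proposition is stated as a summary of exactly that observation, and the paper never engages with maximality of $\bT_F$ at all.

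The difficulty is with the maximality step you added. First, maximality of $\bT$ in the sense of the definition is maximality with respect to inclusion among (regular) involutive isotropic subbundles of $T\cF$ defined over all of $Z$. Your contradiction argument produces an enlargement of $\bT$ only on a neighbourhood of a single point $z_0$ of a single leaf; such a local enlargement need not extend to a global involutive isotropic distribution on $Z$ subordinate to $T\cF$, so it does not contradict maximality of $\bT$. Second, the conclusion you draw from it --- that a maximal $\bT$ is automatically of rank $\frac12\di\cF$ --- is inconsistent with the paper's own remark immediately after Proposition \ref{lmp72}, which explicitly contemplates polarizations $\bT$ of rank strictly less than $\di\cF/2$ (``the polarization $\Phi$ \dots need not be maximal, unless $\bT$ is of maximal dimension $\di\cF/2$''). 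So this route cannot close the gap. The honest situation is that the proposition, as the paper uses it (e.g.\ in the condition (\ref{lmp133}) and in defining the quantum algebra of a leaf), only requires $\bT_F$ to be involutive and isotropic; maximality of $\bT_F$ in $TF$ is neither proved by the paper nor, as far as I can see, deducible from maximality of $\bT$ without additional hypotheses. You would do better to prove exactly what the paper proves and flag the maximality issue as a defect of the statement rather than attempt to repair it by the local-enlargement argument.
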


Clearly, the  quantum algebra $\cA_F$  of a symplectic leaf $F$
with respect to the polarization $\bT_F$ contains all elements
$i^*_Ff$ of the quantum algebra $\cA_\cF$ restricted to $F$.

Since $\cA_\cF$ is the quantum algebra both of a symplectic
foliation $(\cF,\Om_\cF)$ and the associated Poisson manifold
$(Z,w_\Om)$, let us follow the standard metaplectic correction
technique \cite{eche98,book10}.

Assuming that $Z$ is oriented and that $H^2(Z;\mathbb Z_2)=0$, let
us consider the metalinear complex line bundle $\cD_{1/2}[Z]\to Z$
characterized by an atlas
\be
\Psi_Z=\{(U;z^\la,z^i,r)\}
\ee
with the transition functions (\ref{0232}). Global sections $\rho$
of this bundle are half-densities on $Z$. Their Lie derivative
(\ref{0231}) along a vector field $u$ on $Z$ reads
\mar{lmp78}\beq
\bL_u\rho=u^\la \dr_\la\rho + u^i\dr_i\rho+\frac12(\dr_\la u^\la+
\dr_i u^i)\rho. \label{lmp78}
\eeq

Given an admissible connection $A^g_\cF$, the prequantization
formula (\ref{lqq46}) is extended to sections $\vr=s\ot\rho$ of
the fibre bundle
\mar{0240}\beq
C\op\ot_Z\cD_{1/2}[Z] \label{0240}
\eeq
as follows
\mar{lmp80}\ben
&& \wh f=-i[(\nabla_{\vt_f}^\cF +i\ve f)\ot\id
+\id\ot\bL_{\vt_f}]=
\label{lmp80}\\
&& \qquad -i[\nabla_{\vt_f}^\cF +i\ve f+\frac12\dr_i\vt_f^i],
\qquad f\in\cA_\cF. \nonumber
\een
This extension is the  metaplectic correction of leafwise
quantization. It is readily observed that the operators
(\ref{lmp80}) obey Dirac's condition (\ref{lqm514}). Let us denote
by $\gE_Z$ the complex space of sections $\vr$ of the fibre bundle
(\ref{0240}) of compact support such that
\be
(\nabla_\vt^\cF\ot\id +\id\ot\bL_\vt)\vr=(\nabla_\vt^\cF
+\frac12\dr_i\vt^i)\vr=0
\ee
for all $\bT$-subordinate leafwise Hamiltonian vector fields
$\vt$.

\begin{lemma} \label{+310'} \mar{+310'}
For any function $f\in\cA_\cT$ and an arbitrary section $\vr\in
\gE_Z$, the relation $\wh f\vr\in\gE_Z$ holds.
\end{lemma}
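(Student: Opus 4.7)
\emph{Proof proposal.} The plan is to verify the two conditions defining $\gE_Z$ for the section $\wh f\,\vr$: compactness of support, and the polarization condition $\cD_\tau(\wh f\,\vr)=0$ for every $\bT$-subordinate leafwise Hamiltonian vector field $\tau$, where I abbreviate
\[
\cD_\tau \;:=\; \nabla^\cF_\tau\otimes\id \;+\; \id\otimes\bL_\tau,
\]
so that the membership $\vr\in\gE_Z$ reads $\cD_\tau\vr=0$ and (\ref{lmp80}) becomes $\wh f = -i\cD_{\vt_f} + \ve f\cdot\id$. Compactness of support is immediate: $\wh f$ is a first-order differential operator with smooth coefficients, so $\mathrm{supp}(\wh f\,\vr)\subseteq\mathrm{supp}(\vr)$ is compact.

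For the polarization condition, the Leibniz-type identity $\cD_\tau(f\vr) = \tau(f)\vr + f\cD_\tau\vr$ together with $\cD_\tau\vr=0$ yields
\[
\cD_\tau\wh f\,\vr \;=\; -i[\cD_\tau,\cD_{\vt_f}]\vr \;+\; \ve\,\tau(f)\,\vr.
\]
The commutator decomposes along the two tensor factors. On the half-density factor $[\bL_\tau,\bL_{\vt_f}]=\bL_{[\tau,\vt_f]}$, while on $C$ the definition of the leafwise curvature (\ref{lmp07}) gives $[\nabla^\cF_\tau,\nabla^\cF_{\vt_f}] = \nabla^\cF_{[\tau,\vt_f]} - \wt R(\tau,\vt_f)$, whence
\[
[\cD_\tau,\cD_{\vt_f}] \;=\; \cD_{[\tau,\vt_f]} \;-\; \wt R(\tau,\vt_f).
\]
The admissible condition (\ref{lmp61}) shows that $\wt R(\tau,\vt_f)$ acts on sections as multiplication by $i\ve\,\Om_\cF(\tau,\vt_f)$, and the identity $\Om_\cF(\tau,\vt_f) = \tau(f)$ (obtained by pairing $\tau$ with $\vt_f\rfloor\Om_\cF=-\wt df$) collapses this scalar to $i\ve\,\tau(f)$. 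Substituting,
\[
\cD_\tau\wh f\,\vr \;=\; -i\cD_{[\tau,\vt_f]}\vr \;-\; \ve\,\tau(f)\,\vr \;+\; \ve\,\tau(f)\,\vr \;=\; -i\cD_{[\tau,\vt_f]}\vr.
\]

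To finish, it remains to show $\cD_{[\tau,\vt_f]}\vr=0$. Writing $\tau=\vt_g$ for some $g$ in the polarization algebra $\Phi(Z)$, the leafwise analogue of (\ref{m81}) gives $[\vt_g,\vt_f]=\vt_{\{g,f\}_\cF}$, and the hypothesis $f\in\cA_\cF$ forces $[\vt_f,\vt_g]\in\bT(Z)$, so $\{g,f\}_\cF\in\Phi(Z)$ and $\vt_{\{g,f\}_\cF}$ is itself a $\bT$-subordinate leafwise Hamiltonian vector field. The polarization condition on $\vr$ then closes the argument.

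The main delicate point --- and the raison d'\^etre of the prequantization shift --- is the exact cancellation between the curvature contribution $\wt R(\tau,\vt_f)$ and the multiplicative piece $\ve\,\tau(f)\,\vr$; it is enforced by the admissible condition (\ref{lmp61}) and nothing else. A secondary subtlety is that $\cD_\tau$ is \emph{not} $C^\infty(Z)$-linear in $\tau$, since the half-density Lie derivative (\ref{lmp78}) contributes a divergence term under rescaling; one therefore genuinely needs $[\tau,\vt_f]$ to be itself a leafwise Hamiltonian vector field, not merely $\bT$-subordinate, which is precisely what the bracket identity $[\vt_g,\vt_f]=\vt_{\{g,f\}_\cF}$ supplies.
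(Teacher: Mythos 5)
Your argument is correct and complete. Note that the paper states Lemma \ref{+310'} without any proof (as it does the parallel Lemma \ref{+310} in the symplectic case), so there is no in-text argument to compare against; what you have written is the standard curvature-cancellation verification, and it is exactly what is needed here. The two points you flag as delicate are indeed the ones that matter: the sign bookkeeping works out because the paper's convention $\wt R(\tau,\tau')=\nabla^\cF_{[\tau,\tau']}-[\nabla^\cF_\tau,\nabla^\cF_{\tau'}]$ combined with the admissible condition (\ref{lmp61}) and $\Om_\cF(\tau,\vt_f)=\tau(f)$ (from $\vt_f\rfloor\Om_\cF=-\wt df$ and the contraction convention (\ref{031})) produces precisely $-\ve\,\tau(f)\vr$ to cancel the Leibniz term; and the closing step legitimately requires $[\tau,\vt_f]$ to be a $\bT$-subordinate \emph{leafwise Hamiltonian} vector field, since that is the only class for which the defining condition (\ref{lmp86}) of $\gE_Z$ is imposed --- the identity $[\vt_g,\vt_f]=\vt_{\{g,f\}_\cF}$ together with $[\vt_f,\bT(Z)]\subset\bT(Z)$ supplies exactly that. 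The only cosmetic remark is that the lemma's hypothesis reads $f\in\cA_\cT$, which in this leafwise context is evidently a typo for $f\in\cA_\cF$, as your reading assumes and as the sentence following the lemma confirms.
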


Thus, we have a representation of the quantum algebra $\cA_\cF$ in
the space $\gE_Z$. Therefore, by quantization of a function
$f\in\cA_\cF$ is meant the restriction of the operator $\wh f$
(\ref{lmp80}) to $\gE_Z$.

The space $\gE_Z$ is provided with the non-degenerate Hermitian
form
\mar{lmp77}\beq
\lng\rho|\rho'\rng=\op\int_Z \rho\rho', \label{lmp77}
\eeq
which brings $\gE_Z$ into a pre-Hilbert space. Its completion
carries a representation of the quantum algebra $\cA_\cF$ by
(unbounded) Hermitian operators.

However, it may happen that the above quantization has no physical
meaning because the Hermitian form (\ref{lmp77}) on the carrier
space $\gE_Z$ and, consequently, the mean values of operators
(\ref{lmp80}) are defined by integration over the whole manifold
$Z$. For instance, it implies integration over time and classical
parameters. Therefore, we suggest a different scheme of
quantization of symplectic foliations.

Let us consider the exterior bundle $\op\w^{2m}T\cF^*$,
$2m=\di\cF$. Its structure group $GL(2m,\mathbb R)$ is reducible
to the group $GL^+(2m,\mathbb R)$ since a symplectic foliation is
oriented. One can regard this fibre bundle as being associated to
a $GL(2m,\mathbb C)$-principal bundle $P\to Z$. As earlier, let us
assume that $H^2(Z;\mathbb Z_2)=0$. Then the principal bundle $P$
admits a two-fold covering principal bundle with the  structure
 metalinear group $ML(2m,\mathbb C)$ \cite{eche98}.  As a
consequence, there exists a complex line bundle $\cD_\cF\to Z$
characterized by an atlas
\be
\Psi_\cF=\{(U_\xi;z^\la,z^i,r)\}
\ee
with the transition functions $r'=J_\cF r$ such that
\mar{kk203}\beq
J_\cF\ol J_\cF=\det\left(\frac{\dr z'^i}{\dr z^j}\right).
\label{kk203}
\eeq
One can think of its sections as being complex  leafwise
half-densities on $Z$. The  metalinear bundle $\cD_{1/2}[\cF]\to
Z$ admits the canonical lift of any $\bT$-subordinate vector field
$u$ on $Z$. The corresponding Lie derivative of its sections reads
\mar{lmp82}\beq
\bL_u^\cF=u^i\dr_i+\frac12\dr_i u^i. \label{lmp82}
\eeq

We define the  quantization bundle as the tensor product
\mar{0241}\beq
Y_\cF=C\op\ot_Z\cD_{1/2}[\cF]\to Z. \label{0241}
\eeq
Its sections are $C$-valued  leafwise half-forms. Given an
admissible leafwise connection $A^g_\cF$ and the Lie derivative
$\bL^\cF_u$ (\ref{lmp82}), let us associate the first order
differential operator
\mar{lmp84}\ben
&& \wh f=-i[(\nabla_{\vt_f}^\cF +i\ve f)\ot\id
+\id\ot\bL_{\vt_f}^\cF]= \label{lmp84}\\
&& \qquad -i[\nabla_{\vt_f}^\cF +i\ve f+\frac12\dr_i\vt_f^i],
\qquad f\in\cA_\cF, \nonumber
\een
on sections $\vr_\cF$ of $Y_\cF$ to each element of the quantum
algebra $\cA_\cF$. A direct computation with respect to the local
Darboux coordinates on $Z$ proves the following.

\begin{lemma} \label{lmp120} \mar{lmp120}
The operators (\ref{lmp84}) obey Dirac's condition (\ref{lqm514}).
\end{lemma}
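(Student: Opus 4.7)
The plan is to verify Dirac's condition by exploiting the tensor-product structure of the quantization bundle $Y_\cF=C\ot\cD_{1/2}[\cF]$. Write the operator (\ref{lmp84}) in its two-factor form
\[
\wh f = -i\bigl[(\nabla^\cF_{\vt_f}+i\ve f)\ot\id + \id\ot\bL^\cF_{\vt_f}\bigr],
\]
and split the commutator $[\wh f,\wh{f'}]$ into contributions from the two tensor factors. Brackets of operators acting on distinct factors vanish identically, so the computation reduces to one bracket on the $C$-factor plus one bracket on the $\cD_{1/2}[\cF]$-factor, the whole then multiplied by the outer $(-i)^2=-1$.

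On the $C$-factor I would evaluate $[\nabla^\cF_{\vt_f}+i\ve f,\nabla^\cF_{\vt_{f'}}+i\ve f']$ in three pieces. By the curvature formula (\ref{lmp07}), $[\nabla^\cF_{\vt_f},\nabla^\cF_{\vt_{f'}}] = \nabla^\cF_{[\vt_f,\vt_{f'}]} - \wt R(\vt_f,\vt_{f'})$. Invoking $[\vt_f,\vt_{f'}]=\vt_{\{f,f'\}_\cF}$ from (\ref{m81}) and the admissible condition (\ref{lmp61}) together with $\Om_\cF(\vt_f,\vt_{f'})=\vt_f\rfloor\wt df'=\{f,f'\}_\cF$ from (\ref{spr902}), this piece becomes $\nabla^\cF_{\vt_{\{f,f'\}_\cF}}-i\ve\{f,f'\}_\cF$ (the Liouville field $u_C$ in (\ref{lmp61}) acts as the identity on the $C$-factor of a section). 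The cross brackets $[\nabla^\cF_{\vt_f},i\ve f'] + [i\ve f,\nabla^\cF_{\vt_{f'}}]$ yield $i\ve\,\vt_f(f')-i\ve\,\vt_{f'}(f) = 2i\ve\{f,f'\}_\cF$, while the bracket of the two multiplication operators vanishes. Summing produces $\nabla^\cF_{\vt_{\{f,f'\}_\cF}}+i\ve\{f,f'\}_\cF$, exactly the $C$-factor content of $\wh{\{f,f'\}_\cF}$.

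On the $\cD_{1/2}[\cF]$-factor I would establish the Lie algebra morphism property $[\bL^\cF_u,\bL^\cF_v]=\bL^\cF_{[u,v]}$ for leafwise vector fields. Expanding $\bL^\cF_u\rho = u^i\dr_i\rho+\tfrac12(\dr_iu^i)\rho$ from (\ref{lmp82}) in Darboux-adapted coordinates, the pure second-order $\dr_i\dr_j\rho$ contributions cancel by symmetry of partials, the mixed first-order terms cancel after antisymmetrization in $u\leftrightarrow v$, and the remaining zeroth-order identity
\[
u^i\dr_i\dr_jv^j - v^i\dr_i\dr_ju^j = \dr_j[u,v]^j
\]
reduces to the vanishing of $(\dr_ju^i)(\dr_iv^j)-(\dr_jv^i)(\dr_iu^j)$ under summation (the two terms coincide after the relabeling $i\leftrightarrow j$). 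Specialising to $u=\vt_f$, $v=\vt_{f'}$ and using $[\vt_f,\vt_{f'}]=\vt_{\{f,f'\}_\cF}$ supplies the $\cD_{1/2}[\cF]$-factor of $\wh{\{f,f'\}_\cF}$.

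Reassembling the two factors and multiplying by the outer $(-i)^2=-1$ gives $[\wh f,\wh{f'}] = -i\,\wh{\{f,f'\}_\cF}$, which is Dirac's condition (\ref{lqm514}). The main delicate point is the sign bookkeeping on the $C$-factor: the curvature contributes $-i\ve\{f,f'\}_\cF$ while the two cross brackets jointly produce $+2i\ve\{f,f'\}_\cF$, and these combine to the required $+i\ve\{f,f'\}_\cF$ only because the admissible condition fixes the curvature coefficient to be exactly $i\ve$; any other normalization would break the match with the multiplication term in $\wh{\{f,f'\}_\cF}$.
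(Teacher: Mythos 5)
Your proof is correct. The paper itself offers no details here --- it merely asserts that ``a direct computation with respect to the local Darboux coordinates on $Z$'' establishes the lemma --- so your argument supplies a proof the text omits, and it does so by a more structural route: instead of writing out $\nabla^\cF$ and $\bL^\cF$ in Darboux coordinates and expanding everything, you split the commutator along the two tensor factors of the quantization bundle (\ref{0241}), reduce the $C$-factor to the curvature identity (\ref{lmp07}) combined with the admissible condition (\ref{lmp61}) and the relations (\ref{spr902}), (\ref{m81}), and only resort to coordinates for the homomorphism property $[\bL^\cF_u,\bL^\cF_v]=\bL^\cF_{[u,v]}$ of the Lie derivative (\ref{lmp82}). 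The sign bookkeeping you flag is exactly right: the curvature term contributes $-i\ve\{f,f'\}_\cF$ (with $u_C$ acting as the identity on sections of the line bundle and $\Om_\cF(\vt_f,\vt_{f'})=\{f,f'\}_\cF$ under the paper's conventions), the two cross brackets contribute $+2i\ve\{f,f'\}_\cF$, and their sum $+i\ve\{f,f'\}_\cF$ reproduces the multiplication part of $\wh{\{f,f'\}}_\cF$; the coordinate check of the half-density identity, with the second-order and divergence terms cancelling under antisymmetrization and the relabelling $i\leftrightarrow j$, is also sound. What your approach buys over the advertised Darboux computation is that it isolates precisely where the admissible condition enters and makes clear that the lemma is really the statement that the curvature of the tensor-product connection is $i\ve\Om_\cF$ on the first factor and zero on the second; what the Darboux route buys is uniformity with the paper's other verifications (e.g.\ Proposition \ref{lmp132}), which are all phrased as coordinate identities.
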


\begin{lemma} \label{lmp121} \mar{lmp121}
If a section $\vr_\cF$ fulfils the condition
\mar{lmp86}\beq
(\nabla_\vt^\cF\ot\id +\id\ot\bL_\vt^\cF)\vr_\cF=(\nabla_\vt^\cF
+\frac12\dr_i\vt^i)\vr_\cF=0  \label{lmp86}
\eeq
for all $\bT$-subordinate leafwise Hamiltonian vector field $\vt$,
then $\wh f\vr_\cF$ for any $f\in\cA_\cF$ possesses the same
property.
\end{lemma}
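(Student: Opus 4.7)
The plan is to compute $\widetilde D_\vt(\wh f\vr_\cF)$ directly, where I abbreviate $\widetilde D_u := \nabla_u^\cF + \tfrac12\dr_i u^i$ (so that the hypothesis reads $\widetilde D_\vt\vr_\cF=0$ and the operator in (\ref{lmp84}) becomes $\wh f = -i(\widetilde D_{\vt_f} + i\ve f)$), and to show that the result vanishes. First I would write
\be
\widetilde D_\vt(\wh f\vr_\cF) = -i\widetilde D_\vt(\widetilde D_{\vt_f}+i\ve f)\vr_\cF
 = -i\bigl\{[\widetilde D_\vt,\widetilde D_{\vt_f}] + i\ve\,\vt(f)\bigr\}\vr_\cF,
\ee
where I have used $\widetilde D_\vt\vr_\cF=0$ to kill the two terms in which $\widetilde D_\vt$ stands on the right, and the Leibniz identity $\widetilde D_\vt(f\vr_\cF)=\vt(f)\vr_\cF + f\widetilde D_\vt\vr_\cF$ for multiplication by a scalar function.

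Next I would evaluate the commutator by exploiting the tensor-product form $\widetilde D_u = \nabla_u^\cF\otimes\id + \id\otimes\bL_u^\cF$ of the operator on sections of $Y_\cF = C\otimes\cD_{1/2}[\cF]$. The two factors commute across the tensor product, so
\be
[\widetilde D_\vt,\widetilde D_{\vt_f}] = [\nabla_\vt^\cF,\nabla_{\vt_f}^\cF]\otimes\id + \id\otimes[\bL_\vt^\cF,\bL_{\vt_f}^\cF].
\ee
For the first piece, the curvature identity (\ref{lmp07}) and the admissibility relation (\ref{lmp61}) give
\be
[\nabla_\vt^\cF,\nabla_{\vt_f}^\cF] = \nabla_{[\vt,\vt_f]}^\cF - \widetilde R(\vt,\vt_f) = \nabla_{[\vt,\vt_f]}^\cF - i\ve\,\Om_\cF(\vt,\vt_f),
\ee
and $\Om_\cF(\vt,\vt_f) = -\vt_f\rfloor\Om_\cF\rfloor\vt = \widetilde df(\vt) = \vt(f)$ by (\ref{spr902}). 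For the second piece, the standard property of Lie derivatives yields $[\bL_\vt^\cF,\bL_{\vt_f}^\cF] = \bL_{[\vt,\vt_f]}^\cF$. Combining,
\be
[\widetilde D_\vt,\widetilde D_{\vt_f}] = \widetilde D_{[\vt,\vt_f]} - i\ve\,\vt(f).
\ee

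The last step is to show $\widetilde D_{[\vt,\vt_f]}\vr_\cF = 0$, which reduces the problem to checking that $[\vt,\vt_f]$ itself belongs to the class of vector fields for which the hypothesis on $\vr_\cF$ applies. Writing $\vt=\vt_g$ with $\vt_g\in\bT(Z)$, the leafwise analogue of (\ref{m81}) gives $[\vt_g,\vt_f] = \vt_{\{g,f\}_\cF}$, so $[\vt,\vt_f]$ is leafwise Hamiltonian; and since $f\in\cA_\cF$ satisfies $[\vt_f,\bT(Z)]\subset\bT(Z)$ by definition of the quantum algebra, this commutator is also $\bT$-subordinate. Hence $\widetilde D_{[\vt,\vt_f]}\vr_\cF=0$, and plugging everything back yields
\be
\widetilde D_\vt(\wh f\vr_\cF) = -i\bigl\{-i\ve\,\vt(f) + i\ve\,\vt(f)\bigr\}\vr_\cF = 0,
\ee
as required. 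The only delicate step is the cancellation in the final display: it hinges on identifying $\Om_\cF(\vt,\vt_f)$ with $\vt(f)$ and matching that with the inhomogeneous term $i\ve\,\vt(f)$ coming from differentiating the multiplicative $i\ve f$ in $\wh f$, which is exactly the content of the admissible condition (\ref{lmp61}); I would expect this to be the main place to be careful about signs.
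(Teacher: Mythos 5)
Your proof is correct. The paper itself offers no explicit argument for this lemma (it only appeals, for the neighbouring Lemma \ref{lmp120}, to ``a direct computation with respect to the local Darboux coordinates''), so your intrinsic derivation is a legitimate and arguably more transparent route: the reduction of $\widetilde D_\vt(\wh f\vr_\cF)$ to $-i\widetilde D_{[\vt,\vt_f]}\vr_\cF$ via the curvature identity (\ref{lmp07}), the admissibility condition (\ref{lmp61}) giving $\wt R(\vt,\vt_f)=i\ve\Om_\cF(\vt,\vt_f)=i\ve\,\vt(f)$, and the exact cancellation against the term produced by differentiating the multiplicative part $i\ve f$ of $\wh f$ is precisely the mechanism that makes the Kostant--Souriau formula work. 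The closing step is also handled properly: $[\vt,\vt_f]=\vt_{\{g,f\}_\cF}$ is leafwise Hamiltonian, and it is $\bT$-subordinate because $f\in\cA_\cF$ normalizes $\bT(Z)$, so the hypothesis on $\vr_\cF$ applies to it. What the coordinate computation would buy is only a verification of the same cancellation in Darboux coordinates where $\Om_\cF=\wt dp_i\w\wt dq^i$; your version makes visible exactly which structural inputs (curvature, admissibility, the normalizer property defining $\cA_\cF$) are used, at no extra cost.
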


Let us restrict the representation of the quantum algebra
$\cA_\cF$ by the operators (\ref{lmp84}) to the subspace $\gE_\cF$
of sections $\vr_\cF$ of the quantization bundle (\ref{0241})
which obey the condition (\ref{lmp86}) and whose restriction to
any leaf of $\cF$ is of compact support. The last condition is
motivated by the following.

Since $i^*_FT\cF^*=T^*F$, the pull-back $i^*_F\cD_{1/2}[\cF]$ of
$\cD_{1/2}[\cF]$ onto a leaf $F$ is a metalinear bundle of
half-densities on $F$. By virtue of Propositions \ref{lmp70} and
\ref{lmp75}, the pull-back $i^*_FY_\cF$ of the quantization bundle
$Y_\cF\to Z$ onto $F$ is a quantization bundle for the symplectic
manifold $(F,i^*_F\Om_\cF)$. Given the pull-back connection $A_F$
(\ref{lmp130}) and the polarization $\bT_F=i^*_F\bT$, this
symplectic manifold is subject to the standard geometric
quantization by the first order differential operators
\mar{lmp92}\beq
\wh f=-i(i_F^*\nabla_{\vt_f}^\cF +i\ve f +\frac12\dr_i\vt_f^i),
\qquad f\in \cA_F,  \label{lmp92}
\eeq
on sections $\vr_F$ of $i^*_FY_\cF\to F$ of compact support which
obey the condition
\mar{lmp133}\beq
(i_F^*\nabla_\vt^\cF +\frac12\dr_i\vt^i)\vr_F=0 \label{lmp133}
\eeq
for all $\bT_F$-subordinate Hamiltonian vector fields $\vt$ on
$F$. These sections constitute a pre-Hilbert space $\gE_F$ with
respect to the Hermitian form
\be
\lng\rho_F|\rho'_F\rng=\op\int_F \vr_F \vr'_F.
\ee
The key point is the following.

\begin{proposition} \label{lmp132} \mar{lmp132}
We have $i^*_F\gE_\cF\subset \gE_F$, and the relation
\beq
i^*_F(\wh f\vr_\cF)=\wh{(i^*_Ff)}(i^*_F\vr_\cF) \label{lmp93}
\eeq
holds for all elements $f\in\cA_\cF$ and $\vr_\cF\in \gE_\cF$.
\end{proposition}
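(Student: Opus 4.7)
The plan is to verify that every ingredient appearing in the quantization formula (\ref{lmp84}) on $(Z,\cF)$ restricts compatibly along the leaf inclusion $i_F\colon F\hookrightarrow Z$ to the corresponding ingredient in the Kostant--Souriau formula (\ref{lmp92}) for the symplectic leaf $(F,i^*_F\Om_\cF)$. Once this compatibility is established coordinate by coordinate in an adapted chart, both the inclusion $i^*_F\gE_\cF\subset\gE_F$ and the intertwining relation (\ref{lmp93}) will follow by direct substitution.

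First I would work in an adapted foliated chart $(U_\xi;z^\la,z^i)$ meeting $F$, where $z^i$ are leafwise coordinates, and record the following three restriction identities. (a) For $f\in\cA_\cF$ the leafwise Hamiltonian vector field $\vt_f=\Om_\cF^\sh(\wt df)$ has only $\dr_i$-components depending only on partial derivatives along the leaf, so $i^*_F\vt_f=\vt_{i^*_Ff}$, the Hamiltonian vector field of $i^*_Ff\in\cA_F$ with respect to $i^*_F\Om_\cF$; moreover $\vt_f$ is $\bT$-preserving iff its restriction is $\bT_F$-preserving, hence $i^*_F$ maps $\cA_\cF$ into $\cA_F$. (b) The leafwise connection $A^g_\cF$ (\ref{lmp73}) comes from a $U(1)$-principal connection $A^g$ on $C\to Z$, and its pull-back via $i_F$ is precisely the connection $A_F$ (\ref{lmp130}) used to quantize $(F,i^*_F\Om_\cF)$; consequently $i^*_F(\nabla^\cF_{\vt_f}s)=\nabla^{A_F}_{\vt_{i^*_Ff}}(i^*_Fs)$ for every section $s$ of $C$. (c) The metalinear bundle $\cD_{1/2}[\cF]\to Z$ has transition functions (\ref{kk203}) involving only the leafwise Jacobian, so its pull-back $i^*_F\cD_{1/2}[\cF]$ is canonically identified with the half-density bundle $\cD_{1/2}[F]$ of $F$, and the leafwise Lie derivative (\ref{lmp82}) along a $\bT$-subordinate vector field $u$ restricts to the ordinary half-density Lie derivative along $i^*_Fu$.

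Given (a)--(c), the inclusion $i^*_F\gE_\cF\subset\gE_F$ is immediate: if $\vr_\cF=s\ot\rho$ satisfies (\ref{lmp86}) for every $\bT$-subordinate leafwise Hamiltonian vector field $\vt$, then applying $i^*_F$ and using (b)--(c) yields
\[
\bigl(i^*_F\nabla^\cF_\vt+\tfrac12\dr_i\vt^i\bigr)(i^*_F\vr_\cF)=0
\]
for every $\bT_F$-subordinate Hamiltonian vector field on $F$, which is exactly (\ref{lmp133}); the compact support of $i^*_F\vr_\cF$ on $F$ is built into the definition of $\gE_\cF$. The intertwining relation (\ref{lmp93}) follows by applying $i^*_F$ termwise to (\ref{lmp84}): item (a) converts $\vt_f$ into $\vt_{i^*_Ff}$ and $i\ve f$ into $i\ve(i^*_Ff)$, item (b) converts $\nabla^\cF_{\vt_f}$ into $\nabla^{A_F}_{\vt_{i^*_Ff}}$, and item (c) converts $\tfrac12\dr_i\vt^i_f$ into the half-density divergence on $F$; the resulting operator is precisely $\wh{(i^*_Ff)}$ as in (\ref{lmp92}).

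The main (essentially only) obstacle is the bookkeeping in step (c): one must verify that the two-fold covering construction used to define $\cD_{1/2}[\cF]$ is compatible with restriction to a leaf, i.e.\ that the $ML(2m,\mathbb C)$-structure restricts to an $ML(2m,\mathbb C)$-structure on $T^*F$ coming from the hypothesis $H^2(Z;\mathbb Z_2)=0$ (which forces the analogous vanishing on $F$ via the restriction map $H^2(Z;\mathbb Z_2)\to H^2(F;\mathbb Z_2)$ composed with the fact that $F$ is simply connected). Once the identification $i^*_F\cD_{1/2}[\cF]=\cD_{1/2}[F]$ is fixed, all remaining computations are purely local in the adapted chart and reduce to the observation that the transverse coordinates $z^\la$ play no role in any of the operators involved.
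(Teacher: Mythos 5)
Your proposal is correct and follows essentially the same route as the paper: the paper's own proof is the one-line observation that the expressions (\ref{lmp92}) and (\ref{lmp133}) have the same coordinate form as (\ref{lmp84}) and (\ref{lmp86}) with $z^\la=$const., which is precisely the conclusion your adapted-chart bookkeeping in items (a)--(c) establishes. Your extra care with the metalinear bundle in step (c) is already handled in the paper by the remark that $i^*_FT\cF^*=T^*F$ forces $i^*_F\cD_{1/2}[\cF]$ to be the half-density bundle of $F$, since the transition functions (\ref{kk203}) involve only the leafwise Jacobian.
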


\begin{proof} One can use the fact that the expressions (\ref{lmp92})
and (\ref{lmp133}) have the same coordinate form as the
expressions (\ref{lmp84}) and (\ref{lmp86}) where $z^\la=$const.
\end{proof}

The relation (\ref{lmp93}) enables one to think of the operators
$\wh f$ (\ref{lmp84}) as being the leafwise quantization of the
$S_\cF(Z)$-algebra $\cA_\cF$ in the  pre-Hilbert $S_\cF(Z)$-module
$\gE_\cF$ of leafwise half-forms.

\subsection{Quantization of integrable systems in action-angle variables}

In accordance with Theorem \ref{nc0'}, any superintegrable
Hamiltonian system (\ref{nc1}) on a symplectic manifold $(Z,\Om)$
restricted to some open neighborhood $U_M$ (\ref{kk600}) of its
invariant submanifold $M$ is characterized by generalized
action-angle coordinates $(I_\la,p_A,q^A, y^\la)$, $\la=1,\ldots,
m$, $A=1,\ldots,n-m$. They are canonical for the symplectic form
$\Om$ (\ref{cmp6}) on $U_M$. Then one can treat the coordinates
$(I_\la,p_A)$ as $n$ independent functions in involution on a
symplectic annulus $(U_M,\Om)$ which constitute a completely
integrable system in accordance with Definition \ref{cmp21}.
Strictly speaking, its quantization fails to be a quantization of
the original superintegrable system (\ref{nc1}) because
$F_i(I_\la,q^A,p_A)$ are not linear functions and, consequently,
the algebra (\ref{nc1}) and the algebra
\mar{kk610}\beq
\{I_\la, p_A\}=\{I_\la, q^A\}=0, \qquad \{p_A,q^B\}=\dl^B_A
\label{kk610}
\eeq
are not isomorphic in general. However, one can obtain the
Hamilton operator $\wh\cH$ and the Casimir operators $\wh C_\la$
of an original superintegrable system and their spectra.

There are different approaches to quantization of completely
integrable and superintegrable systems
\cite{acang1,plet02,gos,gutz,myk2}. It should be emphasized that
action-angle coordinates need not be globally defined on a phase
space, but form an algebra of the Poisson canonical commutation
relations (\ref{kk610}) on an open neighborhood $U_M$ of an
invariant submanifold $M$. Therefore, quantization of an
integrable system with respect to the action-angle variables is a
quantization of the Poisson algebra $C^\infty(U_M)$ of real smooth
functions on $U_M$. Since there is no morphism $C^\infty(U_M)\to
C^\infty(Z)$, this quantization is not equivalent to quantization
of an original integrable system on $Z$ and, from on a physical
level, is interpreted as quantization around an invariant
submanifold $M$. A key point is that, since $U_M$ is not a
contractible manifold, the geometric quantization technique should
be called into play in order to quantize an integrable system
around its invariant submanifold. A peculiarity of the geometric
quantization procedure is that it remains equivalent under
symplectic isomorphisms, but essentially depends on the choice of
a polarization \cite{blat,raw}.

Geometric quantization of completely integrable systems has been
studied at first with respect to the polarization spanned by
Hamiltonian vector fields of integrals of motion \cite{myk2}. For
example, the well-known Simms quantization of a harmonic
oscillator is of this type \cite{eche98}. However, one meets a
problem that the associated quantum algebra contains affine
functions of angle coordinates on a torus which are ill defined.
As a consequence, elements of the carrier space of this
quantization fail to be smooth, but are tempered distributions. We
have developed a different variant of geometric quantization of
completely integrable systems \cite{acang1,plet02,book05}. Since a
Hamiltonian of a completely integrable system depends only on
action variables, it seems natural to provide the Schr\"odinger
representation of action variables by first order differential
operators on functions of angle coordinates. For this purpose, one
should choose the angle polarization of a symplectic manifold
spanned by almost-Hamiltonian vector fields of angle variables.

Given an open neighborhood $U_M$ (\ref{kk600}) in Theorem
\ref{nc0'}, let us consider its fibration
\mar{kk601}\ben
&& U_M=N_M\times \mathbb R^{m-r}\times T^r\to V\times \mathbb
R^{m-r}\times T^r =\cM, \label{kk601}\\
&& (I_\la,p_A,q^A, y^\la)\to (q^A, y^\la).
\een
Then one can think of a symplectic annulus $(U_M,\Om)$ as being an
open subbundle of the cotangent bundle $T^*\cM$ endowed with the
canonical symplectic form $\Om_T=\Om$ (\ref{cmp6}). This fact
enables us to provide quantization of any superintegrable system
on a neighborhood of its invariant submanifold as geometric
quantization of the cotangent bundle $T^*\cM$ over the toroidal
cylinder $\cM$ (\ref{kk601}) \cite{pl07}. Note that this
quantization however differs from that in Section 5.1 because
$\cM$ (\ref{kk601}) is not simply connected in general.

Let $(q^A,r^a,\al^i)$ be coordinates on the toroidal cylinder
$\cM$ (\ref{kk601}), where $(\al^1,\ldots, \al^r)$ are angle
coordinates on a torus $T^r$, and let $(p_A, I_a, I_i)$ be the
corresponding action coordinates (i.e., the holonomic fibre
coordinates on $T^*\cM$).  Since the symplectic form $\Om$
(\ref{cmp6}) is exact, the quantum bundle is defined as a trivial
complex line bundle $C$ over $T^*\cM$. Let its trivialization hold
fixed. Any other trivialization leads to an equivalent
quantization of $T^*\cM$. Given the associated fibre coordinate
$c\in\mathbb C$ on $C\to T^*\cM$, one can treat its sections as
smooth complex functions on $T^*\cM$.

The Kostant -- Souriau prequantization formula (\ref{qm506})
associates to every smooth real function $f$ on $T^*\cM$ the first
order differential operator
\be
\wh f= -i\vt_f\rfloor D^A -fc\dr_c
\ee
on sections of $C\to T^*\cM$, where $\vt_f$ is the Hamiltonian
vector field of $f$ and $D^A$ is the covariant differential
(\ref{0214}) with respect to an admissible $U(1)$-principal
connection $A$ on $C$. This connection preserves the Hermitian
fibre metric $g(c,c')=c\ol c'$ in $\cC$, and its curvature obeys
the prequantization condition (\ref{qm503}). Such a connection
reads
\mar{ci20}\beq
A=A_0 -ic(p_Adq^A + I_adr^a + I_i d\al^i)\ot\dr_c, \label{ci20}
\eeq
where $A_0$ is a flat $U(1)$-principal connection on $C\to
T^*\cM$.

The classes of gauge non-conjugate flat principal connections on
$C$ are indexed by the set $\mathbb R^r/\mathbb Z^r$ of
homomorphisms of the de Rham cohomology group
\be
H^1_{\rm DR}(T^*\cM)=H^1_{\rm DR}(\cM)=H^1_{\rm DR}(T^r)=\mathbb
R^r
\ee
of $T^*\cM$ to $U(1)$. We choose their representatives of the form
\be
&& A_0[(\la_i)]=dp_A\ot\dr^A + dI_a\ot\dr^a + dI_j\ot\dr^j+
dq^A\ot\dr_A +dr^a\ot\dr_a + \\
&&\qquad d\al^j\ot(\dr_j -i\la_j c\dr_c),
\qquad \la_i\in [0,1).
\ee
Accordingly, the relevant connection (\ref{ci20}) on $C$ reads
\mar{ci14}\ben
&& A[(\la_i)]= dp_A\ot\dr^A + dI_a\ot\dr^a + dI_j\ot\dr^j
+\label{ci14}\\
&& \qquad dq^A\ot(\dr_A - ip_Ac\dr_c) + dr^a\ot(\dr_a - iI_ac\dr_c)
+ \nonumber\\
&& \qquad d\al^j\ot(\dr_j - i(I_j +\la_j)c\dr_c). \nonumber
\een
For the sake of simplicity, we further assume that the numbers
$\la_i$ in the expression (\ref{ci14}) belong to $\mathbb R$, but
bear in mind that connections $A[(\la_i)]$ and $A[(\la'_i)]$ with
$\la_i-\la'_i\in\mathbb Z$ are gauge conjugate.

Let us choose the above mentioned  angle polarization coinciding
with the vertical polarization $VT^*\cM$.  Then the corresponding
quantum algebra $\cA$ of $T^*\cM$ consists of affine functions
\be
f=a^A(q^B,r^b,\al^j)p_A + a^b(q^B,r^a,\al^j)I_b +
a^i(q^B,r^a,\al^j)I_i + b(q^B,r^a,\al^j)
\ee
in action coordinates $(p_A, I_a, I_i)$. Given a connection
(\ref{ci14}), the corresponding Schr\"odinger operators
(\ref{qq82}) read
\mar{lmp135x}\ben
&& \wh f=\left(-ia^A\dr_A-\frac{i}{2}\dr_Aa^A\right)
+\left(-ia^b\dr_i-\frac{i}{2}\dr_ba^b\right) + \label{lmp135x}\\
&& \qquad \left(-ia^i\dr_i-\frac{i}{2}\dr_i a^i +a^i\la_i\right)- b.
\nonumber
\een
They are Hermitian operators in the pre-Hilbert space $\gE_\cM$ of
complex half-densities $\psi$ of compact support on $\cM$ endowed
with the Hermitian form
\be
\lng \psi|\psi'\rng=\op\int_{\cM} \psi \ol \psi' d^{n-m} q d^{m-r}
r d^r\al.
\ee
Note that, being a complex function on a toroidal cylinder
$\mathbb R^{m-r}\times T^r$, any half-density $\psi\in \gE_\cM$ is
expanded into the series
\mar{j25}\beq
\psi= \op\sum_{(n_\m)} \f(q^B, r^a)_{(n_j)}\exp[in_j\al^j], \qquad
(n_j)=(n_1,\ldots,n_r)\in\mathbb Z^r, \label{j25}
\eeq
where $\f(q^B, r^a)_{(n_\m)}$ are half-densities of compact
support on $\mathbb R^{n-r}$. In particular, the action operators
(\ref{lmp135x}) read
\mar{j11}\beq
\wh p_A=-i\dr_A, \qquad \wh I_a=-i\dr_a, \qquad \wh I_j=-i\dr_j
+\la_j. \label{j11}
\eeq
It should be emphasized that
\mar{j31}\beq
\wh a \wh p_A\neq \wh{ap_A}, \qquad \wh a \wh I_b\neq \wh{a I_b},
\qquad \wh a \wh I_j\neq \wh{aI_j}, \qquad a\in C^\infty(\cM).
\label{j31}
\eeq

The operators (\ref{lmp135x}) provide a desired quantization of a
superintegrable Hamiltonian system written with respect to the
action-angle coordinates. They satisfy Dirac's condition
(\ref{qm514}). However, both a Hamiltonian $\cH$ and original
integrals of motion $F_i$ do not belong to the quantum algebra
$\cA$, unless they are affine functions in the action coordinates
$(p_A, I_a, I_i)$. In some particular cases, integrals of motion
$F_i$ can be represented by differential operators, but this
representation fails to be unique because of inequalities
(\ref{j31}), and Dirac's condition need not be satisfied. At the
same time, both the Casimir functions $C_\la$ and a Hamiltonian
$\cH$ (Proposition \ref{zz1}) depend only on action variables
$I_a,I_i$. If they are polynomial in $I_a$, one can associate to
them the operators $\wh C_\la=C_\la(\wh I_a,\wh I_j)$,
$\wh\cH=\cH(\wh I_a,\wh I_j)$, acting in the space $\gE_\cM$ by
the law
\be
&& \wh \cH\psi= \op\sum_{(n_j)} \cH(\wh
I_a,n_j+\la_j)\f(q^A,r^a)_{(n_j)}
\exp[in_j\al^j], \\
&& \wh C_\la\psi= \op\sum_{(n_j)} C_\la(\wh
I_a,n_j+\la_j)\f(q^A,r^a)_{(n_j)} \exp[in_j\al^j].
\ee

\begin{example} Let us consider a superintegrable system with the Lie algebra
$\cG=so(3)$ of integrals of motion $\{F_1,F_2,F_3\}$ on a
four-dimensional symplectic manifold $(Z,\Om)$, namely,
\be
\{F_1,F_2\}=F_3, \qquad \{F_2,F_3\}=F_1, \qquad \{F_3,F_1\}=F_2.
\ee
Since it is compact, an invariant submanifold of a superintegrable
system in question is a circle $M=S^1$. We have a fibred manifold
$F:Z\to N$ (\ref{zz103}) onto an open subset $N\subset \cG^*$ of
the Lie coalgebra $\cG^*$. This fibred manifold is a fibre bundle
since its fibres are compact (Theorem \ref{11t2}). Its base $N$ is
endowed with the coordinates $(x_1,x_2,x_3)$ such that integrals
of motion $\{F_1,F_2,F_3\}$ on $Z$ read
\be
F_1=x_1, \qquad F_2=x_2, \qquad F_3=x_3.
\ee
The coinduced Poisson structure on $N$ is the Lie -- Poisson
structure (\ref{j51b}). The coadjoint action of $so(3)$ is given
by the expression (\ref{kk605}). An orbit of the coadjoint action
of dimension 2 is given by the equality (\ref{kk606}). Let $M$ be
an invariant submanifold such that the point $F(M)\in \cG^*$
belongs to the orbit (\ref{kk606}). Let us consider an open fibred
neighborhood $U_M=N_M\times S^1$ of $M$ which is a trivial bundle
over an open contractible neighborhood $N_M$ of $F(M)$ endowed
with the coordinates $(I,x_1,\g)$ defined by the equalities
(\ref{j52x}). Here, $I$ is the Casimir function (\ref{j52}) on
$\cG^*$. These coordinates are the Darboux coordinates of the Lie
-- Poisson structure (\ref{j53}) on $N_M$. Let $\vt_I$ be the
Hamiltonian vector field of the Casimir function $I$ (\ref{j52}).
Its flows are invariant submanifolds. Let $\al$ be a parameter
(\ref{zz121}) along the flows of this vector field. Then $U_M$ is
provided with the action-angle coordinates $(I,x_1,\g,\al)$ such
that the Poisson bivector on $U_M$ takes the form (\ref{j54}). The
action-angle variables $\{I, H_1=x_1, \g\}$ constitute a
superintegrable system
\mar{j55}\beq
\{I, F_1\}=0, \qquad \{I, \g\}=0, \qquad \{F_1,\g\}=1, \label{j55}
\eeq
on $U_M$. It is related to the original one by the transformations
\be
&& I=-\frac12(F_1^2 + F_2^2 + F_3^2)^{1/2}, \\
&& F_2=\left(-\frac1{2I}-F_1^2\right)^{1/2}\sin\g, \qquad
F_3=\left(-\frac1{2I}-H_1^2\right)^{1/2}\cos\g.
\ee
Its Hamiltonian is expressed only in the action variable $I$. Let
us quantize the superintegrable system (\ref{j55}). We obtain the
algebra of operators
\be
\wh f= a\left(-i\frac{\dr}{\dr \al} -\la\right) -ib\frac{\dr}{\dr
\g} -\frac{i}2\left(\frac{\dr a}{\dr \al} + \frac{\dr b}{\dr
\g}\right) - c,
\ee
where $a$, $b$, $c$ are smooth functions of angle coordinates
$(\g,\al)$ on the cylinder $\mathbb R\times S^1$. In particular,
the action operators read
\be
\wh I= -i\frac{\dr}{\dr \al} -\la, \qquad \wh F_1=-i\frac{\dr}{\dr
\g}.
\ee
These operators act in the space of smooth complex functions
\be
\psi(\g,\al)= \op\sum_k \f(\g)_k\exp[ik\al]
\ee
on $T^2$. A Hamiltonian $\cH(I)$ of a classical superintegrable
system also can be represented by the operator
\be
\wh \cH(I)\psi=\op\sum_k \cH(I-\la)\f(\g)_k\exp[ik\al]
\ee
on this space.
\end{example}

\section{Mechanics with time-dependent parameters}

At present, quantum systems with classical parameters attract
special attention in connection with holonomic quantum
computation.

This Section addresses mechanical systems with time-dependent
parameters. These parameters can be seen as sections of some
smooth fibre bundle $\Si\to\mathbb R$ called the  parameter
bundle. Then a configuration space of a mechanical system with
time-dependent parameters is  a composite fibre bundle
\mar{z500}\beq
Q\ar^{\pi_{Q\Si}}\Si\ar\mathbb R \label{z500}
\eeq
\cite{book05,book98,sard00}. Indeed, given a section $\vs(t)$ of a
parameters bundle $\Si\to\mathbb R$, the pull-back bundle
\mar{kk156}\beq
Q_\vs=\vs^*Q\to\mathbb R \label{kk156}
\eeq
is a subbundle $i_\vs: Q_\vs\to Q$ of a fibre bundle $Q\to\mathbb
R$ which is a configuration space of a mechanical system with a
fixed  parameter function $\vs(t)$.

In order to obtain the Lagrange and Hamilton equations, we treat
parameters on the same level as dynamic variables. The
corresponding total velocity and phase spaces are the first order
jet manifold $J^1Q$ and the vertical cotangent bundle $V^*Q$ of
the configuration bundle $Q\to\mathbb R$, respectively.

Section 6.2 addresses quantization of mechanical systems with
time-dependent parameters. Since parameters remain classical, a
phase space, that we quantize, is the vertical cotangent bundle
$V_\Si^*Q$ of a fibre bundle $Q\to\Si$. We apply to
$V_\Si^*Q\to\Si$ the technique of leafwise geometric quantization
(Section 5.2).

Berry's phase factor is a phenomenon peculiar to quantum systems
depending on classical time-dependent parameters
\cite{anan,bohm03,kir,montg}. It is described by driving a carrier
Hilbert space of a Hamilton operator over a parameter manifold.
Berry's phase factor depending only on the geometry of a path in a
parameter manifold is called geometric (Section 6.3). It is
characterized by a holonomy operator. A problem lies in separation
of a geometric phase factor from the total evolution operator
without using an adiabatic assumption.

In Section 6.4, we address the Berry phase phenomena in completely
integrable systems. The reason is that, being constant under an
internal dynamic evolution, action variables of a completely
integrable system are driven only by a perturbation holonomy
operator without any adiabatic approximation
\cite{jmp04,book05,book10}.

\subsection{Lagrangian and Hamiltonian mechanics with parameters}

Let the composite bundle (\ref{z500}), treated as a configuration
space of a mechanical system with parameters, be equipped with
bundle coordinates $(t,\si^m,q^i)$ where $(t,\si^m)$ are
coordinates on a fibre bundle $\Si\to\mathbb R$.

\begin{remark} Though $Q\to\mathbb R$ is a trivial
bundle, a fibre bundle $Q\to\Si$ need not be trivial.
\end{remark}

For a time, it is convenient to regard parameters as dynamic
variables. Then a total velocity space of a mechanical system with
parameters is the first order jet manifold $J^1Q$ of the fibre
bundle $Q\to\mathbb R$. It is equipped with the adapted
coordinates $(t,\si^m,q^i,\si^m_t,q^i_t)$.

Let a fibre bundle $Q\to\Si$ be provided with a connection
\mar{z501}\beq
A_\Si=dt\ot(\dr_t+ A^i_t\dr_i) + d\si^m\ot(\dr_m+A^i_m\dr_i).
\label{z501}
\eeq
Then the corresponding vertical covariant differential
(\ref{7.10}):
\mar{z502}\beq
\wt D: J^1Q\to V_\Si Q, \qquad \wt D= (q^i_t- A^i_t
-A^i_m\si^m_t)\dr_i, \label{z502}
  \eeq
is defined on a configuration bundle $Q\to\mathbb R$.

Given a section $\vs$ of a parameter bundle $\Si\to \mathbb R$,
the restriction of $\wt D$ to $J^1i_\vs(J^1Q_\vs)\subset J^1Q$ is
the familiar covariant differential on a fibre bundle $Q_\vs$
(\ref{kk156}) corresponding to the pull-back (\ref{mos83}):
\mar{z503}\beq
A_\vs=\dr_t+[(A^i_m\circ \vs)\dr_t \vs^m +(A\circ \vs)^i_t]\dr_i,
\label{z503}
\eeq
of the connection $A_\Si$ (\ref{z501}) onto $Q_\vs\to\mathbb R$.
Therefore, one can use the vertical covariant differential $\wt D$
(\ref{z502}) in order to construct a Lagrangian for a mechanical
system with parameters on the configuration space $Q$
(\ref{z500}).

We suppose that such a Lagrangian $L$ depends on derivatives of
parameters $\si^m_t$ only via the vertical covariant differential
$\wt D$ (\ref{z502}), i.e.,
\mar{z520}\beq
L=\cL(t,\si^m,q^i,\wt D^i=q^i_t- A^i_t -A^i_m\si^m_t)dt.
\label{z520}
\eeq
Obviously, this Lagrangian is non-regular because of the
Lagrangian constraint
\be
\dr_m^t\cL+A^i_m\dr_i^t\cL=0.
\ee
As a consequence, the corresponding Lagrange equation
\mar{z511,12}\ben
&& (\dr_i-d_t\dr_i^t)\cL=0, \label{z511}\\
&& (\dr_m-d_t\dr_m^t)\cL=0 \label{z512}
\een
is overdefined, and it admits a solution only if a rather
particular relation
\be
(\dr_m +A^i_m\dr_i)\cL +\dr_i^t\cL d_tA^i_m=0
\ee
is satisfied.

However, if a parameter function $\vs$ holds fixed, the equation
(\ref{z512}) is replaced with the condition
\mar{kk400}\beq
\si^m=\vs^m(t), \label{kk400}
\eeq
and the Lagrange equation (\ref{z511}) only should be considered
One can think of this equation under the condition (\ref{kk400})
as being the Lagrange equation for the Lagrangian
\mar{kk158}\beq
L_\vs=J^1\vs^*L =\cL(t,\vs^m,q^i,\wt D^i=q^i_t- A^i_t
-A^i_m\dr_t\vs^m)dt\label{kk158}
\eeq
on a velocity space $J^1Q_\vs$.

A total phase space of a mechanical system with time-dependent
parameters on the composite bundle (\ref{z500}) is the vertical
cotangent bundle $V^*Q$ of $Q\to\mathbb R$. It is coordinated by
$(t,\si^m,q^i,p_m,p_i)$.

Let us consider Hamiltonian forms on a phase space $V^*Q$ which
are associated with the Lagrangian $L$ (\ref{z520}). The
Lagrangian constraint space $N_L\subset V^*Q$ defined by this
Lagrangian is given by the equalities
\mar{kk160}\beq
p_i=\dr_i^t\cL, \qquad p_m+A^i_m p_i=0, \label{kk160}
\eeq
where $A_\Si$ is the connection (\ref{z501}) on a fibre bundle
$Q\to\Si$.

Let
\mar{kk157}\beq
\G=\dr_t +\G^m (t,\si^r)\dr_m \label{kk157}
\eeq
be some connection on a parameter bundle $\Si\to\mathbb R$, and
let
\mar{kk180}\beq
\g=\dr_t +\G^m\dr_m +( A^i_t + A^i_m\G^m)\dr_i \label{kk180}
\eeq
be the composite connection (\ref{b1.114}) on a fibre bundle $Q\to
\mathbb R$ which is defined by the connection $A_\Si$ (\ref{z501})
on $Q\to\Si$ and the connection $\G$ (\ref{kk157}) on
$\Si\to\mathbb R$. Then a desired $L$-associated Hamiltonian form
reads
\mar{7.12}\ben
&& H=(p_md\si^m+p_idq^i)- \label{7.12}\\
&& \qquad [p_m\G^m +p_i (A^i_t+ A^i_m\G^m)
+\cE_\g(t,\si^m,q^i,p_i)]dt,\nonumber
\een
where a Hamiltonian function $\cE_\g$ satisfies the relations
\mar{kk162,3}\ben
&&\dr_i^t\cL(t,\si^m,q^i,\wt
D^i=\dr^i\cE_\g(t,\si^m,q^i,\dr_i^t\cL)=\dr_i^t\cL,\label{kk162}\\
&& p_i\dr^i\cE_\g-\cE_\g=\cL(t,\si^m,q^i,\wt
D^i=\dr^i\cE_\g). \label{kk163}
\een
A key point is that the Hamiltonian form (\ref{7.12}) is affine in
momenta $p_m$ and that the relations (\ref{kk162}) --
(\ref{kk163}) are independent of the connection $\G$
(\ref{kk157}).

The Hamilton equation (\ref{z20a}) -- (\ref{z20b}) for the
Hamiltonian form $H$ (\ref{7.12}) reads
\mar{7.13}\ben
&&q^i_t= A^i_t+A^i_m\G^m +\dr^i \cE_\g, \label{7.13a} \\
&&p_{ti}=-p_j(\dr_i A^j_t +\dr_iA^j_m\G^m)-\dr_i \cE_\g,
\label{7.13b} \\
&&\si^m_t=\G^m, \label{7.13c}\\
&&p_{tm}= -p_i(\dr_m A^i_t+\G^n\dr_mA^i_n) -\dr_m \cE_\g, \label{7.13d}
\een
whereas the Lagrangian constraint (\ref{kk160}) takes the form
\mar{z523,18}\ben
&& p_i=\dr_i^t\cL(t,q^i,\si^m,\dr^i\cE_\g(t,\si^m,q^i,p_i)), \label{z523}\\
&& p_m+A^i_mp_i=0. \label{z518}
\een
If a parameter function $\vs(t)$ holds fixed, we ignore the
equation (\ref{7.13d}) and  treat the rest ones as follows.

Given $\vs(t)$, the equations (\ref{kk400}) and (\ref{z518})
define a subbundle
\mar{kk402}\beq
P_\vs\to Q_\vs\to \mathbb R \label{kk402}
\eeq
over $\mathbb R$ of  a total phase space $V^*Q\to\mathbb R$. With
the connection (\ref{z501}), we have the splitting (\ref{46b})  of
$V^*Q$ which reads
\be
&& V^*Q=A_\Si(V^*_\Si Q)\op\oplus_Q(Q\op\times_QV^*\Si),\\
&& p_i\ol dq^i +p_m\ol d\si^m= p_i(\ol dq^i-A^i_m\ol
d\si^m) +(p_m +A^i_m p_i)\ol d\si^m,
\ee
where $V^*_\Si Q$ is the vertical cotangent bundle of $Q\to\Si$.
Then $V^*Q\to Q$ can be provided with the bundle coordinates
\be
\ol p_i=p_i, \qquad \ol p_m = p_m +A^i_mp_i
\ee
compatible with this splitting. Relative to these coordinates, the
equation (\ref{z518}) takes the form $\ol p_m=0$. It follows that
the subbundle
\mar{kk410}\beq
i_P:P_\vs=i_\vs^*(A_\Si(V^*_\Si Q))\to V^*Q, \label{kk410}
\eeq
coordinated by $(t,q^i,p_i)$, is isomorphic to the vertical
cotangent bundle
\be
V^*Q_\vs=i_\vs^*V_\Si^*Q
\ee
of the configuration space $Q_\vs\to\mathbb R$ (\ref{kk156}) of a
mechanical system with a parameter function $\vs(t)$.
Consequently, the fibre bundle $P_\vs$ (\ref{kk402}) is a phase
space of this system.

Given a parameter function $\vs$, there exists a connection $\G$
on a parameter bundle $\Si\to\mathbb R$ such that $\vs(t)$ is its
integral section, i.e., the equation (\ref{7.13c}) takes the form
\mar{kk404}\beq
\dr_t\vs^m(t)=\G^m(t,\vs(t)). \label{kk404}
\eeq
Then a system of equations (\ref{7.13a}), (\ref{7.13b}) and
(\ref{z523}) under the conditions (\ref{kk400}) and (\ref{kk404})
describes a mechanical system with a given parameter function
$\vs(t)$ on a phase space $P_\vs$. Moreover, this system is the
Hamilton equation for the pull-back Hamiltonian form
\mar{kk193}\beq
H_\vs=i^*_PH=p_idq^i-[p_i(A^i_t +A^i_m\dr_t \vs^m) +\vs^*\cE_\g]dt
\label{kk193}
\eeq
on $P_\vs$ where
\be
A^i_t +A^i_m\dr_t \vs^m=(i_\vs^*\g)^i_t
\ee
is the pull-back connection (\ref{mos83}) on $Q_\vs\to\mathbb R$.

It is readily observed that the Hamiltonian form $H_\vs$
(\ref{kk193}) is associated with the Lagrangian $L_\vs$
(\ref{kk158}) on $J^1Q_\vs$, and the equations (\ref{7.13a}),
(\ref{7.13b}) and (\ref{z523}) are corresponded to the Lagrange
equation (\ref{z511}).

\subsection{Quantum mechanics with classical parameters}

This Section is devoted to quantization of mechanical systems with
time-dependent parameters on the composite bundle $Q$
(\ref{z500}). Since parameters remain classical, a phase space
that we quantize is the vertical cotangent bundle $V_\Si^*Q$ of a
fibre bundle $Q\to\Si$. This phase space is equipped with
holonomic coordinates $(t,\si^m,q^i,p_i)$. It is provided with the
following canonical Poisson structure. Let $T^*Q$ be the cotangent
bundle of $Q$ equipped with the holonomic coordinates
$(t,\si^m,q^i,p_0,p_m,p_i)$. It is endowed with the canonical
Poisson structure $\{,\}_T$ (\ref{m116}). There is the canonical
fibration
\mar{lmp44}\beq
\zeta_\Si: T^*Q\ar^\zeta V^*Q\ar V_\Si^*Q \label{lmp44}
\eeq
(see the exact sequence (\ref{63b})). Then the Poisson bracket
$\{,\}_\Si$ on the space $C^\infty(V_\Si^*Q)$ of smooth real
functions on $V_\Si^*Q$ is defined by the relation
\mar{m72x',2x}\ben
&& \zeta_\Si^*\{f,f'\}_\Si=\{\zeta_\Si^*f,\zeta_\Si^*f'\}_T, \qquad  \label{m72x'} \\
&& \{f,f'\}_\Si = \dr^kf\dr_kf'-\dr_kf\dr^kf', \qquad
f,f'\in C^\infty(V_\Si^*Q). \label{m72x}
\een
The corresponding characteristic symplectic foliation $\cF$
coincides with the fibration $V_\Si^*Q\to\Si$. Therefore, we can
apply to a phase space $V_\Si^*Q\to\Si$ the technique of leafwise
geometric quantization \cite{jmp02,book10}.

Let us assume that a manifold $Q$ is oriented, that fibres of
$V_\Si^*Q\to\Si$ are simply connected, and that
\be
H^2(Q;\mathbb Z_2)= H^2(V_\Si^*Q;\mathbb Z_2)=0.
\ee
Being the characteristic symplectic foliation of the Poisson
structure (\ref{m72x}), the fibration $V_\Si^*Q\to \Si$ is endowed
with the symplectic leafwise form (\ref{spr903}):
\be
\Om_\cF=\wt dp_i\w\wt dq^i.
\ee
Since this form is $\wt d$-exact, its leafwise de Rham cohomology
class equals zero and, consequently, it is the image of the zero
de Rham cohomology class. Then, in accordance with Proposition
\ref{lmp66}, the symplectic foliation $(V_\Si^*Q\to\Si,\Om_\cF)$
admits prequantization.

Since the leafwise form $\Om_\cF$ is $\wt d$-exact, the
prequantization bundle $C\to V_\Si^*Q$ is trivial. Let its
trivialization
\mar{kk200}\beq
C=V_\Si^*Q\times \mathbb C \label{kk200}
\eeq
hold fixed, and let $(t,\si^m,q^k,p_k,c)$ be the corresponding
bundle coordinates. Then $C\to V_\Si^*Q$ admits a leafwise
connection
\be
A_\cF=\wt dp_k\ot\dr^k + \wt dq^k\ot(\dr_k-ip_kc\dr_c).
\ee
This connection preserves the Hermitian fibre metric $g$
(\ref{gm513}) in $C$, and its curvature fulfils the
prequantization condition (\ref{lmp61}):
\be
\wt R=-i\Om_\cF\ot u_C.
\ee
The corresponding prequantization operators (\ref{lqq46}) read
\be
&& \wh f=-i\vt_f +(p_k\dr^kf-f), \qquad  f\in
C^\infty(V_\Si^*Q),\\
&& \vt_f=\dr^kf\dr_k-\dr_kf\dr^k.
\ee

Let us choose the canonical vertical polarization of the
symplectic foliation $(V_\Si^*Q\to\Si,\Om_\cF)$ which is the
vertical tangent bundle $\bT=VV_\Si^*Q$ of a fibre bundle
\be
\pi_{VQ}:V_\Si^*Q\to Q.
\ee
It is readily observed that the corresponding quantum algebra
$\cA_\cF$ consists of functions
\mar{kk191}\beq
f=a^i(t,\si^m,q^k)p_i + b(t,\si^m,q^k) \label{kk191}
\eeq
on $V_\Si^*Q$ which are affine in momenta $p_k$.

Following the quantization procedure in Section 5.2, one should
consider the quantization bundle (\ref{0241}) which is isomorphic
to the prequantization bundle $C$ (\ref{kk200}) because the
metalinear bundle $\cD_{1/2}[\cF$] of complex  fibrewise
half-densities on $V_\Si^*Q\to \Si$ is trivial owing to the
identity transition functions $J_\cF=1$ (\ref{kk203}). Then we
define the representation (\ref{lmp84}) of the quantum algebra
$\cA_\cF$ of functions $f$ (\ref{kk191}) in the space $\gE_\cF$ of
sections $\rho$ of the prequantization bundle $C\to V_\Si^*Q$
which obey the condition (\ref{lmp86}) and whose restriction to
each fibre of $V_\Si^*Q\to \Si$ is of compact support. Since the
trivialization (\ref{kk200}) of $C$ holds fixed, its sections are
complex functions on $V_\Si^*Q$, and the above mentioned condition
(\ref{lmp86}) reads
\be
\dr_kf\dr^k\rho=0, \qquad  f\in C^\infty(Q),
\ee
i.e., elements of $\gE_\cF$ are constant on fibres of $V_\Si^*Q\to
Q$. Consequently, $\gE_\cF$ reduces to zero $\rho=0$.

Therefore, we modify the leafwise quantization procedure as
follows. Given a fibration
\be
\pi_{Q\Si}:Q\to\Si,
\ee
let us consider the corresponding metalinear bundle
$\cD_{1/2}[\pi_{Q\Si}]\to Q$ of leafwise half-densities  on
$Q\to\Si$ and the tensor product
\be
Y_Q=C_Q\ot\cD_{1/2}[\pi_{Q\Si}]=\cD_{1/2}[\pi_{Q\Si}]\to Q,
\ee
where $C_Q=\mathbb C\times Q$ is the trivial complex line bundle
over $Q$. It is readily observed that the Hamiltonian vector
fields
\be
\vt_f=a^k\dr_k-(p_j\dr_ka^j +\dr_kb)\dr^k
\ee
of elements $f\in\cA_\cF$ (\ref{kk191}) are projectable onto $Q$.
Then one can associate to each element $f$ of the quantum algebra
$\cA_\cF$ the first order differential operator
\mar{lmp135}\ben
&&\wh f=(-i\nabla_{\pi_{VQ}(\vt_f)}
+f)\ot\id+\id\ot\bL_{\pi_{VQ}(\vt_f)}= \label{lmp135}\\
&& \qquad -ia^k\dr_k-\frac{i}{2}\dr_ka^k-b \nonumber
\een
in the space $\gE_Q$ of sections of the fibre bundle $Y_Q\to Q$
whose restriction to each fibre of $Q\to\Si$ is of compact
support. Since the pull-back of $\cD_{1/2}[\pi_{Q\Si}]$ onto each
fibre $Q_\si$ of $Q\to\Si$ is the metalinear bundle of
half-densities on $Q_\si$, the restrictions $\rho_\si$ of elements
of $\rho\in \gE_Q$ to $Q_\si$ constitute a pre-Hilbert space with
respect to the non-degenerate Hermitian form
\be
\lng \rho_\si|\rho'_\si\rng_\si=\op\int_{Q_\si}
\rho_\si\ol{\rho'_\si}.
\ee
Then the Schr\"odinger operators (\ref{lmp135}) are Hermitian
operators in the pre-Hilbert $C^\infty(\Si)$-module $\gE_Q$, and
provide the desired geometric quantization of the symplectic
foliation $(V_\Si^*Q\to\Si,\Om_\cF)$.

In order to quantize the evolution equation of a mechanical system
on a phase space $V_\Si^*Q$, one should bear in mind that this
equation is not reduced to the Poisson bracket $\{,\}_\Si$ on
$V_\Si^*Q$, but is expressed in the Poisson bracket $\{,\}_T$ on
the cotangent bundle $T^*Q$ \cite{jmp02,book10}. Therefore, let us
start with the classical evolution equation.

Given the Hamiltonian form $H$ (\ref{7.12}) on a total phase space
$V^*Q$, let $(T^*Q,\cH^*)$ be an equivalent homogeneous
Hamiltonian system with the homogeneous Hamiltonian $\cH^*$
(\ref{mm16}):
\mar{kk190}\beq
\cH^*=p_0 + [p_m\G^m+ p_i (A^i_t+ A^i_m\G^m)
+\cE_\g(t,\si^m,q^i,p_i)].\label{kk190}
\eeq
Let us consider the homogeneous evolution equation (\ref{ws525})
where $F$ are functions on a phase space $V_\Si^*Q$. It reads
\mar{pr35}\ben
&& \{\cH^*,\zeta_\Si^*F\}_T=0, \qquad F\in C^\infty(V_\Si^*Q),
\label{pr35} \\
&& \dr_tF +\G^m\dr_m F + (A^i_t+A^i_m\G^m +\dr^i \cE_\g)\dr_iF
-\nonumber\\
&& \qquad [p_j(\dr_i A^j_t +\dr_iA^j_m\G^m)+\dr_i \cE_\g]\dr^iF=0.
\nonumber
\een
It is readily observed that a function $F\in C^\infty(V_\Si^*Q)$
obeys the equality (\ref{pr35}) iff it is constant on solutions of
the Hamilton equation (\ref{7.13a}) -- (\ref{7.13c}). Therefore,
one can think of the relation (\ref{pr35}) as being a classical
evolution equation on $C^\infty(V_\Si^*Q)$.

In order to quantize the evolution equation (\ref{pr35}), one
should quantize a symplectic manifold $(T^*Q,\{,\}_T)$ so that its
quantum algebra $\cA_T$ contains the pull-back
$\zeta_\Si^*\cA_\cF$ of the quantum algebra $\cA_\cF$ of the
functions (\ref{kk191}). For this purpose, we choose the vertical
polarization $VT^*Q$ on the cotangent bundle $T^*Q$. The
corresponding quantum algebra $\cA_T$ consists of functions on
$T^*Q$ which are affine in momenta $(p_0,p_m,p_i)$ (see Section
5.2). Clearly, $\zeta_\Si^*\cA_\cF$ is a subalgebra of the quantum
algebra $\cA_T$ of $T^*Q$.

Let us restrict our consideration to the subalgebra $\cA'_T\subset
\cA_T$ of functions
\be
f=a(t,\si^r)p_0 +a^m(t,\si^r)p_m + a^i(t,\si^m,q^j)p_i +
b(t,\si^m,q^j),
\ee
where $a$ and $a^\la$ are the pull-back onto $T^*Q$ of functions
on a parameter space $\Si$. Of course, $\zeta_\Si^*\cA_\cF\subset
\cA'_T$. Moreover, $\cA'_T$ admits a representation by the
Hermitian operators
\mar{kk235}\beq
\wh f= -i(a\dr_t + a^m\dr_m +a^i\dr_i)-\frac{i}{2}\dr_ka^k-b
\label{kk235}
\eeq
in the carrier space $\gE_Q$ of the representation (\ref{lmp135})
of $\cA_\cF$. Then, if $\cH^*\in \cA'_T$, the evolution equation
(\ref{pr35}) is quantized as the Heisenberg equation
\mar{lmp50}\beq
i[\wh \cH^*,\wh f]=0, \qquad f\in \cA_\cF. \label{lmp50}
\eeq

A problem is that the function $\cH^*$ (\ref{kk190}) fails to
belong to the algebra $\cA'_T$, unless the Hamiltonian function
$\cE_\g$ (\ref{7.12}) is affine in momenta $p_i$. Let us assume
that $\cE_\g$ is polynomial in momenta. This is the case of almost
all physically relevant models.

\begin{lemma} \label{lmp70x} \mar{lmp70x}
Any smooth function $f$ on $V_\Si^*Q$ which is a polynomial of
momenta $p_k$ is decomposed in a finite sum of products of
elements of the algebra $\cA_\cF$.
\end{lemma}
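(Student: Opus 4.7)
The plan is to observe that this lemma is essentially an unpacking of the definition of $\cA_\cF$ as the algebra of functions affine in momenta, so the proof amounts to identifying enough building blocks inside $\cA_\cF$ and writing the polynomial as a finite sum of products of them.

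First, I would identify two families of elements that already lie in $\cA_\cF$. Taking $a^i=\dl^i_k$ and $b=0$ in the general form $f = a^i(t,\si^m,q^k)p_i + b(t,\si^m,q^k)$ shows that each coordinate function $p_k$ belongs to $\cA_\cF$. Taking $a^i=0$ and $b=g$ for an arbitrary $g\in C^\infty(Q)$ shows that every pull-back of a smooth function on $Q$ belongs to $\cA_\cF$ as well. Thus both the momenta and the smooth functions on the base lie in $\cA_\cF$.

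Next, any smooth function $f$ on $V^*_\Si Q$ that is polynomial in the fibre coordinates $p_k$ admits a finite decomposition
\begin{equation*}
f(t,\si^m,q^k,p_k)=\op\sum_{n=0}^{N}\op\sum_{i_1,\dots,i_n} g_{i_1\dots i_n}(t,\si^m,q^k)\,p_{i_1}\cdots p_{i_n},
\end{equation*}
with smooth coefficients $g_{i_1\dots i_n}$ on $Q$. Each monomial on the right is literally a product of $n+1$ elements of $\cA_\cF$, namely the base function $g_{i_1\dots i_n}$ and the momenta $p_{i_1},\dots,p_{i_n}$. Summing these products yields the desired finite decomposition, which proves the lemma.

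There is essentially no obstacle here; the content is purely algebraic, and the statement does not depend on the Poisson or symplectic structure beyond the definition of $\cA_\cF$. The only point worth flagging is the word \emph{finite}: it relies on $f$ being polynomial (rather than merely smooth) in the momenta, so that the expansion in $p_k$ truncates. This is precisely the hypothesis made just before the lemma, justified by the assumption that $\cE_\g$ is polynomial in momenta, and it is what allows the reduction of the quantization of $\widehat{\cH^*}$ to products of Schr\"odinger operators $\wh f$ with $f\in\cA_\cF$.
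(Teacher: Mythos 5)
The paper states Lemma \ref{lmp70x} without any proof, so there is no argument of the author's to compare against; your decomposition into monomials $g_{i_1\dots i_n}\,p_{i_1}\cdots p_{i_n}$, each a product of the base function $g_{i_1\dots i_n}\in\cA_\cF$ (the case $a^i=0$) with the linear functions $p_{i_j}\in\cA_\cF$ (the case $b=0$), is clearly the intended argument and is correct as a coordinate computation.

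The one point you should tighten is globality. The fibre coordinates $p_k$ and the coefficients $g_{i_1\dots i_n}$ are defined only on a bundle coordinate chart of $V_\Si^*Q\to Q$ (they transform under $p'_i=\frac{\dr q^j}{\dr q'^i}p_j$), whereas elements of $\cA_\cF$ are global functions, and the lemma asserts a \emph{finite} global decomposition. A naive patching by a partition of unity need not terminate in finitely many terms on a non-compact $Q$. The clean fix is to observe that a function polynomial of degree $\leq N$ in the momenta is the same thing as a section of $\op\oplus_{n\leq N}\op\vee^n V_\Si Q$, and that by the Serre -- Swan theorem (invoked in Section 7.8 of the paper) the $C^\infty(Q)$-module of sections of each symmetric power is finitely generated by global sections $u_1,\dots,u_M$ of $V_\Si Q$. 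The corresponding fibrewise-linear functions $\ell_\al=\lng u_\al,\cdot\rng$ lie in $\cA_\cF$, and every homogeneous component of $f$ is a finite sum $\sum g_{\al_1\dots\al_n}\ell_{\al_1}\cdots\ell_{\al_n}$ with $g_{\al_1\dots\al_n}\in C^\infty(Q)\subset\cA_\cF$. With that remark your proof is complete; your closing observation about where finiteness comes from (the polynomial hypothesis on $\cE_\g$) is exactly the role the lemma plays in the text.
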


By virtue of Lemma \ref{lmp70x}, one can associate to a polynomial
Hamiltonian function $\cE_\g$ an element of the enveloping algebra
$\ol\cA_\cF$ of the Lie algebra $\cA_\cF$ (though it by no means
is unique). Accordingly, the homogeneous Hamiltonian $\cH^*$
(\ref{kk190}) is represented by an element of the enveloping
algebra $\ol\cA'_T$ of the Lie algebra $\cA'_T$. Then the
Schr\"odinger representation (\ref{lmp135}) and (\ref{kk235}) of
the Lie algebras $\cA_\cF$ and $\cA'_T$ is naturally extended to
their enveloping algebras $\ol\cA_\cF$ and $\ol\cA'_T$ that
provides quantization
\mar{kk240}\beq
\wh\cH^*= -i[\dr_t +\G^m\dr_m + (A^k_t+  A^k_m\G^m)\dr_k]-
\frac{i}2\dr_k(A^k_t+ A^k_m\G^m) +\wh\cE_\g \label{kk240}
\eeq
of the homogeneous Hamiltonian $\cH^*$ (\ref{kk190}).

It is readily observed that the operator $i\wh\cH^*$ (\ref{kk240})
obeys the Leibniz rule
\mar{kk230'}\beq
i\wh\cH^*(r\rho)=\dr_tr\rho +r(i\wh\cH^*\rho), \qquad r\in
C^\infty(\mathbb R), \qquad \rho\in \gE_Q. \label{kk230'}
\eeq
Therefore, it is a connection on pre-Hilbert $C^\infty(\mathbb
R)$-module $\gE_Q$. The corresponding Schr\"odinger equation reads
\be
i\wh\cH^*\rho=0, \qquad \rho\in \gE_Q.
\ee
Given a trivialization
\mar{kk431}\beq
Q=\mathbb R\times M, \label{kk431}
\eeq
there is the corresponding global decomposition
\be
\wh\cH^*=-i\dr_t +\wh\cH,
\ee
where $\wh\cH$ plays a role of the Hamilton operator. Then we can
introduce the evolution operator $U$ which obeys the equation
\be
\dr_t U(t)=-i\wh\cH^*\circ U(t), \qquad U(0)=\bb.
\ee
It can be written as the formal time-ordered exponent
\be
U=T\exp\left[-i\op\int^t_0\wh\cH dt'\right].
\ee

Given the quantum operator $\wh\cH^*$ (\ref{kk240}), the bracket
\mar{qq120x}\beq
\nabla\wh f= i[\wh \cH^*,\wh f] \label{qq120x}
\eeq
defines a derivation of the quantum algebra $\ol\cA_\cF$. Since
$\wh p_0=-i\dr_t$, the derivation (\ref{qq120x}) obeys the Leibniz
rule
\be
\nabla (r\wh f)=\dr_t r\wh f + r\nabla \wh f, \qquad r\in
C^\infty(\mathbb R).
\ee
Therefore, it is a connection on the $C^\infty(\mathbb R)$-algebra
$\ol\cA_\cF$, which enables one to treat quantum evolution of
$\ol\cA_\cF$ as a parallel displacement along time. In particular,
$\wh f$ is parallel with respect to the connection (\ref{qq120x})
if it obeys the Heisenberg equation (\ref{lmp50}).

Now let us consider a mechanical system depending on a given
parameter function $\vs:\mathbb R\to\Si$. Its configuration space
is the pull-back bundle $Q_\vs$ (\ref{kk156}). The corresponding
phase space is the fibre bundle $P_\vs$ (\ref{kk410}). The
pull-back $H_\vs$ of the Hamiltonian form $H$ (\ref{7.12}) onto
$P_\vs$ takes the form (\ref{kk193}).

The homogeneous phase space of a mechanical system with a
parameter function $\vs$ is the pull-back
\mar{kk411}\beq
\ol P_\vs=i_P^*T^*Q \label{kk411}
\eeq
onto $P_\vs$ of the fibre bundle $T^*Q\to V^*Q$ (\ref{z11'}). The
homogeneous phase space $\ol P_\vs$ (\ref{kk411}) is coordinated
by $(t,q^i,p_0,p_i)$, and it isomorphic to the cotangent bundle
$T^*Q_\vs$. The associated homogeneous Hamiltonian on $\ol P_\vs$
reads
\mar{kk246}\beq
\cH^*_\vs=p_0+ [p_i(A^i_t +A^i_m\dr_t \vs^m) +\vs^*\cE_\g].
\eeq
It characterizes the dynamics of a mechanical system with a given
parameter function $\vs$.

In order to quantize this system, let us consider the pull-back
bundle
\be
\cD_{1/2}[Q_\vs]=i_\vs^*\cD_{1/2}[\pi_{Q\Si}]
\ee
over $Q_\vs$ and its pull-back sections $\rho_\vs=i_\vs^*\rho$,
$\rho\in \gE_Q$. It is easily justified that these are fibrewise
half-densities on a fibre bundle $Q_\vs\to\mathbb R$ whose
restrictions to each fibre $i_t:Q_t\to Q_\vs$ are of compact
support. These sections constitute a pre-Hilbert $C^\infty(\mathbb
R)$-module $\gE_\vs$ with respect to the Hermitian forms
\be
\lng i^*_t\rho_\vs|i^*_t\rho'_\vs\rng_t=\op\int_{Q_t}
i^*_t\rho_\vs\ol{i^*_t\rho'_\vs}.
\ee
Then the pull-back operators
\be
&& (\vs^*\wh f)\rho_\vs=(\wh f\rho)_\vs, \\
&& \vs^*\wh f= -ia^k(t,\vs^m(t),q^j)\dr_k -\frac{i}{2}\dr_ka^k(t,\vs^m(t),q^j)
- b(t,\vs^m(t),q^j),
\ee
in $\gE_\vs$ provide the representation of the pull-back functions
\be
i_\vs^*f=a^k(t,\vs^m(t),q^j)p_k + b(t,\vs^m(t),q^j), \qquad f\in
\cA_\cF,
\ee
on $V^*Q_\vs$. Accordingly, the quantum operator
\mar{kk241}\beq
\wh\cH_\vs^*=-i\dr_t -i(A^i_t +A^i_m\dr_t \vs^m)\dr_i
-\frac{i}2\dr_i(A^i_t +A^i_m\dr_t \vs^m) -\wh{\vs^*\cE_\g}
\label{kk241}
\eeq
coincides with the pull-back operator $\vs^*\wh \cH^*$, and it
yields the Heisenberg equation
\be
i[\wh\cH^*_\vs,\vs^*\wh f]=0
\ee
of a quantum system with a parameter function $\vs$.

The operator $\wh\cH_\vs^*$ (\ref{kk241}) acting in the
pre-Hilbert $C^\infty(\mathbb R)$-module $\gE_\vs$ obeys the
Leibniz rule
\mar{kk230x}\beq
i\wh\cH^*_\vs(r\rho_\vs)=\dr_tr\rho_\vs +r(i\wh\cH^*_\vs\rho_\vs),
\qquad r\in C^\infty(\mathbb R), \qquad \rho_\vs\in \gE_Q,
\label{kk230x}
\eeq
and, therefore, it is a connection on $\gE_\vs$. The corresponding
Schr\"odinger equation reads
\mar{ggg}\ben
&& i\wh\cH^*_\vs\rho_\vs=0, \qquad
\rho_\vs\in \gE_\vs, \label{ggg} \\
&&\left[\dr_t +(A^i_t +A^i_m\dr_t \vs^m)\dr_i + \frac12\dr_i(A^i_t
+A^i_m\dr_t \vs^m) -i\wh{\vs^*\cE_\g}\right]\rho_\vs=0.\nonumber
\een
With the trivialization (\ref{kk431}) of $Q$, we have a
trivialization of $Q_\vs\to\mathbb R$ and the corresponding global
decomposition
\be
\wh\cH^*_\vs=-i\dr_t +\wh\cH_\vs,
\ee
where
\mar{kk430}\beq
\wh\cH_\vs= -i(A^i_t +A^i_m\dr_t \vs^m)\dr_i -
\frac{i}2\dr_i(A^i_t +A^i_m\dr_t \vs^m) +\wh{\vs^*\cE_\g}
\label{kk430}
\eeq
is a Hamilton operator. Then we can introduce an evolution
operator $U_\vs$ which obeys the equation
\be
\dr_t U_\vs(t)=-i\wh\cH^*_\vs\circ U_\vs(t), \qquad U_\vs(0)=\bb.
\ee
It can be written as the formal time-ordered exponent
\mar{lmp60x}\beq
U_\vs(t)=T\exp\left[-i\op\int^t_0 \wh\cH_\vs dt'\right].
\label{lmp60x}
\eeq

\subsection{Berry geometric factor}

As was mentioned above, the Berry phase factor is a standard
attribute of quantum mechanical systems with time-dependent
classical parameters \cite{bohm03,book00}. The quantum Berry phase
factor is described by driving a carrier Hilbert space of a
Hamilton operator over cycles in a parameter manifold. The Berry
geometric factor depends only on the geometry of a path in a
parameter manifold and, therefore, provides a possibility to
perform quantum gate operations in an intrinsically fault-tolerant
way. A problem lies in separation of the Berry geometric factor
from the total evolution operator without using an adiabatic
assumption. Firstly, holonomy quantum computation implies exact
cyclic evolution, but exact adiabatic cyclic evolution almost
never exists. Secondly, an adiabatic condition requires that the
evolution time must be long enough.

In a general setting, let us consider a linear (not necessarily
finite-dimensional) dynamical system
\be
\dr_t\psi=\wh S \psi
\ee
whose linear (time-dependent) dynamic operator $\wh S$ falls into
the sum
\mar{r0}\beq
\wh S=\wh S_0 + \Delta=\wh S_0 +\dr_t\vs^m \Delta_m, \label{r0}
\eeq
where $\vs(t)$ is a parameter function given by a section of some
smooth fibre bundle $\Si\to\mathbb R$ coordinated by $(t,\si^m)$.
Let assume the following:

(i) the operators $\wh S_0(t)$ and $\Delta(t')$ commute for all
instants $t$ and $t'$,

(ii) the operator $\Delta$ depends on time only through a
parameter function $\vs(t)$.

\noindent Then the corresponding evolution operator $U(t)$ can be
represented by the product of time-ordered exponentials
\mar{rr}\beq
U(t)=U_0(t)\circ U_1(t) = T\exp\left[\op\int^t_0 \Delta
dt'\right]\circ T\exp\left[\op\int^t_0 \wh S_0dt'\right],
\label{rr}
\eeq
where the first one is brought into the ordered exponential
\mar{r4}\ben
&& U_1(t) = T\exp\left[\op\int^t_0 \Delta_m(\vs(t'))\dr_t\vs^m(t')
dt'\right] = \label{r4}\\
&& \qquad T\exp\left[\op\int_{\vs[0,t]} \Delta_m(\si)d\si^m\right]
\nonumber
\een
along the curve $\vs[0,t]$ in a parameter bundle $\Si$. It is the
 Berry geometric factor depending only on a trajectory of a
parameter function $\vs$. Therefore, one can think of this factor
as being a displacement operator along a curve $\vs[0,t]\subset
\Si$. Accordingly,
\mar{kk436}\beq
\Delta=\Delta_m\dr_t\vs^m \label{kk436}
\eeq
is called the  holonomy operator.

However, a problem is that the above mentioned commutativity
condition (i) is rather restrictive.

Turn now to the quantum Hamiltonian system with classical
parameters in Section 6.2. The Hamilton operator $\wh\cH_\vs$
(\ref{kk430}) in the evolution operator $U$ (\ref{lmp60x}) takes
the form (\ref{r0}):
\mar{lmp72x}\beq
\wh\cH_\vs= -i\left[A^k_m\dr_k + \frac12\dr_kA^k_m\right]\dr_t
\vs^m +\wh \cH'(\vs). \label{lmp72x}
\eeq
Its second term $\wh\cH'$ can be regarded as a dynamic Hamilton
operator of a quantum system, while the first one is responsible
for the Berry geometric factor as follows.

Bearing in mind possible applications to holonomic quantum
computations, let us simplify the quantum system in question. The
above mentioned trivialization (\ref{kk431}) of $Q$ implies a
trivialization of a parameter bundle $\Si=\mathbb R\times W$ such
that a fibration $Q\to\Si$ reads
\be
\mathbb R\times M\ar^{\id\times \pi_M}\mathbb R\times W,
\ee
where $\pi_M: M\to W$ is a fibre bundle. Let us suppose that
components $A^k_m$ of the connection $A_\Si$ (\ref{z501}) are
independent of time. Then one can regard the second term in this
connection as a connection on a fibre bundle $M\to W$. It also
follows that the first term in the Hamilton operator
(\ref{lmp72x}) depends on time only through parameter functions
$\vs^m(t)$. Furthermore, let the two terms in the  Hamilton
operator (\ref{lmp72x}) mutually commute on $[0,t]$.  Then the
evolution operator $U$ (\ref{lmp60x}) takes the form
\mar{kk432}\ben
&& U= T\exp\left[ -\op\int_{\vs([0,t])}\left(A^k_m\dr_k +
\frac12\dr_kA^k_m\right) d\si^m \right] \circ \label{kk432}\\
&& \qquad T\exp\left[ -i\op\int_0^t\wh\cH' dt'\right]. \nonumber
\een
One can think of its first factor as being the parallel
displacement operator along the curve $\vs([0,t])\subset W$ with
respect to the connection
\mar{kk433}\beq
\nabla_m\rho=\left(\dr_m + A^k_m\dr_k +
\frac12\dr_kA^k_m\right)\rho, \qquad \rho\in \gE_Q, \label{kk433}
\eeq
called the  Berry connection on a $C^\infty(W)$-module $\gE_Q$. A
peculiarity of this factor in comparison with the second one  lies
in the fact that integration over time through a parameter
function $\vs(t)$ depends only on a trajectory of this function in
a parameter space, but not on parametrization of this trajectory
by time. Therefore, the first term of the evolution operator $U$
(\ref{kk432}) is the Berry geometric factor. The corresponding
holonomy operator (\ref{kk436}) reads
\be
\Delta= \left(A^k_m\dr_k + \frac12\dr_kA^k_m\right)\dr_t\vs^m.
\ee

\subsection{Non-adiabatic holonomy operator}

We address the Berry phase phenomena in a completely integrable
system  of $m$ degrees of freedom around its invariant torus
$T^m$. The reason is that, being constant under an internal
evolution, its action variables are driven only by a perturbation
holonomy operator $\Delta$. We construct such an operator for an
arbitrary connection on a fibre bundle
\mar{r25}\beq
W\times T^m\to W, \label{r25}
\eeq
without any adiabatic approximation \cite{jmp04,book05,book10}. In
order that a holonomy operator and a dynamic Hamiltonian mutually
commute, we first define a holonomy operator with respect to
initial data action-angle coordinates and, afterwards, return to
the original ones. A key point is that both classical evolution of
action variables and mean values of quantum action operators
relative to original action-angle coordinates are determined by
the dynamics of initial data action and angle variables.

A generic phase space of a Hamiltonian system with time-dependent
parameters is a composite fibre bundle
\be
P\to\Si\to \mathbb R,
\ee
where $\Pi\to\Si$ is a symplectic bundle (i.e., a symplectic
foliation whose leaves are fibres of $\Pi\to\Si$), and
\be
\Si=\mathbb R\times W\to\mathbb R
\ee
is a parameter bundle whose sections are parameter functions. In
the case of a completely integrable system with time-dependent
parameters, we have the product
\be
P= \Si\times U= \Si\times (V\times T^m)\to \Si\to\mathbb R,
\ee
equipped with the coordinates $(t,\si^\al,I_k,\vf^k)$. Let us
suppose for a time that parameters also are dynamic variables. The
total phase space of such a system is the product
\be
\Pi=V^*\Si\times U
\ee
coordinated by $(t,\si^\al,p_\al=\dot\si_\al,I_k,\vf^k)$. Its
dynamics is characterized by the Hamiltonian form (\ref{7.12}):
\mar{pr30}\ben
&& H_\Si= p_\al d\si^\al+ I_k d\vf^k -
\cH_\Si(t,\si^\bt,p_\bt, I_j,\vf^j)dt,
\nonumber\\
&& \cH_\Si=p_\al\G^\al +I_k(\La^k_t +\La^k_\al\G^\al_t)
+\wt\cH, \label{pr30}
\een
where $\wt\cH$ is a function, $\dr_t +\G^\al\dr_\al$ is the
connection (\ref{kk157}) on the parameter bundle $\Si\to\mathbb
R$, and
\mar{r30}\beq
\La=dt\ot(\dr_t +\La^k_t\dr_k) + d\si^\al\ot(\dr_\al
+\La^k_\al\dr_k) \label{r30}
\eeq
is the connection (\ref{z501}) on the fibre bundle
\be
\Si\times T^m\to\Si.
\ee
Bearing in mind that $\si^\al$ are parameters, one should choose
the Hamiltonian $\cH_\Si$ (\ref{pr30}) to be affine in their
momenta $p_\al$. Then a Hamiltonian system with a fixed parameter
function $\si^\al=\vs^\al(t)$ is described by the pull-back
Hamiltonian form (\ref{kk193}):
\mar{r12x}\ben
&& H_\vs=I_k d\vf^k - \{I_k[\La^k_t(t,\vf^j)+
\label{r12x} \\
&& \qquad \La^k_\al (t,\vs^\bt,\vf^j)\dr_t\vs^\al]
+\wt\cH(t,\vs^\bt,I_j,\vf^j)\}dt \nonumber
\een
on a Poisson manifold
\mar{z46}\beq
\mathbb R\times U=\mathbb R\times (V\times T^m). \label{z46}
\eeq

Let $\wt\cH=\cH(I_i)$ be a Hamiltonian of an original autonomous
completely integrable system on the toroidal domain $U$
(\ref{z46}) equipped with the action-angle coordinates
$(I_k,\vf^k)$. We introduce a desired holonomy operator by the
appropriate choice of the connection $\La$ (\ref{r30}).

For this purpose, let us choose the initial data action-angle
coordinates $(I_k,\f^k)$ by the converse to the canonical
transformation (\ref{ww26}):
\mar{ww26'}\beq
\vf^k=\f^k- t\dr^k\cH. \label{ww26'}
\eeq

With respect to these coordinates, the Hamiltonian of an original
completely integrable system vanishes and the Hamiltonian form
(\ref{r12x}) reads
\mar{r13}\beq
H_\vs=I_k d\f^k - I_k[\La^k_t(t,\f^j)+ \La^k_\al
(t,\vs^\bt,\f^j)\dr_t\vs^\al]dt. \label{r13}
\eeq
Let us put $\La^k_t=0$ by the choice of a reference frame
associated to the initial data coordinates $\f^k$, and let us
assume that coefficients $\La^k_\al$ are independent of time,
i.e., the part
\mar{r31}\beq
\La_W=d\si^\al\ot(\dr_\al +\La^k_\al\dr_k) \label{r31}
\eeq
of the connection $\La$ (\ref{r30}) is a connection on the fibre
bundle (\ref{r25}). Then the Hamiltonian form (\ref{r13}) reads
\mar{r14}\beq
H_\vs=I_k d\f^k - I_k \La^k_\al (\vs^\bt,\f^j)\dr_t\vs^\al dt.
\label{r14}
\eeq
Its Hamilton vector field (\ref{z3}) is
\mar{r21}\beq
\g_H=\dr_t + \La^i_\al \dr_t\vs^\al\dr_i -I_k\dr_i\La^k_\al
\dr_t\vs^\al\dr^i, \label{r21}
\eeq
and it leads to the Hamilton equation
\mar{zz10x,1}\ben
&& d_t\f^i=\La^i_\al(\vs(t),\f^l) \dr_t\vs^\al, \label{zz10x}\\
&& d_t I_i=-I_k\dr_i\La^k_\al(\vs(t),\f^l) \dr_t\vs^\al. \label{zz11}
\een
Let us consider the lift
\mar{r22}\beq
V^*\La_W=d\si^\al\ot(\dr_\al +\La^i_\al\dr_i
-I_k\dr_i\La^k_\al\dr^i) \label{r22}
\eeq
of the connection $\La_W$ (\ref{r31}) onto the fibre bundle
\be
W\times (V\times T^m)\to W,
\ee
seen as a subbundle of the vertical cotangent bundle
\be
V^*(W\times T^m)=W\times T^*T^m
\ee
of the fibre bundle (\ref{r25}). It follows that any solution
$I_i(t)$, $\f^i(t)$ of the Hamilton equation (\ref{zz10x}) --
(\ref{zz11}) (i.e., an integral curve of the Hamilton vector field
(\ref{r21})) is a horizontal lift  of the curve $\vs(t)\subset W$
with respect to the connection $V^*\La_W$ (\ref{r22}), i.e.,
\be
I_i(t)=I_i(\vs(t)), \qquad \f^i(t)=\f^i(\vs(t)).
\ee
Thus, the right-hand side of the Hamilton equation (\ref{zz10x})
-- (\ref{zz11}) is the holonomy operator
\mar{r33}\beq
\Delta= (\La^i_\al \dr_t\vs^\al, -I_k\dr_i\La^k_\al \dr_t\vs^\al).
\label{r33}
\eeq
It is not a linear operator, but the substitution of a solution
$\f(\vs(t))$ of the equation (\ref{zz10x}) into the Hamilton
equation (\ref{zz11}) results in a linear holonomy operator on the
action variables $I_i$.

Let us show that the holonomy operator (\ref{r33}) is well
defined. Since any vector field $\vt$ on $\mathbb R\times T^m$
such that $\vt\rfloor dt=1$ is complete, the Hamilton equation
(\ref{zz10x}) has solutions for any parameter function $\vs(t)$.
It follows that any connection $\La_W$ (\ref{r31}) on the fibre
bundle (\ref{r25}) is an Ehresmann connection, and so is its lift
(\ref{r22}). Because $V^*\La_W$ (\ref{r22}) is an Ehresmann
connection, any curve $\vs([0,1])\subset W$ can play a role of the
parameter function in the holonomy operator $\Delta$ (\ref{r33}).

Now, let us return to the original action-angle coordinates
$(I_k,\vf^k)$ by means of the canonical transformation
(\ref{ww26'}). The perturbed Hamiltonian reads
\be
\cH'= I_k\La^k_\al(\vs(t), \vf^i-t\dr^i\cH(I_j))\dr_t\vs^\al(t) +
\cH(I_j),
\ee
while the Hamilton equation (\ref{zz10x}) -- (\ref{zz11}) takes
the form
\be
&& \dr_t\vf^i=\dr^i\cH(I_j)+ \La^i_\al(\vs(t), \vf^l-t\dr^l\cH(I_j))
\dr_t\vs^\al(t) \\
&& \qquad - tI_k\dr^i\dr^s\cH(I_j)\dr_s\La^k_\al(\vs(t),
\vf^l-t\dr^l\cH(I_j))\dr_t\vs^\al(t),\\
&& \dr_t I_i=-I_k\dr_i\La^k_\al(\vs(t), \vf^l-t\dr^l\cH(I_j)) \dr_t\vs^\al(t).
\ee
Its solution is
\be
I_i(\vs(t)), \qquad \vf^i(t)=\f^i(\vs(t)) +t\dr^i\cH(I_j(\vs(t))),
\ee
 where
$I_i(\vs(t))$, $\f^i(\vs(t))$ is a solution of the Hamilton
equation (\ref{zz10x}) -- (\ref{zz11}). We observe that the action
variables $I_k$ are driven only by the holonomy operator, while
the angle variables $\vf^i$ have a non-geometric summand.

Let us emphasize that, in the construction of the holonomy
operator (\ref{r33}), we do not impose any restriction on the
connection $\La_W$ (\ref{r31}). Therefore, any connection on the
fibre bundle (\ref{r25}) yields a holonomy operator of a
completely integrable system. However, a glance at the expression
(\ref{r33}) shows that this operator becomes zero on action
variables if all coefficients $\La^k_\la$ of the connection
$\La_W$ (\ref{r31}) are constant, i.e., $\La_W$ is a principal
connection on the fibre bundle (\ref{r25}) seen as a principal
bundle with the structure group $T^m$.

In order to quantize a non-autonomous completely integrable system
on the Poisson toroidal domain $(U,\{,\}_V)$ (\ref{z46}) equipped
with action-angle coordinates $(I_i,\vf^i)$, one may follow the
instantwise geometric quantization of non-autonomous mechanics. As
a result, we can simply replace functions on $T^m$ with those on
$\mathbb R\times T^m$ \cite{acang1}. Namely, the corresponding
quantum algebra $\cA\subset C^\infty(U)$ consists of affine
functions
\mar{ww7}\beq
f=a^k(t,\vf^j)I_k + b(t,\vf^j)  \label{ww7}
\eeq
of action coordinates $I_k$ represented by the operators
(\ref{lmp135}) in the space
\mar{r41x}\beq
\gE=\mathbb C^\infty(\mathbb R\times T^m) \label{r41x}
\eeq
of smooth complex functions $\psi(t,\vf)$ on $\mathbb R\times
T^m$. This space is provided with the structure of the pre-Hilbert
$\mathbb C^\infty(\mathbb R)$-module endowed with the
non-degenerate $\mathbb C^\infty(\mathbb R)$-bilinear form
\be
\lng \psi|\psi'\rng=\left(\frac1{2\pi}\right)^m\op\int_{T^m} \psi
\ol \psi'd^m\vf, \qquad \psi,\psi'\in \gE.
\ee
Its basis consists of the pull-back onto $\mathbb R\times T^m$ of
the functions
\mar{r41}\beq
\psi_{(n_r)}=\exp[i(n_r\f^r)], \qquad
(n_r)=(n_1,\ldots,n_m)\in\mathbb Z^m. \label{r41}
\eeq

Furthermore, this quantization of a non-autonomous completely
integrable system on the Poisson manifold $(U,\{,\}_V)$ is
extended to the associated homogeneous completely integrable
system on the symplectic annulus (\ref{0910}):
\be
U'=\zeta^{-1}(U)=N'\times T^m\to N'
\ee
by means of the operator $\wh I_0=-i\dr_t$ in the pre-Hilbert
module $\gE$ (\ref{r41x}). Accordingly, the homogeneous
Hamiltonian $\wh\cH^*$ is quantized as
\be
\wh\cH^*=-i\dr_t +\wh\cH.
\ee
It is a Hamiltonian of a quantum non-autonomous completely
integrable system. The corresponding Schr\"odinger equation is
\mar{r42}\beq
\wh\cH^*\psi=-i\dr_t\psi +\wh\cH\psi=0, \qquad \psi\in  \gE.
\label{r42}
\eeq

For instance, a quantum Hamiltonian of an original autonomous
completely integrable system seen as the non-autonomous one is
\be
\wh\cH^*=-i\dr_t +\cH(\wh I_j).
\ee
Its spectrum
\be
\wh\cH^*\psi_{(n_r)}=E_{(n_r)}\psi_{(n_r)}
\ee
on the basis $\{\psi_{(n_r)}\}$ (\ref{r41}) for $\gE$ (\ref{r41})
coincides with that of the autonomous Hamiltonian
$\wh\cH(I_k)=\cH(\wh I_k)$. The Schr\"odinger equation (\ref{r42})
reads
\be
\wh\cH^*\psi=-i\dr_t\psi +\cH(-i\dr_k +\la_k)\psi=0, \qquad
\psi\in E.
\ee
Its solutions are the Fourier series
\be
\psi=\op\sum_{(n_r)} B_{(n_r)} \exp[-itE_{(n_r)}]\psi_{(n_r)},
\qquad B_{(n_r)}\in\mathbb C.
\ee

Now, let us quantize this completely integrable system with
respect to the initial data action-angle coordinates $(I_i,\f^i)$.
As was mentioned above, it is given on a toroidal domain $U$
(\ref{z46}) provided with another fibration over $\mathbb R$. Its
quantum algebra $\cA_0\subset C^\infty(U)$ consists of affine
functions
\mar{r45}\beq
f=a^k(t,\f^j)I_k + b(t,\f^j). \label{r45}
\eeq
The canonical transformation (\ref{ww26}) ensures an isomorphism
of Poisson algebras $\cA$ and $\cA_0$. Functions $f$ (\ref{r45})
are represented by the operators $\wh f$ (\ref{lmp135}) in the
pre-Hilbert module $\gE_0$ of smooth complex functions
$\Psi(t,\f)$ on $\mathbb R\times T^m$. Given its basis
\be
\Psi_{(n_r)}(\f)=[in_r\f^r],
\ee
the operators $\wh I_k$ and $\wh\psi_{(n_r)}$ take the form
\mar{ci11}\ben
&& \wh I_k\psi_{(n_r)}=(n_k+\la_k)\psi_{(n_r)}, \nonumber\\
&& \psi_{(n_r)} \psi_{(n'_r)}=\psi_{(n_r)}
\psi_{(n'_r)}=\psi_{(n_r+n'_r)}. \label{ci11}
\een
The Hamiltonian of a quantum completely integrable system with
respect to the initial data variables is $\wh\cH^*_0=-i\dr_t$.
Then one easily obtains the isometric isomorphism
\mar{r46}\beq
R(\psi_{(n_r)})=\exp[itE_{(n_r)}]\Psi_{(n_r)}, \qquad \lng
R(\psi)| R(\psi')\rng= \lng \psi | \psi'\rng, \label{r46}
\eeq
of the pre-Hilbert modules $\gE$ and $\gE_0$ which provides the
equivalence
\mar{r47}\beq
\wh I_i=R^{-1}\wh I_iR, \qquad
\wh\psi_{(n_r)}=R^{-1}\wh\Psi_{(n_r)} R, \qquad \wh\cH^*=
R^{-1}\wh\cH^*_0 R \label{r47}
\eeq
of the quantizations of a completely integrable system with
respect to the original and initial data action-angle variables.

In view of the isomorphism (\ref{r47}), let us first construct a
holonomy operator of a quantum completely integrable system
$(\cA_0,\wh\cH^*_0)$ with respect to the initial data action-angle
coordinates. Let us consider the perturbed homogeneous Hamiltonian
\be
\bH_\vs=\cH^*_0+\bH_1=I_0+\dr_t\vs^\al(t)\La_\al^k(\vs(t),\f^j)I_k
\ee
of the classical perturbed completely integrable system
(\ref{r14}). Its perturbation term $\bH_1$ is of the form
(\ref{ww7}) and, therefore, is quantized by the operator
\be
\wh\bH_1=-i\dr_t\vs^\al\wh\Delta_\al=-i\dr_t\vs^\al\left[\La^k_\al\dr_k
+\frac12\dr_k(\La^k_\al) +i\la_k \La^k_\al\right].
\ee

The quantum Hamiltonian $\wh\bH_\vs=\wh\cH^*_0+\wh\bH_1$ defines
the Schr\"odinger equation
\mar{r52}\beq
\dr_t\Psi+ \dr_t\vs^\al\left[\La^k_\al\dr_k
+\frac12\dr_k(\La^k_\al) +i\la_k \La^k_\al\right]\Psi=0.
\label{r52}
\eeq
If its solution exists, it can be written by means of the
evolution operator $U(t)$ which is reduced to the geometric factor
\be
 U_1(t)= T\exp\left[i\op\int_0^t \dr_{t'}\vs^\al(t')
\wh\Delta_\al(t')dt' \right].
\ee
The latter can be viewed as a displacement operator along the
curve $\vs[0,1]\subset W$ with respect to the connection
\mar{r53}\beq
\wh\La_W=d\si^\al(\dr_\al +\wh\Delta_\al) \label{r53}
\eeq
on the $\mathbb C^\infty(W)$-module $\mathbb C^\infty(W\times
T^m)$ of smooth complex functions on $W\times T^m$. Let us study
weather this displacement operator exists.

Given a connection $\La_W$ (\ref{r31}), let $\Phi^i(t,\f)$ denote
the flow of the complete vector field
\be
\dr_t+\La_\al(\vs,\f)\dr_t\vs^\al
\ee
on $\mathbb R\times T^m$. It is a solution of the Hamilton
equation (\ref{zz10x}) with the initial data $\f$. We need the
inverse flow $(\Phi^{-1})^i(t,\f)$ which obeys the equation
\be
&&
\dr_t(\Phi^{-1})^i(t,\f)=-\dr_t\vs^\al\La^i_\al(\vs,(\Phi^{-1})^i(t,\f))=\\
&& \qquad -\dr_t\vs^\al\La^k_\al(\vs,\f)\dr_k(\Phi^{-1})^i(t,\f).
\ee
Let $\Psi_0$ be an arbitrary complex half-form $\Psi_0$ on $T^m$
possessing identical transition functions, and let the same symbol
stand for its pull-back onto $\mathbb R\times T^m$. Given its
pull-back
\mar{r56}\beq
(\Phi^{-1})^*\Psi_0= \det\left(\frac{\dr (\Phi^{-1})^i}{\dr
\f^k}\right)^{1/2}\Psi_0(\Phi^{-1}(t,\f)), \label{r56}
\eeq
it is readily observed that
\mar{r55}\beq
\Psi=(\Phi^{-1})^*\Psi_0\exp[i\la_k\f^k] \label{r55}
\eeq
obeys the Schr\"odinger equation (\ref{r52}) with the initial data
$\Psi_0$. Because of the multiplier $\exp[i\la_k\f^k]$, the
function $\Psi$ (\ref{r55}) however is ill defined, unless all
numbers $\la_k$ equal $0$ or $\pm 1/2$. Let us note that, if some
numbers $\la_k$ are equal to $\pm 1/2$, then
$\Psi_0\exp[i\la_k\f^k]$ is a half-density on $T^m$ whose
transition functions equal $\pm 1$, i.e., it is a section of a
non-trivial metalinear bundle over $T^m$.

Thus, we observe that, if $\la_k$ equal $0$ or $\pm 1/2$, then the
displacement operator always exists and $\Delta=i\bH_1$ is a
holonomy operator. Because of the action law (\ref{ci11}), it is
essentially infinite-dimensional.

For instance, let $\La_W$ (\ref{r31}) be the above mentioned
principal connection, i.e., $\La^k_\al=$const. Then the
Schr\"odinger equation (\ref{r52}) where $\la_k=0$ takes the form
\be
\dr_t\Psi(t,\f^j)+ \dr_t\vs^\al(t)\La^k_\al\dr_k\Psi(t,\f^j)=0,
\ee
and its solution (\ref{r56}) is
\be
\Psi(t,\f^j)=\Psi_0(\f^j-(\vs^\al(t)-\vs^\al(0))\La^j_\al).
\ee
The corresponding evolution operator $U(t)$ reduces to Berry's
phase multiplier
\be
U_1\Psi_{(n_r)}=
\exp[-in_j(\vs^\al(t)-\vs^\al(0))\La^j_\al]\Psi_{(n_r)}, \qquad
n_j\in (n_r).
\ee
It keeps the eigenvectors of the action operators $\wh I_i$.

In order to return to the original action-angle variables, one can
employ the morphism $R$ (\ref{r46}). The corresponding Hamiltonian
reads
\be
\bH=R^{-1}\bH_\vs R.
\ee
The key point is that, due to the relation (\ref{r47}), the action
operators  $\wh I_i$ have the same mean values
\be
\lng I_k\psi| \psi\rng= \lng I_k\Psi | \Psi\rng, \qquad
\Psi=R(\psi),
\ee
with respect both to the original and the initial data
action-angle variables. Therefore, these mean values are defined
only by the holonomy operator.

In conclusion, let us note that, since action variables are driven
only by a holonomy operator, one can use this operator in order to
perform  a dynamic transition between classical solutions or
quantum states of an unperturbed completely integrable system by
an appropriate choice of a parameter function $\vs$. A key point
is that this transition can take an arbitrary short time because
we are entirely free with time parametrization of $\vs$ and can
choose it quickly changing, in contrast with slowly varying
parameter functions in adiabatic models. This fact makes
non-adiabatic holonomy operators in completely integrable systems
promising for several applications, e.g., quantum control and
quantum computation.

\section{Appendix}

For the sake of convenience of the reader, this Section summarizes
the relevant material on differential geometry of fibre bundles
\cite{gre,book00,sard08}.

\subsection{Geometry of fibre bundles}

Throughout this work, all morphisms are smooth, and manifolds are
smooth real and finite-dimensional. A smooth manifold is
customarily assumed to be Hausdorff, second-countable and,
consequently, paracompact. Being paracompact, a smooth manifold
admits a partition of unity by smooth real functions. Unless
otherwise stated, manifolds are assumed to be connected. The
symbol $C^\infty(Z)$ stands for a ring of smooth real functions on
a manifold $Z$.

Given a smooth manifold $Z$, by $\pi_Z:TZ\to Z$ is denoted its
tangent bundle. Given manifold coordinates $(z^\al)$ on $Z$, the
tangent bundle $TZ$ is equipped with the  holonomic coordinates
\be
(z^\la,\dot z^\la), \qquad \dot z'^\la= \frac{\dr z'^\la}{\dr
z^\mu}\dot z^\m,
\ee
with respect to the holonomic frames $\{\dr_\la\}$ in the tangent
spaces to $Z$. Any manifold morphism $f:Z\to Z'$ yields the
tangent morphism
\be
Tf:TZ\to TZ', \qquad \dot z'^\la\circ Tf = \frac{\dr f^\la}{\dr
z^\m}\dot z^\m,
\ee
of their tangent bundles. A morphism $f$ is said to be an
immersion if $T_zf$, $z\in Z$, is injective, and a submersion if
$T_zf$, $z\in Z$, is surjective. Note that a submersion is an open
map (i.e., an image of any open set is open).

If $f:Z\to Z'$ is an injective immersion, its range is called a
submanifold of $Z'$. A submanifold is said to be imbedded if it
also is a topological subspace. In this case, $f$ is called an
imbedding. If $Z\subset Z'$, its natural injection is denoted by
$i_Z:Z\to Z'$.

If a manifold morphism
\mar{11f1}\beq
\p :Y\to X, \qquad \di X=n>0, \label{11f1}
\eeq
is a surjective submersion, one says that: (i) its domain $Y$ is a
 fibred manifold, (ii) $X$ is its  base, (iii) $\p$ is a
fibration, and (iv) $Y_x=\p^{-1}(x)$ is a  fibre over $x\in X$. A
fibred manifold admits an atlas of fibred coordinate charts $(U_Y;
x^\la, y^i)$ such that $(x^\la)$ are coordinates on
$\p(U_Y)\subset X$ and coordinate transition functions read
\be
x'^\la =f^\la(x^\m), \qquad y'^i=f^i(x^\m,y^j).
\ee

For each point $y\in Y$ of a fibred manifold, there exists a local
section $s$ of $Y\to X$ passing through $y$. By a local section of
the fibration (\ref{11f1}) is meant an injection $s:U\to Y$ of an
open subset $U\subset X$ such that $\p\circ s=\id U$, i.e., a
section sends any point $x\in X$ into the fibre $Y_x$ over this
point. A local section also is defined over any subset $N\in X$ as
the restriction to $N$ of a local section over an open set
containing $N$. If $U=X$, one calls $s$ the global section. A
range $s(U)$ of a local section $s:U\to Y$ of a fibred manifold
$Y\to X$ is an imbedded submanifold of $Y$. A local section is a
closed map, sending closed subsets of $U$ onto closed subsets of
$Y$. If  $s$ is a global section, then $s(X)$ is a closed imbedded
submanifold of $Y$. Global sections of a fibred manifold need not
exist.

\begin{theorem} \label{mos9} \mar{mos9}
Let $Y\to X$ be a fibred manifold whose fibres are diffeomorphic
to $\mathbb R^m$.  Any its section over a closed imbedded
submanifold (e.g., a point) of $X$ is extended to a global section
\cite{ste}. In particular, such a fibred manifold always has a
global section.
\end{theorem}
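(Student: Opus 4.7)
The plan is to prove the extension property using local triviality of $Y \to X$ combined with a partition-of-unity gluing that exploits the convexity of the fibres. The "in particular" claim about existence of global sections will then follow by applying the extension property to a section defined on any single point $\{x_0\} \subset X$ with arbitrary image $y_0 \in Y_{x_0}$.

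First, I would establish local triviality: since the fibres of $Y \to X$ are diffeomorphic to $\mathbb{R}^m$, the fibred manifold is in fact a locally trivial fibre bundle by the referenced Theorem \ref{11t2}. Choose a locally finite open cover $\{U_\alpha\}$ of the paracompact base $X$ together with trivializations $\psi_\alpha: \pi^{-1}(U_\alpha) \to U_\alpha \times \mathbb{R}^m$, and a smooth partition of unity $\{\rho_\alpha\}$ subordinate to it. For each $\alpha$ with $N \cap U_\alpha \neq \emptyset$, the composition $\mathrm{pr}_2 \circ \psi_\alpha \circ s$ is a smooth $\mathbb{R}^m$-valued map on the closed imbedded submanifold $N \cap U_\alpha \subset U_\alpha$; by the smooth Tietze--Whitney extension theorem applied coordinatewise, it extends to a smooth map $f_\alpha: U_\alpha \to \mathbb{R}^m$. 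This yields a local section $s_\alpha = \psi_\alpha^{-1} \circ (\mathrm{id}, f_\alpha)$ on $U_\alpha$ agreeing with $s$ on $N \cap U_\alpha$. For $\alpha$ with $N \cap U_\alpha = \emptyset$, take $s_\alpha$ to be the zero section in $\psi_\alpha$.

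Second, I would glue the local extensions $\{s_\alpha\}$ into a single global section $\widetilde s$. The obstacle here is that convex combinations of points in a fibre depend on which trivialization one uses: writing naively $\widetilde s(x) = \psi_{\alpha_0}^{-1}(x, \sum_\alpha \rho_\alpha(x)\,(\mathrm{pr}_2 \circ \psi_{\alpha_0} \circ s_\alpha)(x))$ produces a result that depends on the choice of $\alpha_0$, since transition functions between trivializations of $Y \to X$ need not be affine on the $\mathbb{R}^m$-fibre. I would circumvent this by enumerating the cover as $\{U_{\alpha_n}\}_{n \in \mathbb{N}}$ (possible by local finiteness and paracompactness) and inductively constructing extensions $\widetilde s^{(n)}$ defined on open neighborhoods of $N \cup \overline{V_1} \cup \cdots \cup \overline{V_n}$, where $\{V_n\}$ is a shrinking of $\{U_{\alpha_n}\}$ that still covers $X$. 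At the $(n+1)$-th stage, blend $\widetilde s^{(n)}$ with $s_{\alpha_{n+1}}$ by a convex combination performed entirely inside the single trivialization $\psi_{\alpha_{n+1}}$, weighted by a bump function equal to $1$ on $\overline{V_{n+1}}$ and supported in $U_{\alpha_{n+1}}$; the convexity of the fibre $\mathbb{R}^m$ in that one chart ensures the blend is a well-defined smooth section, and agreement with $s$ on $N$ is preserved at every stage because the sections being blended already agree with $s$ on $N$. Local finiteness of $\{U_{\alpha_n}\}$ ensures that the inductive limit $\widetilde s$ is smooth and defined on all of $X$.

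The main obstacle is precisely this gluing step: the fibres $\mathbb{R}^m$ carry no canonical affine structure compatible with the transition functions, so one cannot take a single simultaneous convex combination over all $\alpha$. The resolution is to order the gluings so that each averaging is performed within a single chart, using the affine structure pulled back through $\psi_{\alpha_{n+1}}$ alone. An equivalent but more geometric route would be to equip $Y$ with a fibrewise Riemannian metric making each fibre a Hadamard manifold (possible as each fibre is diffeomorphic to $\mathbb{R}^m$) and take Karcher means weighted by $\{\rho_\alpha\}$, yielding a canonically defined smooth section in one stroke.
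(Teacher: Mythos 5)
The paper does not actually prove this statement: it is quoted from Steenrod's book \cite{ste} (the extension theorem for bundles with a ``solid'' fibre) and used as a black box, so there is no internal proof to compare against. Your argument is correct and is essentially the classical one behind the citation, transported to the smooth category: reduce to a locally trivial bundle via Theorem \ref{11t2}, extend chartwise by the smooth Tietze/tubular-neighbourhood extension, and glue by a countable induction in which each convex combination is taken inside a single trivialization, so that the missing global affine structure on the fibres is never needed. You correctly identify that a one-shot partition-of-unity average is ill defined, which is the only real pitfall here. Two minor points deserve a line of care in a write-up: (i) in the inductive step the convex combination is only defined on $W_n\cap U_{\alpha_{n+1}}$, so you must record that the new domain is $(W_n\setminus\mathrm{supp}\,\chi)\cup(W_n\cap U_{\alpha_{n+1}})\cup\{\chi=1\}^\circ$, which still contains $N\cup\overline{V}_1\cup\cdots\cup\overline{V}_{n+1}$, and that local finiteness makes the value at each point stabilize; (ii) the reduction to a fibre bundle leans on Theorem \ref{11t2}, which the paper also only cites, so your proof is conditional on that input (for the ``in particular'' clause about mere existence of a global section over a point one could avoid it, since any surjective submersion has local sections). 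The Karcher-mean alternative you sketch would indeed give a chart-independent one-step construction, at the cost of producing a complete fibrewise Hadamard metric, which is a heavier input than the problem requires.
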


Given fibred coordinates $(U_Y;x^\la,y^i)$, a section $s$ of a
fibred manifold $Y\to X$ is represented by collections of local
functions $\{s^i=y^i\ \circ s\}$ on $\p(U_Y)$.

Morphisms of fibred manifolds, by definition, are  fibrewise
morphisms, sending a fibre to a fibre. Namely, a  fibred morphism
of a fibred manifold $\pi:Y\to X$ to a fibred manifold $\pi':
Y'\to X'$ is defined as a pair $(\Phi,f)$ of manifold morphisms
which form a commutative diagram
\be
\begin{array}{rcccl}
& Y &\ar^\Phi & Y'&\\
_\pi& \put(0,10){\vector(0,-1){20}} & & \put(0,10){\vector(0,-1){20}}&_{\pi'}\\
& X &\ar^f & X'&
\end{array}, \qquad \pi'\circ\Phi=f\circ\pi.
\ee
Fibred injections and surjections are called monomorphisms and
epimorphisms, respectively. A fibred diffeomorphism is called an
isomorphism or an automorphism if it is an isomorphism to itself.
For the sake of brevity, a fibred morphism over $f=\id X$ usually
is said to be a fibred morphism over $X$, and is denoted by
$Y\ar_XY'$. In particular, a fibred automorphism over $X$ is
called a  vertical automorphism.

A fibred manifold $Y\to X$ is said to be  trivial if $Y$ is
isomorphic to the product $X\times V$. Different trivializations
of $Y\to X$ differ from each other in surjections $Y\to V$.

A fibred manifold $Y\to X$ is called a  fibre bundle if it is
locally trivial, i.e., if it admits a fibred coordinate atlas
$\{(\pi^{-1}(U_\xi); x^\la, y^i)\}$ over a cover
$\{\pi^{-1}(U_\xi)\}$ of $Y$ which is the inverse image of a cover
$\gU=\{U_\xi\}$ of $X$. In this case, there exists a manifold $V$,
called a  typical fibre, such that $Y$ is locally diffeomorphic to
the  splittings
\mar{mos02}\beq
\psi_\xi:\pi^{-1}(U_\xi) \to U_\xi\times V, \label{mos02}
\eeq
glued together by means of  transition functions
\mar{mos271}\beq
\vr_{\xi\zeta}=\psi_\xi\circ\psi_\zeta^{-1}: U_\xi\cap
U_\zeta\times V \to  U_\xi\cap U_\zeta\times V \label{mos271}
\eeq
on overlaps $U_\xi\cap U_\zeta$. Restricted to a point $x\in X$,
 trivialization morphisms $\psi_\xi$ (\ref{mos02}) and
transition functions $\vr_{\xi\zeta}$ (\ref{mos271}) define
diffeomorphisms of fibres
\mar{sp21,2}\ben
&&\psi_\xi(x): Y_x\to V, \qquad x\in U_\xi,\label{sp21}\\
&& \vr_{\xi\zeta}(x):V\to V, \qquad x\in U_\xi\cap U_\zeta. \label{sp22}
\een
 Trivialization charts $(U_\xi, \psi_\xi)$ together with
transition functions $\vr_{\xi\zeta}$ (\ref{mos271}) constitute a
 bundle atlas
\mar{sp5}\beq
\Psi = \{(U_\xi, \psi_\xi), \vr_{\xi\zeta}\} \label{sp5}
\eeq
of a fibre bundle $Y\to X$. Two bundle atlases are said to be
equivalent if their union also is a bundle atlas, i.e., there
exist transition functions between trivialization charts of
different atlases. All atlases of a fibre bundle are equivalent,
and a fibre bundle $Y\to X$ is uniquely defined by a bundle atlas.

Given a bundle atlas $\Psi$ (\ref{sp5}), a fibre bundle $Y$ is
provided with the fibred coordinates
\be
x^\la(y)=(x^\la\circ \pi)(y), \qquad y^i(y)=(y^i\circ\psi_\xi)(y),
\qquad y\in \pi^{-1}(U_\xi),
\ee
called the  bundle coordinates, where $y^i$ are coordinates on a
typical fibre $V$.

There is the following useful criterion for a fibred manifold to
be a fibre bundle.

\begin{theorem} \label{11t2} \mar{11t2} A fibred manifold whose
fibres are diffeomorphic either to a compact manifold or $\mathbb
R^r$ is a fibre bundle \cite{meig}.
\end{theorem}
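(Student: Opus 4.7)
The plan is to prove local triviality of $\pi:Y\to X$ around each base point, which combined with the standing hypothesis that $Y\to X$ is a fibred manifold yields the fibre bundle structure. The two cases require substantially different arguments, so I would treat them separately.

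First I would dispose of the compact case via Ehresmann's fibration theorem. The key observation is that if every fibre is diffeomorphic to a fixed compact manifold $M$, then locally over $X$ the submersion $\pi$ is proper: working in a fibred chart around a fibre $Y_x$, compactness of $M$ allows one to find a relatively compact open neighbourhood of $Y_x$ whose image under $\pi$ contains a neighbourhood of $x$, and by shrinking I can arrange $\pi^{-1}(K)$ to be compact for $K$ compact in a small enough neighbourhood $U\subset X$. Once properness is established over $U$, I choose a Riemannian metric on $\pi^{-1}(U)$, take its orthogonal complement to $V\pi$ as a horizontal distribution, and observe that horizontal lifts of vector fields on $U$ are complete because compactness of fibres prevents integral curves from escaping in finite time. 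Integrating the horizontal lifts of the coordinate vector fields on a chart $(U;x^\la)$ centred at $x$ produces the trivialization $\pi^{-1}(U)\cong U\times M$.

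For the $\mathbb R^r$ case the Ehresmann strategy fails directly because horizontal lifts of base vector fields need not be complete when fibres are non-compact. Here I would invoke Meigniez's theorem, whose scheme I would reproduce as follows. By Theorem \ref{mos9}, global sections exist over contractible open subsets of $X$, so I can fix a local section $s:U\to Y$ through any prescribed point. Choose an arbitrary connection on $\pi^{-1}(U)\to U$, and then modify its horizontal distribution by a vertical perturbation: using a partition of unity subordinate to a locally finite cover of $\pi^{-1}(U)$ and a proper exhausting function on each $\mathbb R^r$-fibre, multiply the would-be horizontal lifts by bump factors that vanish outside prescribed vertical strips. This dampens the vector fields enough to guarantee completeness while keeping them transverse to the fibres. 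The completed flows then intertwine fibres and, combined with the section $s$, produce a diffeomorphism $\pi^{-1}(U)\cong U\times\mathbb R^r$ over $U$.

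The main obstacle is this completeness-repair step in the $\mathbb R^r$ case: one must verify that the dampened horizontal distribution still has the property that its horizontal lifts of coordinate vector fields on $U$ commute up to vertical corrections that integrate to a well defined action, so that the resulting trivialization is smooth and fibrewise a diffeomorphism onto $\mathbb R^r$. This requires an inductive construction over a locally finite cover of $U$, at each stage extending a partial trivialization, and relies on the smooth isotopy extension properties of $\mathrm{Diff}(\mathbb R^r)$ (specifically that compactly supported diffeomorphisms of $\mathbb R^r$ are isotopic to the identity through such diffeomorphisms). The compact case, by contrast, is essentially automatic once properness is in place.
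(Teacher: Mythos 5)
First, note that the paper does not actually prove Theorem \ref{11t2}: it is stated in the Appendix as a quoted result with a reference to \cite{meig}, so there is no internal proof to compare yours against. Judged on its own terms, your treatment of the compact case is essentially correct and is the standard Ehresmann argument: for a connected compact fibre $Y_x$ one takes a relatively compact neighbourhood $W\supset Y_x$, shrinks the base neighbourhood $U$ so that $U\subset\pi(W)$ and $U\cap\pi(\overline W\setminus W)=\emptyset$, and connectedness of nearby fibres forces $\pi^{-1}(U)\subset W$, giving properness over $U$; completeness of the horizontal lifts then follows from this properness (not, as you write, from compactness of the fibres directly), and integrating them yields the local trivialization. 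You should insert the word "connected": for a disconnected typical fibre the confinement step needs additional care, since a nearby fibre could have components entirely outside $\overline W$.

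The $\mathbb R^r$ case, however, contains a genuine gap, and it is exactly where the whole difficulty of the theorem lives. Your plan is to take an arbitrary connection and "dampen" its horizontal lifts by bump factors so that they become complete, and then to integrate the resulting flows to intertwine fibres. But multiplying a projectable vector field by a function that is not constant on fibres destroys projectability: the rescaled field no longer covers the coordinate vector field on $U$, so its flow does not map fibres to fibres and cannot produce a trivialization. The only modification preserving projectability is the addition of a vertical field, and for that there is no general completeness-repair recipe; indeed, by Remark \ref{Ehresmann} the existence of a complete (Ehresmann) connection on $\pi^{-1}(U)\to U$ is \emph{equivalent} to $\pi^{-1}(U)\to U$ being a bundle, so "fix the connection until it is complete" is not a reduction of the problem but a restatement of it. Moreover, your argument nowhere uses anything specific to $\mathbb R^r$, whereas the statement is false for a general non-compact typical fibre (mutually diffeomorphic non-compact fibres do not in general yield a bundle), so an argument of the generality you sketch would prove too much. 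Finally, the auxiliary fact you invoke --- that compactly supported diffeomorphisms of $\mathbb R^r$ are isotopic to the identity through such diffeomorphisms --- is false in general: $\pi_0\,{\rm Diff}_c(\mathbb R^r)$ is nontrivial for many $r$ (it is governed by the group of homotopy $(r+1)$-spheres); what is true, via the Alexander trick, is that the full group ${\rm Diff}(\mathbb R^r)$ deformation retracts onto $O(r)$. The non-compact case genuinely requires the machinery of \cite{meig}, and the honest course is to cite it, as the paper does, rather than to patch Ehresmann's argument.
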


In particular, a compact fibred manifold is a fibre bundle.

\begin{theorem} \label{11t3} \mar{11t3} Any fibre
bundle over a contractible base is trivial \cite{gre}.
\end{theorem}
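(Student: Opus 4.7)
The proof hinges on the homotopy invariance of fibre bundles under pullback, so my plan is to reduce the statement to this standard fact and then apply it to a contraction of the base. Concretely, I would proceed in three steps.

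First, I would record the homotopy invariance lemma: if $f_0,f_1:X\to X'$ are smoothly homotopic maps and $Y'\to X'$ is a fibre bundle with paracompact base $X$, then the pullback bundles $f_0^*Y'$ and $f_1^*Y'$ are isomorphic over $X$. The standard way to prove this is to consider the pullback $H^*Y'$ of $Y'$ along a homotopy $H:X\times[0,1]\to X'$ and to construct a bundle isomorphism $H^*Y'|_{X\times\{0\}}\to H^*Y'|_{X\times\{1\}}$ by patching together local trivializations with a smooth partition of unity subordinate to a locally finite cover of $X\times[0,1]$ on which $H^*Y'$ trivializes. The paracompactness of $X$ (guaranteed throughout the paper) is exactly what makes this partition-of-unity argument work.

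Second, since $X$ is contractible, there exists a smooth homotopy $H:X\times[0,1]\to X$ with $H_0=\id_X$ and $H_1=c_{x_0}$, the constant map at some point $x_0\in X$. (Smoothness of the contraction follows by approximating a continuous contraction; alternatively, one takes contractibility to mean smooth contractibility, which is the relevant notion in the smooth category.) Applying the lemma to the bundle $\pi:Y\to X$ gives an isomorphism
\[
Y=\id_X^{*}Y\ \cong\ c_{x_0}^{*}Y\ =\ X\times Y_{x_0}
\]
over $X$, which is precisely a trivialization of $Y\to X$ with typical fibre $Y_{x_0}$.

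Third, I would note that the only substantive step is the homotopy invariance lemma; everything else is formal. The expected main obstacle is the patching construction for the lemma: one must glue bundle isomorphisms defined on strips $X\times[t_i,t_{i+1}]$ using the local triviality of $H^*Y'$ and the completeness/integrability of appropriate vector fields along the $[0,1]$ direction. This can be done either by the covering-homotopy theorem for fibre bundles over paracompact spaces, or more concretely by lifting the vector field $\partial/\partial t$ on $X\times[0,1]$ to an Ehresmann connection on $H^*Y'$ (possible by Remark~\ref{047} applied leafwise, combined with a partition of unity) and flowing along it from $t=0$ to $t=1$; completeness of this lift along the compact interval $[0,1]$ is automatic, so the resulting flow is the desired global bundle isomorphism.
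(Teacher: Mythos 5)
The paper does not actually prove this theorem: it is quoted from the literature with a reference to \cite{gre}, so there is no in-text argument to compare against. Your proof is the standard one and is essentially correct: establish homotopy invariance of pullback bundles over a paracompact base, then apply it to a (smooth) contraction to get $Y\cong c_{x_0}^*Y=X\times Y_{x_0}$. The one place you are too quick is the parenthetical claim that completeness of the horizontal lift of $\partial/\partial t$ on $H^*Y'\to X\times[0,1]$ is ``automatic'' because $[0,1]$ is compact. It is not: the fibres may be non-compact, and for an arbitrary connection the integral curve through a point over $(x,0)$ can run off to infinity in the fibre direction before $t$ reaches $1$; compactness of the base interval alone does not prevent this (this is precisely why not every connection on a fibred manifold over a one-dimensional base is an Ehresmann connection, cf.\ Remark \ref{047}). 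To make that variant work one must choose the connection to agree with the product connection in local trivializations and then run the partition-of-unity stepping argument over strips $X\times[t_i,t_{i+1}]$ that you describe in your first paragraph, or simply invoke the covering homotopy theorem for bundles over paracompact bases. Since you already present those as the primary route, the lapse is cosmetic rather than structural, and the overall argument stands.
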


Note that a fibred manifold over a contractible base need not be
trivial. It follows from Theorem \ref{11t3} that any cover of a
base $X$ by  domains (i.e., contractible open subsets) is a bundle
cover.

A fibred morphism of fibre bundles is called a  bundle morphism. A
bundle monomorphism $\Phi:Y\to Y'$ over $X$ onto a submanifold
$\Phi(Y)$ of $Y'$ is called a subbundle of a fibre bundle $Y'\to
X$.

The following are the standard constructions of new fibre bundles
from old ones.

$\bullet$ Given a fibre bundle $\pi:Y\to X$ and a manifold
morphism $f: X'\to X$, the  pull-back of $Y$ by $f$ is called the
manifold
\be
f^*Y =\{(x',y)\in X'\times Y \,: \,\, \pi(y) =f(x')\}
\ee
together with the natural projection $(x',y)\to x'$. It is a fibre
bundle over $X'$ such that the fibre of $f^*Y$ over a point $x'\in
X'$ is that of $Y$ over the point $f(x')\in X$. Any section $s$ of
a fibre bundle $Y\to X$ yields the pull-back section
$f^*s(x')=(x',s(f(x'))$ of $f^*Y\to X'$.

$\bullet$ If $X'\subset X$ is a submanifold of $X$ and $i_{X'}$ is
the corresponding natural injection, then the pull-back bundle
\be
i_{X'}^*Y=Y|_{X'}
\ee
is called the restriction of a fibre bundle $Y$ to the submanifold
$X'\subset X$. If $X'$ is an imbedded submanifold, any section of
the pull-back bundle $Y|_{X'}\to X'$ is the restriction to $X'$ of
some section of $Y\to X$.

$\bullet$ Let $\pi:Y\to X$ and $\pi':Y'\to X$ be fibre bundles
over the same base $X$. Their  bundle product $Y\times_X Y'$ over
$X$ is defined as the pull-back
\be
Y\op\times_X Y'=\pi^*Y'\quad {\rm or} \quad Y\op\times_X
Y'={\pi'}^*Y
\ee
together with its natural surjection onto $X$.  Fibres of the
bundle product $Y\times Y'$ are the Cartesian products $Y_x\times
Y'_x$ of fibres of fibre bundles $Y$ and $Y'$.

$\bullet$ Let us consider the  composite fibre bundle
\mar{1.34}\beq
Y\to \Si\to X. \label{1.34}
\eeq
It is provided with bundle coordinates $(x^\la,\si^m,y^i)$, where
$(x^\la,\si^m)$ are bundle coordinates on a fibre bundle $\Si\to
X$, i.e., transition functions of coordinates $\si^m$ are
independent of coordinates $y^i$. Let $h$ be a global section of a
fibre bundle $\Si\to X$. Then the restriction $Y_h=h^*Y$ of a
fibre bundle $Y\to\Si$ to $h(X)\subset \Si$ is a subbundle of a
fibre bundle $Y\to X$.

A fibre bundle $\pi:Y\to X$ is called a vector bundle if both its
typical fibre and fibres are finite-dimensional real vector
spaces, and if it admits a bundle atlas whose trivialization
morphisms and transition functions are linear isomorphisms. Then
the corresponding bundle coordinates on $Y$ are linear bundle
coordinates $(y^i)$ possessing linear transition functions
$y'^i=A^i_j(x)y^j$. We have
\be
y=y^ie_i(\pi(y))=y^i \psi_\xi(\pi(y))^{-1}(e_i), \qquad \pi(y)\in
U_\xi,
\ee
where $\{e_i\}$ is a fixed basis for a typical fibre $V$ of $Y$
and $\{e_i(x)\}$ are the fibre bases (or the  frames) for the
fibres $Y_x$ of $Y$ associated to a bundle atlas $\Psi$.

By virtue of Theorem \ref{mos9}, any vector bundle has a global
section, e.g., the canonical global  zero-valued section $\wh
0(x)=0$.

Global sections of a vector bundle $Y\to X$ constitute a
projective $C^\infty(X)$-module $Y(X)$ of finite rank. It is
called the structure module of a vector bundle.

There are the following particular constructions of new vector
bundles from the old ones.

$\bullet$ Let $Y\to X$ be a vector bundle with a typical fibre
$V$. By $Y^*\to X$ is denoted the  dual vector bundle with the
typical fibre $V^*$, dual of $V$. The  interior product of $Y$ and
$Y^*$ is defined as a fibred morphism
\be
\rfloor: Y\otimes Y^*\ar_X X\times \mathbb R.
\ee

$\bullet$ Let $Y\to X$ and $Y'\to X$ be vector bundles with
typical fibres $V$ and $V'$, respectively. Their  Whitney sum
$Y\oplus_X Y'$ is a vector bundle over $X$ with the typical fibre
$V\oplus V'$.

$\bullet$ Let $Y\to X$ and $Y'\to X$ be vector bundles with
typical fibres $V$ and $V'$, respectively. Their  tensor product
$Y\ot_X Y'$ is a vector bundle over $X$ with the typical fibre
$V\ot V'$. Similarly, the  exterior product of vector bundles
$Y\w_X Y'$ is defined. The exterior product
\mar{ss12f11}\beq
\w Y=X\times \mathbb R \op\oplus_X Y \op\oplus_X \op\w^2
Y\op\oplus_X\cdots\oplus \op\w^k Y, \qquad k=\di Y-\di X,
\label{ss12f11}
\eeq
is  called the  exterior bundle.

$\bullet$ If $Y'$ is a subbundle of a vector bundle $Y\to X$, the
 factor bundle $Y/Y'$ over $X$ is defined as a vector bundle
whose fibres are the quotients $Y_x/Y'_x$, $x\in X$.

By a morphism of vector bundles is meant a  linear bundle
morphism, which is a linear fibrewise map whose restriction to
each fibre is a linear map.

Given a linear bundle morphism $\Phi: Y'\to Y$ of vector bundles
over $X$, its  kernel Ker$\,\Phi$ is defined as the inverse image
$\Phi^{-1}(\wh 0(X))$ of the canonical zero-valued section $\wh
0(X)$ of $Y$. If $\Phi$ is of constant rank, its kernel and its
range are vector subbundles of the vector bundles $Y'$ and $Y$,
respectively. For instance, monomorphisms and epimorphisms of
vector bundles fulfil this condition.

\begin{remark}\label{mos30} \mar{mos30}
Given vector bundles $Y$ and $Y'$ over the same base $X$, every
linear bundle morphism
\be
\Phi: Y_x\ni \{e_i(x)\}\to \{\Phi^k_i(x)e'_k(x)\}\in Y'_x
\ee
over $X$ defines a global section
\be
\Phi: x\to \Phi^k_i(x)e^i(x)\ot e'_k(x)
\ee
of the tensor product $Y\ot Y'^*$, and {\it vice versa}.
\end{remark}

A sequence of vector bundles
\mar{sp10}\beq
0\to Y'\ar^i Y\ar^j Y'' \to 0 \label{sp10}
\eeq
over $X$ is said to be a  short exact sequence if it is exact at
all terms $Y'$, $Y$, and $Y''$. This means that $i$ is a bundle
monomorphism, $j$ is a bundle epimorphism, and Ker$\,j=\im i$.
Then $Y''$ is isomorphic to a factor bundle $Y/Y'$.

One says that the exact sequence (\ref{sp10}) is  split if there
exists a bundle monomorphism $s:Y''\to Y$ such that $j\circ s=\id
Y''$ or, equivalently,
\be
Y=i(Y')\oplus s(Y'')= Y'\oplus Y''.
\ee

\begin{theorem} \label{sp11} \mar{sp11}
Every exact sequence of vector bundles (\ref{sp10}) is split.
\end{theorem}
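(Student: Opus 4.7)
The plan is to produce a global bundle monomorphism $s:Y''\to Y$ satisfying $j\circ s=\id_{Y''}$ by first constructing local splittings and then patching them together using a partition of unity, exploiting the fact that the base $X$ is paracompact (as is assumed throughout the paper for all manifolds).

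First I would work locally. Pick a cover $\{U_\xi\}$ of $X$ by trivialization domains that simultaneously trivialize all three bundles $Y'$, $Y$, and $Y''$. Over each $U_\xi$ the exact sequence (\ref{sp10}) restricts to a short exact sequence of trivial vector bundles, which is the same datum as a smoothly varying family of short exact sequences of finite-dimensional vector spaces parametrized by $U_\xi$. Any short exact sequence of vector spaces splits (choose a complement of the image of $i$), and one can make this choice smooth in the parameter by, for example, extending a chosen basis of the typical fibre of $Y''$ to preimages in the typical fibre of $Y$ and declaring the resulting map to be $s_\xi$. This produces local bundle monomorphisms $s_\xi: Y''|_{U_\xi} \to Y|_{U_\xi}$ with $j\circ s_\xi=\id$.

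Next I would globalize. Since $X$ is paracompact, choose a smooth partition of unity $\{f_\xi\}$ subordinate to $\{U_\xi\}$. The candidate global splitting is
\[
s(v)\;=\;\op\sum_\xi f_\xi(\pi''(v))\, s_\xi(v),\qquad v\in Y'',
\]
where the sum is locally finite and each term is extended by zero outside its domain. Because each fibre $Y_x$ is a vector space and the collection $\{s_\xi(v)\}_{x\in U_\xi}$ lies in the affine subspace $j^{-1}(v)\subset Y_x$, any convex combination again lies in $j^{-1}(v)$, so $j\circ s=\id_{Y''}$ holds automatically. Linearity and smoothness of $s$ follow from linearity and smoothness of the $s_\xi$, and being a section of a surjection $j$ of constant rank forces $s$ to be a bundle monomorphism; consequently $Y= i(Y')\oplus s(Y'')$.

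An equivalent and perhaps cleaner route I would mention as an alternative uses a fibre metric: by paracompactness one can glue local Euclidean fibre metrics on $Y$ into a global fibre metric $g$, and then take $s(Y'')$ to be the $g$-orthogonal complement of the subbundle $i(Y')\subset Y$, identified with $Y''$ via $j$. The main (and essentially only) obstacle is the patching step, which is the standard paracompactness argument; no analogous construction is available for general exact sequences in categories without such partitions of unity, and this is precisely why the smooth real category here makes the statement automatic.
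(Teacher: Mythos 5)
Your argument is correct. Note, however, that the paper itself offers no proof of Theorem \ref{sp11}: it is quoted in the Appendix as standard background material on vector bundles, with the reader referred to the literature. So there is nothing to compare against; what you have written is precisely the standard textbook argument, and both of your routes are valid. The partition-of-unity route works because the set of right inverses of the surjective linear map $j_x:Y_x\to Y''_x$ is an affine subspace of $\hm(Y''_x,Y_x)$, so the locally finite combination $s=\sum_\xi f_\xi s_\xi$ with $\sum_\xi f_\xi=1$ again satisfies $j\circ s=\id$; injectivity of $s$ and the decomposition $Y=i(Y')\oplus s(Y'')$ then follow exactly as you say, via $y=(y-s(j(y)))+s(j(y))$ fibrewise. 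The only step you leave slightly informal is the existence of the local splittings $s_\xi$; the cleanest way to make it explicit is to observe that in a common trivialization $j$ is a smooth matrix-valued function $J(x)$ of constant maximal rank, so $J^{T}(JJ^{T})^{-1}$ is a smooth right inverse. Your alternative via a global fibre metric (take $s(Y'')$ to be the orthogonal complement of $i(Y')$) is equally standard and, as you note, both constructions rest on paracompactness of the base --- which is exactly the hypothesis the paper imposes on all manifolds throughout.
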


The tangent bundle $TZ$ and the cotangent bundle $T^*Z$ of a
manifold $Z$ exemplify vector bundles. The  cotangent bundle of a
manifold $Z$ is the dual $T^*Z\to Z$ of the tangent bundle $TZ\to
Z$. It is equipped with the holonomic coordinates
\be
(z^\la,\dot z_\la). \qquad \dot z'_\la=\frac{\dr z^\m}{\dr
z'^\la}\dot z_\m,
\ee
with respect to the  coframes  $\{dz^\la\}$ for $T^*Z$ which are
the duals of $\{\dr_\la\}$.

The tensor product of tangent and cotangent bundles
\mar{sp20}\beq
T=(\op\ot^mTZ)\ot(\op\ot^kT^*Z), \qquad m,k\in \mathbb N,
\label{sp20}
\eeq
is called a  tensor bundle,  provided with holonomic bundle
coordinates $\dot z^{\al_1\cdots\al_m}_{\bt_1\cdots\bt_k}$
possessing transition functions
\be
\dot z'^{\al_1\cdots\al_m}_{\bt_1\cdots\bt_k}=\frac{\dr
z'^{\al_1}}{\dr z^{\m_1}}\cdots\frac{\dr z'^{\al_m}}{\dr
z^{\m_m}}\frac{\dr z^{\nu_1}}{\dr z'^{\bt_1}}\cdots\frac{\dr
z^{\nu_k}}{\dr z'^{\bt_k}} \dot
z^{\m_1\cdots\m_m}_{\nu_1\cdots\nu_k}.
\ee

Let $\pi_Y:TY\to Y$ be the tangent bundle of a fibred manifold
$\pi: Y\to X$. Given fibred coordinates $(x^\la,y^i)$ on $Y$, it
is equipped with the holonomic coordinates $(x^\la,y^i,\dot x^\la,
\dot y^i)$. The tangent bundle $TY\to Y$ has the subbundle $VY =
\Ker (T\pi)$,  which consists of the vectors tangent to fibres of
$Y$. It is called the  vertical tangent bundle of $Y$, and it is
provided with the holonomic coordinates $(x^\la,y^i,\dot y^i)$
with respect to the  vertical frames $\{\dr_i\}$. Every fibred
morphism $\Phi: Y\to Y'$ yields the linear bundle morphism over
$\Phi$ of the vertical tangent bundles
\mar{ws538}\beq
V\Phi: VY\to VY', \qquad \dot y'^i\circ V\Phi=\frac{\dr
\Phi^i}{\dr y^j}\dot y^j. \label{ws538}
\eeq
It is called the  vertical tangent morphism.

In many important cases, the vertical tangent bundle $VY\to Y$ of
a fibre bundle $Y\to X$ is trivial, and it is isomorphic to the
bundle product
\mar{48'}\beq
VY= Y\op\times_X\ol Y, \label{48'}
\eeq
where $\ol Y\to X$ is some vector bundle. One calls (\ref{48'})
the  vertical splitting. For instance, every vector bundle $Y\to
X$ admits the  canonical vertical splitting
\mar{12f10}\beq
VY= Y\op\oplus_X Y. \label{12f10}
\eeq

The  vertical cotangent bundle $V^*Y\to Y$ of a fibred manifold
$Y\to X$ is defined as the dual of the vertical tangent bundle
$VY\to Y$. It is not a subbundle of the cotangent bundle $T^*Y$,
but there is the canonical surjection
\mar{z11z}\beq
\zeta: T^*Y\ni \dot x_\la dx^\la +\dot y_i dy^i \to \dot y_i \ol
dy^i\in V^*Y, \label{z11z}
\eeq
where the bases $\{\ol dy^i\}$, possessing transition functions
\be
\ol dy'^i=\frac{\dr y'^i}{\dr y^j}\ol dy^j,
\ee
are the duals of the vertical frames $\{\dr_i\}$ of the vertical
tangent bundle $VY$.

For any fibred manifold $Y$, there exist the exact sequences of
vector bundles
\mar{1.8a,b}\ben
&& 0\to VY\ar TY\ar^{\pi_T} Y\op\times_X TX\to 0,
\label{1.8a} \\
&& 0\to Y\op\times_X T^*X\to T^*Y\to V^*Y\to 0.
\label{1.8b}
\een
Their splitting, by definition, is a connection on $Y\to X$.

Let $\ol\pi:\ol Y\to X$ be a vector bundle with a typical fibre
$\ol V$. An  affine bundle modelled over the vector bundle $\ol
Y\to X$ is a fibre bundle $\pi:Y\to X$ whose typical fibre $V$ is
an affine space modelled over $\ol V$, all the fibres $Y_x$ of $Y$
are affine spaces modelled over the corresponding fibres $\ol Y_x$
of the vector bundle $\ol Y$, and there is an affine bundle atlas
\be
\Psi=\{(U_\al,\psi_\chi),\vr_{\chi\zeta}\}
\ee
of $Y\to X$ whose local trivializations morphisms $\psi_\chi$
(\ref{sp21}) and transition functions $\vr_{\chi\zeta}$
(\ref{sp22}) are affine isomorphisms. Dealing with affine bundles,
we use only  affine bundle coordinates $(y^i)$ associated to an
affine bundle atlas $\Psi$.

By virtue of Theorem \ref{mos9}, affine bundles  have  global
sections, but in contrast with vector bundles, there is no
canonical global section of an affine bundle.

By a morphism of affine bundles is meant a bundle morphism
$\Phi:Y\to Y'$ whose restriction to each fibre of $Y$ is an affine
map. It is called an  affine bundle morphism. Every affine bundle
morphism $\Phi:Y\to Y'$ of an affine bundle $Y$ modelled over a
vector bundle $\ol Y$ to an affine bundle $Y'$ modelled over a
vector bundle $\ol Y'$ yields an unique linear bundle morphism
\be
\ol \Phi: \ol Y\to \ol Y', \qquad \ol y'^i\circ \ol\Phi=
\frac{\dr\Phi^i}{\dr y^j}\ol y^j,
\ee
called the  linear derivative of $\Phi$.

Every affine bundle $Y\to X$ modelled over a vector bundle $\ol
Y\to X$ admits the  canonical vertical splitting
\mar{48}\beq
VY= Y\op\times_X\ol Y. \label{48}
\eeq

\subsection{Vector and multivector fields}

 Vector fields  on a manifold $Z$ are global sections of the
tangent bundle $TZ\to Z$.

The set $\cT_1(Z)$  of vector fields on $Z$ is both a
$C^\infty(Z)$-module and a real Lie algebra with respect to the
 Lie bracket
\be
&& u=u^\la\dr_\la, \qquad v=v^\la\dr_\la,\\
&& [v,u] = (v^\la\dr_\la u^\m - u^\la\dr_\la v^\m)\dr_\m.
\ee

Given a vector field $u$ on $X$, a  curve $c:\mathbb R\supset
(,)\to Z$ in $Z$ is said to be an  integral curve of $u$ if
$Tc=u(c)$. Every vector field $u$ on a manifold $Z$ can be seen as
an  infinitesimal generator of a local one-parameter group of
local diffeomorphisms (a  flow), and {\it vice versa} \cite{kob}.
One-dimensional orbits of this group are integral curves of $u$.

\begin{remark} \label{10b2} \mar{10b2}
Let  $U\subset Z$ be an open subset and $\e>0$. Recall that by a
local one-parameter group of local diffeomorphisms of $Z$ defined
on $(-\e,\e)\times U$ is meant a map
\be
G: (-\e,\e)\times U \ni (t,z)\to G_t(z)\in Z
\ee
which possesses the following properties:

$\bullet$ for each $t\in (-\e,\e)$, the mapping $G_t$ is a
diffeomorphism of $U$ onto the open subset $G_t(U)\subset Z$;

$\bullet$ $G_{t+t'}(z) = (G_t\circ G_{t'})(z)$ if $t+t'\in
(-\e,\e)$.

\noindent If such a map $G$ is defined on $\mathbb R\times Z$, it
is called the one-parameter group of diffeomorphisms of $Z$. If a
local one-parameter group of local diffeomorphisms of $Z$ is
defined on $(-\e,\e)\times Z$, it is uniquely prolonged onto
$\mathbb R\times Z$ to a one-parameter group of diffeomorphisms of
$Z$ \cite{kob}. As was mentioned above, a local one-parameter
group of local diffeomorphisms $G$ on $U\subset Z$ defines a local
vector field $u$ on $U$ by setting $u(z)$ to be the tangent vector
to the curve $s(t)=G_t(z)$ at $t=0$. Conversely, let $u$ be a
vector field on a manifold $Z$. For each $z\in Z$, there exist a
number $\e> 0$, a neighborhood $U$ of $z$ and a unique local
one-parameter group of local diffeomorphisms on $(-\e,\e)\times
U$, which determines $u$.
\end{remark}

A vector field is called complete if its flow is a one-parameter
group of diffeomorphisms of $Z$.

\begin{theorem} \label{10b3} \mar{10b3}
Any vector field on a compact manifold is complete.
\end{theorem}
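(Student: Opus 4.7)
The plan is to combine the local existence of flows (recalled in Remark \ref{10b2}) with the standard compactness argument to obtain a uniform time of existence, and then invoke the prolongation statement from the same remark.

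First, I would invoke Remark \ref{10b2} pointwise: for every $z\in Z$ the vector field $u$ generates a local one-parameter group of local diffeomorphisms on some product $(-\varepsilon_z,\varepsilon_z)\times U_z$, where $U_z$ is an open neighborhood of $z$ and $\varepsilon_z>0$. This is the purely local ODE fact (existence and uniqueness for the initial value problem $\dot c=u(c)$, $c(0)=z$, with smooth dependence on initial data), which is already built into Remark \ref{10b2}.

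Next, I would exploit compactness of $Z$. The collection $\{U_z\}_{z\in Z}$ is an open cover, so by compactness it admits a finite subcover $U_{z_1},\dots,U_{z_N}$. Setting $\varepsilon=\min(\varepsilon_{z_1},\dots,\varepsilon_{z_N})>0$, the local flows $G^{(i)}$ defined on $(-\varepsilon_{z_i},\varepsilon_{z_i})\times U_{z_i}$ agree on overlaps by uniqueness of integral curves, so they glue to a single map $G:(-\varepsilon,\varepsilon)\times Z\to Z$ that is a local one-parameter group of local diffeomorphisms defined on all of $Z$ with uniform time $\varepsilon$.

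Finally, I would apply the prolongation statement in Remark \ref{10b2}: any local one-parameter group of local diffeomorphisms defined on $(-\varepsilon,\varepsilon)\times Z$ extends uniquely to a one-parameter group of diffeomorphisms on $\mathbb R\times Z$. Concretely, for $|t|\ge\varepsilon$ one writes $t=n(\varepsilon/2)+r$ with $|r|<\varepsilon/2$ and defines $G_t=G_{\varepsilon/2}^{\,n}\circ G_r$, which is well-defined and smooth because each $G_{\varepsilon/2}$ is a global diffeomorphism of $Z$ (here compactness is used again: since $G_{\varepsilon/2}$ is defined on all of $Z$ and $G_{-\varepsilon/2}$ is its two-sided inverse, it is a diffeomorphism of $Z$ onto itself). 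The flow $G$ is then defined on all of $\mathbb R\times Z$, so $u$ is complete by definition.

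The only mildly delicate step is the uniform lower bound on the escape time, which is exactly what compactness buys; everything else is bookkeeping. There is no real obstacle — the result is essentially a direct corollary of Remark \ref{10b2} once finite subcovers are available.
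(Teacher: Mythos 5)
Your argument is correct: it is the standard proof that a uniform escape time obtained from compactness, combined with the group law, lets one iterate the time-$\varepsilon/2$ map to define the flow for all time. The paper itself states Theorem \ref{10b3} without proof, as part of the appendix summarizing standard background (the relevant local theory being only recalled in Remark \ref{10b2}), so there is no authorial proof to diverge from; your write-up supplies exactly the missing standard details, including the one genuinely necessary observation that the local flows glue by uniqueness of integral curves and that $G_{\varepsilon/2}$ is a global diffeomorphism with inverse $G_{-\varepsilon/2}$.
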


A vector field $u$ on a fibred manifold $Y\to X$ is called
projectable if it is projected onto a vector field on $X$, i.e.,
there exists a vector field $\tau$ on $X$ such that
\be
\tau\circ\pi= T\pi\circ u.
\ee
A projectable vector field takes the coordinate form
\mar{11f30}\beq
u=u^\la(x^\m) \dr_\la + u^i(x^\m,y^j) \dr_i, \qquad
\tau=u^\la\dr_\la. \label{11f30}
\eeq
A projectable vector field is called  vertical if its projection
onto $X$ vanishes, i.e., if it lives in the vertical tangent
bundle $VY$.

A vector field $\tau=\tau^\la\dr_\la$ on a base $X$ of a fibred
manifold $Y\to X$ gives rise to a vector field on $Y$ by means of
a connection on this fibre bundle (see the formula (\ref{b1.85})
below). Nevertheless, every tensor bundle (\ref{sp20}) admits the
 functorial lift of vector fields
\mar{l28}\beq
\wt\tau = \tau^\m\dr_\m + [\dr_\nu\tau^{\al_1}\dot
x^{\nu\al_2\cdots\al_m}_{\bt_1\cdots\bt_k} + \ldots
-\dr_{\bt_1}\tau^\nu \dot
x^{\al_1\cdots\al_m}_{\nu\bt_2\cdots\bt_k} -\ldots]\dot \dr
_{\al_1\cdots\al_m}^{\bt_1\cdots\bt_k}, \label{l28}
\eeq
where we employ the compact notation
\mar{vvv}\beq
\dot\dr_\la = \frac{\dr}{\dr\dot x^\la}. \label{vvv}
\eeq
This lift is an $\mathbb R$-linear monomorphism of the Lie algebra
$\cT_1(X)$ of vector fields on $X$ to the Lie algebra $\cT_1(Y)$
of vector fields on $Y$. In particular, we have the functorial
lift
\mar{l27}\beq
\wt\tau = \tau^\m\dr_\m +\dr_\nu\tau^\al\dot
x^\nu\frac{\dr}{\dr\dot x^\al} \label{l27}
\eeq
of vector fields on $X$ onto the tangent bundle $TX$ and their
functorial lift
\mar{l27'}\beq
\wt\tau = \tau^\m\dr_\m -\dr_\bt\tau^\nu\dot
x_\nu\frac{\dr}{\dr\dot x_\bt} \label{l27'}
\eeq
onto the cotangent bundle $T^*X$.

Let $Y\to X$ be a vector bundle. Using the canonical vertical
splitting (\ref{12f10}), we obtain the canonical vertical vector
field
\mar{z112'}\beq
u_Y=y^i\dr_i \label{z112'}
\eeq
on $Y$, called the  Liouville vector field.

A  multivector field $\vt$ of degree $\nm\vt=r$ (or, simply, an
$r$-vector field) on a manifold $Z$ is a section
\mar{cc6}\beq
\vt =\frac{1}{r!}\vt^{\la_1\dots\la_r} \dr_{\la_1}\w\cdots\w
\dr_{\la_r} \label{cc6}
\eeq
of the exterior product $\op\w^r TZ\to Z$.  Let $\cT_r(Z)$ denote
the $C^\infty(Z)$-module space of $r$-vector fields on $Z$. All
multivector fields on a manifold $Z$ make up the graded
commutative algebra $\cT_*(Z)$ of global sections of the exterior
bundle $\w TZ$ (\ref{ss12f11}) with respect to the exterior
product $\w$.

The graded commutative algebra $\cT_*(Z)$ is endowed with the
Schouten -- Nijenhuis  bracket
\mar{z30}\ben
&& [.,.]_{\rm SN}: \cT_r(Z)\xx\cT_s(Z) \to \cT_{r+s-1}(Z), \label{z30} \\
&& [\vt,\up]_{\rm SN}=\vt\bullet\up
+(-1)^{rs}\up\bullet\vt, \nonumber \\
&& \vt\bullet\up = \frac{r}{r!s!}(\vt^{\mu\la_2\ldots\la_r}
\dr_\m\up^{\al_1\ldots\al_s}\dr_{\la_2}\w
\cdots\w\dr_{\la_r}\w\dr_{\al_1}\w\cdots\w\dr_{\al_s}). \nonumber
\een
This generalizes the Lie bracket of vector fields. It obeys the
relations
\be
&&[\vt,\up]_{\rm SN}=(-1)^{\nm\vt\nm\up}[\up,\vt]_{\rm SN}, \\\
&&[\nu,\vt\w\up]_{\rm SN} =[\nu,\vt]_{\rm SN}\w\up +(-1)^{(\nm\nu-1)\nm\vt}
\vt\w[\nu,\up]_{\rm SN}.
\ee

The  Lie derivative of a multivector field $\vt$ along a vector
field $u$ is defined as
\be
\bL_u\up=[u,\vt]_{\rm SN},\ qquad \bL_u(\vt\w\up)= \bL_u\vt\w\up
+\vt\w\bL_u\up.
\ee

\subsection{Differential forms}

An exterior $r$-form on a manifold $Z$ is a section
\be
\f =\frac{1}{r!}\f_{\la_1\dots\la_r} dz^{\la_1}\w\cdots\w
dz^{\la_r}
\ee
of the exterior product $\op\w^r T^*Z\to Z$, where
\be
&& dz^{\la_1}\w\cdots\w dz^{\la_r}=
\frac{1}{r!}\e^{\la_1\ldots\la_r}{}_{\m_1\ldots\m_r}dz^{\m_1}\ot\cdots\ot
dz^{\m_r},\\
&& \e^{\ldots \la_i\ldots\la_j\ldots}{}_{\ldots
\m_p\ldots\m_k\ldots}= -\e^{\ldots
\la_j\ldots\la_i\ldots}{}_{\ldots \m_p\ldots\m_k\ldots} = -
\e^{\ldots \la_i\ldots\la_j\ldots}{}_{\ldots
\m_k\ldots\m_p\ldots}, \\
&& \e^{\la_1\ldots\la_r}{}_{\la_1\ldots\la_r}=1.
\ee
Sometimes, it is convenient to write
\be
\f =\f'_{\la_1\dots\la_r} dz^{\la_1}\w\cdots\w dz^{\la_r}
\ee
without the coefficient $1/r!$.

Let $\cO^r(Z)$ denote the $C^\infty(Z)$-module of exterior
$r$-forms on a manifold $Z$. By definition, $\cO^0(Z)=C^\infty(Z)$
is the ring of smooth real functions on $Z$. All exterior forms on
$Z$ constitute the graded algebra $\cO^*(Z)$ of global sections of
the exterior bundle $\w T^*Z$ (\ref{ss12f11}) endowed with the
 exterior product
\be
&&\f=\frac{1}{r!}\f_{\la_1\dots\la_r} dz^{\la_1}\w\cdots\w
dz^{\la_r}, \qquad \si= \frac{1}{s!}\si_{\m_1\dots\m_s}
dz^{\m_1}\w\cdots\w dz^{\m_s},\\
&& \f\w\si=\frac{1}{r!s!}\f_{\nu_1\ldots\nu_r}\si_{\nu_{r+1}\ldots\nu_{r+s}}
dz^{\nu_1}\w\cdots\w
dz^{\nu_{r+s}}=\\
&& \qquad
\frac{1}{r!s!(r+s)!}\e^{\nu_1\ldots\nu_{r+s}}{}_{\al_1\ldots\al_{r+s}}
\f_{\nu_1\ldots\nu_r}\si_{\nu_{r+1}\ldots\nu_{r+s}}dz^{\al_1}\w\cdots\w
dz^{\al_{r+s}},
\ee
such  that
\be
\f\w\si=(-1)^{|\f||\si|}\si\w\f,
\ee
where the symbol $|\f|$ stands for the form degree. The algebra
$\cO^*(Z)$  also is provided with the exterior differential
\be
d\f= dz^\m\w \dr_\m\f=\frac{1}{r!} \dr_\m\f_{\la_1\ldots\la_r}
dz^\m\w dz^{\la_1}\w\cdots \w dz^{\la_r}
\ee
which obeys the relations
\be
d\circ d=0, \qquad d(\f\w\si)= d(\f)\w \si +(-1)^{|\f|}\f\w
d(\si).
\ee
The exterior differential $d$ makes $\cO^*(Z)$ into a differential
graded algebra, called the exterior algebra.

Given a manifold morphism $f:Z\to Z'$, any exterior $k$-form $\f$
on $Z'$ yields the  pull-back exterior form $f^*\f$ on $Z$ given
by the condition
\be
f^*\f(v^1,\ldots,v^k)(z) = \f(Tf(v^1),\ldots,Tf(v^k))(f(z))
\ee
for an arbitrary collection of tangent vectors $v^1,\cdots, v^k\in
T_zZ$. We have the relations
\be
f^*(\f\w\si) =f^*\f\w f^*\si, \qquad df^*\f =f^*(d\f).
\ee

In particular, given a fibred manifold $\pi:Y\to X$, the pull-back
onto $Y$ of exterior forms on $X$ by $\pi$ provides the
monomorphism of graded commutative algebras $\cO^*(X)\to
\cO^*(Y)$. Elements of its range $\pi^*\cO^*(X)$ are called basic
forms. Exterior forms
\be
\phi : Y\to\op\w^r T^*X, \qquad \phi
=\frac{1}{r!}\phi_{\la_1\ldots\la_r}dx^{\la_1}\w\cdots\w
dx^{\la_r},
\ee
on $Y$ such that $u\rfloor\f=0$ for an arbitrary vertical vector
field $u$ on $Y$ are said to be  horizontal forms. Horizontal
forms of degree $n=\dim X$ are called  densities.

In the case of the tangent bundle $TX\to X$, there is a different
way to lift exterior forms on $X$ onto $TX$ \cite{grab}. Let $f$
be a function on $X$. Its  tangent lift onto $TX$ is defined as
the function
\mar{gm62}\beq
\wt f=\dot x^\la\dr_\la f. \label{gm62}
\eeq
Let $\si$ be an $r$-form on $X$. Its  tangent lift onto $TX$ is
said to be the $r$-form $\wt \si$ given by the relation
\mar{z115}\beq
\wt \si(\wt\tau_1,\ldots,\wt\tau_r)=
\wt{\si(\tau_1,\ldots,\tau_r)}, \label{z115}
\eeq
where $\tau_i$ are arbitrary vector fields on $X$ and $\wt\tau_i$
are their functorial lifts (\ref{l27}) onto $TX$. We have the
coordinate expression
\mar{gm61}\ben
&& \si=\frac{1}{r!}\si_{\la_1\cdots\la_r}dx^{\la_1}\w\cdots\w dx^{\la_r},
\nonumber\\
&& \wt\si =\frac1{r!}[\dot x^\m\dr_\m
\si_{\la_1\cdots\la_r}dx^{\la_1}\w\cdots\w dx^{\la_r}+ \label{gm61}\\
&&\qquad \op\sum_{i=1}^r
\si_{\la_1\cdots\la_r}dx^{\la_1}\w\cdots\w d\dot
x^{\la_i}\w\cdots\w dx^{\la_r}]. \nonumber
\een
The following equality holds:
\mar{u10}\beq
d\wt\si=\wt{d\si}. \label{u10}
\eeq

The  interior product (or  contraction) of a vector field $u$ and
an exterior $r$-form $\f$ on a manifold $Z$ is given by the
coordinate expression
\be
&& u\rfloor\f = \op\sum_{k=1}^r \frac{(-1)^{k-1}}{r!} u^{\la_k}
\f_{\la_1\ldots\la_k\ldots\la_r} dz^{\la_1}\w\cdots\w\wh
{dz}^{\la_k}\w\cdots \w dz^{\la_r}= \\
&& \qquad \frac{1}{(r-1)!}u^\m\f_{\m\al_2\ldots\al_r} dz^{\al_2}\w\cdots\w
dz^{\al_r},
\ee
where the caret $\,\wh{}\,$ denotes omission. It obeys the
relations
\mar{031}\ben
&& \f(u_1,\ldots,u_r)=u_r\rfloor\cdots u_1\rfloor\f,\nonumber\\
&& u\rfloor(\f\w\si)= u\rfloor\f\w\si +(-1)^{|\f|}\f\w
u\rfloor\si. \label{031}
\een

The  Lie derivative of an exterior form $\f$ along a vector field
$u$  is
\mar{034,5}\ben
&& \bL_u\f = u\rfloor d\f +d(u\rfloor\f), \label{034}\\
&& \bL_u(\f\w\si)= \bL_u\f\w\si +\f\w\bL_u\si. \label{035}
\een
In particular, if $f$ is a function, then
\be
\bL_u f =u(f)=u\rfloor d f.
\ee
An exterior form $\f$ is invariant under a local one-parameter
group of diffeomorphisms $G_t$ of $Z$ (i.e., $G_t^*\f=\f$) iff its
Lie derivative along the infinitesimal generator $u$ of this group
vanishes, i.e.,
\be
\bL_u\f=0.
\ee
Following physical terminology (Definition \ref{084}), we say that
a vector field $u$ is a  symmetry of an exterior form $\f$.

A  tangent-valued $r$-form on a manifold $Z$ is a section
\be
\phi = \frac{1}{r!}\phi_{\la_1\ldots\la_r}^\m dz^{\la_1}\w\cdots\w
dz^{\la_r}\ot\dr_\m
\ee
of the tensor bundle
\be
\op\w^r T^*Z\ot TZ\to Z.
\ee

\begin{remark}
There is one-to-one correspondence between the tangent-valued
one-forms $\f$ on a manifold $Z$ and the linear bundle
endomorphisms
\mar{29b,b'}\ben
&& \wh\f:TZ\to TZ,\quad
\wh\f: T_zZ\ni v\to v\rfloor\f(z)\in T_zZ, \label{29b} \\
&&\wh\f^*:T^*Z\to T^*Z,\quad \wh\f^*: T_z^*Z\ni v^*\to
\f(z)\rfloor v^*\in T_z^*Z, \label{29b'}
\een
over $Z$ (Remark \ref{mos30}). For instance, the  canonical
tangent-valued one-form
\mar{b1.51}\beq
\thh_Z= dz^\la\ot \dr_\la \label{b1.51}
\eeq
on $Z$ corresponds to the identity morphisms (\ref{29b}) and
(\ref{29b'}).
\end{remark}

The space $\cO^*(Z)\ot \cT_1(Z)$ of tangent-valued forms is
provided with the  Fr\"olicher -- Nijenhuis bracket
\be
&& [,]_{\rm FN}:\cO^r(Z)\ot \cT_1(Z)\times \cO^s(Z)\ot \cT_1(Z)
\to\cO^{r+s}(Z)\ot \cT_1(Z), \\
&& [\al\ot u,\, \bt\ot v]_{\rm FN} = (\al\w\bt)\ot [u, v] +
(\al\w \bL_u \bt)\ot v -  \\
&& \qquad(\bL_v \al\w\bt)\ot u +  (-1)^r (d\al\w u\rfloor\bt)\ot v
+ (-1)^r(v\rfloor\al\w d\bt)\ot u, \\
&&  \al\in\cO^r(Z), \qquad \bt\in \cO^s(Z), \qquad u,v\in\cT_1(Z).
\ee
Its coordinate expression is
\be
&& [\phi,\si]_{\rm FN} = \frac{1}{r!s!}(\phi_{\la_1
\dots\la_r}^\nu\dr_\n\si_{\la_{r+1}\dots\la_{r+s}}^\m -
\si_{\la_{r+1}
\dots\la_{r+s}}^\nu\dr_\nu\phi_{\la_1\dots\la_r}^\m -\\
&& \qquad r\phi_{\la_1\ldots\la_{r-1}\nu}^\m\dr_{\la_r}\si_{\la_{r+1}
\dots\la_{r+s}}^\nu + s \si_{\nu\la_{r+2}\ldots\la_{r+s}}^\m
\dr_{\la_{r+1}}\phi_{\la_1\ldots\la_r}^\nu)\\
&& \qquad dz^{\la_1}\wedge\cdots \wedge dz^{\la_{r+s}}\otimes\dr_\m,\\
&&
\f\in \cO^r(Z)\ot \cT_1(Z), \qquad \si\in \cO^s(Z)\ot \cT_1(Z).
\ee
There are the relations
\be
&& [\f,\si]_{\rm FN}=(-1)^{|\f||\s|+1}[\si,\f]_{\rm FN},  \\
&& [\f, [\si, \thh]_{\rm FN}]_{\rm FN} = [[\f, \si]_{\rm FN}, \thh]_{\rm
FN} + (-1)^{|\f||\si|}  [\si, [\f,\thh]_{\rm FN}]_{\rm FN}, \\
&& \f,\si,\thh\in \cO^*(Z)\ot \cT_1(Z).
\ee

Given a tangent-valued form  $\thh$, the  Nijenhuis differential
on $\cO^*(Z)\ot\cT_1(Z)$ is defined as the morphism
\be
d_\thh : \psi\to d_\thh\psi = [\thh,\psi]_{\rm FN}, \qquad
\psi\in\cO^*(Z)\ot\cT_1(Z).
\ee

In particular, if $\f$ is a tangent-valued one-form, the Nijenhuis
differential
\mar{spr613}\ben
&& d_\f\f=[\f,\f]_{\rm FN}=   \label{spr613}\\
&& \qquad (\f^\m_\nu\dr_\m \f^\al_\bt -\f^\m_\bt\dr_\m \f^\al_\nu -\f^\al_\m\dr_\nu \f^\m_\bt +
\f^\al_\m\dr_\bt \f^\m_\nu)dz^\nu\w dz^\bt\ot\dr_\al \nonumber
\een
is called the  Nijenhuis torsion.

Let $\Y$ be a fibred manifold. We consider the following subspaces
of the space $\cO^*(Y)\ot \cT_1(Y)$ of tangent-valued forms on
$Y$:

$\bullet$ horizontal tangent-valued forms
\be
&& \phi : Y\to\op\w^r T^*X\op\otimes_Y TY,\\
&& \phi =dx^{\la_1}\wedge\cdots\wedge dx^{\la_r}\otimes
\frac{1}{r!}[\phi_{\la_1\ldots\la_r}^\m(y) \dr_\m
+\phi_{\la_1\ldots\la_r}^i(y) \dr_i],
\ee

$\bullet$  vertical-valued form
\be
\phi : Y\to\op\w^r T^*X\op\otimes_Y VY,\quad \phi
=\frac{1}{r!}\phi_{\la_1\ldots\la_r}^i(y)dx^{\la_1}\wedge\cdots
\wedge dx^{\la_r}\otimes\dr_i,
\ee

$\bullet$ vertical-valued one-forms, called  soldering forms,
\mar{1290}\beq
\si = \si_\la^i(y) dx^\la\otimes\dr_i. \label{1290}
\eeq

\subsection{Distributions and foliations}

A subbundle $\bT$ of the tangent bundle $TZ$ of a manifold $Z$ is
called a regular  distribution (or, simply, a distribution). A
vector field $u$ on $Z$ is said to be  subordinate to a
distribution $\bT$ if it lives in $\bT$. A distribution $\bT$ is
called  involutive if the Lie bracket of $\bT$-subordinate vector
fields also is subordinate to $\bT$.

A subbundle of the cotangent bundle $T^*Z$ of $Z$ is called a
codistribution $\bT^*$ on a manifold $Z$. For instance, the
annihilator  $\rA\bT$ of a distribution $\bT$ is a codistribution
whose fibre over $z\in Z$ consists of covectors $w\in T^*_z$ such
that $v\rfloor w=0$ for all $v\in \bT_z$.

The following local coordinates can be associated to an involutive
distribution  \cite{war}.

\begin{theorem}\label{c11.0} \mar{c11.0} Let $\bT$ be an involutive
$r$-dimensional distribution on a manifold $Z$, $\di Z=k$. Every
point $z\in Z$ has an open neighborhood $U$ which is a domain of
an adapted coordinate chart $(z^1,\dots,z^k)$ such that,
restricted to $U$, the distribution $\bT$ and its annihilator
$\rA\bT$ are spanned by the local vector fields $\dr/\dr z^1,
\cdots,\dr/\dr z^r$ and the local one-forms $dz^{r+1},\dots,
dz^k$, respectively.
\end{theorem}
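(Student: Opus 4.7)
The plan is to prove Theorem \ref{c11.0} by the classical inductive argument on $r=\di\bT$, using the straightening-out (flow-box) theorem for a single vector field as the base case and exploiting involutivity to perform the inductive step. Once the distribution part is established, the statement about the annihilator $\rA\bT$ is immediate, since any one-form $\om=\om_\la dz^\la$ satisfying $\dr/\dr z^i\rfloor\om=0$ for $i=1,\dots,r$ has $\om_i=0$, so $\om$ is a combination of $dz^{r+1},\dots,dz^k$.

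For the base case $r=1$, choose any local section $X$ of $\bT$ with $X(z)\neq 0$. By the standard straightening-out theorem (a direct consequence of the local one-parameter group discussion in Remark \ref{10b2}), there exist coordinates $(z^1,\dots,z^k)$ on a neighborhood $U$ of $z$ such that $X=\dr/\dr z^1$, whence $\bT|_U$ is spanned by $\dr/\dr z^1$.

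For the inductive step, assume the result for involutive distributions of rank $\leq r$, and let $\bT$ have rank $r+1$ with a local frame $X_1,\dots,X_{r+1}$ near $z$. First I would apply the base case to $X_{r+1}$ to produce coordinates $(y^1,\dots,y^k)$ with $X_{r+1}=\dr/\dr y^1$. Next, replace each $X_i$ ($i=1,\dots,r$) by
\[
Y_i=X_i-(X_i\rfloor dy^1)\,\dr/\dr y^1,
\]
so that $Y_i\rfloor dy^1=0$. The vector fields $Y_1,\dots,Y_r,\dr/\dr y^1$ still span $\bT$. Fix the slice $S=\{y^1=\text{const}\}$; since each $Y_i$ has no $\dr/\dr y^1$ component, they are tangent to $S$. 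The key computational step is to verify that on $S$ the $Y_i$ generate an involutive rank-$r$ distribution: one computes $[Y_i,Y_j]\in\bT$ by involutivity of $\bT$, and a direct check of the $\dr/\dr y^1$-component using $[Y_i,\dr/\dr y^1]$ and the form $[Y_i,Y_j]=\sum c_{ij}^k Y_k+c_{ij}^0\dr/\dr y^1$ forces $c_{ij}^0=0$ (this is where involutivity of $\bT$ together with the vanishing of $Y_i\rfloor dy^1$ is used). Applying the inductive hypothesis to the distribution spanned by $Y_1,\dots,Y_r$ on $S$ gives coordinates $(z^2,\dots,z^k)$ on $S$ straightening this restricted distribution. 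Extend these to a coordinate system on a neighborhood of $z$ by setting $z^1=y^1$ and transporting $z^2,\dots,z^k$ to be constant along the flow of $\dr/\dr y^1$.

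The main obstacle is the inductive step, specifically verifying that the vector fields $Y_1,\dots,Y_r$ descend to an involutive distribution on the slice $S$ and that the coordinates built on $S$ extend consistently to a full coordinate chart straightening $\bT$. Concretely, one must check both that $[Y_i,Y_j]$ is again of the form $\sum c_{ij}^k Y_k$ (no $\dr/\dr y^1$ component) and that the extension of the slice coordinates via the flow of $\dr/\dr y^1$ preserves the straightened form of the distribution, which follows because $[\dr/\dr y^1,Y_i]$ lies in $\bT$ and has no $\dr/\dr y^1$ component, hence is a combination of the $Y_j$'s. Once these commutator identities are in hand, the extended chart is the desired adapted chart, and the statement on $\rA\bT$ follows as noted above.
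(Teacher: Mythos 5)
Your proposal is essentially correct, and there is nothing internal to compare it against: the paper does not prove Theorem \ref{c11.0} at all, but quotes it as the classical local Frobenius normal form with a reference to Warner's book. Your argument is precisely the standard textbook induction on the rank found in that reference. The base case (flow-box theorem), the replacement $Y_i=X_i-(X_i\rfloor dy^1)\,\dr/\dr y^1$ (a unipotent change of frame, so $Y_1,\dots,Y_r,\dr/\dr y^1$ still span $\bT$), the computation $[Y_i,Y_j]\rfloor dy^1=Y_i(Y_j(y^1))-Y_j(Y_i(y^1))=0$ forcing $c^0_{ij}=0$ and hence involutivity of the restricted rank-$r$ distribution on the slice $S$, and the remark that the statement about $\rA\bT$ is immediate once the distribution is straightened, are all correct. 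The one place where you are terser than the argument demands is the final propagation step: from the identity $[\dr/\dr y^1,Y_i]=\sum_j g_{ij}Y_j$ it does not formally "follow" that the chart straightens $\bT$ off the slice until you observe that the functions $f^a_i=Y_i(z^a)$, $a>r+1$, satisfy the linear ODE system $\dr_{y^1}f^a_i=\sum_j g_{ij}f^a_j$ along each flow line of $\dr/\dr y^1$ with zero initial data on $S$, and hence vanish identically by uniqueness of solutions. With that sentence supplied, the proof is complete.
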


A connected submanifold $N$ of a manifold $Z$ is called an
integral manifold of a distribution $\bT$ on $Z$ if $TN\subset
\bT$. Unless otherwise stated, by an integral manifold is meant an
integral manifold of dimension of $\bT$. An integral manifold is
called  maximal if no other integral manifold contains it. The
following is the classical theorem of Frobenius \cite{kob,war}.

\begin{theorem}\label{to.1}  \mar{to.1} Let $\bT$ be an
involutive distribution on a manifold $Z$. For any $z\in Z$, there
exists a unique maximal integral manifold of $\bT$ through $z$,
and any integral manifold through $z$ is its open subset.
\end{theorem}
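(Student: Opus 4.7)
The plan is to bootstrap from the local normal form given by Theorem \ref{c11.0} to a global existence and uniqueness statement. Cover $Z$ by adapted charts $(U_\alpha;z^1_\alpha,\ldots,z^k_\alpha)$ in which $\bT$ is spanned by $\partial/\partial z^1_\alpha,\ldots,\partial/\partial z^r_\alpha$. Inside each such chart the $r$-dimensional \emph{plaques} (slices) defined by fixing $z^{r+1}_\alpha,\ldots,z^k_\alpha$ are tautologically integral manifolds of $\bT|_{U_\alpha}$. A key preliminary observation is that on the overlap of two adapted charts a single plaque of one chart meets any plaque of the other in a relatively open set: both charts describe the same tangent distribution $\bT$, so the transition functions must send plaques into plaques, which one verifies by chasing the coordinate expression in Theorem \ref{c11.0}.

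Next I would define a relation $\sim$ on $Z$ by declaring $z\sim z'$ whenever $z$ and $z'$ can be joined by a finite chain of plaques $P_1,\ldots,P_N$ with $P_i\cap P_{i+1}\neq\emptyset$, $z\in P_1$ and $z'\in P_N$. The equivalence class $L_z$ of $z$ will be the desired maximal integral manifold. To give $L_z$ the structure of a smooth $r$-manifold, I would take as its topology the one generated by the plaques contained in $L_z$ (in general finer than the subspace topology from $Z$) and as its atlas the restrictions $(z^1_\alpha,\ldots,z^r_\alpha)$ of the adapted charts to each such plaque. Smoothness of transition maps is immediate from the smoothness of the ambient transition functions, and the inclusion $L_z\hookrightarrow Z$ becomes a smooth immersion with $T_yL_z=\bT_y$ for every $y\in L_z$; connectedness of $L_z$ is built into the chain definition.

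For the second half of the statement, let $N$ be any connected integral manifold with $z\in N$. Since $T_yN=\bT_y$ for every $y\in N$, Theorem \ref{c11.0} implies that around each $y\in N$ the image of $N$ in an adapted chart is locally contained in a single plaque through $y$ (any transverse coordinate restricted to $N$ has vanishing differential). Hence every point of $N$ is $\sim$-equivalent to $z$, so $N\subset L_z$ set-theoretically, and the same local argument shows the inclusion is a local diffeomorphism onto an open subset of $L_z$; connectedness of $N$ then gives $N$ as an open submanifold of $L_z$. Uniqueness of $L_z$ itself is an immediate corollary applied to two putative maximal leaves through $z$.

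The main obstacle, and the point requiring genuine care, is the topological one: $L_z$ is in general only an immersed, not an embedded, submanifold of $Z$, as the irrational line on a torus illustrates. One must therefore resist equipping $L_z$ with the subspace topology and must instead verify that the intrinsic plaque topology is Hausdorff and second countable. Hausdorffness is straightforward from the chart construction, but second countability uses paracompactness of $Z$ to extract a countable subcover of adapted charts through $L_z$ and to show that each plaque meets only countably many plaques of neighbouring charts, so that $L_z$ qualifies as a manifold in the standard sense adopted throughout the paper.
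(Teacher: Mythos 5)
Your argument is correct. The paper itself offers no proof of this statement: it is quoted as the classical Frobenius theorem with a pointer to \cite{kob,war}, and the plaque-chain construction you give (adapted charts from Theorem \ref{c11.0}, the observation that a plaque of one chart meets a plaque of an overlapping chart in a relatively open set, the equivalence relation by finite chains of plaques, the intrinsic leaf topology, and the local-constancy argument showing any connected integral manifold through $z$ sits as an open submanifold of the leaf $L_z$) is precisely the standard proof found in those references, so there is nothing to compare it against within the paper. One small correction of attribution: what your second-countability step actually uses is that $Z$ is second countable (Lindel\"of), so that $L_z$ is exhausted by a countable chain of plaques, each plaque meeting at most countably many plaques of a neighbouring chart because its intersection with that chart has countably many components, each lying in a single plaque; paracompactness by itself is not the right hypothesis, but the paper's standing convention that manifolds are Hausdorff and second countable supplies exactly what you need.
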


Maximal integral manifolds of an involutive distribution on a
manifold $Z$ are assembled into a regular foliation $\cF$ of $Z$.
A regular $r$-dimensional  foliation (or, simply, a foliation)
$\cF$ of a $k$-dimensional manifold $Z$ is defined as a partition
of $Z$ into connected $r$-dimensional submanifolds (the leaves of
a foliation) $F_\iota$, $\iota\in I$, which possesses the
following properties \cite{rei,tam}.

A manifold $Z$ admits an adapted coordinate atlas
\mar{spr850}\beq
\{(U_\xi;z^\la, z^i)\},\quad \la=1,\ldots,k-r, \qquad
i=1,\ldots,r, \label{spr850}
\eeq
such that transition functions of coordinates $z^\la$ are
independent of the remaining coordinates $z^i$. For each leaf $F$
of a foliation $\cF$, the connected components of $F\cap U_\xi$
are given by the equations $z^\la=$const. These connected
components and coordinates $(z^i)$ on them make up a coordinate
atlas of a leaf $F$. It follows that tangent spaces to leaves of a
foliation $\cF$ constitute an involutive distribution $T\cF$ on
$Z$, called the  tangent bundle to the foliation $\cF$. The factor
bundle $V\cF=TZ/T\cF$, called the  normal bundle to $\cF$, has
transition functions independent of coordinates $z^i$. Let
$T\cF^*\to Z$ denote the dual of $T\cF\to Z$. There are the exact
sequences
\mar{pp2,3}\ben
&& 0\to T\cF \ar^{i_\cF} TX \ar V\cF\to 0, \label{pp2} \\
&& 0\to {\rm Ann}\,T\cF\ar T^*X\ar^{i^*_\cF} T\cF^* \to 0
\label{pp3}
\een
of vector bundles over $Z$.

A pair $(Z,\cF)$, where $\cF$ is a foliation of $Z$, is called a
 foliated manifold. It should be emphasized that leaves of a
foliation need not be closed or imbedded submanifolds. Every leaf
has an open  saturated neighborhood $U$, i.e., if $z\in U$, then a
leaf through $z$ also belongs to $U$.

Any submersion $\zeta:Z\to M$ yields a foliation
\be
\cF=\{F_p=\zeta^{-1}(p)\}_{p\in \zeta(Z)}
\ee
of $Z$ indexed by elements of $\zeta(Z)$, which is an open
submanifold of $M$, i.e., $Z\to \zeta(Z)$ is a fibred manifold.
Leaves of this foliation are closed imbedded submanifolds. Such a
foliation is called  simple. Any (regular) foliation is locally
simple.

\begin{example} \label{10b6} \mar{10b6}
Every smooth real function $f$ on a manifold $Z$ with nowhere
vanishing differential $df$ is a submersion $Z\to \mathbb R$. It
defines a one-codimensional foliation whose leaves are given by
the equations
\be
f(z)=c, \qquad c\in f(Z)\subset\mathbb R.
\ee
This is the  foliation of level surfaces of the function $f$,
called a  generating function. Every one-codimensional foliation
is locally a foliation of level surfaces of some function on $Z$.
The level surfaces of an arbitrary smooth real function $f$ on a
manifold $Z$ define a  singular foliation $\cF$ on $Z$
\cite{kamb}. Its leaves are not submanifolds in general.
Nevertheless if $df(z)\neq 0$, the restriction of $\cF$ to some
open neighborhood $U$ of $z$ is a foliation with the generating
function $f|_U$.
\end{example}

Let $\cF$ be a (regular) foliation of a $k$-dimensional manifold
$Z$ provided with the adapted coordinate atlas (\ref{spr850}). The
real Lie algebra $\cT_1(\cF)$ of  global sections of the tangent
bundle $T\cF\to Z$ to $\cF$ is a $C^\infty(Z)$-submodule of the
derivation module of the $\mathbb R$-ring $C^\infty(Z)$ of smooth
real functions on $Z$. Its kernel $S_\cF(Z)\subset C^\infty(Z)$
consists of functions constant on leaves of $\cF$. Therefore,
$\cT_1(\cF)$ is the Lie $S_\cF(Z)$-algebra of derivations of
$C^\infty(Z)$, regarded as a $S_\cF(Z)$-ring. Then one can
introduce the leafwise differential calculus \cite{jmp02,book05}
as the Chevalley -- Eilenberg differential calculus over the
$S_\cF(Z)$-ring $C^\infty(Z)$. It is defined as a subcomplex
\mar{spr892}\beq
0\to S_\cF(Z)\ar C^\infty(Z)\ar^{\wt d} \gF^1(Z) \cdots \ar^{\wt
d} \gF^{\di\cF}(Z) \to 0 \label{spr892}
\eeq
of the Chevalley -- Eilenberg complex of the Lie
$S_\cF(Z)$-algebra $\cT_1(\cF)$ with coefficients in $C^\infty(Z)$
which consists of $C^\infty(Z)$-multilinear skew-symmetric maps
\be
\op\times^r \cT_1(\cF) \to C^\infty(Z), \qquad r=1,\ldots,\di\cF.
\ee
These maps are global sections of exterior products $\op\w^r
T\cF^*$ of the dual $T\cF^*\to Z$ of $T\cF\to Z$. They are called
the leafwise forms on a foliated manifold $(Z,\cF)$, and are given
by the coordinate expression
\be
\f=\frac1{r!}\f_{i_1\ldots i_r}\wt dz^{i_1}\w\cdots\w \wt
dz^{i_r},
\ee
where $\{\wt dz^i\}$ are the duals of the holonomic fibre bases
$\{\dr_i\}$ for $T\cF$. Then one can think of the Chevalley --
Eilenberg coboundary operator
\be
\wt d\f= \wt dz^k\w \dr_k\f=\frac{1}{r!} \dr_k\f_{i_1\ldots
i_r}\wt dz^k\w\wt dz^{i_1}\w\cdots\w\wt dz^{i_r}
\ee
as being the  leafwise exterior differential. Accordingly, the
complex (\ref{spr892}) is called the  leafwise de Rham complex (or
the tangential de Rham complex).

Let us consider the exact sequence (\ref{pp3}) of vector bundles
over $Z$. Since it admits a splitting, the epimorphism $i^*_\cF$
yields that of the algebra $\cO^*(Z)$ of exterior forms on $Z$ to
the algebra $\gF^*(Z)$ of leafwise forms. It obeys the condition
$i^*_\cF\circ d=\wt d\circ i^*_\cF$, and provides the cochain
morphism
\mar{lmp04}\ben
&& i^*_\cF: (\mathbb R,\cO^*(Z),d)\to (S_\cF(Z),\cF^*(Z),\wt d),
\label{lmp04}\\
&& dz^\la\to 0, \quad dz^i\to\wt dz^i, \nonumber
\een
of the de Rham complex of $Z$ to the leafwise de Rham complex
(\ref{spr892}).

Given a leaf $i_F:F\to Z$ of $\cF$, we have the pull-back
homomorphism
\mar{lmp30}\beq
(\mathbb R,\cO^*(Z),d) \to (\mathbb R,\cO^*(F),d) \label{lmp30}
\eeq
of the de Rham complex of $Z$ to that of $F$.

\begin{proposition} \label{lmp32} \mar{lmp32} The homomorphism (\ref{lmp30})
factorize through the homomorphism \cite{book05}.
\end{proposition}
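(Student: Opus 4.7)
The plan is to construct an explicit intermediate cochain morphism $r_F: (\gF^*(Z),\wt d) \to (\cO^*(F),d)$ for each leaf $i_F:F \to Z$ and then verify the factorization $i_F^* = r_F \circ i^*_\cF$ of (\ref{lmp30}) through (\ref{lmp04}).

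First I would define $r_F$ as follows. By the very definition of a leaf, the tangent bundle $TF$ of $F$ coincides with the pull-back $i_F^*T\cF$, so taking duals gives $T^*F = i_F^*T\cF^*$ and likewise $\op\w^r T^*F = i_F^*\op\w^r T\cF^*$. Consequently every leafwise form $\phi\in\gF^r(Z)$, being a section of $\op\w^r T\cF^* \to Z$, admits a genuine pull-back $i_F^*\phi$ which is a section of $\op\w^r T^*F \to F$, i.e.\ an exterior $r$-form on $F$. This defines $r_F:\gF^*(Z)\to\cO^*(F)$.

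Next I would check that $r_F$ intertwines $\wt d$ and $d$ and that $i_F^* = r_F\circ i^*_\cF$ on $\cO^*(Z)$. Both verifications are local, so I would work in an adapted coordinate chart $(U_\xi;z^\la,z^i)$ of the foliation as in (\ref{spr850}), in which the leaf $F\cap U_\xi$ is cut out by equations $z^\la=\mathrm{const}$. For a leafwise form $\phi=\tfrac{1}{r!}\phi_{i_1\ldots i_r}(z^\la,z^j)\wt dz^{i_1}\w\cdots\w\wt dz^{i_r}$, one has
\[
r_F\phi = \tfrac{1}{r!}\phi_{i_1\ldots i_r}(z^\la_0,z^j)\,dz^{i_1}\w\cdots\w dz^{i_r},
\]
and a direct computation using $\dr_\la\phi_{i_1\ldots i_r}|_F\cdot 0 = 0$ shows that $r_F(\wt d\phi) = d(r_F\phi)$, so $r_F$ is a cochain morphism. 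For the factorization, take $\omega = \tfrac{1}{k!}\omega_{\m_1\ldots\m_k}dz^{\m_1}\w\cdots\w dz^{\m_k}\in\cO^k(Z)$. By the defining rule $dz^\la\mapsto 0$, $dz^i\mapsto \wt dz^i$ of $i^*_\cF$, only the purely leafwise components survive, so
\[
r_F(i^*_\cF\omega) = \tfrac{1}{k!}\omega_{i_1\ldots i_k}(z^\la_0,z^j)\,dz^{i_1}\w\cdots\w dz^{i_k}.
\]
On the other hand, $i_F^*dz^\la = 0$ and $i_F^*dz^i = dz^i$ along $F$, so the same expression is obtained for $i_F^*\omega$.

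The essentially routine issue I would expect to watch most carefully is the chart-independence of $r_F$: adapted charts are not unique, but the transition functions of $z^\la$ are independent of $z^i$, so the change of frames in $T\cF^*$ preserves the formula for $r_F$, and the leafwise exterior differential $\wt d$ is well defined as the Chevalley--Eilenberg differential of the $S_\cF(Z)$-ring $C^\infty(Z)$. Since both $r_F$ and $i^*_\cF$ are defined globally and agree in every adapted chart, the relation $i_F^* = r_F\circ i^*_\cF$ holds globally, yielding the required factorization of (\ref{lmp30}) through (\ref{lmp04}).
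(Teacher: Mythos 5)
Your proof is correct: the restriction map $r_F$, defined via the identification $TF=i_F^*T\cF$ so that sections of $\op\w^rT\cF^*$ restrict to exterior forms on the leaf, is precisely the natural factorizing morphism, and your local verification in an adapted chart that $r_F$ intertwines $\wt d$ with $d$ and that $i_F^*=r_F\circ i^*_\cF$ (both maps kill $dz^\la$ and agree on $dz^i$) is sound. The paper itself supplies no proof of this proposition, deferring to \cite{book05}; your argument is the standard one, and you rightly flag the only delicate points, namely chart-independence of $r_F$ (automatic from its invariant definition) and the fact that the computation is local on plaques even when the leaf is not embedded.
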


\subsection{Differential geometry of Lie groups}

Let $G$ be a real Lie group of $\di G >0$, and let $L_g:G\to gG$
and $R_g:G\to Gg$ denote the action of $G$ on itself by left and
right multiplications, respectively. Clearly, $L_g$ and $R_{g'}$
for all $g,g'\in G$ mutually commute, and so do the tangent maps
$TL_g$ and $TR_{g'}$.

A vector field $\xi_l$ (resp. $\xi_r$) on a group $G$ is said to
be left-invariant (resp.  right-invariant) if $\xi_l\circ L_g
=TL_g\circ \xi_l$ (resp. $\xi_r\circ R_g=TR_g\circ \xi_r$).
Left-invariant (resp. right-invariant) vector fields make up the
 left Lie algebra $\cG_l$ (resp. the  right Lie algebra
$\cG_r$) of  $G$.

There is one-to-one correspondence between the left-invariant
vector field $\xi_l$ (resp. right-invariant vector fields $\xi_r$)
on $G$ and the vectors $\xi_l(e)=TL_{g^{-1}}\xi_l(g)$ (resp.
$\xi_r(e)=TR_{g^{-1}}\xi_l(g)$) of the tangent space $T_eG$ to $G$
at the unit element $e$ of $G$. This correspondence provides
$T_eG$ with the left and the right Lie algebra structures.
Accordingly, the left action $L_g$ of a Lie group $G$ on itself
defines its  adjoint representation
\mar{1210}\beq
\xi_r \to {\rm Ad}\,g(\xi_r)=TL_g\circ\xi_r\circ L_{g^{-1}}
\label{1210}
\eeq
in the right Lie algebra $\cG_r$.

Let $\{\e_m\}$ (resp. $\{\ve_m\}$) denote the basis for the left
(resp. right) Lie algebra, and let $c^k_{mn}$ be the  right
structure constants
\be
[\ve_m,\ve_n]=c^k_{mn}\ve_k.
\ee
There is the morphism
\be
\rho: \cG_l\ni \e_m \to\ve_m= -\e_m\in \cG_r
\ee
between left and right Lie algebras such that
\be
[\e_m,\e_n]=-c^k_{mn}\e_k.
\ee

The tangent bundle $\pi_G:TG\to G$ of a Lie group $G$ is trivial.
There are the following two canonical isomorphisms
\be
&& \vr_l: TG\ni q\to (g=\pi_G(q), TL^{-1}_g(q))\in G\times \cG_l,\\
&& \vr_r: TG\ni q\to (g=\pi_G(q), TR^{-1}_g(q))\in G\times \cG_r.
\ee
Therefore, any action
\be
G\times Z\ni (g,z)\to gz\in Z
\ee
of a Lie group $G$ on a manifold $Z$ on the left yields the
homomorphism
\mar{spr941}\beq
\cG_r\ni\ve\to \xi_\ve\in \cT_1(Z) \label{spr941}
\eeq
of the right Lie algebra $\cG_r$ of $G$ into the Lie algebra of
vector fields on $Z$ such that
\mar{z212}\beq
\xi_{{\rm Ad}\,g(\ve)}=Tg\circ \xi_\ve\circ g^{-1}. \label{z212}
\eeq
Vector fields $\xi_\ve$ are said to be the  infinitesimal
generators of a representation  of the Lie group $G$ in $Z$.

In particular, the adjoint representation (\ref{1210}) of a Lie
group in its right Lie algebra yields the {\it adjoint
representation}
\be
\ve': \ve\to {\rm ad}\,\ve' (\ve)=[\ve',\ve], \qquad {\rm
ad}\,\ve_m(\ve_n)=c^k_{mn}\ve_k,
\ee
of the right Lie algebra $\cG_r$ in itself.

The dual $\cG^*=T^*_eG$ of the tangent space $T_eG$ is called the
 Lie coalgebra). It is provided with the basis $\{\ve^m\}$
which is the dual of the basis $\{\ve_m\}$ for $T_eG$. The group
$G$ and the right Lie algebra $\cG_r$ act on $\cG^*$ by the
coadjoint representation
\mar{z211}\ben
&&\lng{\rm Ad}^*g(\ve^*),\ve\rng =\lng\ve^*,{\rm
Ad}\,g^{-1}(\ve)\rng, \quad
\ve^*\in \cG^*, \quad \ve\in \cG_r, \label{z211}\\
&&\lng{\rm ad}^*\ve'(\ve^*),\ve\rng=-\lng\ve^*,[\ve',\ve]\rng, \qquad
\ve'\in \cG_r,\nonumber\\
&&  {\rm ad}^*\ve_m(\ve^n)=-c^n_{mk}\ve^k.\nonumber
\een

The Lie coalgebra $\cG^*$ of a Lie group $G$ is provided with the
canonical Poisson structure, called the  Lie -- Poisson structure
\cite{abr,libe}. It is given by the bracket
\mar{gm510}\beq
\{f,g\}_{\rm LP}=\langle\ve^*,[df(\ve^*),dg(\ve^*)]\rangle, \qquad
f,g\in C^\infty(\cG^*), \label{gm510}
\eeq
where $df(\ve^*),dg(\ve^*)\in \cG_r$ are seen as linear mappings
from $T_{\ve^*}\cG^*=\cG^*$ to $\mathbb R$. Given coordinates
$z_k$ on $\cG^*$ with respect to the basis $\{\ve^k\}$, the Lie --
Poisson bracket (\ref{gm510}) and the corresponding Poisson
bivector field $w$ read
\be
\{f,g\}_{\rm LP}=c^k_{mn}z_k\dr^mf\dr^ng, \qquad
w_{mn}=c^k_{mn}z_k.
\ee
One can show that symplectic leaves of the Lie -- Poisson
structure on the coalgebra $\cG^*$ of a connected Lie group $G$
are orbits of the coadjoint representation (\ref{z211}) of $G$ on
$\cG^*$ \cite{wein}.

\subsection{Jet manifolds}

This Section collects the relevant material on jet manifolds of
sections of fibre bundles \cite{book09,kol,book00,sard08,sau}.

Given a fibre bundle $Y\to X$ with bundle coordinates
$(x^\la,y^i)$, let us consider the equivalence classes $j^1_xs$ of
its sections $s$, which are identified by their values $s^i(x)$
and the values of their partial derivatives $\dr_\mu s^i(x)$ at a
point $x\in X$. They are called the  first order jets of sections
at $x$. One can justify that the definition of jets is
coordinate-independent. A key point is that the set $J^1Y$ of
first order jets $j^1_xs$, $x\in X$, is a smooth manifold with
respect to the adapted coordinates $(x^\la,y^i,y_\la^i)$ such that
\mar{50}\beq
y_\la^i(j^1_xs)=\dr_\la s^i(x),\qquad {y'}^i_\la = \frac{\dr
x^\m}{\dr{x'}^\la}(\dr_\m +y^j_\m\dr_j)y'^i.\label{50}
\eeq
It is called the first order jet manifold of a fibre bundle $Y\to
X$.

A jet manifold $J^1Y$ admits the natural fibrations
\mar{1.14,5}\ben
&&\pi^1:J^1Y\ni j^1_xs\to x\in X, \label{1.14}\\
&&\pi^1_0:J^1Y\ni j^1_xs\to s(x)\in Y. \label{1.15}
\een
A glance at the transformation law (\ref{50}) shows that $\pi^1_0$
is an affine bundle modelled over the vector bundle
\mar{cc9}\beq
T^*X \op\otimes_Y VY\to Y.\label{cc9}
\eeq
It is convenient to call $\pi^1$ (\ref{1.14}) the  jet bundle,
while $\pi^1_0$ (\ref{1.15}) is said to be the  affine jet bundle.

Let us note that, if $Y\to X$ is a vector or an affine bundle, the
jet bundle $\pi_1$ (\ref{1.14}) is so.

Jets can be expressed in terms of familiar tangent-valued forms as
follows. There are the canonical imbeddings
\mar{18,24}\ben
&&\la_{(1)}:J^1Y\op\to_Y
T^*X \op\otimes_Y TY,\nonumber\\
&& \la_{(1)}=dx^\la
\otimes (\dr_\la + y^i_\la \dr_i)=dx^\la\otimes d_\la, \label{18}\\
&&\thh_{(1)}:J^1Y \op\to_Y T^*Y\op\otimes_Y VY,\nonumber\\
&&\thh_{(1)}=(dy^i- y^i_\la dx^\la)\otimes \dr_i=\thh^i \otimes
\dr_i,\label{24}
\een
where $d_\la$ are said to be  total derivatives, and $\thh^i$ are
called  contact forms.

We further identify the jet manifold $J^1Y$ with its images under
the canonical morphisms (\ref{18}) and (\ref{24}), and represent
the jets $j^1_xs=(x^\la,y^i,y^i_\m)$ by the tangent-valued forms
$\la_{(1)}$ (\ref{18}) and $\thh_{(1)}$ (\ref{24}).

Sections and morphisms of fibre bundles admit prolongations to jet
manifolds as follows.

Any section $s$ of a fibre bundle $Y\to X$ has the  jet
prolongation to the section
\be
(J^1s)(x)= j_x^1s, \qquad y_\la^i\circ J^1s= \dr_\la s^i(x),
\ee
of the jet bundle $J^1Y\to X$. A  section of the jet bundle
$J^1Y\to X$ is called  integrable if it is the jet prolongation of
some section of a fibre bundle $Y\to X$.

Any bundle morphism $\Phi:Y\to Y'$ over a diffeomorphism $f$
admits a   jet prolongation to a bundle morphism of affine jet
bundles
\be
J^1\Phi : J^1Y \ar_\Phi J^1Y', \qquad {y'}^i_\la\circ
J^1\Phi=\frac{\dr(f^{-1})^\m}{\dr x'^\la}d_\m\Phi^i.
\ee

Any projectable vector field $u$ (\ref{11f30}) on a fibre bundle
$Y\to X$ has a  jet prolongation to the projectable vector field
\be
J^1u =u^\la\dr_\la + u^i\dr_i + (d_\la u^i - y_\m^i\dr_\la
u^\m)\dr_i^\la,
\ee
on the jet manifold $J^1Y$.

\subsection{Connections on fibre bundles}

There are different equivalent definitions of a connection on a
fibre bundle $Y\to X$. We define it both as a splitting of the
exact sequence (\ref{1.8a}) and a global section of the affine jet
bundle $J^1Y\to $Y \cite{book09,book00,sard08}.

A  connection on a fibred manifold $Y\to X$ is defined as a
splitting (called the horizontal splitting)
\mar{150}\ben
&& \G: Y\op\times_X TX\op\to_Y TY, \qquad
   \G: \dot x^\la\dr_\la \to \dot x^\la(\dr_\la+\G^i_\la(y)\dr_i),
\label{150}\\
&& \dot x^\la \dr_\la + \dot y^i \dr_i = \dot
x^\la(\dr_\la + \G_\la^i \dr_i) + (\dot y^i - \dot
x^\la\G_\la^i)\dr_i,  \nonumber
\een
of the exact sequence (\ref{1.8a}). Its range is a subbundle of
$TY\to Y$ called the  horizontal distribution. By virtue of
Theorem \ref{sp11}, a connection on a fibred manifold always
exists. A connection $\G$ (\ref{150}) is represented by the
horizontal tangent-valued one-form
\mar{154}\beq
\G = dx^\la\otimes (\dr_\la + \G_\la^i\dr_i) \label{154}
\eeq
on $Y$ which is projected onto the canonical tangent-valued form
$\thh_X$ (\ref{b1.51}) on $X$.

Given a connection $\G$ on a fibred manifold $Y\to X$, any vector
field $\tau$ on a base $X$ gives rise to the projectable vector
field
\mar{b1.85}\beq
\G \tau=\tau\rfloor\G=\tau^\la(\dr_\la +\G^i_\la\dr_i)
\label{b1.85}
\eeq
on $Y$ which lives in the horizontal distribution determined by
$\G$. It is called the  horizontal lift of
 $\tau$ by means of a
connection $\G$.

The splitting (\ref{150}) also is given by the vertical-valued
form
\mar{b1.223}\beq
\G= (dy^i -\G^i_\la dx^\la)\ot\dr_i, \label{b1.223}
\eeq
which yields an epimorphism $TY\to VY$.

In an equivalent way, connections on a fibred manifold $Y\to X$
are introduced as global sections of the affine jet bundle
$J^1Y\to Y$. Indeed, any global section $\G$ of $J^1Y\to Y$
defines the tangent-valued form $\la_1\circ \G$ (\ref{154}). It
follows from this definition that connections on a fibred manifold
$Y\to X$ constitute an affine space modelled over the vector space
of soldering forms $\si$ (\ref{1290}). One also deduces from
(\ref{50}) the coordinate transformation law of connections
\be
\G'^i_\la = \frac{\dr x^\m}{\dr{x'}^\la}(\dr_\m
+\G^j_\m\dr_j)y'^i.
\ee

\begin{remark} \label{Ehresmann}  Any connection $\G$ on a fibred manifold $Y\to
X$ yields a horizontal lift of a vector field on $X$ onto $Y$, but
need not defines the similar lift of a  path  in $X$ into $Y$. Let
\be
\mathbb R\supset[,]\ni t\to x(t)\in X, \qquad \mathbb R\ni t\to
y(t)\in Y,
\ee
be smooth paths in $X$ and $Y$, respectively. Then $t\to y(t)$ is
called a  horizontal lift of $x(t)$ if
\be
\pi(y(t))= x(t), \qquad \dot y(t)\in H_{y(t)}Y, \qquad t\in\mathbb
R,
\ee
where $HY\subset TY$ is the horizontal subbundle associated to the
connection $\G$. If, for each path $x(t)$ $(t_0\leq t\leq t_1)$
and for any $y_0\in\pi^{-1}(x(t_0))$, there exists a horizontal
lift $y(t)$ $(t_0\leq t\leq t_1)$ such that $y(t_0)=y_0$, then
$\G$ is called the  Ehresmann connection. A fibred manifold is a
fibre bundle iff it admits an Ehresmann connection \cite{gre}.
\end{remark}

Hereafter, we restrict our consideration to connections on fibre
bundles. The following are two standard constructions of new
connections from old ones.

$\bullet$ Let $Y$ and $Y'$ be fibre bundles over the same base
$X$. Given connections $\G$ on $Y$ and $\G'$ on $Y'$, the bundle
product $Y\op\times_X Y'$  is provided with the  product
connection
\mar{b1.96}\beq
\G\times\G' = dx^\la\ot\left(\dr_\la +\G^i_\la\frac{\dr}{\dr y^i}
+ \G'^j_\la\frac{\dr}{\dr y'^j}\right). \label{b1.96}
\eeq

$\bullet$ Given a fibre bundle $Y\to X$, let $f:X'\to X$ be a
manifold morphism and $f^*Y$ the pull-back of $Y$ over $X'$. Any
connection $\G$ (\ref{b1.223}) on $Y\to X$ yields the pull-back
connection
\mar{mos82}\beq
f^*\G=\left(dy^i-\G^i_\la(f^\m(x'^\nu),y^j)\frac{\dr f^\la}{\dr
x'^\m}dx'^\m\right)\ot\dr_i \label{mos82}
\eeq
on the pull-back bundle $f^*Y\to X'$.

Every connection $\G$ on a fibre bundle $Y\to X$ defines the first
order differential operator
\mar{2116}\ben
&& D^\G:J^1Y\op\to_Y T^*X\op\otimes_Y VY, \label{2116}\\
&& D^\G=\la_1- \G\circ \pi^1_0 =(y^i_\la -\G^i_\la)dx^\la\otimes\dr_i,
\nonumber
\een
on $Y$ called the  covariant differential. If $s:X\to Y$ is a
section, its  covariant differential
\be
\nabla^\G s= D^\G\circ J^1s = (\dr_\la s^i - \G_\la^i\circ s)
dx^\la\ot \dr_i
\ee
and its  covariant derivative $\nabla_\tau^\G s
=\tau\rfloor\nabla^\G s$ along a vector field $\tau$ on $X$ are
introduced. In particular, a (local) section $s$ of $Y\to X$ is
called an  integral section for a connection $\G$ (or parallel
with respect to $\G$) if $s$ obeys the equivalent conditions
\be
\nabla^\G s=0 \quad {\rm or} \quad J^1s=\G\circ s.
\ee

Let $\G$ be a connection on a fibre bundle $Y\to X$. Given vector
fields $\tau$, $\tau'$ on $X$ and their horizontal lifts $\G\tau$
and $\G\tau'$ (\ref{b1.85}) on $Y$, let us consider the vertical
vector field
\be
&& R(\tau,\tau')=\G [\tau,\tau'] - [\G \tau, \G \tau']= \tau^\la
\tau'^\m R_{\la\m}^i\dr_i, \\
&& R_{\la\m}^i = \dr_\la\G_\m^i - \dr_\m\G_\la^i +
\G_\la^j\dr_j \G_\m^i - \G_\m^j\dr_j \G_\la^i.
\ee
It can be seen as the contraction of vector fields $\tau$ and
$\tau'$ with the vertical-valued horizontal two-form
\mar{161'}\beq
R =\frac{1}{2} [\G,\G]_{\rm FN} = \frac12 R_{\la\m}^i dx^\la\wedge
dx^\m\otimes\dr_i \label{161'}
\eeq
on $Y$ called the  curvature form of a connection $\G$.

A  flat (or  curvature-free) connection is a connection $\G$ on a
fibre bundle $Y\to X$ which satisfies the following equivalent
conditions:

$\bullet$ its curvature vanishes everywhere on $Y$;

$\bullet$ its horizontal distribution is involutive;

$\bullet$ there exists a local integral section for the connection
$\G$ through any point $y\in Y$.

By virtue of Theorem \ref{to.1}, a flat connection $\G$  yields a
foliation of $Y$ which is transversal to the fibration $Y\to X$.
It called a  horizontal foliation. Its leaf through a point $y\in
Y$ is locally defined by an integral section $s_y$ for the
connection $\G$ through $y$. Conversely, let a fibre bundle $Y\to
X$ admit a horizontal foliation such that, for each point $y\in
Y$, the leaf of this foliation through $y$ is locally defined by a
section $s_y$ of $Y\to X$ through $y$. Then the map
\be
\G:Y\ni y\to j^1_{\pi(y)}s_y\in J^1Y
\ee
sets a flat connection on $Y\to X$. Hence, there is one-to-one
correspondence between the flat connections and the horizontal
foliations of a fibre bundle $Y\to X$.

Given a horizontal foliation of a fibre bundle $Y\to X$, there
exists the associated atlas of bundle coordinates $(x^\la, y^i)$
on $Y$ such that every leaf of this foliation is locally given by
the equations $y^i=$const., and the transition functions $y^i\to
{y'}^i(y^j)$ are independent of the base coordinates $x^\la$
\cite{book09}. It is called the  atlas of constant local
trivializations. Two such atlases are said to be equivalent if
their union also is an atlas of the same type. They are associated
to the same horizontal foliation. Thus, the following is proved.

\begin{theorem} \label{gena113} \mar{gena113}
There is one-to-one correspondence between the flat connections
$\G$ on a fibre bundle $Y\to X$ and the equivalence classes of
atlases of constant local trivializations of $Y$ such that
$\G=dx^\la\ot\dr_\la$ relative to the corresponding atlas.
\end{theorem}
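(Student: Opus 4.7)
The plan is to establish the correspondence by composing two previously-stated bijections and then verifying the coordinate form. The main steps are as follows.

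First, I would invoke the bijection between flat connections $\G$ on $Y\to X$ and horizontal foliations of $Y$, which has been recorded in the paragraph preceding the theorem (a flat connection gives an involutive horizontal distribution, Frobenius Theorem \ref{to.1} yields the leaves, and conversely a horizontal foliation with local leaf sections $s_y$ defines $\G: y\mapsto j^1_{\pi(y)}s_y$, which is flat because its integral sections through every point exist by construction). Next, I would invoke the statement, also recorded just before the theorem, that every horizontal foliation of $Y\to X$ gives rise to an atlas of bundle coordinates $(x^\la,y^i)$ in which the leaves are $y^i=\mathrm{const}$ and whose transition functions have the form $y'^i=y'^i(y^j)$, independent of $x^\la$; this is the atlas of constant local trivializations associated to the foliation.

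The forward map of the desired bijection is then the composition flat connection $\mapsto$ horizontal foliation $\mapsto$ equivalence class of such atlases. In any chart of this atlas the leaves are $y^i=\mathrm{const}$, so the horizontal distribution is spanned by $\{\dr_\la\}$, which forces $\G^i_\la=0$ and hence $\G=dx^\la\ot\dr_\la$ locally, as asserted. For the reverse map, starting from an atlas of constant local trivializations, I would define $\G$ chartwise by the formula $\G=dx^\la\ot\dr_\la$; since the transition functions satisfy $\dr y'^i/\dr x^\la=0$, the coordinate transformation law $\G'^i_\la=(\dr x^\m/\dr x'^\la)(\dr_\m+\G^j_\m\dr_j)y'^i$ gives $\G'^i_\la=0$ in every other chart of the same atlas, so $\G$ is a globally defined connection. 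Its curvature $R^i_{\la\m}=\dr_\la\G^i_\m-\dr_\m\G^i_\la+\G^j_\la\dr_j\G^i_\m-\G^j_\m\dr_j\G^i_\la$ vanishes identically, so $\G$ is flat, and the leaves $y^i=\mathrm{const}$ visibly are its integral sections, so this $\G$ is exactly the flat connection attached to the horizontal foliation underlying the atlas.

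Finally, I would verify that these two maps are mutually inverse at the level of equivalence classes. If $\G$ is flat and we pick any atlas of constant local trivializations realizing its horizontal foliation, then applying the reverse construction recovers $\G$, since both connections kill the frames $\dr_\la$ of the horizontal distribution of the common foliation. Conversely, two atlases of constant local trivializations yield the same flat connection iff they have the same horizontal foliation iff they belong to the same equivalence class (their union is again an atlas of the same type), which is exactly the definition given in the paper. The step I expect to require the most care is this last one: checking that the equivalence relation on atlases in the theorem matches precisely the relation ``defines the same horizontal foliation,'' i.e.\ that the union of two such atlases is again an atlas of constant local trivializations exactly when the two underlying foliations coincide. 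This follows because, in any chart adapted to a horizontal foliation, the leaves through a point locally determine the coordinates $y^i$ up to transformations $y'^i=y'^i(y^j)$, so compatibility of the two atlases is automatic once the foliations agree, and forced otherwise.
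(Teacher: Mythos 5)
Your proposal is correct and follows essentially the same route as the paper: the theorem there is obtained as the composition of the bijection between flat connections and horizontal foliations with the bijection between horizontal foliations and equivalence classes of atlases of constant local trivializations, exactly as you describe. Your additional chartwise verifications of $\G=dx^\la\ot\dr_\la$ and of the well-definedness of the reverse map are consistent with, and slightly more explicit than, the paper's presentation.
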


\begin{example} \label{spr852} \mar{spr852}
Any trivial bundle has flat connections corresponding to its
trivializations. Fibre bundles over a one-dimensional base have
only flat connections.
\end{example}

Let $Y\to X$ be a vector bundle equipped with linear bundle
coordinates $(x^\la,y^i)$. It admits a  linear connection
\mar{167}\beq
\G =dx^\la\ot(\dr_\la + \G_\la{}^i{}_j(x) y^j\dr_i). \label{167}
\eeq
There are the following standard constructions of new linear
connections from old ones.

$\bullet$  Any linear connection $\G$ (\ref{167}) on a vector
bundle $Y\to X$ defines the dual linear connection
\mar{spr300}\beq
\G^*=dx^\la\ot(\dr_\la - \G_\la{}^j{}_i(x) y_j\dr^i)
\label{spr300}
\eeq
on the dual bundle $Y^*\to X$.

$\bullet$ Let $\G$ and $\G'$ be linear connections on vector
bundles $Y\to X$ and $Y'\to X$, respectively. The direct sum
connection $\G\oplus\G'$ on the Whitney sum $Y\oplus Y'$ of these
vector bundles is defined as the product connection (\ref{b1.96}).

$\bullet$ Similarly, the tensor product $Y\ot Y'$ of vector
bundles possesses the  tensor product connection
\mar{b1.92}\beq
\G\otimes\G'=dx^\la\ot\left[\dr_\la +(\G_\la{}^i{}_j
y^{ja}+\G'_\la{}^a{}_b y^{ib}) \frac{\dr}{\dr y^{ia}}\right].
\label{b1.92}
\eeq

The curvature of a linear connection $\G$ (\ref{167}) on a vector
bundle $Y\to X$ is usually written as a $Y$-valued two-form
\mar{mos4}\ben
&&R=\frac12 R_{\la\m}{}^i{}_j(x)y^j dx^\la\w dx^\m\ot e_i,\label{mos4}\\
&&R_{\la\m}{}^i{}_j = \dr_\la \G_\m{}^i{}_j - \dr_\m
\G_\la{}^i{}_j + \G_\la{}^h{}_j \G_\m{}^i{}_h - \G_\m{}^h{}_j
\G_\la{}^i{}_h, \nonumber
\een
due to the canonical vertical splitting $VY= Y\times Y$, where
$\{\dr_i\}=\{e_i\}$. For any two vector fields $\tau$ and $\tau'$
on $X$, this curvature yields the zero order differential operator
\mar{+98}\beq
R(\tau,\tau')
s=([\nabla_\tau^\G,\nabla_{\tau'}^\G]-\nabla_{[\tau,\tau']}^\G)s
\label{+98}
\eeq
on section $s$ of a vector bundle $Y\to X$.

Let us consider the composite bundle $Y\to\Si\to X$ (\ref{1.34}),
coordinated by $(x^\la, \si^m, y^i)$. Let us consider the jet
manifolds $J^1\Si$, $J^1_\Si Y$, and $J^1Y$ of the fibre bundles
$\Si\to X$, $Y\to \Si$ and $Y\to X$, respectively. They are
parameterized respectively by the coordinates
\be
( x^\la ,\si^m, \si^m_\la),\quad ( x^\la ,\si^m, y^i, \wt y^i_\la,
y^i_m),\quad ( x^\la ,\si^m, y^i, \si^m_\la ,y^i_\la).
\ee
There is the canonical map
\mar{1.38}\beq
\vr : J^1\Si\op\times_\Si J^1_\Si Y\ar_Y J^1Y, \qquad
y^i_\la\circ\vr=y^i_m{\si}^m_{\la} +\wt y^i_{\la}. \label{1.38}
\eeq
Using the canonical map (\ref{1.38}), we can consider the
relations between connections on fibre bundles $Y\to X$, $Y\to\Si$
and $\Si\to X$ \cite{book00,sau}.

Connections on fibre bundles $Y\to X$, $Y\to\Si$ and $\Si\to X$
read
\mar{spr290,1}\ben
&& \g=dx^\la\ot (\dr_\la +\g_\la^m\dr_m + \g_\la^i\dr_i), \label{spr290}\\
&&  A_\Si=dx^\la\ot (\dr_\la + A_\la^i\dr_i) +d\si^m\ot (\dr_m + A_m^i\dr_i),
\label{spr291}\\
&& \G=dx^\la\ot (\dr_\la + \G_\la^m\dr_m). \nonumber
\een
The canonical map $\vr$ (\ref{1.38}) enables us to obtain a
connection $\g$ on $Y\to X$ in accordance with the diagram
\be
\begin{array}{rcccl}
 & J^1\Si\op\xx_\Si J^1_\Si Y & \ar^\vr & J^1Y & \\
_{(\G,A)} & \put(0,-10){\vector(0,1){20}} & &
\put(0,-10){\vector(0,1){20}} & _{\g} \\
 & \Si\op\xx_X Y & \longleftarrow & Y &
\end{array}
\ee
This connection, called the composite connection, reads
\mar{b1.114}\beq
\g=dx^\la\ot [\dr_\la +\G_\la^m\dr_m + (A_\la^i +
A_m^i\G_\la^m)\dr_i]. \label{b1.114}
\eeq
It is a unique connection such that the horizontal lift $\g\tau$
on $Y$ of a vector field $\tau$ on $X$ by means of the connection
$\g$ (\ref{b1.114}) coincides with the composition $A_\Si(\G\tau)$
of horizontal lifts of $\tau$ onto $\Si$ by means of the
connection $\G$ and then onto $Y$ by means of the connection
$A_\Si$. For the sake of brevity, let us write $\g=A_\Si\circ\G$.

Given the composite bundle $Y$ (\ref{1.34}), there are the exact
sequences
\mar{63a}\ben
&& 0\to V_\Si Y\to VY\to Y\op\times_\Si V\Si\to 0, \label{63a}\\
&&  0\to Y\op\times_\Si V^*\Si\to V^*Y \to V_\Si^* Y\to 0,
\label{63b}
\een
where $V_\Si Y$ denotes the vertical tangent bundle of a fibre
bundle $Y\to\Si$ coordinated by $(x^\la,\si^m,y^i, \dot y^i)$. Let
us consider the splitting
\mar{63c}\ben
&& B: VY\ni v=\dot y^i\dr_i +\dot \si^m\dr_m \to v\rfloor B= \label{63c}\\
&& \qquad  (\dot y^i -\dot \si^m B^i_m)\dr_i\in V_\Si
Y, \nonumber \\
&& B=(\ol d y^i- B^i_m\ol d\si^m)\ot\dr_i\in
V^*Y\op\ot_Y V_\Si Y, \nonumber
\een
of the exact sequence (\ref{63a}). Then the connection $\g$
(\ref{spr290}) on $Y\to X$ and the splitting $B$ (\ref{63c})
define a connection
\mar{nnn}\ben
&& A_\Si=B\circ \g: TY\to VY \to V_\Si Y, \nonumber\\
&& A_\Si=dx^\la\ot(\dr_\la +(\g^i_\la - B^i_m\g^m_\la)\dr_i)+
\label{nnn}\\
&& \qquad d\si^m\ot (\dr_m +B^i_m\dr_i), \nonumber
\een
on the fibre bundle $Y\to\Si$.

Conversely, every connection $A_\Si$ (\ref{spr291}) on a fibre
bundle $Y\to\Si$ provides the splittings
\mar{46a,b}\ben
&& VY=V_\Si Y\op\oplus_Y A_\Si(Y\op\times_\Si V\Si),\label{46a}\\
&& \dot y^i\dr_i + \dot\si^m\dr_m= (\dot y^i -A^i_m\dot\si^m)\dr_i
+ \dot\si^m(\dr_m+A^i_m\dr_i), \nonumber\\
&& V^*Y = (Y\op\times_\Si V^*\Si) \op\oplus_Y A_\Si(V^*_\Si Y), \label{46b} \\
&& \dot y_i\ol dy^i +\dot\si_m \ol d\si^m= \dot y_i(\ol dy^i -A^i_m\ol
d\si^m) + (\dot \si_m + A^i_m\dot y_i)\ol d\si^m, \nonumber
\een
of the exact sequences (\ref{63a}) -- (\ref{63b}). Using the
splitting (\ref{46a}), one can construct the first order
differential operator
\mar{7.10}\beq
\wt D: J^1Y\to T^*X\op\otimes_Y V_\Si Y, \quad \wt D=
dx^\la\otimes(y^i_\la- A^i_\la -A^i_m\si^m_\la)\dr_i, \label{7.10}
\eeq
called the  vertical covariant differential, on the composite
fibre bundle $Y\to X$.

The vertical covariant differential (\ref{7.10}) possesses the
following important property. Let $h$ be a section of a fibre
bundle $\Si\to X$, and let $Y_h\to X$ be the restriction of a
fibre bundle $Y\to\Si$ to $h(X)\subset \Si$.  This is a subbundle
$i_h:Y_h\to Y$ of a fibre bundle $Y\to X$. Every connection
$A_\Si$ (\ref{spr291}) induces the pull-back connection
(\ref{mos82}):
\mar{mos83}\beq
A_h=i_h^*A_\Si=dx^\la\ot[\dr_\la+((A^i_m\circ h)\dr_\la h^m
+(A\circ h)^i_\la)\dr_i] \label{mos83}
\eeq
on $Y_h\to X$. Then the restriction of the vertical covariant
differential $\wt D$ (\ref{7.10}) to $J^1i_h(J^1Y_h)\subset J^1Y$
coincides with the familiar covariant differential $D^{A_h}$
(\ref{2116}) on $Y_h$ relative to the pull-back connection $A_h$
(\ref{mos83}).

\subsection{Differential operators and connections on modules}

This Section addresses the notion of a linear differential
operator on a module over an arbitrary commutative ring
\cite{kras,book00,book12,sard09}.

Let $\cK$ be a commutative ring and $\cA$ a commutative
$\cK$-ring. Let $P$ and $Q$ be $\cA$-modules. The $\cK$-module
$\hm_\cK (P,Q)$ of $\cK$-module homomorphisms $\Phi:P\to Q$ can be
endowed with the two different $\cA$-module structures
\mar{5.29}\beq
(a\Phi)(p)= a\Phi(p),  \qquad  (\Phi\bll a)(p) = \Phi (a p),\qquad
a\in \cA, \quad p\in P. \label{5.29}
\eeq
For the sake of convenience, we refer to the second one as an
$\cA^\bll$-module structure. Let us put
\be
\dl_a\Phi= a\Phi -\Phi\bll a, \qquad a\in\cA.
\ee

\begin{definition} \label{ws131} \mar{ws131}
An element $\Delta\in\hm_\cK(P,Q)$ is called a $Q$-valued
differential operator of order $s$ on $P$ if
\be
\dl_{a_0}\circ\cdots\circ\dl_{a_s}\Delta=0
\ee
for any tuple of $s+1$ elements $a_0,\ldots,a_s$ of $\cA$. The set
$\dif_s(P,Q)$ of these operators inherits the $\cA$- and
$\cA^\bll$-module structures (\ref{5.29}).
\end{definition}

For instance, zero order differential operators obey the condition
\be
\dl_a \Delta(p)=a\Delta(p)-\Delta(ap)=0, \qquad a\in\cA, \qquad
p\in P,
\ee
and, consequently, they coincide with $\cA$-module morphisms $P\to
Q$. A first order differential operator $\Delta$ satisfies the
condition
\be
\dl_b\circ\dl_a\,\Delta(p)= ba\Delta(p) -b\Delta(ap)
-a\Delta(bp)+\Delta(abp) =0, \quad a,b\in\cA.
\ee

\begin{definition} \label{1016} \mar{1016}
A  connection on an $\cA$-module $P$ is an $\cA$-module morphism
\mar{1017}\beq
\gd\cA\ni u\to \nabla_u\in \dif_1(P,P) \label{1017}
\eeq
such that the first order differential operators $\nabla_u$ obey
the   Leibniz rule
\be
\nabla_u (ap)= u(a)p+ a\nabla_u(p), \quad a\in \cA, \quad p\in P.
\ee
\end{definition}

Though $\nabla_u$ (\ref{1017}) is called a connection, it in fact
is a covariant differential on a module $P$.

For instance, let $Y\to X$ be a smooth vector bundle. Its global
sections form a $C^\infty(X)$-module $Y(X)$. The well-known Serre
-- Swan theorem \cite{book09} states the categorial equivalence
between the vector bundles over a smooth manifold $X$ and
projective modules of finite rank over the ring $C^\infty(X)$ of
smooth real functions on $X$. A corollary of this equivalence is
that the derivation module of the real ring $C^\infty(X)$
coincides with the $C^\infty(X)$-module $\cT(X)$ of vector fields
on $X$. If $P$ is a $C^\infty(X)$-module, one can reformulate
Definition \ref{1016} of a connection on $P$ as follows.

\begin{definition} \label{0210} \mar{0210}
A connection on a $C^\infty(X)$-module $P$ is a
$C^\infty(X)$-module morphism
\be
\nabla: P\to \cO^1(X)\ot P,
\ee
which satisfies the Leibniz rule
\be
\nabla(fp)=df\ot p +f\nabla(p), \qquad f\in C^\infty(X), \qquad
p\in P.
\ee
It associates to any vector field $\tau\in\cT(X)$ on $X$ a first
order differential operator $\nabla_\tau$ on $P$ which obeys the
Leibniz rule
\be
\nabla_\tau(fp)=(\tau\rfloor df)p +f\nabla_\tau p.
\ee
\end{definition}

In particular, let $Y\to X$ be a vector bundle and $Y(X)$ its
structure module. The notion of a connection on the structure
module $Y(X)$ is equivalent to the standard geometric notion of a
connection on a vector bundle $Y\to X$ \cite{book00}.

\addcontentsline{toc}{section}{References}


\begin{thebibliography}{ederf}

\bibitem{abr} Abraham, R.  and Marsden, J. (1978). \emph{Foundations of Mechanics}
(Benjamin Cummings Publ. Comp., London).

\bibitem{anan} Anandan, J. and Aharonov, Y. (1988). Geometric quantum
phase and angles, \emph{Phys. Rev. D} \textbf{38}, 1863.

\bibitem{arn} Arnold, V. (Ed.) (1990). \emph{Dynamical Systems III, IV}
(Springer, Berlin).

\bibitem{bate} Bates, L. (1988). Examples for obstructions to
action-angle coordinates, \emph{Proc. Roy. Soc. Edinburg, Sect. A}
\textbf{110}, 27.

\bibitem{blat} Blattner, R. (1977). The metalinear geometry of non-real
polarizations, In: \emph{\it Differential Geometric Methods in
Mathematical Physics (Proc. Sympos. Univ. Bonn, Bonn 1975)}, Lect.
Notes in Math. \textbf{570} (Springer, New York), p. 11.

\bibitem{bog} Bogoyavlenskij, O. (1998). Extended integrability and bi-Hamiltonian
systems, \emph{Commun. Math. Phys.} \textbf{196}, 19.

\bibitem{bohm} Bohm, A. and Mostafazadeh, A. (1994). The relation between the Berry
and the Anandan -- Aharonov connections for $U(N)$ bundles,
\emph{J. Math. Phys.} \textbf{35}, 1463.

\bibitem{bohm03} Bohm, A., Mostafazadeh, A., Koizumi, H., Niu,
Q. and Zwanziger, J. (2003). \emph{The Geometric Phase in Quantum
Systems} (Springer, Berlin).

\bibitem{bols03} Bolsinov, A. and Jovanovi\'c, B. (2003). Noncommutative
integrability, moment map and geodesic flows,  \emph{Ann. Global
Anal. Geom.} \textbf{23}, 305.

\bibitem{broer} Broer, H., Huitema, G., Takens, F. and Braaksma,
B. (1990). Unfoldings and bifurcations of quasi-periodic tori,
\emph{Mem. Amer. Math. Soc.} \textbf{421}, 1.

\bibitem{1} Crampin, M., Sarlet, W and Thompson, G. (2000). Bi-differential
calculi, bi-Hamiltonian systems and conformal Killing tensors,
\emph{J. Phys. A} \textbf{33}, 8755.

\bibitem{cush} Cushman, R. and Bates, L. (1997). \emph{Global Aspects of
Classical Integrable Systems} (Birkh\"auser, Basel).

\bibitem{daz} Dazord, P. and Delzant, T. (1987). Le probleme general des
variables actions-angles, \emph{J. Diff. Geom.} \textbf{\bf 26},
223.

\bibitem{dew} Dewisme, A. and Bouquet, S. (1993). First integrals and
symmetries of time-dependent Hamiltonian systems, \emph{J. Math.
Phys} \textbf{34}, 997.

\bibitem{dust} Dustermaat, J. (1980). On global action-angle coordinates,
\emph{Commun. Pure Appl. Math.} \textbf{33}, 687.

\bibitem{eche98} Echeverr\'{\i}a Enr\'{\i}quez, A., Mu\~noz
Lecanda, M., Rom\'an Roy, N. and Victoria-Monge, C. (1998).
Mathematical foundations of geometric quantization, \emph{Extracta
Math.} \textbf{13}, 135.

\bibitem{fasso} Fass\'o, F. (1998). Quasi-periodicity of motions and
completely integrability of Hamiltonian systems, \emph{Ergod.
Theor. Dynam. Sys.} \textbf{18}, 1349.

\bibitem{fasso05} Fass\'o, F. (2005). Superintegrable Hamiltonian systems:
geometry and applications, \emph{Acta Appl. Math.} \textbf{87},
93.


\bibitem{acang1} Fiorani, E., Giachetta, G. and Sardanashvily,
G. (2002). Geometric quantization of time-dependent completely
integrable Hamiltonian systems,  \emph{J. Math. Phys.}
\textbf{43}, 5013; \emph{arXiv}: quant-ph/0202093.

\bibitem{fior} Fiorani, E., Giachetta, G. and Sardanashvily, G. (2003). The
Liouville -- Arnold -- Nekhoroshev theorem for noncompact
invariant manifolds, \emph{J. Phys. A} \textbf{36}, L101;
\emph{arXiv}: math.DS/0210346. .

\bibitem{f1} Fiorani, E. (2004). Completely and partially
integrable systems in the noncompact case, \emph{Int. J. Geom.
Methods Mod. Phys.} \textbf{1}, 167.


\bibitem{fior2} Fiorani, E. and Sardanashvily, G. (2006). Noncommutative integrability on
noncompact invariant manifold, \emph{J. Phys. A} \textbf{39},
14035; \emph{arXiv}: math.DS/0604104.

\bibitem{jmp07a} Fiorani, E. and Sardanashvily, G. (2007).
 Global action-angle coordinates for completely integrable systems
with noncompact invariant manifolds, \emph{J. Math. Phys.}
\textbf{48}, 032001; \emph{arXiv}: math.DS/0610790.

\bibitem{f2} Fiorani, E. (2008). Geometrical aspects of integrable systems, \emph{Int. J. Geom.
Methods Mod. Phys.} \textbf{5}, 457.

\bibitem{f3} Fiorani, E. (2009). Momentum maps, independent first integrals and integrability
for central potentials \emph{Int. J. Geom. Methods Mod. Phys.}
\textbf{6}, 1323.

\bibitem{gaeta} Gaeta, G. (2002). The Poincar\'e -- Lyapounov -- Nekhoroshev theorem,
\emph{Ann. Phys.} \textbf{297}, 157.

\bibitem{gaeta03} Gaeta, G. (2003). The Poincar\'e -- Nekhoroshev map, \emph{J.
Nonlin. Math. Phys.} \textbf{10}, 51.

\bibitem{jpa99} Giachetta, G., Mangiarotti, L. and Sardanashvily, G. (1999). Covariant
Hamilton equations for field theory, \emph{J. Phys. A}
\textbf{32}, 6629.

\bibitem{jmp99} Giachetta, G., Mangiarotti, L. and Sardanashvily,
G. (1999). Nonholonomic constraints in time-dependent mechanics,
\emph{J. Math. Phys.} \textbf{40}, 1376.

\bibitem{acang2} Giachetta, G., Mangiarotti, L. and Sardanashvily, G.
(2002). Action-angle coordinates for time-dependent completely
integrable Hamiltonian systems, \emph{J. Phys. A} \textbf{35},
L439; \emph{arXiv}: math.DS/0204151.

\bibitem{jmp02a} Giachetta, G., Mangiarotti, L. and Sardanashvily, G.
(2002). Covariant geometric quantization of nonrelativistic
time-dependent mechanics, \emph{J. Math. Phys} \textbf{43}, 56;
\emph{arXiv}: quant-ph/0012036.

\bibitem{jmp02} Giachetta, G., Mangiarotti, L. and Sardanashvily, G.
(2002). Geometric quantization of mechanical systems with
time-dependent parameters, \emph{J. Math. Phys} \textbf{43}, 2882;
\emph{arXiv}: quant-ph/0112011.

\bibitem{plet02} Giachetta, G., Mangiarotti, L. and Sardanashvily, G.
(2002). Geometric quantization of completely integrable systems in
action-angle variables, \emph{Phys. Lett. A} \textbf{301}, 53;
\emph{arXiv}: quant-ph/0112083.

\bibitem{plet03} Giachetta, G., Mangiarotti, L. and Sardanashvily, G.
(2003). Jacobi fields of completely integrable Hamiltonian
systems, \emph{Phys. Lett. A} \textbf{309}, 382.

\bibitem{jmp03} Giachetta, G., Mangiarotti, L. and Sardanashvily, G.
(2003). Bi-Hamiltonian partially integrable systems, \emph{J.
Math. Phys.} \textbf{44}, 1984; \emph{arXiv}: math.DS/0211463.

\bibitem{jmp04}  Giachetta, G., Mangiarotti, L. and
Sardanashvily, G. (2004). Nonadiabatic holonomy operators in
classical and quantum completely integrable systems, \emph{J.
Math. Phys} \textbf{45}, 76; \emph{arXiv}: math.DS/0211463.

\bibitem{book05} Giachetta, G., Mangiarotti, L. and Sardanashvily, G.
(2005). \emph{Geometric and Topological Algebraic Methods in
Quantum Mechanics} (World Scientific, Singapore).

\bibitem{pl07} Giachetta, G., Mangiarotti, L. and Sardanashvily,
G. (2007). Quantization of noncommutative completely integrable
Hamiltonian systems, \emph{Phys. Lett. A} \textbf{362}, 138;
\emph{arXiv}: quant-ph/0604151.

\bibitem{arxiv09} Giachetta, G., L.Mangiarotti, L. and Sardanashvily, G. (2009) Advanced mechanics. Mathematical
introduction, \emph{arXiv}: 0911.0411.

\bibitem{book09} Giachetta, G., Mangiarotti, L. and Sardanashvily, G.
(2009). \emph{Advanced Classical Field Theory} (World Scientific,
Singapore).

\bibitem{book10} Giachetta, G., L.Mangiarotti, L. and Sardanashvily, G. (2010)
\emph{Geometric Formulation of Classical and Quantum Mechanics}
(World Scientific, Singapore).

\bibitem{gos} De Gosson, M. (2001). The symplectic camel and phase
space quantization, \emph{J. Phys. A}  \textbf{34}, 10085.

\bibitem{grab} Grabowski, J. and Urba\'nski, P. (1995). Tangent lifts of Poisson and
related structures, \emph{J. Phys. A} \textbf{28}, 6743.

\bibitem{gre} Greub, W.,  Halperin, S. and Vanstone, R. (1972). \emph{Connections, Curvature
and Cohomology} (Academic Press, New York).

\bibitem{guil} Guillemin, V. and Sternberg, S. (1984). \emph{Symplectic Techniques
in Physics} (Cambr. Univ. Press, Cambridge).

\bibitem{gutz} Gutzwiller, M. (1990).  \emph{Chaos in Classical and Quantum Mechanics}
(Springer, Berlin).

\bibitem{hannay} Hannay, J. (1985). Angle variable in adiabatic excursion of an
integrable Hamiltonian, \emph{J. Phys. A} \textbf{18}, 221.

\bibitem{hir} Hirzebruch, F. (1966). \emph{Topological Methods in Algebraic Geometry}
(Springer, Berlin).

\bibitem{kamb} Kamber, F. and Tondeur, P. (1975). \emph{Foliated Bundles and
Characteristic Classes}, Lect. Notes in Mathematics \textbf{493}
(Springer, Berlin).

\bibitem{kir} Kiritsis, E. (1987). A topological investigation of the quantum adiabatc
phase, \emph{Commun. Math. Phys.} \textbf{111}, 417.

\bibitem{kob} Kobayashi, S. and Nomizu, K. (1963). \emph{Foundations of Differential
Geometry, Vol. 1}  (John Wiley, New York - Singapore).

\bibitem{kol} Kol\'a\v{r}, I.,  Michor, P. and Slov\'ak, J. (1993). \emph{Natural Operations
in Differential Geometry} (Springer, Berlin).

\bibitem{kras} Krasil'shchik, I., Lychagin, V. and Vinogradov, A. (1985). \emph{Geometry of
Jet Spaces and Nonlinear Partial Differential Equations} (Gordon
and Breach, Glasgow).

\bibitem{laz} Lazutkin, V. (1993). {\it KAM Theory and Semiclassical
Approximations to Eigenfunctions} (Springer, Berlin).


\bibitem{libe} Libermann, P. and Marle, C-M. (1987). \emph{Symplectic Geometry and
Analytical Mechanics} (D.Reidel Publishing Company, Dordrecht).

\bibitem{book98} Mangiarotti, L. and Sardanashvily, G. (1998). \emph{Gauge
Mechanics} (World Scientific, Singapore).

\bibitem{mang00} Mangiarotti, L. and Sardanashvily, G. (2000). Constraints
in Hamiltonian time-dependent mechanics, \emph{J. Math. Phys.}
\textbf{41}, 2858; \emph{arXiv}: math-ph/9904028.

\bibitem{book00} Mangiarotti, L. and Sardanashvily, G. (2000). \emph{Connections in
Classical and Quantum Field Theory} (World Scientific, Singapore).

\bibitem{jmp07} Mangiarotti, L. and Sardanashvily, G. (2007). Quantum
mechanics with respect to different reference frames, \emph{J.
Math. Phys.} \textbf{48}, 082104; \emph{arXiv}: quant-ph/0703266.

\bibitem{meig} Meigniez, G. (2002). Submersions, fibrations and
bundles, \emph{Trans. Amer. Math. Soc.} \textbf{354}, 3771.

\bibitem{mishc} Mishchenko, A.  and Fomenko, A. (1978). Generalized Liouville
method of integration of Hamiltonian systems, \emph{Funct. Anal.
Appl.} \textbf{12}, 113.

\bibitem{mol} Molino, P. (1988). \emph{Riemannian Foliations}
(Birkh\"auser, Boston).

\bibitem{montg} Montgomery, R. (1988). The connection whose holonomy is the classical
adiabatic angles of Hannay and Berry and its generalization to the
non-integrable  case, \emph{Commun. Math. Phys.} \textbf{120},
269.

\bibitem{myk2} Mykytiuk, I., Prykarpatsky, A., Andrushkiw, R. and
Samoilenko, V. (1994). Geometric quantization of Neumann-type
completely integrable Hamiltonian systems, \emph{J. Math. Phys.}
\textbf{35}, 1532.

\bibitem{nekh94} Nekhoroshev, N. (1994). The Poincar\'e -- Lyapounov -- Liuoville -- Arnold theorem,
\emph{Funct. Anal. Appl.} \textbf{28}, 128.

\bibitem{2} Nunes da Costa, J. and Petalidou, F. (2002). Reduction of
Jacobi-Nijenhuis manifolds, \emph{J. Geom. Phys.} \textbf{41},
181.

\bibitem{onish} Onishchik, A. (Ed.) (1993). \emph{Lie Groups and Lie Algebras I.
Foundations of Lie Theory, Lie Transformation Groups} (Springer,
Berlin).

\bibitem{palais} Palais, R. (1957). A global formulation of Lie theory
of transformation groups, \emph{Mem. Am. Math. Soc.} \textbf{22},
1.

\bibitem{raw} Rawnsley, J. (1977). On the cohomology groups of a
polarisation and diagonal quantisation, \emph{Trans. Amer. Math.
Soc.} \textbf{230}, 235.

\bibitem{rei} Reinhart, B. (1983). \emph{Differential Geometry and Foliations}
(Springer, Berlin).

\bibitem{sard98} Sardanashvily, G. (1998). Hamiltonian time-dependent mechanics,
\emph{J. Math. Phys.} \textbf{39}, 2714.

\bibitem{sard00} Sardanashvily, G. (2000). Classical and quantum mechanics with
time-dependent parameters, \emph{J. Math. Phys.} \textbf{41},
5245.

\bibitem{sard08} Sardanashvily, G. (2008). Fibre bundles, jet manifolds and
Lagrangian theory. Lectures for theoreticians,\emph{arXiv}:
0908.1886.

\bibitem{ijgmmp09a}  Sardanashvily, G. (2009). Superintegrable Hamiltonian systems with
noncompact invariant submanifolds. Kepler system, \emph{Int. J.
Geom. Methods Mod. Phys.} \textbf{6}, 1391; \emph{arXiv}:
0911.0992.

\bibitem{sard09} Sardanashvily, G. (2009). Lectures on differential geometry of modules and
rings, \emph{arXiv}: 0910.1515.

\bibitem{ijgmmp12} Sardanashvily, G.(2012). Time-dependent superintegrable Hamiltonian systems,
\emph{Int. J. Geom. Methods Mod. Phys.} \textbf{9}, N8 1220016.

\bibitem{book12} Sardanashvily, G.(2012). Lectures on Differential Geometry of Modules and Rings. Application to Quantum
Theory (Lambert Academic Publishing, Saarbrucken).

\bibitem{sau} Saunders, D. (1989). \emph{The Geometry of Jet Bundles}
(Cambridge Univ. Press, Cambridge).

\bibitem{sni} \'Sniatycki, J. (1980). \emph{Geometric Quantization and Quantum
Mechanics} (Springer, Berlin).

\bibitem{ste} Steenrod, N. (1972). \emph{The Topology of Fibre Bundles} (Princeton Univ.
Press, Princeton).

\bibitem{susm} Sussmann, H. (1973). Orbits of families of vector fields and
integrability of distributions, \emph{Trans. Amer. Math. Soc.}
\textbf{180}, 171.

\bibitem{tam} Tamura, I. (1992). \emph{Topology of Foliations: An Introduction},
Transl. Math. Monographs \textbf{97} (AMS, Providence).

\bibitem{vais91} Vaisman, I. (1991). On the geometric quantization of Poisson manifolds,
\emph{J. Math. Phys.} \textbf{32}, 3339.

\bibitem{vais} Vaisman, I. (1994). \emph{Lectures on the Geometry of Poisson Manifolds}
(Birkh\"auser, Basel).

\bibitem{vais97} Vaisman, I. (1997). On the geometric quantization of
the symplectic leaves of Poisson manifolds, \emph{Diff. Geom.
Appl.} \textbf{7}, 265.

\bibitem{vin} Vinogradov, A. and Kupershmidt, B. (1977). The structure of Hamiltonian mechanics,
\emph{Russian Math. Surveys} \textbf{32} (4), 177.

\bibitem{war} Warner, F. (1983). \emph{Foundations of Differential Manifolds and Lie
Groups} (Springer, Berlin).

\bibitem{wein} Weinstein, A. (1983). The local structure of Poisson manifolds,
\emph{J. Diff. Geom.} \textbf{18}, 523.




\end{thebibliography}
\end{document}